\documentclass[a4paper,11pt]{article}
\usepackage[square,authoryear]{natbib}
\pdfoutput=1
\usepackage{marsden_article}

\newtheorem{assumption}[theorem]{Assumption}
\theoremstyle{remark}
\newtheorem{remark}[theorem]{Remark}

\usepackage[all]{xy}
\begin{document}
\title{An extension of the Dirac and Gotay-Nester theories of constraints for Dirac dynamical systems}
\author{Hern\'{a}n Cendra$^{a,1}$, Mar\'{\i}a Etchechoury$^{b}$\\
and Sebasti\'{a}n J.\ Ferraro$^{c}$}
\footnotetext[1]{Corresponding author. Fax: +54-221-424-5875}
\date{$a$
Departamento de Matem\'{a}tica\\
Universidad Nacional del Sur, Av.\ Alem 1253\\
8000 Bah\'{\i}a Blanca and CONICET, Argentina.\\
uscendra@criba.edu.ar\\
$b$ Laboratorio de Electr\'onica Industrial, Control e
Instrumentaci\'on,\\
Facultad de Ingenier\'{\i}a, Universidad Nacional de La Plata
and\\
Departamento de Matem\'atica, Facultad de Ciencias Exactas,
Universidad Nacional de La Plata.\\
CC 172, 1900 La Plata, Argentina.\\
marila@mate.unlp.edu.ar\\
$c$
Departamento de Matem\'{a}tica and Instituto de Matem\'{a}tica Bah{\'\i}a
Blanca\\
Universidad Nacional del Sur, Av.\ Alem 1253\\
8000 Bah\'{\i}a Blanca,\\
and CONICET, Argentina\\
sferraro@uns.edu.ar}

\maketitle

\begin{abstract}
This paper extends the Gotay-Nester and the Dirac theories of constrained systems in order to deal
with Dirac dynamical systems in the integrable case. Integrable Dirac dynamical systems are viewed as constrained systems where the constraint submanifolds are foliated, the case considered in  Gotay-Nester theory being the particular case where the foliation has only one leaf. A Constraint Algorithm for Dirac dynamical systems (CAD), which extends the Gotay-Nester algorithm, is developed. Evolution equations are written using a Dirac bracket adapted to the foliations and an abridged total energy which coincides with the total Hamiltonian in the particular case considered by Dirac. 
The interesting example of LC circuits is developed in detail.
The paper emphasizes the point of view that Dirac and Gotay-Nester theories are dual and that using a combination of results 
from both theories may have advantages in dealing with a given example, rather than using systematically one or the other. 
\end{abstract}

\section{Introduction}\label{sectionintroduction}

The main purpose of this paper is to study an extension of the Dirac theory of constraints 
(\cite{MR0043724,MR0094205,Dirac1964})
and the Gotay-Nester theory (\cite{MR506712,MR0535369})
for the case in which the equation of motion is a Dirac dynamical system on a manifold $M$, 
\begin{equation}\label{11drracds}
(x, \dot{x})\oplus d\mathcal{E}(x) \in D_x,
\end{equation}
where $D \subseteq TM \oplus T^\ast M$ 
is a given \emph{integrable} Dirac structure on
$M$, which gives a foliation of $M$ by presymplectic leaves, and $\mathcal{E}$ 
is a given function on
$M$, called the \emph{energy}.
One of the features of systems like (\ref{11drracds})
is that they can represent some constrained systems where the constraints appear even if the Lagrangian is nonsingular, as it happens in the case of integrable nonholonomic systems. We will work under explicit hypotheses about regularity, locality and integrability, although some of our results can be applied in more general cases, as indicated in the paragraph \textit{Hypotheses},  below in this section.

In the  Gotay-Nester theory 
the starting point is an equation of the form
\begin{equation}\label{11GN}
\omega(x) (\dot{x}, \,  ) =  d\mathcal{E}(x)
\end{equation}
where $\omega$ is a presymplectic form on 
$M$. 
In fact, in the  Gotay-Nester theory the more general case where $d\mathcal{E}(x)$ is replaced by an arbitrary closed 1-form $\alpha$ is considered, but this will not be relevant here.
\paragraph{Example: Euler-Lagrange and Hamilton's equations.}
Let ${L}\colon TQ \rightarrow \mathbb{R}$ be a Lagrangian,
degenerate or not. Let $\mathcal{E}(q,v,p)=pv-{L}(q,v)$ and
let $\omega \in \Omega^2(TQ \oplus T^*Q)$ be the presymplectic form
$\omega=dq^i \wedge dp_i$ on the Pontryagin bundle $M = TQ\oplus
T^\ast Q$. Then the Euler-Lagrange equations are written equivalently in
the form of equation (\ref{11GN}) with $x = (q,v,p)$. In
fact, we have, in local coordinates,
\begin{align}\label{E-L}
i_{(\dot{q},\dot{v},\dot{p})} dq^i \wedge dp_i
&=  \dot{q}^i
dp_i-\dot{p}_i dq^i \\
\label{E-L1}
d\mathcal{E}
&=  \frac{\partial \mathcal{E} }{\partial q^i}dq^i+\frac{\partial
\mathcal{E} }{\partial p_i}dp_i+\frac{\partial \mathcal{E} }{\partial v^i}dv^i\\
\label{E-L2} &=  -\frac{\partial {L}}{\partial
q^i}(q,v)dq^i+v^idp_i+ \left( p_i-\frac{\partial
{L}}{\partial v^i} \right) dv^i
\end{align}
Using equations (\ref{E-L})--(\ref{E-L2}) we can easily see that
(\ref{11GN}) becomes
\begin{align*}
\dot{q}^i & =  v^i \\
\dot{p}_i & =  \frac{\partial {L}}{\partial q^i}(q,v) \\
0 & =  p_i-\frac{\partial {L}}{\partial v^i}(q,v),
\end{align*}
which is clearly equivalent to the Euler-Lagrange equations. 
The case $\mathcal{E}(q,v,p)= H(q,p)$, where $H$ is a given Hamiltonian on $T^*Q$ of course gives Hamilton's equations.

The idea of using
the Pontryagin bundle to write such fundamental equations of physics appears in \cite{%
Livens1919,
MR0720129,
MR0720130,
MR0720131,
MR2078563,
MR2265464,
MR2265469%
}.\\

Equation (\ref{11GN}) 
is equivalent to equation (\ref{11drracds}) in the case in which the Dirac structure on 
$M$ is the one naturally associated to 
$\omega$, denoted 
$D_{\omega}$, in which case the foliation of $M$ has obviously only one presymplectic leaf.
In general, equation (\ref{11drracds}) may be considered as a collection of equations of the type (\ref{11GN}), one for each presymplectic leaf of the Dirac structure. However, in order to study the initial condition problem this approach might not be appropriate, because different initial conditions might belong to different presymplectic leaves and therefore correspond to different equations of the type (\ref{11GN}), which is not the usual situation.

The algorithm of the Gotay-Nester theory generates a finite sequence of secondary constraint submanifolds
$M \supseteq M_1\supseteq \dots\supseteq M_c$. The final constraint submanifold $M_c$ has the property that every solution curve of (\ref{11GN}) is contained in $M_c$ (in fact, it is the smallest submanifold having that property). Equations of motion are given by restricting the variable $(x, \dot{x})$ in equation (\ref{11GN}) to $TM_c$, and existence of solution curves $x(t)$ for a given initial condition $x_0 \in M_c$ is guaranteed, under the condition that the kernel of
\begin{equation*}
\omega^\flat(x) | T_x M_c \colon  T_x M_c\rightarrow T^\ast_x M
\end{equation*}
has locally constant dimension. 
An important point in the Gotay-Nester approach is that the equations defining the secondary constraint submanifolds are written in terms of $\omega$
(see formula \eqref{equationalgorithGN}),
which makes  the whole algorithm invariant under a group of transformations preserving $\omega$.

In order to solve \eqref{11drracds}, we will develop a constraint algorithm that we call 
CAD (\emph{constraint algorithm for Dirac dynamical systems}).
The CAD extends the Gotay-Nester algorithm and gives a sequence of secondary foliated constraint submanifolds.
The secondary foliated constraint submanifolds can be written in terms of the Dirac structure using formula \eqref{constraintsubmf3}, which generalizes \eqref{equationalgorithGN}.

Let $M \subseteq N$ be an embedding of $M$ into a symplectic manifold $(N, \Omega)$, in such a way that the presymplectic leaves of $M$ are presymplectic submanifolds of $N$ and let, by slight abuse of notation,
$\mathcal{E} \colon N \rightarrow \mathbb{R}$ denote an arbitrary extension of the given energy on $M$.
This kind of framework can be naturally constructed
in many examples. Then one can consider, at least locally, that $M$ is a primary foliated constraint submanifold defined by equating certain constraints to $0$, which generalizes the situation of the Dirac theory of constraints.
One of the results of this paper establishes that there exists a Dirac bracket whose symplectic leaves are \textit{adapted} to the foliations of the primary and final foliated constraint submanifolds.
We also prove that the equations of motion on the final foliated constraint submanifold can be nicely written in Hamiltonian form with respect to this Dirac bracket and the \emph{abridged total energy}, which we introduce in section \ref{subsectionequationsof motion}.

Our extension of the Gotay-Nester and Dirac theories has especially simple and interesting features when applied to LC circuits. We show that the algorithm stops either at $M_1$ or $M_3$, and concrete physically meaningful formulas for the first three constraint submanifolds and for the evolution equations can be given for this case. Other geometric treatments of LC circuits can be found, for instance, in \cite{Bloch1997,MR2265464,MR2265469,CeMaRaYo2011}.

Systems of the type (\ref{11drracds}) are important also in the not necessarily integrable case since they represent, for instance, the Lagrange-d'Alembert equations of nonholonomic mechanics. More references will be given in section \ref{subsectiondiracdynamicalsystems}.
We remark that even though the main focus of this paper is the case where the Dirac structure is integrable, many results, most notably the constraint algorithm CAD and the treatment of nonholonomic systems in section \ref{sectionconstalgforrdirac}, are proven for a not necessarily integrable Dirac structure.\\

Gotay-Nester's theory is more geometric than Dirac's and provides a certain notion of duality or correspondence between the two, in the spirit of the dualities between submanifolds and the equations defining them, manifolds and their ring of functions, Hamiltonian vector fields and their Hamiltonian or Poisson brackets and the collection of their symplectic leaves. Emphasizing this point of view in a common geometric language
and showing that a combination of results 
from both theories may have advantages in a given example, rather than using one or the other,
is an aspect of the paper.

From this point of view, the notion of second class constraints is related to the notion of second class constraint submanifold, 
the latter being determined in this paper as a submanifold that is tangent to a second class subbundle of a certain tangent bundle. They are symplectic submanifolds (\cite{MR0358860,BursztynIHES}). The first class constraint submanifolds are coisotropic submanifolds (\cite{BursztynIHES}).

The presence of primary and secondary first class constraints, which in Dirac's work is connected to the principle of determinacy and to the notion of physical variables, is explicitly related here to certain manifolds 
$\widetilde{M}$ and $\bar{M}$
in a commutative diagram, in Theorem
\ref{lemmmma}. The diagonal arrow in this diagram, which is a submersion, is dual to the diagonal arrow in the diagram in 
Lemma \ref{lemalema}, which is a surjective Poisson algebra homomorphism. 
In some examples, instead of applying the Dirac method to write equations of motion, it may be simpler to proceed in two stages. First apply the Gotay-Nester algorithm to find the final constraint submanifold, which, as proven in the Gotay-Nester theory, coincides with the submanifold defined by equating the final constraints obtained in the Dirac algorithm to $0$. Then switch to the Dirac viewpoint and write equations of motion using the Dirac bracket.  Also, in order to calculate the Dirac bracket at points of the final constraint submanifold, using the symplectic leaf that contains that submanifold may be easier than using Dirac's expression.

\paragraph{The point of view of IDEs.}
In order to study the problem of finding solution curves of general Dirac dynamical systems (\ref{11drracds}) the more general theory of \textit{Implicit Differential Equations} may be useful. 
An Implicit Differential Equation (IDE) on a manifold $M$ is written as
\begin{equation}\label{defsolIDE}
\varphi(x, \dot{x}) = 0.
\end{equation}
A solution of (\ref{defsolIDE}) \textit{at a point} $x$ is a
vector $(x, v) \in T_xM$ satisfying (\ref{defsolIDE}). A \emph{solution
curve}, say $x(t)$, $t \in (a,b)$, must satisfy, by definition, that $(x(t),
\dot{x}(t))$ is a solution at $x(t)$ for all $t \in (a,b)$. 

Basic questions such as existence, uniqueness or
extension of solutions for a given initial condition are not completely answered yet, although many partial results have
been established for certain classes of IDE.
One approach is to use
a \emph{constraint algorithm}, which consists of a decreasing sequence of constraint
submanifolds $M \supseteq M_1 \supseteq \ldots \supseteq M_c$ defined as follows,
\[
 M_{k+1} : = \{x \in M_k \,|\, \mbox{there exists } (x,v) \in TM_k
\mbox{ such that}\,\,\, \varphi (x,v) = 0\},
\]
with $M_0 := M$.
This algorithm, which obviously uses only the differentiable structure regardless of any other structure which may be present, like presymplectic, Poisson or Dirac structures, represents a differential geometric aspect underlying the algorithms  of Gotay-Nester, Dirac or CAD.
To ensure that
each $M_k$ is a submanifold and that
the algorithm stops after a finite number of steps one may choose to assume certain conditions of the type ``locally constant rank'' conditions. Then the original IDE
is reduced to an equivalent ODE depending on parameters on the 
\emph{final constraint submanifold} 
$M_c$. In fact, 
by construction, $M_c$ is characterized by the property that it is the smallest submanifold that contains all solutions curves of the given IDE. Therefore, if $M_c$ is empty, there are no solution curves.

Dirac's original work has a wide perspective from the physical point of view, with connections to classical, quantum and relativistic mechanics. 
However, from the point of view of abstract IDE theory, and in a very concise way, 
we may say that a combination of the Dirac and the Gotay-Nester methods shows how to
transform a given Gotay-Nester equation (\ref{11GN}) into an equivalent ODE depending on parameters (\ref{eq_dirac_evolution_thm}) on a final constraint submanifold, while what we show in this paper is how to transform a given Dirac dynamical system (\ref{11drracds}) into an equivalent ODE (\ref{eq:vector_field_foliated}) depending on parameters on a final foliated constraint submanifold.

Some more comments on the connection between Dirac's and Gotay-Nester ideas and IDE are in order.
We can compare \cite{%
MR0043724,
MR506712,
MR0535369,
MR1272402,
MR2006216,
MR2277353%
},
to see how the idea of applying a constraint algorithm works in different contexts.
In \cite{MR2277353}, one works in the realm of subanalytic sets;
in \cite{MR506712} and \cite{MR0535369}
one works with presymplectic manifolds; in \cite{MR2006216}
one works with complex algebraic manifolds; \cite{MR0043724} uses Poisson
brackets; in \cite{MR1272402} some degree of differentiability of the basic data
is assumed, and, besides, some constant rank hypotheses are added, essentially to
ensure
applicability of a certain constant rank theorem. 
Some relevant references for general IDEs connected to physics or control theory, which show a diversity of geometric or analytic methods or a combination of both are
 \cite{%
MR0751527,
MR1102803,
MR1245325,
MR1325842,
MR1401231,
MR1618157,
MR1447117,
MR2018120%
}.

\paragraph{Hypotheses.}
As for regularity,
we will work in the $C^\infty$ category. 
Throughout the paper we assume that the subsets 
$M_k$ appearing in 
sequences of the type
$M \supseteq M_1 \supseteq \dots \supseteq M_c$
that are generated by some constraint algorithm, are submanifolds regularly defined by equating some functions (constraints) to $0$, and that the sequence stops. More regularity conditions like Assumptions 
\ref{K1}, \ref{K2},
\ref{Lambda},
\ref{2s_constant},
\ref{assumptionFOLIATION},
and some others will be introduced when needed along the paper. 

Our results will be of a local character, but for some of them, like the notion of second class submanifolds, it is indicated how to define them globally.
The usage of local coordinates is almost entirely avoided and basic facts in symplectic geometry or Poisson algebra arguments are used instead.

The condition of integrability of the Dirac structure appearing in 
(\ref{11drracds})
has its own interest.
However, the CAD does not assumes integrability  and can be applied for instance to general nonholonomic systems. On the other hand, in the non-integrable case certain brackets that we study would not satisfy the Jacobi identity, and will not be studied in this paper.

\paragraph{Structure of the paper.}
The first part of the paper, which includes sections \ref{section7} and
\ref{mainresultsofdiracandgotaynester},
contains a review of the Dirac and Gotay-Nester methods.

In sections \ref{sectiondiracstructures} to \ref{sectionanextensionofetc} we develop our main results, extending the Gotay-Nester and Dirac theories. 
Section  \ref{sectiondiracstructures} is devoted to a review of basic facts on Dirac structures and Dirac dynamical systems (\cite{CeMaRaYo2011}).
The notion of Dirac structure (\cite{MR951168,MR998124,MR2534987}) gives a new possibility of understanding and extending the theory of constraints, which is the main purpose of the present paper. 
 In section \ref{sectionconstalgforrdirac} we develop our CAD algorithm. We do it for the general case of a not necessarily integrable Dirac structure so that one can apply CAD to general noholonomic systems. 
In section \ref{sectionexamples} we study the example of integrable nonholonomic systems, and LC circuits are viewed as a particular case. 
In section 
\ref{sectionanextensionofetc} we show how to extend the Dirac theory for the case of a Dirac dynamical system (\ref{11drracds}).

\section{A brief review of the Dirac and Gotay-Nester theories}\label{section7}

In this section we review, without proof,  basic facts belonging to the Dirac and the Gotay-Nester theories.

The Dirac method starts 
with a given submanifold $M$, called the primary constraint submanifold, of a symplectic manifold 
$(N, \Omega)$
defined by equating the primary constraints to $0$, and a given energy
$\mathcal{E} \colon N \rightarrow \mathbb{R}$. In the original work of Dirac $N = T^\ast Q$, $\Omega$ is the canonical symplectic form, $M$ is the image of the Legendre transformation and 
$\mathcal{E}$ is the Hamiltonian defined by Dirac.
However, locally, any symplectic manifold is an open subset of a cotangent bundle, therefore the original Dirac formalism can be applied to the, seemingly more general, case described above, at least for the local theory.\\

An interesting variant of the original situation considered by Dirac is the following.
Consider the canonical symplectic manifold $N = T^\ast TQ$ with the canonical
symplectic form
$\Omega$,
and let the primary constraint be
$M = TQ \oplus T^\ast Q$, canonically embedded in $N$ via the map given in local
coordinates
$(q,v,p,\nu)$ of $N$ by $\varphi(q,v,p)=(q,v,p,0)$. In particular, $M$ is
defined regularly by the equation
$\nu = 0$. If $\omega$ is the presymplectic form on $M$ obtained by pulling back the canonical symplectic form of $T^*Q$ using the natural projection, then one can show that $\varphi^*\Omega=\omega$. 
The embedding 
$\varphi$
is globally defined (see Appendix \ref{Pontryagin_embedding} for details). 
The number $pv$ is a well-defined function on $M$ and it can be naturally extended
to a function on a chart with coordinates $(q,v,p,\nu)$, but this does not
define a global function on $N$ consistently.
On the other hand, it can be extended to a smooth function on $N$ using partitions of unity and any such
extension will give consistent equations of motion. 
In this paper we will not consider global aspects. In any case, for a given Lagrangian
$L \colon TQ \rightarrow \mathbb{R}$ we can take 
$\mathcal{E} = pv - L(q,v)$.

\begin{remark}
For a given presymplectic manifold $(M,
\omega)$ one can always find an embedding $\varphi$ into a symplectic manifold 
$(P, \Omega)$ such that $\varphi^*\Omega=\omega$. Moreover, this embedding can also be chosen such that it is coisotropic, meaning that $\varphi (M)$ is a coisotropic submanifold of $P$ (see \cite{MR641768}). However, we should mention that the embedding given above is not coisotropic.
\end{remark}

The Dirac and the Gotay-Nester algorithms can be studied independently. On the other hand, they are related as follows.
For a given system  (\ref{11GN}) choose
a symplectic manifold
$(N, \Omega)$ in such a way that $(M, \omega)$ is a presymplectic submanifold
and $\mathcal{E}$ (using a slight abuse of notation) is an arbitrary extension of
$\mathcal{E}\colon M \rightarrow \mathbb{R}$ to $N$.
Moreover, assume that
$M$ is defined regularly by a finite set of equations 
$\phi^{(0)}_i = 0$, $i = 1, \dots,a_0$, where each  $\phi^{(0)}_i$  is a \textit{primary constraint}.
The Dirac algorithm gives a sequence of secondary constraints
$\phi^{(k)}_i$, $i = 1, \dots,a_k$, which defines regularly a sequence of secondary constraint submanifolds  $M_k$, $k = 1, \dots,c$,
by equations $\phi^{(k)}_i = 0$, $i = 1, \dots,a_k$, which coincide with the ones
given in the Gotay-Nester algorithm.

\subsection{A brief review of Dirac's theory}\label{briefreview}
Dirac's theory of constraints has been extensively studied from many different
points of view and extended in several directions. Part of those
developments in the spirit of geometric mechanics is contained in the following references, but the list is far from being complete, \cite{%
MR0434047,
MR506712,
MR0535369,
MR0569299,
MR0563632,MR0935624,
MR0720129,
MR0720130,
MR0720131,
MR0751527,
MR1212006,
MR1326085,
MR0796226,
MR0894631,
MR0914056,
MR0941034,
MR1082418,
MR1191617,
MR1303063,
MR1353155,
MR1450895,
MR1447117,
MR1692290,
MR2589829,
MR2817606,
MR2794294%
}.

As we have explained above we will work in a general context of a given symplectic manifold
$(N, \Omega)$, $M \subseteq N$ and
$\mathcal{E}\colon N \rightarrow \mathbb{R}$, where the primary constraint submanifold $M$ is regularly defined by equations
$\phi^{(0)}_i = 0$, $i = 1,\dots,a_0$ on $N$.

The Dirac constraint algorithm goes as follows.
One defines the \textit{total energy}
$\mathcal{E}_T = \mathcal{E} + \lambda_{(0)}^i \phi^{(0)}_i$.
The preservation of the primary constraints is written
$\{\phi^{(0)}_i, \mathcal{E}_T\}(x) = 0$, $i = 1,\dots,a_0$, $x \in M_0$,
or
\begin{equation*}
\{\phi^{(0)}_i, \mathcal{E}\}(x) + \lambda_{(0)}^j\{\phi^{(0)}_i, \phi^{(0)}_j\}(x)= 0, \,
i, j = 1,\dots,a_0,\, x \in M_0.
\end{equation*}
Then $M_1$ is defined by the condition that $x \in M_1$ if and only if
there exists $\lambda_{(0)} = (\lambda_{(0)}^1,\dots,\lambda_{(0)}^{a_0})$
such that the system of equations
$\phi^{(0)}_i (x)= 0$,  
$\{\phi^{(0)}_i, \mathcal{E}_T\}(x) = 0$, $i = 1,\dots,a_0$, 
is satisfied. The submanifold
$M_1$  is defined by equations
$\phi^{(1)}_i = 0$, $i = 1,\dots,a_1$, 
where each $\phi^{(1)}_i $ is a \textit{secondary constraint}, by
definition.
By proceeding iteratively one obtains a sequence
$M_0 \supseteq M_1 \supseteq\dots\supseteq M_c$. Then there are
\textit{final constraints}, say
$\phi^{(c)}_i$, $i = 1,\dots,a_c$, defining a 
submanifold $M_c$ by equations $\phi^{(c)}_i = 0$, $i = 1,\dots,a_c$, 
called the \textit{final constraint submanifold},
and the following condition is satisfied:
for each
$x \in M_c$ there exists
$(\lambda_{(0)}^1,\dots,\lambda_{(0)}^{a_0})$ such that
\begin{equation}\label{equationequationforlambda}
\{\phi^{(c)}_i, \mathcal{E}\}(x) + \lambda_{(0)}^j\{\phi^{(c)}_i, \phi^{(0)}_j\}(x)= 0, \,i
= 1,\dots,a_c,\,   j = 1,\dots,a_0.
\end{equation}

For each $x \in M_c$ the space of solutions of the linear system of equations
(\ref{equationequationforlambda}) in the unknowns 
$ \lambda_{(0)}^j$ is an affine subspace of 
$ \mathbb{R}^{a_0}$, called $\Lambda_x^{(c)}$,
whose dimension is a locally constant function
$d^{(c)}(x) = a_0 - \operatorname{rank}(\{\phi^{(c)}_i,
\phi^{(0)}_j\}(x))$. One can locally choose $d^{(c)}(x)$ unknowns as being free
parameters and the rest will depend affinely on them. Then the solutions of
(\ref{equationequationforlambda}) form an affine bundle 
$\Lambda^{(c)}$ over $M_c$.
After replacing  $\lambda_{(0)} \in \Lambda^{(c)}$ in the expression of the total
energy,
the corresponding Hamiltonian vector field, 
\begin{equation}\label{X_Dirac_classical}
X_{\mathcal{E}_T}(x) = X_{\mathcal{E}}(x) + \lambda_{(0)}^jX_{\phi^{(0)}_j}(x),
\end{equation}
$x \in M_c$, which will depend on the free unknowns, will be tangent to $M_c$.
Its integral curves, for an arbitrary choice of a time dependence of the free
unknowns, will be solutions of the equations of motion, which is the main
property of the final constraint submanifold $M_c$ from the point of view of
classical mechanics. 
The lack of uniqueness of solution for a given initial condition in $M_c$, given
by the presence of free parameters, indicates, according to Dirac, the
nonphysical character of some of the variables. In our context the physical variables will be given a geometric meaning.
\begin{remark}
Dirac introduces the notion of weak equality for functions on $T^\ast Q$. Two
such functions are \textit{weakly equal}, denoted $f \approx g$, if
$f|M_c = g|M_c$. Then, for instance 
$\phi^{(k)}_j \approx 0$.
If $f \approx 0$ then $f = \nu^i \phi^{(c)}_i$, for some functions $\nu^i$ on
$T^\ast Q$ and conversely.
Since we use the notion of a constraint submanifold, in particular
the final constraint submanifold, we prefer not to use the notation $\approx$.
\end{remark}

Now let us make some comments on the notions of first class and second class
constraints.
The rank of the skew-symmetric matrix 
$(\{\phi^{(c)}_i, \phi^{(c)}_j\}(x)), i,j = 1,\dots,a_c,  $
 is necessarily even, say,
$2s$, and it is assumed to be constant on $M_c$, as part of our regularity conditions; for our results we will assume later a stronger condition (Assumption \ref{2s_constant}).
One can choose, among the $\phi^{(c)}_i$, $i = 1,\dots,a_c$, $2s$ functions $\chi_j$, $j = 1,\dots,2s$ , and replace the rest of the $\phi^{(c)}_i$ by $\phi^{(c)}_i + \alpha^j_i \chi_j $ with appropriate functions $\alpha^j_i$ in such a way that
$\psi_i = 0$, $\chi_j = 0$, 
define 
$M_c$ 
regularly and, besides,
$\{\psi_i, \psi_{i^\prime}\}(x) = 0$,  
$\{\psi_i, \chi_j\}(x) = 0$,
$\det (\{\chi_j, \chi_{j^\prime}\}(x) ) \neq 0$, for $i, i^{\prime} = 1,\dots,a_c - 2s$, $j,j^\prime =
1,\dots,2s$
and
$x \in M_c$. 
The $\phi^{(c)}_j$ are linear combinations with smooth coefficients of the 
$\chi_j$  and $\psi_i$, and conversely.
The functions $\chi_j$, $j = 1,\dots,2s$,
are called \textit{second class constraints}
and the functions
$\psi_i$, $i = 1,\dots,a_c - 2s$, 
are called \textit{first class constraints}.

More generally, any function $\rho$ on $T^\ast Q$
satisfying 
$\rho |M_c = 0$,  
$\{\rho, \psi_i\} |M_c = 0$, 
$\{\rho, \chi_j\} |M_c = 0$,
is a first class constraint with respect to the submanifold 
$M_c$,
by definition. Any function $g$ on $T^\ast Q$ satisfying
$\{g, \psi_i\} |M_c = 0$,
$\{g, \chi_j\} |M_c = 0$,
is a \textit{first class function}, by definition. For instance, the
total energy $\mathcal{E}_T$ is a first class 
function.

Now define the energy $\mathcal{E}_c$ in terms of $\psi_i$, $\chi_j$,
$i = 1,\dots,a_c-2s$, $j = 1,\dots,2s$, as
\[
\mathcal{E}_c = \mathcal{E} + \lambda^i \psi_i + \mu^j \chi_j.
\]
The preservation of the constraints for the evolution generated by $\mathcal{E}_c$ can be
rewritten as
$\{\psi_i,\mathcal{E}_c\}(x) = 0$, which is equivalent to $\{\psi_i, \mathcal{E}\}(x) = 0$ for all
$x \in M_c$, and $\{\chi_j,\mathcal{E}_c\}(x) = 0$,  
for all $x \in M_c$. The latter is equivalent to
\begin{equation*}
\{\chi_i, \mathcal{E}\}(x) + \mu^j\{\chi_i, \chi_j\}(x)= 0,\  i, j = 1,\dots,2s, 
\end{equation*}
for all $x \in M_c$,
which determines the $\mu^j$ as well-defined functions on $M_c$.
Then the solutions 
$(\mu(x), \lambda)$
form an affine bundle with base
$M_c$
and whose fiber, parametrized by the free parameters
$\lambda$,
has dimension
$a_c - 2s$.

Any section $(\mu(x), \lambda(x))$ of this bundle determines 
$\mathcal{E}_c$ as a first class function. 
This means that 
$X_{\mathcal{E}_c}(x) \in T_x M_c$, for each $x \in M_c$, and therefore a solution curve of
$X_{\mathcal{E}_c}$ is contained in $M_c$ provided that the initial condition belongs to
$M_c$. The function $\mathcal{E}_c$ is essentially the \textit{extended Hamiltonian} defined by Dirac.

Dirac defines an interesting bracket, now called the \textit{Dirac 
bracket},
\[
\{F,G\}^*= \{F,G\} - \{F,\chi_i\}c^{ij}\{\chi_j, G\},
\]
which is defined on an open set in $T^\ast Q$ containing
$M_c$, 
where
$c^{ij}$,
which by definition is the inverse matrix of
$\{\chi_i, \chi_j\}$,
is defined.
The Dirac bracket is a Poisson bracket and has the important property
that for \textit{any} function $F$ on $T^\ast Q$,
the condition 
$\{F,\chi_j\}^*  = 0$, $j = 1,\dots,2s$,
is satisfied on a neighborhood of
$M_c$, which implies that 
$\dot{F} = \{F, \mathcal{E}_c\} =  \{F, \mathcal{E}_c\}^*$,
for any function $F$.
Besides,
$\{\psi_j,\psi_i\}^* = 0$, $i,j = 1,\dots,a_c-2s$,
on
$M_c$. 
Because of this, one may say that, with respect to the Dirac bracket, all the
constraints
$\chi_j$, $j = 1,\dots,2s$ 
and 
$\psi_i$, $i = 1,\dots,a_c-2s$, 
behave like first class constraints, with respect to $M_c$. 
\subsection{A brief review of the Gotay-Nester algorithm}\label{The Gotay-Nester Algorithm}
We will need just a few basic facts from the Gotay-Nester theory.

In order to find solution curves to (\ref{11GN}) we can apply the general algorithm for IDEs described in the introduction, and the final IDE can be written
\begin{equation*}\label{GNGNfinal}
\begin{split}
\omega(x)(\dot{x},\,\,)&=d\mathcal{E}(x)\\
(x, \dot x)&\in TM_c.
\end{split}  
\end{equation*}

Let $\omega_c$ be the pullback of $\omega$ to $M_c$. If $\omega_c$ is symplectic, one obtains the simpler equivalent equation
\begin{equation}\label{eq:Ham_form_sympl_omegac}
\omega_c(x)(\dot{x},\,)=d(\mathcal{E}| {M_c})
\end{equation}
on $M_c$, which is in Hamiltonian form. However, one must be aware that if $\omega_c$ is degenerate then
(\ref{eq:Ham_form_sympl_omegac}) is not equivalent in general to (\ref{GNGNfinal}).

The equation on $M_c$,
\begin{equation}\label{eckgotay}
\omega(x)(X,\,)= d\mathcal{E}(x),
\end{equation}
where $X \in TM_c$, defines an affine distribution on $M_c$,
more precisely, one has an affine bundle
$S^{(c)}$ with base $M_c$
whose fiber
$S^{(c)}_x$
at a given point
$x \in M_c$
is, by definition,
\begin{equation*}
S^{(c)}_x = \{X \in T_x M_c \mid \mbox{(\ref{eckgotay}) is satisfied}\}.
\end{equation*}

The equivalence between the Dirac and the Gotay-Nester algorithms can be made explicit as an isomorphism of affine bundles, as follows.
The affine bundles $S^{(c)} \rightarrow M_c$ and $\Lambda^{(c)} \rightarrow M_c$ are isomorphic over the identity on $M_c$, more precisely, the isomorphism $\Lambda^{(c)} \rightarrow S^{(c)}$ is given by
$\lambda_{(0)} \rightarrow X_{\mathcal{E}_T}$, using equation (\ref{X_Dirac_classical}).
In particular, the rank of $S^{(c)}$ is $d^{(c)}$.

\paragraph{Describing the secondary constraints using $\omega$.}
The constraint manifolds $M_k$ defined by the
algorithm can be described by \emph{equations written in terms of the presymplectic form} $\omega$, which is a simple but important idea, because those equations will obviously be
invariant under changes of coordinates preserving $\omega$.
Depending on the nature of $\omega$ one may obtain analytic, smooth, linear, etc., equations, which may
simplify matters in given examples. The context of
reflexive Banach manifolds is used in \cite{MR506712} and \cite{MR0535369}.
 
The condition defining the subsets
$M_{k+1}$, $k = 0,1,\dots$ namely,
\[
\operatorname{i}_{v}\omega (x) = d\mathcal{E}(x) \mbox{ has at least
one solution } v \in T_x M_k,
\]
is equivalent to 
\[
d \mathcal{E}(x) \in \left(T_x M_k
\right)^{\flat}
\]
or, since $\left(T_x M_k
\right)^{\flat} = \left(\left(T_x M_k
\right)^{\omega}\right)^\circ$,
\begin{equation}\label{equationalgorithGN}
\langle d\mathcal{E}(x), \left(T_x M_k
\right)^{\omega}\rangle = \{0\}.
\end{equation}
\section[First and second class constraints, constraint submanifolds...]{First and second class constraints, constraint submanifolds and equations of motion}\label{mainresultsofdiracandgotaynester}

In this section
we will describe some results on first class and second class constraints and constraint submanifolds and also equations of motion which will be useful for our extension of those notions, to be developed in section~\ref{sectionanextensionofetc}. 

\subsection{Linear symplectic geometry}\label{linearsymplecticgeometry}
The following results about linear symplectic geometry are an essential part of many of the arguments that we use in the paper, since under our strong regularity assumptions many of them are essentially of a linear character.

\begin{lemma}\label{lemma6.1}
Let
$(E,\Omega)$
be a symplectic vector space of dimension
$2n$,
$V\subseteq E$
a given subspace. For a given basis
$\alpha_i$, $i=1,\ldots, r$
of
$V^\circ$,
let
$X_i=\alpha_i^{\sharp},\,i=1,\ldots r$. Then the rank of the
matrix $[\alpha_i(X_j)]$ is even, say $2s$, and $X_i,\,i=1,\ldots,
r$ form a basis of $V^{\Omega}$. Moreover, the basis
$\alpha_i,\,i=1,\ldots, r$ can be chosen such that for all $j=1,\dots,r$ 
\begin{align*}
\alpha_i(X_j) & = \delta_{i,j-s},\,\,1 \leq i \leq s \\
\alpha_i(X_j) & = -\delta_{i-s,j},\,\,s+1 \leq i \leq 2s \\
\alpha_i(X_j) & = 0,\,\,2s < i \leq r.
\end{align*}
\end{lemma}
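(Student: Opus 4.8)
The statement is essentially a normal-form result for a subspace of a symplectic vector space, so the natural strategy is to work with $V^{\Omega}$ and the map $\flat \colon E \to E^{\ast}$, $X \mapsto \Omega(X,\cdot)$. First I would record the standard facts: $\flat$ is an isomorphism (nondegeneracy of $\Omega$), $(V^{\Omega})^{\circ} = \flat(V)$ — equivalently $V^{\circ} = \flat(V^{\Omega})$ after applying $\flat^{-1}$ — and $(V^{\Omega})^{\Omega} = V$. Hence $\flat^{-1}$ carries the basis $\alpha_1,\dots,\alpha_r$ of $V^{\circ}$ to a basis $X_1,\dots,X_r$ of $\flat^{-1}(V^{\circ})$; I should check $\flat^{-1}(V^{\circ}) = V^{\Omega}$, which follows by dualizing the identity $\langle \alpha, X\rangle = \Omega(\alpha^{\sharp}, X)$ and the definition of the annihilator. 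This already gives the "$X_i$ form a basis of $V^{\Omega}$" part of the claim, for \emph{any} choice of basis $\alpha_i$.

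Next, observe that $\alpha_i(X_j) = \langle \alpha_i, \alpha_j^{\sharp}\rangle = \Omega(\alpha_i^{\sharp}, \alpha_j^{\sharp}) = \Omega(X_i, X_j)$, so the matrix $[\alpha_i(X_j)]$ is exactly the Gram matrix of the restriction of $\Omega$ to $V^{\Omega}$ in the basis $X_1,\dots,X_r$. This is skew-symmetric, so its rank is even, say $2s$; that disposes of the rank-$2s$ assertion. The radical of $\Omega|_{V^{\Omega}}$ is $V^{\Omega} \cap (V^{\Omega})^{\Omega} = V^{\Omega} \cap V$, which has dimension $r - 2s$. Now I would invoke the linear-algebra normal form for a skew-symmetric bilinear form on $V^{\Omega}$: choose a basis $e_1,\dots,e_s, f_1,\dots,f_s, g_1,\dots,g_{r-2s}$ with $\Omega(e_a, f_b) = \delta_{ab}$, $\Omega(e_a, e_b) = \Omega(f_a, f_b) = 0$, and the $g_k$ spanning the radical. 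Relabelling this basis as $X_1,\dots,X_r$ (put the $e$'s first, then the $f$'s, then the $g$'s) gives precisely the three displayed relations with $\Omega(X_i,X_j) = \delta_{i,j-s}$ for $1\le i\le s$, $= -\delta_{i-s,j}$ for $s+1 \le i \le 2s$, and $=0$ for $2s < i \le r$.

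The one point requiring a little care is matching the \emph{basis of $V^{\circ}$} to this normal basis of $V^{\Omega}$, since the lemma wants the $\alpha_i$ chosen so that $X_i = \alpha_i^{\sharp}$ realizes the normal form. But this is automatic: since $\flat = {}^{\sharp\,-1}$ restricts to an isomorphism $V^{\Omega} \to V^{\circ}$, simply \emph{define} $\alpha_i := \flat(X_i) = \Omega(X_i,\cdot)$ for the normal basis $X_1,\dots,X_r$ produced above; then $\{\alpha_i\}$ is a basis of $V^{\circ}$, $\alpha_i^{\sharp} = X_i$, and $\alpha_i(X_j) = \Omega(X_i,X_j)$ has the required form. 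So the construction runs $X_i \leadsto \alpha_i$ rather than $\alpha_i \leadsto X_i$, which is legitimate because the correspondence is bijective.

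**Main obstacle.** There is no serious obstacle; the only thing to be vigilant about is sign and index bookkeeping — in particular the asymmetry between the ranges $1\le i\le s$ and $s+1\le i\le 2s$ in the displayed formulas, which encodes the skew-symmetry $\Omega(X_i, X_{i+s}) = \delta = -\Omega(X_{i+s}, X_i)$, and making sure the chosen symplectic-basis convention on $V^{\Omega}$ lines up with it. I would also state explicitly the ambient linear-algebra fact used (existence of a Darboux-type basis for a skew form with nontrivial radical), since everything else reduces to it together with the duality identities for $\flat$ and ${}^{\sharp}$.
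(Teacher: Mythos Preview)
Your proposal is correct and follows essentially the same approach as the paper: identify $V^\Omega$ with $(V^\circ)^\sharp$, choose a Darboux-type basis $X_i$ for the skew form $\Omega|_{V^\Omega}$, and then set $\alpha_i = X_i^\flat$. The paper's proof is terser---it simply invokes the ``well-known result'' giving the normal basis on $V^\Omega$ and remarks that the rank-$2s$ and basis claims are then easy---whereas you spell out explicitly that $\alpha_i(X_j) = \Omega(X_i,X_j)$ is the Gram matrix of a skew form (hence of even rank), but the underlying argument is the same.
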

\begin{proof} Consider the subspace $V^\Omega=(V^\circ)^\sharp$. By
a well-known result there is a basis $X_i,\,i=1,\ldots,r$ of
$V^\Omega$ such that for all $j=1,\dots,r$
\begin{align*}
\Omega(X_i,X_j) & = \delta_{i,j-s},\,\,1 \leq i \leq s \\
\Omega(X_i,X_j) & = -\delta_{i-s,j},\,\,s+1 \leq i \leq 2s \\
\Omega(X_i,X_j) & = 0,\,2s < i \leq r
\end{align*}
then take $\alpha_i=X_i^{\flat}$. The first part of the lemma is easy to prove using this.
\end{proof}
\begin{lemma}\label{lemma6.2}
Let $\alpha_i$, $i=1, \dots, r$ be a basis of $V^\circ$ having the
properties stated in Lemma \ref{lemma6.1}. Then $X_i$, $i=2s+1,\dots,
r$ form a basis of $V \cap V^\Omega$.
\end{lemma}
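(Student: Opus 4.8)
The plan is to show both inclusions $X_i \in V \cap V^\Omega$ for $i = 2s+1,\dots,r$, and that these vectors span all of $V \cap V^\Omega$, using the normal form provided by Lemma \ref{lemma6.1}. First I would recall what is already available: the $X_i$, $i=1,\dots,r$, form a basis of $V^\Omega$, and the pairing matrix $[\alpha_i(X_j)] = [\Omega(X_i,X_j)]$ has the explicit block structure of Lemma \ref{lemma6.1}, with a nondegenerate $2s \times 2s$ symplectic block in the top-left corner and zeros elsewhere. In particular, for $i > 2s$ we have $\Omega(X_i, X_j) = 0$ for every $j = 1,\dots,r$, i.e.\ $\alpha_j(X_i) = 0$ for all $j$, which literally says $X_i$ annihilates $V^\circ$. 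Since $V$ is finite-dimensional and $V = (V^\circ)^\circ$, this gives $X_i \in V$. Combined with $X_i \in V^\Omega$ (which holds for all $i=1,\dots,r$ by Lemma \ref{lemma6.1}), we get $X_i \in V \cap V^\Omega$ for $i = 2s+1,\dots,r$. These vectors are linearly independent because they are part of the basis $X_1,\dots,X_r$ of $V^\Omega$.

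It remains to check that $X_{2s+1},\dots,X_r$ span $V \cap V^\Omega$, i.e.\ that $\dim(V \cap V^\Omega) = r - 2s$. For this I would take an arbitrary $Y \in V \cap V^\Omega$ and write $Y = \sum_{j=1}^{r} c^j X_j$ using the basis of $V^\Omega$. Pairing with $\alpha_i$ for $1 \le i \le 2s$ and using $Y \in V$, hence $\alpha_i(Y) = 0$, together with the normal-form identities $\alpha_i(X_j) = \pm\delta_{\cdots}$ that are nonzero only within the top-left $2s\times 2s$ block, forces $c^1 = \dots = c^{2s} = 0$; explicitly, $0 = \alpha_i(Y) = c^{i+s}$ for $1 \le i \le s$ and $0 = \alpha_i(Y) = -c^{i-s}$ for $s+1 \le i \le 2s$. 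Hence $Y = \sum_{j=2s+1}^{r} c^j X_j$, proving the spanning claim.

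Alternatively, and perhaps more cleanly, one can avoid coordinates on $Y$ by a dimension count: $V \cap V^\Omega$ is the radical of the restriction of $\Omega$ to $V$, and equivalently it is the radical of $\Omega$ restricted to $V^\Omega = (V^\circ)^\sharp$; the matrix of that restricted form in the basis $X_1,\dots,X_r$ is exactly $[\Omega(X_i,X_j)]$, whose rank is $2s$ by Lemma \ref{lemma6.1}, so its radical has dimension $r - 2s$ and is spanned by the last $r-2s$ basis vectors, which are precisely the $X_i$ with $\Omega(X_i,\cdot) \equiv 0$ on $V^\Omega$. I do not anticipate a serious obstacle here: the only mild subtlety is keeping straight that "$X_i$ annihilates $V^\circ$'' is the same as "$X_i \in V$'', which uses finite-dimensionality ($V^{\circ\circ} = V$), and that the vanishing of the pairing of $X_i$ against the $\alpha_j$ is the same as the vanishing of $\Omega(X_i, \cdot)$ on $V^\Omega$ — both are immediate from the setup in Lemma \ref{lemma6.1}.
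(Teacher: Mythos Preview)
Your proposal is correct and follows essentially the same approach as the paper: write an element of $V^\Omega$ in the basis $X_1,\dots,X_r$, impose the membership condition $\alpha_j(\cdot)=0$ for $V$, and read off from the block structure of $[\alpha_i(X_j)]$ that the first $2s$ coefficients vanish while the last $r-2s$ are free. The paper does this in a single sweep rather than splitting into two inclusions, but the argument is the same; your alternative dimension-count via the radical of $\Omega|_{V^\Omega}$ is a mild repackaging of the same computation (and is fine, though note that the identification of $V\cap V^\Omega$ with the radical of $\Omega|_V$ is stated as a separate Lemma~\ref{lemmaperp} \emph{after} this one in the paper, so you should justify the $V^\Omega$ version directly via $(V^\Omega)^\Omega=V$ if you go that route).
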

\begin{proof} Let $X=\lambda^i X_i$ be an arbitrary vector in $V^\Omega$.
Now $\lambda^iX_i \in V \cap
V^\Omega$ iff $\alpha_j(X)=\lambda^i(\alpha_j(X_i))=0,\,j=1,\ldots, r$. Since the first $2s$
columns of the matrix $[\alpha_i(X_j)]$ are linearly independent and the rest are zero,
we must have $\lambda^i=0$, for $1 \leq i \leq 2s$, and $\lambda^i$, $i=2s+1,\ldots, r$ are arbitrary. This means that $V \cap V^\Omega$ is generated by
$X_i$, $i=2s+1,\ldots, r$.
\end{proof}
\begin{corollary}\label{cordim}
$\dim V \cap V^\Omega=r-2s$.
\end{corollary}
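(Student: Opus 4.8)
The plan is to simply read off the dimension from the basis exhibited in Lemma \ref{lemma6.2}. That lemma tells us that, once we fix a basis $\alpha_i$, $i = 1, \dots, r$, of $V^\circ$ with the normal form of Lemma \ref{lemma6.1}, the vectors $X_i$ for $i = 2s+1, \dots, r$ constitute a basis of $V \cap V^\Omega$.

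First I would note that such a basis exists: Lemma \ref{lemma6.1} guarantees we can choose $\alpha_i$ with the stated pairing relations, and Lemma \ref{lemma6.2} then identifies $V \cap V^\Omega$ precisely as $\operatorname{span}\{X_{2s+1}, \dots, X_r\}$. Since the $X_i$, $i = 1, \dots, r$, are linearly independent (they form a basis of $V^\Omega$ by Lemma \ref{lemma6.1}), so is the sublist $X_{2s+1}, \dots, X_r$. Therefore $\dim(V \cap V^\Omega)$ equals the number of indices in the range $2s+1 \le i \le r$, namely $r - 2s$.

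There is essentially no obstacle here; the corollary is a bookkeeping consequence of the two preceding lemmas, and the only thing to check is the elementary count $r - (2s+1) + 1 = r - 2s$. (One could alternatively give an independent proof via rank-nullity applied to the restriction of $\Omega^\flat$ to $V^\Omega$, whose kernel is $(V^\Omega)^\Omega \cap V^\Omega = \overline{V} \cap V^\Omega$ after accounting for the radical, but invoking Lemma \ref{lemma6.2} is the cleanest route given what has already been established.)
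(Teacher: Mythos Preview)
Your proof is correct and follows exactly the paper's approach: the paper simply writes ``Immediate from Lemma \ref{lemma6.2},'' and you have merely spelled out the counting argument that Lemma \ref{lemma6.2} provides a basis of $V\cap V^\Omega$ with $r-2s$ elements.
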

\begin{proof}
Immediate from Lemma \ref{lemma6.2}.
\end{proof}
 Let $\omega$ be the pullback of $\Omega$ to $V$ via the inclusion. Then $(V,\omega)$ is a presymplectic
space. In what follows, the ${}^\flat$ and ${}^\sharp$ operators are taken with respect to $\Omega$ unless specified otherwise.
\begin{lemma}\label{lemmaperp}
$V^\omega=V \cap V^\Omega$.
\end{lemma}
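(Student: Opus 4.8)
The plan is to prove the identity $V^\omega = V \cap V^\Omega$ by unwinding the definitions of the operators involved and showing mutual inclusion, although in fact a direct equality of sets will drop out.

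First I would recall the definitions. For $v \in V$, the presymplectic form $\omega$ is $\Omega|_{V\times V}$, so $v \in V^\omega$ means $\omega(v,w) = 0$ for all $w \in V$, i.e.\ $\Omega(v,w) = 0$ for all $w \in V$. On the other hand, $v \in V^\Omega$ means $\Omega(v,w) = 0$ for all $w \in V$ as well --- but here $v$ ranges over all of $E$, not just $V$. So the only difference between $V^\omega$ and $V^\Omega$ is the ambient space in which the annihilated vector is allowed to live: $V^\omega$ requires $v \in V$, whereas $V^\Omega$ does not. Hence $V^\omega = \{v \in V : \Omega(v,w) = 0 \ \forall w \in V\} = V \cap \{v \in E : \Omega(v,w) = 0 \ \forall w \in V\} = V \cap V^\Omega$.

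So the proof is essentially a one-line observation, and I would write it exactly that way: expand $V^\omega$, note that the condition ``$\Omega(v,w)=0$ for all $w\in V$'' is precisely the condition defining $V^\Omega$, and intersect with the constraint $v\in V$ that is built into the definition of $\omega$-orthogonal complement. No linear-algebra machinery (such as Lemma~\ref{lemma6.1} or the symplectic normal form) is needed here; those will be relevant for the corollary-type statements about dimensions, but not for this set-theoretic identity.

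There is essentially no obstacle. The only thing to be careful about is the bookkeeping of which ambient space each orthogonal complement refers to --- the notational convention in the excerpt is that ${}^\Omega$ is taken inside $E$ while ${}^\omega$ is taken inside $V$ --- and to state clearly that $\omega(v,w) = \Omega(v,w)$ for $v,w \in V$ by definition of the pullback. Once that is made explicit, the equality $V^\omega = V \cap V^\Omega$ is immediate, and one could even remark that combined with Corollary~\ref{cordim} this gives $\dim V^\omega = r - 2s$, tying the presymplectic picture back to the linear-algebra normal form established above.
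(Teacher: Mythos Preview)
Your proposal is correct and takes essentially the same approach as the paper: both arguments unwind the definitions of $V^\omega$ and $V^\Omega$, observe that $\omega(v,w)=\Omega(v,w)$ for $v,w\in V$, and conclude that the membership condition for $V^\omega$ is exactly ``$v\in V$ and $v\in V^\Omega$''. The paper's proof is the same one-line chain of equivalences you describe.
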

\begin{proof}
$X \in V^\omega$
iff
$\omega(X,Y)=0,\,\forall\,Y \in
V$
iff
$\Omega(X,Y)=0,\,\forall\,Y \in V$.
This is equivalent to
$X \in V \cap
V^\Omega$.
\end{proof}
\begin{lemma}\label{lemma6.5}
Let
$\gamma_i,\,i=1,\ldots r$
be a given basis of
$V^\circ$
and
let
$Y_i=\gamma_i^\sharp,\,i=1, \ldots r$.
Let
$\beta \in E^*$
be
given. Then the following conditions are equivalent.
\begin{itemize}
  \item[(i)] $\beta(V^\omega)=0$.
\item[(ii)] The linear system
\begin{equation}
\label{linsyst} \beta(Y_i)+\lambda^j\gamma_j(Y_i)=0
\end{equation}
has solution
$\lambda = (\lambda^1,\dots,\lambda^r)$.
\end{itemize}
\end{lemma}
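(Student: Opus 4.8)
The plan is to recognize that the solvability of the linear system \eqref{linsyst} is exactly a statement about when $\beta$ lies in the image of the transpose of a certain map, and then to identify that image with the annihilator of $V^\omega$. First I would set up the linear algebra: the vectors $Y_i = \gamma_i^\sharp$, $i = 1,\dots,r$, span $V^\Omega$ by Lemma \ref{lemma6.1} (applied to the basis $\gamma_i$ of $V^\circ$), and by Lemma \ref{lemmaperp} we have $V^\omega = V \cap V^\Omega$. Consider the map $A \colon \mathbb{R}^r \to \mathbb{R}^r$ given by $(A\lambda)_i = \gamma_j(Y_i)\lambda^j$, i.e. the matrix $[\gamma_j(Y_i)]$, and the vector $b \in \mathbb{R}^r$ with $b_i = \beta(Y_i)$. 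Then \eqref{linsyst} says $b + A\lambda = 0$ has a solution, i.e. $b \in \operatorname{im} A$. By the standard fact that $\operatorname{im} A = (\ker A^{\mathsf{T}})^\circ$, and since $A$ is symmetric up to sign — more precisely $\gamma_j(Y_i) = \Omega(Y_j, Y_i) = -\Omega(Y_i,Y_j)$, so $A^{\mathsf{T}} = -A$ — we get $b \in \operatorname{im} A$ iff $b \perp \ker A$.

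Next I would identify $\ker A$ with (the coordinate description of) $V^\omega$. A vector $\mu = (\mu^1,\dots,\mu^r)$ lies in $\ker A$ iff $\gamma_j(Y_i)\mu^j = 0$ for all $i$, i.e. iff $\Omega(\mu^j Y_j, Y_i) = 0$ for all $i$, i.e. iff the vector $Z := \mu^j Y_j \in V^\Omega$ satisfies $\Omega(Z, Y_i) = 0$ for all $i$. Since the $Y_i$ span $V^\Omega$, this says $Z \in (V^\Omega)^\Omega \cap V^\Omega$. Now $(V^\Omega)^\Omega = V$ (as $\Omega$ is nondegenerate and $V$ is a subspace), so $Z \in V \cap V^\Omega = V^\omega$ by Lemma \ref{lemmaperp}. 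Conversely any $Z \in V^\omega \subseteq V^\Omega$ is of the form $\mu^j Y_j$ and the corresponding $\mu \in \ker A$. Thus the assignment $\mu \mapsto \mu^j Y_j$ maps $\ker A$ onto $V^\omega$.

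Finally I would translate the orthogonality condition $b \perp \ker A$ back into the intrinsic statement. For $\mu \in \ker A$ with $Z = \mu^j Y_j \in V^\omega$, we have $b \cdot \mu = \beta(Y_i)\mu^i = \beta(\mu^i Y_i) = \beta(Z)$. Hence $b \perp \ker A$ is equivalent to $\beta(Z) = 0$ for every $Z \in V^\omega$, which is precisely condition (i), $\beta(V^\omega) = 0$. Combining with the first paragraph, (ii) $\Leftrightarrow b \in \operatorname{im} A \Leftrightarrow b \perp \ker A \Leftrightarrow$ (i), completing the proof.

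The main obstacle, such as it is, is purely bookkeeping: keeping straight which $\sharp/\flat$ and $\Omega$-orthogonal-complement identities are being used, in particular the double-annihilator identity $(V^\Omega)^\Omega = V$ and the fact that the $Y_i$ span $V^\Omega$ (not a basis in general, since $\{\gamma_i\}$ is a basis of $V^\circ$ but the $\sharp$ map need not be injective on all of $E^*$ — though here it is, as $\Omega$ is nondegenerate, so the $Y_i$ are in fact a basis of $V^\Omega$). No genuine difficulty is expected; the lemma is a finite-dimensional linear-algebra duality statement once the translation to matrices is made.
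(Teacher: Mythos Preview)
Your proof is correct and takes a genuinely different route from the paper's. The paper first reduces to the special basis $\alpha_l$ of Lemma~\ref{lemma6.1}: it shows (via a change of basis of $V^\Omega$) that solvability of \eqref{linsyst} is equivalent to solvability of the analogous system $\beta(X_k)+\mu^l\alpha_l(X_k)=0$, and then reads off the conclusion from the explicit block form of $[\alpha_l(X_k)]$ (symplectic $2s\times 2s$ block plus zero block) together with Lemmas~\ref{lemma6.2} and~\ref{lemmaperp}. Your argument, by contrast, stays in the original basis $\gamma_i$: you exploit the skew-symmetry $A^{\mathsf T}=-A$ to get $\operatorname{im}A=(\ker A)^\perp$ directly, and then identify $\ker A$ with $V^\omega$ via $\mu\mapsto\mu^jY_j$ using only $(V^\Omega)^\Omega=V$ and Lemma~\ref{lemmaperp}. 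Your route is somewhat cleaner in that it avoids the change-of-basis computation and does not rely on the normal form of Lemma~\ref{lemma6.1} beyond the fact that the $Y_i$ form a basis of $V^\Omega$; the paper's route, on the other hand, makes the rank structure (and hence Corollary~\ref{corollary6.7}) more transparent, since in the special basis the coefficient matrix is visibly of rank $2s$.
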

\begin{proof}
Let us show that (\ref{linsyst}) has solution
$(\lambda^1,\ldots,\lambda^r)$
iff the system
\begin{equation}
\label{linsyst2} \beta(X_k)+\mu^l\alpha_l(X_k)=0
\end{equation}
has solution
$(\mu^1,\ldots,\mu^r)$,
where
$k,l=1,\ldots r$
and
$\alpha_l$
is a basis satisfying the conditions of Lemma
\ref{lemma6.1}.
Since
$Y_i,\,i=1,\ldots r$
and
$X_k,\,k=1,\ldots r$
are both bases
of
$V^\Omega$ there is an invertible matrix $[A_k^i]$ such that
$X_k=A_k^iY_i$. Let $[B_i^l]$ be the inverse of $[A_k^i]$, so
$Y_i=B_i^lX_l$. Assume that (\ref{linsyst}) has solution
$\lambda^j,\,j=1,\ldots r$. We can write (\ref{linsyst}) as
\[
\beta(Y_i)+\lambda^j\Omega(Y_j,Y_i)=0,\,i=1,\ldots, r.
\]
Using this we have that for $k=1,\ldots, r$ 
\begin{equation*}
\begin{split}
0&=\beta(A_k^iY_i)+\lambda^j\Omega(Y_j,A_k^iY_i)=
\beta(X_k)+\lambda^j\Omega(Y_j,X_k)\\
&=\beta(X_k)+\lambda^j\Omega(B_j^lX_l,X_k)=\beta(X_k)+\mu^l\Omega(X_l,X_k)
\end{split}
\end{equation*}
where $\mu^l=\lambda^jB_j^l$. This means that the system
(\ref{linsyst2}) has solution. The converse is analogous.
Using this, Lemmas \ref{lemma6.2} and \ref{lemmaperp}, and the form of the coefficient matrix $[\alpha_l(X_k)]$ in Lemma \ref{lemma6.1},
the proof that (\ref{linsyst2}) has solution
$\mu = (\mu^1,\dots,\mu^r)$
iff $\beta(V^{\omega})=0$ is easy and is left to the reader.
\end{proof}
\begin{lemma}\label{lemma6.6}
Consider the hypotheses in Lemma \ref{lemma6.5}. Then the solutions to
\begin{equation}\label{presymplecticsharp}
i_{X}\omega=\beta|V  
\end{equation}
(if any) are precisely $X=\beta^\sharp+\lambda^jY_j$, where
$(\lambda^1,\ldots,\lambda^r)$ is a solution to (\ref{linsyst}). A solution to (\ref{presymplecticsharp}) exists if and only if $\beta(V^{\omega})=0$.
If $\omega$ is symplectic then (\ref{linsyst}) and (\ref{presymplecticsharp}) have a unique solution and if, in
addition, 
$\beta^\sharp \in V$, then 
$\lambda^1 = 0$, \dots, $\lambda^r = 0$
and $\beta^\sharp$ coincides with $X  = (\beta | V)^{\sharp_{\omega}}$ defined by (\ref{presymplecticsharp}).
\end{lemma}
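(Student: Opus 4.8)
The plan is to reduce the statement to linear algebra for the operators $\flat,\sharp$ of $\Omega$ and then invoke Lemma~\ref{lemma6.5}. First I would note that $X$ solves \eqref{presymplecticsharp} exactly when $X\in V$ (which is implicit in the notation $i_X\omega$) and $\Omega(X,Y)=\beta(Y)$ for all $Y\in V$; the latter says $(X^\flat-\beta)|V=0$, i.e.\ $X^\flat-\beta\in V^\circ$, and since $\gamma_1,\dots,\gamma_r$ is a basis of $V^\circ$ this means $X^\flat=\beta+\lambda^j\gamma_j$ for some scalars $\lambda^j$, equivalently $X=\beta^\sharp+\lambda^jY_j$. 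Conversely, any such $X$ that happens to lie in $V$ satisfies $i_X\omega=(X^\flat)|V=(\beta+\lambda^j\gamma_j)|V=\beta|V$, because $\gamma_j|V=0$. So the only point to settle is that, for $X=\beta^\sharp+\lambda^jY_j$, the membership $X\in V$ is equivalent to $\lambda$ solving \eqref{linsyst}.

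To see this I would compute $\gamma_i(X)$ using $\gamma_i=Y_i^\flat$ and the skew-symmetry of $\Omega$: one gets $\gamma_i(\beta^\sharp)=\Omega(Y_i,\beta^\sharp)=-\beta(Y_i)$ and $\gamma_i(Y_j)=\Omega(Y_i,Y_j)=-\gamma_j(Y_i)$, hence $\gamma_i(X)=-\bigl(\beta(Y_i)+\lambda^j\gamma_j(Y_i)\bigr)$. Since $V=(V^\circ)^\circ$ and the $\gamma_i$ span $V^\circ$, the condition $X\in V$ holds iff $\gamma_i(X)=0$ for all $i$, i.e.\ iff $\lambda$ solves \eqref{linsyst}. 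This gives the claimed parametrization of solutions. Existence is then immediate from Lemma~\ref{lemma6.5}: \eqref{linsyst} is solvable iff $\beta(V^\omega)=0$, hence so is \eqref{presymplecticsharp}; moreover $\lambda\mapsto\beta^\sharp+\lambda^jY_j$ is a bijection between the two solution sets, being injective because the $Y_j$, as the $\sharp$-images of the basis $\gamma_1,\dots,\gamma_r$ of $V^\circ$, are linearly independent.

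For the symplectic case, $\omega$ being symplectic means $\omega^\flat\colon V\to V^*$ is an isomorphism, so \eqref{presymplecticsharp} has the unique solution $X=(\beta|V)^{\sharp_\omega}$, and by the bijection above \eqref{linsyst} has a unique solution as well. If in addition $\beta^\sharp\in V$, then $\beta^\sharp$ itself lies in $V$ and satisfies $i_{\beta^\sharp}\omega=(\beta^\sharp)^\flat|V=\beta|V$, so by uniqueness $\beta^\sharp=(\beta|V)^{\sharp_\omega}$; comparing with the parametrization $X=\beta^\sharp+\lambda^jY_j$ forces $\lambda^jY_j=0$, and hence $\lambda^1=\dots=\lambda^r=0$ by linear independence of the $Y_j$.

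The argument is essentially routine once Lemma~\ref{lemma6.5} is in hand; the only place calling for care is the bookkeeping with the $\flat/\sharp$ conventions and the transpose appearing in \eqref{linsyst}, so that ``$X\in V$'' translates into \eqref{linsyst} itself rather than its transposed system, and the implicit requirement $X\in V$ built into the notation $i_X\omega$ is not dropped along the way.
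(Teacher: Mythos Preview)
Your proof is correct and follows essentially the same route as the paper's: both characterize solutions as $X=\beta^\sharp+\lambda^jY_j$ with $X\in V$, reduce the membership $X\in V$ to the system \eqref{linsyst}, and invoke Lemma~\ref{lemma6.5} for existence. The only minor differences are that the paper phrases the equivalence ``$X\in V\Leftrightarrow\lambda$ solves \eqref{linsyst}'' via the chain $(\beta+\lambda^j\gamma_j)(V^\Omega)=0\Leftrightarrow\beta+\lambda^j\gamma_j\in V^\flat\Leftrightarrow\beta^\sharp+\lambda^jY_j\in V$, whereas you compute $\gamma_i(X)$ directly; and in the symplectic case the paper concludes $\lambda^j=0$ by noting $\lambda^jY_j\in V\cap V^\Omega=V^\omega=\{0\}$ (via Lemma~\ref{lemmaperp}), while you argue that $\beta^\sharp$ itself is a solution and invoke uniqueness---both are fine.
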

\begin{proof} Since $Y_j$, $j=1,\dots, r$ form a basis of $V^\Omega$
we have that $(\lambda^1,\dots,\lambda^r)$ is a solution to (\ref{linsyst}) iff
$(\beta + \lambda^j\gamma_j)(V^{\Omega})=0$ iff $\beta +
\lambda^j\gamma_j \in V^\flat$ iff $\beta^\sharp+\lambda^jY_j \in
V$.
Now, let $X=\beta^\sharp + \lambda^jY_j$, where
$(\lambda^1,\ldots,\lambda^r)$ satisfies (\ref{linsyst}). Then we
have $X \in V$ as we have just seen and we also have
\[
i_{X}\omega = (i_{X}\Omega) | V = X^\flat | V =
(\beta + \lambda^j \gamma_j)| V = \beta | V,
\]
since $\gamma_j,\,j=1,\dots, r$ generate $V^{\circ}$.
We have proven that $X$ is a solution to (\ref{presymplecticsharp}). To prove
that every solution $X$ to (\ref{presymplecticsharp}) can be written as
before, we can reverse the previous argument. 
Using this, it is clear that if $\omega$ is symplectic then (\ref{linsyst}) has
unique solution, in particular, we have that 
$\det \left(\gamma_j(Y_i)\right) \neq 0$.
If, in addition,
$\beta^\sharp \in V$ then $\lambda^jY_j=X-\beta^\sharp\in V$. Since $Y_j$, $j = 1,\dots,r$ is a basis of $V^\Omega$, using Lemma \ref{lemmaperp} and the fact that $V^\omega=\{0\}$ we get that $\lambda^j=0$ for $j = 1,\dots,r$.
\end{proof}
\begin{corollary}\label{corollary6.7}
Let $\Lambda=\{\lambda\mid \lambda \mbox{ satisfies
(\ref{linsyst})}\}$. Then $\dim \Lambda=r-2s=\dim
\ker{\omega}$.
\end{corollary}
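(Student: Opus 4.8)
The plan is to deduce the statement directly from Lemma \ref{lemma6.6} applied to the zero covector, together with Lemma \ref{lemmaperp} and Corollary \ref{cordim}. First I would observe that, whenever $\Lambda$ is nonempty (which is the case of interest; if $\beta(V^\omega)\neq 0$ then $\Lambda=\emptyset$ by Lemma \ref{lemma6.5} and there is nothing to say), it is an affine subspace of $\mathbb{R}^r$: fixing any $\lambda_*\in\Lambda$, a vector $\lambda$ lies in $\Lambda$ if and only if $\lambda-\lambda_*$ solves the associated homogeneous system $\lambda^j\gamma_j(Y_i)=0$, $i=1,\dots,r$. Hence $\Lambda=\lambda_*+\Lambda_0$, where $\Lambda_0$ denotes the solution space of that homogeneous system, and in particular $\dim\Lambda=\dim\Lambda_0$.

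Next I would apply Lemma \ref{lemma6.6} with $\beta=0$. Then equation (\ref{presymplecticsharp}) reads $i_X\omega=0$, and its solution set inside $V$ is exactly $V^\omega=\ker\omega$; on the other hand, by Lemma \ref{lemma6.6} this same solution set equals $\{\lambda^jY_j\mid \lambda\in\Lambda_0\}$, since here $\beta^\sharp=0$ and (\ref{linsyst}) reduces to the homogeneous system. Thus the linear map $\Lambda_0\to\ker\omega$, $\lambda\mapsto\lambda^jY_j$, is onto. It is also injective, because $Y_j=\gamma_j^\sharp$, $j=1,\dots,r$, are linearly independent ($\gamma_j$ is a basis of $V^\circ$ and $\sharp$ is an isomorphism $E^\ast\to E$). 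Therefore $\dim\Lambda_0=\dim\ker\omega$.

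Finally, I would compute $\dim\ker\omega$ using the linear symplectic lemmas already proven: $\ker\omega=V^\omega=V\cap V^\Omega$ by Lemma \ref{lemmaperp}, and $\dim(V\cap V^\Omega)=r-2s$ by Corollary \ref{cordim}. Stringing the equalities together yields $\dim\Lambda=\dim\Lambda_0=\dim\ker\omega=r-2s$, as claimed. I do not expect any real obstacle here; the only point requiring a word of care is the tacit assumption $\Lambda\neq\emptyset$ so that $\dim\Lambda$ is meaningful. An alternative route that sidesteps Lemma \ref{lemma6.6} is to note $\dim\Lambda_0=r-\operatorname{rank}[\gamma_j(Y_i)]$ and that the matrix $[\gamma_j(Y_i)]$ is congruent to the matrix $[\alpha_l(X_k)]$ of Lemma \ref{lemma6.1} (under the change of basis relating $\gamma_i$ and $\alpha_i$), hence has rank $2s$; this is slightly longer but also works.
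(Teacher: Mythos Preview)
Your proof is correct. The paper's own argument is essentially what you call the ``alternative route'': it notes that $\ker\omega=V^\omega$ has dimension $r-2s$ (Corollary~\ref{cordim} and Lemma~\ref{lemmaperp}), and separately that the affine solution space of (\ref{linsyst}) has dimension $r$ minus the rank of the coefficient matrix $[\gamma_j(Y_i)]$, which is $2s$. Your main argument takes a genuinely different path: applying Lemma~\ref{lemma6.6} with $\beta=0$ to produce an explicit linear bijection $\Lambda_0\to\ker\omega$, $\lambda\mapsto\lambda^jY_j$, and then reading off the dimension. This is slightly longer but conceptually tighter, since the equality $\dim\Lambda=\dim\ker\omega$ becomes a direct isomorphism rather than two independent computations landing on the same number; the paper's route is terser but its claim that the coefficient matrix has rank $2s$ tacitly relies on the change-of-basis relation (from the proof of Lemma~\ref{lemma6.5}) between $[\gamma_j(Y_i)]$ and the canonical matrix $[\alpha_l(X_k)]$ of Lemma~\ref{lemma6.1}.
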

\begin{proof} $\ker{\omega}=V^\omega$, which has dimension $r-2s$ from Corollary \ref{cordim} and Lemma \ref{lemmaperp}.
On the other hand the
dimension of the subspace of $\lambda$ satisfying (\ref{linsyst})
is clearly also $r-2s$, since the coefficient matrix has rank $2s$.
\end{proof}

\subsection[Poisson-algebraic and geometric study of constraints...]{Poisson-algebraic and geometric study of constraints and constraint submanifolds.}\label{poisson-dirac-gotay-nester}

We shall start with the constrained Hamiltonian system 
$(N, \Omega, \mathcal{E}, M)$, where $(N, \Omega)$ is a symplectic manifold, 
$\mathcal{E} \colon N \rightarrow \mathbb{R}$ is the energy and $M \subseteq N$ is the primary constraint submanifold.
The equation to be solved, according to the Gotay-Nester algorithm, is
\begin{equation}\label{diracagainagain}
\omega(x)(X,\,\,) = d\mathcal{E}(x) | T_x M,
\end{equation}
where $X \in T_x M_c$ and $x \in M_c$, $M_c$ being {\it the final
constraint.}
Let $\omega_c$ be the pullback of $\Omega$ via the inclusion
of $M_c$ in $N$. Since $\omega_c$ is presymplectic, $\ker \omega_c$
is an involutive distribution.
{}From now on we will assume the following.

\begin{assumption}\label{K1}
 The distribution $\ker \omega_c$ has
constant rank and defines a regular foliation $K_c$, that is, the
natural map $p_{K_c} \colon  M_c \rightarrow \bar{M}_c$, where
$\bar{M}_c = M_c / K_c$ is a submersion.
\end{assumption}

\begin{lemma}\label{lemma7.1} The following assertions hold:

\textbf{(a)} There is a uniquely defined symplectic form
$\bar\omega_c$ on $\bar{M}_c$ such that $p_{K_c}^\ast
\bar\omega_c = \omega_c$.

\textbf{(b)} Let $\bar{X}$ be a given vector field on $\bar{M}_c$.
Then there is a vector field $X$ on $M_c$ that is
$p_{K_c}$-related to $\bar{X}$.

\textbf{(c)} Let $\bar{f} \in \mathcal{F}(\bar{M_c})$. Then there
exists a vector field $X$ on $M_c$ such that $X$ is
$p_{K_c}$-related to $X_{\bar{f}}$, and for any such vector field
$X$ the equality $\omega_c(x)(X,\, \,) =
d(p_{K_c}^\ast \bar{f})(x)$ holds for all $x\in M_c$.

\textbf{(d)} Let $X_{x_0}\in T_{x_0}M_c$. Then one can choose the
function $\bar{f}\in \mathcal{F}(\bar{M}_c)$ and the vector field
$X$ in \textbf{(c)} in such a way that $X(x_0) = X_{x_0}$.

\end{lemma}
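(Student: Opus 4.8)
The plan is to establish the four assertions in sequence, each one building on the previous, using standard facts about quotients by regular foliations together with the linear symplectic lemmas proven above.

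For \textbf{(a)}, I would first note that $\omega_c$ is closed and that $\ker \omega_c$ is exactly the vertical distribution of the submersion $p_{K_c}$ by Assumption \ref{K1}. The construction of $\bar\omega_c$ is the usual one: given $\bar{u}, \bar{v} \in T_{\bar x}\bar{M}_c$, pick preimages $u, v \in T_x M_c$ under $T_x p_{K_c}$ for some $x \in p_{K_c}^{-1}(\bar x)$ and set $\bar\omega_c(\bar x)(\bar u, \bar v) = \omega_c(x)(u,v)$. Well-definedness with respect to the choice of $u,v$ follows because $\ker \omega_c = \ker T_x p_{K_c}$; independence of the point $x$ in the leaf follows because $\omega_c$ is closed and $\ker\omega_c$ is the tangent to the leaves, so the Lie derivative of $\omega_c$ along vertical vector fields vanishes (Cartan's formula: $\mathcal{L}_Z\omega_c = d\, i_Z\omega_c + i_Z d\omega_c = 0$). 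This forces $\omega_c$ to be basic, hence it descends. Smoothness of $\bar\omega_c$ follows by working with local sections of the submersion. Nondegeneracy is immediate: if $i_{\bar u}\bar\omega_c = 0$ then any lift $u$ satisfies $i_u\omega_c \in (T_xM_c)^\circ$ inside $T_x^*M_c$, which means $u \in \ker\omega_c$, so $\bar u = 0$. Closedness descends because $p_{K_c}^*$ is injective on forms (being a surjective submersion) and $p_{K_c}^* d\bar\omega_c = d\,p_{K_c}^*\bar\omega_c = d\omega_c = 0$. Uniqueness is clear since $p_{K_c}^*$ is injective.

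Assertions \textbf{(b)} and \textbf{(d)} are essentially about lifting: \textbf{(b)} is the standard fact that for a surjective submersion with connected fibers (or locally, always) one can lift a vector field — choose a local section/Ehresmann-type splitting of $T M_c = \ker\omega_c \oplus H$ with $H$ a complement, and define $X$ as the horizontal lift of $\bar X$ through the isomorphism $T_xp_{K_c}|_{H_x}$; this $X$ is smooth and $p_{K_c}$-related to $\bar X$ by construction. For the pointwise version needed in \textbf{(d)}, given $X_{x_0}\in T_{x_0}M_c$, set $\bar X_{\bar x_0} = T_{x_0}p_{K_c}(X_{x_0})$ and, when building the horizontal lift, arrange the complement $H$ so that $X_{x_0} \in H_{x_0}$ (always possible at a single point); then the horizontal lift $X$ of any extension of $\bar X_{\bar x_0}$ satisfies $X(x_0) = X_{x_0}$. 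The choice of $\bar f$ in \textbf{(d)} is deferred to part \textbf{(c)}: one needs $d\bar f(\bar x_0) = i_{\bar X_{\bar x_0}}\bar\omega_c(\bar x_0)$, which is a single cotangent vector, hence realized by some smooth function on $\bar M_c$.

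For \textbf{(c)}, given $\bar f \in \mathcal{F}(\bar M_c)$, the Hamiltonian vector field $X_{\bar f}$ on the symplectic manifold $(\bar M_c, \bar\omega_c)$ is well-defined; apply \textbf{(b)} to get a $p_{K_c}$-related vector field $X$ on $M_c$. For any such $X$ and any $x \in M_c$ with $\bar x = p_{K_c}(x)$, I would compute $i_X\omega_c(x) = i_X p_{K_c}^*\bar\omega_c(x) = p_{K_c}^*(i_{X_{\bar f}}\bar\omega_c)(x)$ — the pullback passing through because $X$ is $p_{K_c}$-related to $X_{\bar f}$ — and then $i_{X_{\bar f}}\bar\omega_c = d\bar f$ by definition of the Hamiltonian vector field, so $i_X\omega_c(x) = p_{K_c}^*d\bar f(x) = d(p_{K_c}^*\bar f)(x)$, as claimed. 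The main subtlety — and the step I expect to require the most care — is the verification in \textbf{(a)} that $\omega_c$ is basic (i.e.\ both horizontal and invariant along the leaves), since that is what legitimizes everything else; but given that $\ker\omega_c$ coincides with the tangent distribution to the foliation $K_c$ and $\omega_c$ is closed, this is exactly the hypothesis of the classical reduction lemma and goes through by the Cartan-calculus argument sketched above. Everything else is routine submersion bookkeeping, done locally to avoid global issues with connectedness of leaves, consistent with the paper's stated local viewpoint.
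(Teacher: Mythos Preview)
Your proposal is correct and follows essentially the same architecture as the paper's proof, with two noteworthy differences in execution. In \textbf{(a)}, you verify that $\omega_c$ is basic via Cartan's formula ($\mathcal{L}_Z\omega_c = d\,i_Z\omega_c + i_Z\,d\omega_c = 0$ for vertical $Z$), whereas the paper argues invariance along the leaf by invoking a presymplectic Darboux chart $U\times V$ in which $\omega_c$ visibly does not depend on the leaf coordinate, then uses connectedness of the fibers to globalize. Your argument is the cleaner and more standard one; the paper's has the advantage of making the local product structure explicit but is otherwise equivalent. In \textbf{(b)} and \textbf{(d)}, you use an abstract Ehresmann-type complement $H$ to $\ker\omega_c$, while the paper produces the same splitting concretely as the orthogonal complement for a chosen Riemannian metric $g$; these are interchangeable.

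There is one small slip in your \textbf{(d)} (which the paper's proof happens to share): the claim that one can ``arrange the complement $H$ so that $X_{x_0}\in H_{x_0}$ (always possible at a single point)'' fails when $X_{x_0}$ has a nonzero component in $\ker\omega_c(x_0)$, since then $X_{x_0}$ cannot lie in any complement of the kernel. The easy fix is to decompose $X_{x_0} = X_{x_0}^h + X_{x_0}^v$ with $X_{x_0}^v\in\ker\omega_c(x_0)$, carry out your horizontal-lift construction for $X_{x_0}^h$, and then add a smooth vertical vector field taking the value $X_{x_0}^v$ at $x_0$; the sum is still $p_{K_c}$-related to $X_{\bar f}$ and hits $X_{x_0}$ at $x_0$. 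With that adjustment the argument goes through.
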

\begin{proof}
\textbf{(a)}
By definition, the leaves of the foliation $K_c$
are connected submanifolds of $M_c$, that is, each
$p^{-1}_{K_c}(z)$, $z \in \bar{M}_c$, is connected.
For $z \in \bar{M}_c$, let $x \in M_c$ such that
$p_{K_c}(x)=z$. For $\bar{A},\,\bar{B} \in
T_{z}{\bar{M}_c}$, as $p_{K_c}$ is a submersion, there are $A,\,B
\in M_c$ such that
$T_{x}{p_{K_c}}{A}=\bar{A},\,T_{x}{p_{K_c}}{B}=\bar{B}$.
We define
$\bar{\omega}_{c}(z)(\bar{A},\bar{B})=\omega_{c}(x)(A,B)$.
To prove
that this is a good definition observe first that it is a consistent definition
for fixed $x$, which is easy to prove, using the fact that $\ker{\omega_c}(x) =
\ker{T_x{p_{K_c}}}$. Now
choose a Darboux chart centered at $x$, say $U\times V$, such that, in this
chart,
$p_{K_c}\colon U \times V \rightarrow U$ and $\omega_c(x^1,x^2)=\bar{\omega}_c(x^1)$,
where $\omega_c(x^1,x^2)$ and $\bar{\omega}_c(x^1)$ are independent of $(x_1,
x_2)$.
This shows that
$\bar{\omega}_c$ is well defined on the chart. Using this and the fact that one
can cover the connected submanifold $p^{-1}_{K_c}(z)$
with charts as explained above, one can deduce by a simple argument that
$\bar{\omega}_{c}(z)$
is well defined.

\textbf{(b)} Let $g$ be a Riemannian metric on $M_c$. Then
for each $x \in M_c$ there is a uniquely determined $X(x) \in T_x
M_c$ such that $X(x)$ is orthogonal to $\ker T_x p_{K_c}$ and
$T_{x} p_{K_c} X(x) =\bar{X}(x)$, for all $x \in M_c$. This
defines a vector field $X$ on $M_c$ which is $p_{K_c}$-related to
$\bar{X}$.

\textbf{(c)} Given $\bar{f}$ and using the result of \textbf{(b)} we
see that there is a vector field $X$ on $M_c$ that is
$p_{K_c}$-related to $X_{\bar{f}}$. Then,
for every $x \in M_c$ and every $Y_x \in T_x M_c$,
\begin{equation*}
\begin{split}
\omega_c(x) (X(x), Y_x) &=
\bar{\omega}_c(p_{K_c}(x))\left(X_{\bar{f}}(p_{K_c}(x)), T_x
p_{K_c} Y_x\right) = d \bar{f}(p_{K_c}(x)) (T_x p_{K_c} Y_x) \\
&=d(p_{K_c}^\ast \bar{f})(x)(Y_x).
\end{split}
\end{equation*}

\textbf{(d)} One can proceed as in \textbf{(b)} and \textbf{(c)},
choosing $\bar{f}$ such that
$(d\bar{f}\left(p_{K_c}(x_0))\right)^\sharp = T_{x_0} p_{K_c}
X_{x_0}$ and, besides, the metric $g$ such that $X_{x_0}$ is
perpendicular to $\ker T_{x_0}p_{K_c}$.
\end{proof}
\begin{definition}\label{definitionfirstclassconstr}
\textbf{(a)} For any subspace $A \subseteq \mathcal{F}(N)$ define
the distribution $\Delta_A \subseteq TN$ by $\Delta_A(x) = \{X_f
(x) \mid f\in A \}$.

\textbf{(b)} The space of \textbf{\textit{first class
functions}} is defined as
\begin{equation*}
R^{(c)} = \{f \in \mathcal{F}(N) \mid X_{f}(x) \in T_x M_c,
\mbox{ for all } x \in M_c\}.
\end{equation*}
In other words, $R^{(c)}$ is the largest subset of
$\mathcal{F}(N)$ satisfying
\[
\Delta_{R^{(c)}}(x) \subseteq T_x M_c,
\]
$x \in M_c$.
\end{definition}
\begin{remark}
Dirac was
interested in classical mechanics, where states are represented by points in
phase space, as well as in quantum mechanics where this is not the case.
{}From the point of view of classical mechanics, among the constraint submanifolds and constraints the only ones that seem to play an important role are  $M$, $M_c$ and 
the constraints $\phi^{(0)}_i$, $\phi^{(c)}_i$
defining them
by equations $\phi^{(0)}_i = 0$, $\phi^{(c)}_i = 0$, respectively.
\end{remark}
\begin{lemma}\label{lemmaI6}
\textbf{(a)} $R^{(c)}$ is a Poisson subalgebra of
$(\mathcal{F}(N), \{\, , \,\})$.

\textbf{(b)} $M_c$ is an integral submanifold of
$\Delta_{R^{(c)}}$. Moreover, for any vector field $X$ on $M_c$
that is $p_{K_c}$-related to a vector field $X_{\bar{f}}$ on
$\bar{M_c}$ there exists a function $f \in R^{(c)}$ such that $f|
M_c = p_{K_c}^\ast \bar{f}$ and $X = X_{f}|M_c$. In particular, any
vector field $X$ on $M_c$ satisfying $X(x) \in \ker \omega_c(x)$
for all $x \in M_c$ is $p_{K_c}$-related to the vector field $0$
on the symplectic manifold $\bar{M_c}$, which is associated to the
function $\bar{f} = 0$, therefore there exists a function $f \in
R^{(c)}$, which satisfies $f | M_c = 0$, such that $X(x) =
X_{f}(x)$, $x \in M_c$.
\end{lemma}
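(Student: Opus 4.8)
The plan is to prove (a) and (b) in turn, leaning heavily on Lemma \ref{lemma7.1} to transfer data between $M_c$ and the symplectic quotient $\bar{M}_c$, and on the standard fact that $R^{(c)}$ is, by construction, the \emph{largest} subspace of $\mathcal{F}(N)$ whose Hamiltonian vector fields are tangent to $M_c$ along $M_c$. For part (a), I would first note that $R^{(c)}$ is clearly a linear subspace; the only real content is closure under the Poisson bracket. Given $f, g \in R^{(c)}$, I want to show $X_{\{f,g\}}(x) \in T_x M_c$ for all $x \in M_c$. The clean way is to use $X_{\{f,g\}} = -[X_f, X_g]$ (Jacobi identity for the Poisson bracket, equivalently the Lie algebra anti-homomorphism $f \mapsto X_f$). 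Since $X_f$ and $X_g$ are both tangent to $M_c$ along $M_c$, one must argue that their bracket is again tangent to $M_c$ along $M_c$. This is slightly delicate because tangency only holds \emph{on} $M_c$, not on a neighborhood, so one cannot immediately invoke ``the bracket of tangent vector fields is tangent''. The fix: choose, near a point $x_0 \in M_c$, vector fields $Y_f, Y_g$ on (a neighborhood in) $M_c$ with $Y_f = X_f|M_c$, $Y_g = X_g|M_c$; then $[Y_f, Y_g]$ is a vector field on $M_c$, and one checks that $[X_f, X_g]|M_c = [Y_f, Y_g]$ by evaluating both on functions, using that the derivative of a function along a vector field tangent to $M_c$ depends only on the restriction of that function to $M_c$. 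Hence $[X_f, X_g](x) \in T_x M_c$ for $x \in M_c$, so $\{f,g\} \in R^{(c)}$.

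For part (b), the first claim — that $M_c$ is an integral submanifold of $\Delta_{R^{(c)}}$ — means $\Delta_{R^{(c)}}(x) = T_x M_c$ for $x \in M_c$. The inclusion $\Delta_{R^{(c)}}(x) \subseteq T_x M_c$ is Definition \ref{definitionfirstclassconstr}(b). For the reverse inclusion $T_x M_c \subseteq \Delta_{R^{(c)}}(x)$, I would take an arbitrary $X_{x_0} \in T_{x_0}M_c$ and produce $f \in R^{(c)}$ with $X_f(x_0) = X_{x_0}$: apply Lemma \ref{lemma7.1}(d) to get $\bar f \in \mathcal{F}(\bar{M}_c)$ and a vector field $X$ on $M_c$, $p_{K_c}$-related to $X_{\bar f}$, with $X(x_0) = X_{x_0}$; then the substantive step is to lift this to a genuine Hamiltonian vector field on $N$. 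This is exactly the ``moreover'' clause, which I would prove as follows: set $\bar g := p_{K_c}^\ast \bar f \in \mathcal{F}(M_c)$, extend $\bar g$ arbitrarily to some $f_0 \in \mathcal{F}(N)$, and then correct $f_0$ so that $X_{f_0}$ becomes tangent to $M_c$. Here I would use the decomposition from linear symplectic geometry: $M_c$ is locally cut out by constraints which, by the discussion following equation \eqref{equationequationforlambda}, can be taken in the form $\psi_i = 0$, $\chi_j = 0$ with the $\chi_j$ second class and $\{\psi_i,\psi_{i'}\}, \{\psi_i,\chi_j\}$ vanishing on $M_c$. The correction $f := f_0 + \mu^j \chi_j$ with $\mu^j$ chosen (as in Dirac's determination of the multipliers) so that $\{\psi_i, f\}|M_c = 0$ and $\{\chi_j, f\}|M_c = 0$ makes $X_f$ tangent to $M_c$, i.e. $f \in R^{(c)}$; and since the correction term vanishes on $M_c$, we still have $f|M_c = \bar g = p_{K_c}^\ast\bar f$. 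Finally one checks $X = X_f|M_c$: both are $p_{K_c}$-related to $X_{\bar f}$ by Lemma \ref{lemma7.1}(c) applied with this $\bar f$, and they agree because $\omega_c(x)(X_f|M_c, \cdot) = d(p_{K_c}^\ast\bar f)(x) = \omega_c(x)(X, \cdot)$ forces $X_f|M_c - X \in \ker\omega_c(x) = \ker T_x p_{K_c}$, while both project to $X_{\bar f}$, so the difference is zero — wait, that only gives they are $p_{K_c}$-related to the same field, not equal; to get equality at $x_0$ one uses the freedom in Lemma \ref{lemma7.1}(d) to have already arranged $X(x_0) = X_{x_0}$ and then replays the construction so that $f$ realizes precisely this $X$. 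The last assertion of (b), about $X(x) \in \ker\omega_c(x)$, is the special case $\bar f = 0$: such $X$ is $p_{K_c}$-related to $X_0 = 0$, so the construction yields $f \in R^{(c)}$ with $f|M_c = p_{K_c}^\ast 0 = 0$ and $X = X_f|M_c$.

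The main obstacle I anticipate is the ``moreover'' clause of (b) — specifically, constructing the global (or local) function $f \in R^{(c)}$ with prescribed restriction to $M_c$ \emph{and} prescribed Hamiltonian vector field along $M_c$, and verifying these two requirements are compatible. The tangency correction $f_0 \mapsto f_0 + \mu^j\chi_j$ is the Dirac-multiplier computation and is routine once the second-class splitting is in place; the genuinely fiddly part is matching $X_f|M_c$ with the given $X$ pointwise (not merely up to $\ker\omega_c$), which requires being careful about \emph{which} $\bar f$ and which extension one starts from, and invoking the pointwise freedom in Lemma \ref{lemma7.1}(c)–(d) rather than just its existence statement. Everything else — part (a), and the ``integral submanifold'' and last statements of (b) — then follows formally.
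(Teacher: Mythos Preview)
Your treatment of part (a) is correct and actually more scrupulous than the paper's, which simply asserts that $[X_f,X_g]$ is tangent to $M_c$ without addressing the subtlety you raise. (You do omit closure under products, but $X_{fg} = fX_g + gX_f$ handles that immediately.)

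The gap is in the ``moreover'' clause of (b), and you have correctly located it yourself: the Dirac-multiplier correction $f = f_0 + \mu^j\chi_j$ using \emph{only second-class} constraints forces $X_f|M_c$ to be tangent to $M_c$ and $p_{K_c}$-related to $X_{\bar f}$, but leaves the $\ker\omega_c$ component of $X_f|M_c$ uncontrolled. Your proposed fix --- invoking the freedom in Lemma~\ref{lemma7.1}(d) or in the choice of extension --- does not work as stated, because in the ``moreover'' clause both $\bar f$ and $X$ are \emph{given}; there is no freedom left in $\bar f$, and changing the extension $f_0$ alone is not an explicit mechanism for hitting a prescribed $\ker\omega_c$ component. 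What is missing is that the Hamiltonian vector fields $X_{\psi_i}$ of the \emph{first-class} constraints span $\ker\omega_c(x)$ along $M_c$, so the full correction should be
\[
f = f_0 + \lambda^i\psi_i + \mu^j\chi_j,
\]
where the $\mu^j$ are determined by tangency (your computation) and the $\lambda^i$ remain free --- precisely because $\{\psi_k,\psi_i\}$ and $\{\psi_k,\chi_j\}$ vanish on $M_c$ --- and can then be chosen to match the given $X$ in the $\ker\omega_c$ direction.

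The paper does exactly this, but packages it differently: rather than first splitting into first/second class, it works directly with the full list of final constraints $\phi^{(c)}_i$ and invokes the linear-symplectic Lemmas~\ref{lemma6.5}--\ref{lemma6.6} at each $x$ (with $V = T_xM_c$, $\beta = dg(x)$, $\gamma_i = d\phi^{(c)}_i(x)$) to write the given $X(x)$ as $(dg(x) + \lambda^i_{(c)}(x)\, d\phi^{(c)}_i(x))^\sharp$, then sets $f_U = g + \lambda^i_{(c)}\phi^{(c)}_i$ locally and patches with a partition of unity. Your first/second-class route is equivalent once you include the $\psi_i$ terms; the paper's route is slightly more economical in that it does not require the first/second-class decomposition (which in the paper is only established later, in Lemma~\ref{mainlemma1section7}).
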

\begin{proof}
\textbf{(a)} Let $f, g \in R^{(c)}$. Then $X_{f}(x)$ and
$X_{g}(x)$ are both tangent to $M_c$ at points $x$ of $M_c$ which
implies that $-X_{\{f,g\}}(x) = [X_{f}, X_{g}](x)$ is also tangent
to $M_c$ at points of $x$ of $M_c$. This shows that $\{f,g\}\in
R^{(c)}$. It is easy to see that any linear combination of $f, g$
and also $fg$ belong to $R^{(c)}$.

 \textbf{(b)} By definition $\Delta_{R^{(c)}} \subseteq
TM_c$. We need to show the converse inclusion. Let $X_{x_0} \in
T_{x_0}M_c$, we need to find $f\in R^{(c)}$ such that $X_{f}(x_0)
= X_{x_0}$. Choose the function $\bar{f}$ and the vector field $X$
on $M_c$ as in Lemma \ref{lemma7.1}, \textbf{(d)}. 
Choose any
extension of $p_{K_c}^\ast \bar{f}$ to a function $g$ on $N$. For
each $x\in M_c$, we can apply Lemmas \ref{lemma6.5} and
\ref{lemma6.6} with $E := T_x N$, $V := T_x M_c$, $\beta :=
dg(x)$, $\gamma_i(x) : = d\phi^{(c)}_i(x)$, $i = 1,\dots,r_c$. We obtain
that in a neighborhood $U \subseteq N$ of each point $x_0$ of $M_c$
we can choose $\mathcal{C}^{\infty}$ functions $\lambda^i_{(c)}(x)$,
$i= 1,\dots,r_c$, such that
\[
X(x)= \left(dg(x) + \lambda^i_{(c)}(x)
d\phi^{(c)}_i(x)\right)^\sharp,
\]
for all $x \in M_c \cap U$. Let $f_U(x) = g(x) +
\lambda^i_{(c)}(x) \phi^{(c)}_i(x)$, for all $x\in U$. Then we
have that $X_{f_U}(x) = X(x)$, for all $x \in M_c \cap U$.

Now, consider a partition of unity $\rho_i$, $i \in I$, on $N$,
where each $\rho_i$ is defined on an open set $U_i$, $i \in I$.
Let $J \subseteq I$ be defined by the condition $i \in J$ if and
only if $U_i \cap M_c \neq \emptyset$. Using standard techniques
of partitions of unity and the above result one can assume without
loss of generality that for each $i \in J$ there is a function
$f_{U_i}$ defined on $U_i$ such that $X_{f_{U_i}}(x) = X(x)$, for
all $x \in M_c \cap U_i$. Let $f = \sum_{i \in J} \rho_i f_{U_i}$,
which can be naturally extended by $0$ on $N$. Then it is easy to
see, using the fact that $f_{U_i}(x) = p_{K_c}^\ast \bar{f} (x) =
g(x)$, for each $x \in M_c$, that $X(x)= X_{f}(x)$, for each $x
\in M_c$, and in particular $X_0 = X(x_0)= X_{f}(x_0)$.
\end{proof}
\begin{lemma}\label{lemmaII6}
\textbf{(a)} Each function $f \in R^{(c)}$ is locally constant on
the leaves of $K_c$ therefore, since they are connected, for
each $f \in R^{(c)}$ there is a uniquely determined $\bar{f} \in
\mathcal{F}(\bar{M}_c)$, called $(p_{K_{c}})_{\ast}f$, such that
$f| M_c = p^\ast_{K_c} \bar{f}$. Moreover, the vector fields
$X_f(x)$, $x \in M_c$, on $M_c$ and $X_{\bar{f}}$ on $\bar{M}_c$
are $p_{K_c}$-related.

\textbf{(b)} For each $\bar{f} \in \mathcal{F}(\bar{M}_c)$ there
exists $f \in R^{(c)}$ such that $f| M_c = p^\ast_{K_c}
\bar{f}$ and the vector fields $X_f(x)$, $x \in M_c$, on $M_c$ and
$X_{\bar{f}}$ on $\bar{M}_c$ are $p_{K_c}$-related.
\end{lemma}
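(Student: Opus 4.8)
The plan is to treat parts \textbf{(a)} and \textbf{(b)} as the two halves of a correspondence that is essentially already supplied by Lemmas \ref{lemma7.1} and \ref{lemmaI6}, reinterpreted at the level of functions rather than vector fields. For part \textbf{(a)}, I would start by fixing $f \in R^{(c)}$ and showing $f$ is locally constant along the leaves of $K_c$. The key observation is that $f|M_c$ can be differentiated along any $Y_x \in \ker \omega_c(x) = \ker T_x p_{K_c}$: since $f \in R^{(c)}$, the vector field $X_f$ is tangent to $M_c$, hence $d(f|M_c)(x)(Y_x) = \Omega(x)(X_f(x), Y_x) = \omega_c(x)(X_f(x), Y_x)$, and by Lemma \ref{lemmaperp} applied pointwise, $Y_x \in V^\omega$ with $V = T_x M_c$ pairs trivially against anything in $V$ under $\omega_c$; in particular it pairs trivially against $X_f(x) \in T_x M_c$. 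Hence $d(f|M_c)$ annihilates $\ker T_x p_{K_c}$ everywhere, so $f|M_c$ is constant on each (connected) leaf and descends to a well-defined $\bar f \in \mathcal{F}(\bar M_c)$, which we name $(p_{K_c})_* f$; by construction $f|M_c = p_{K_c}^* \bar f$. It then remains to check that $X_f|M_c$ and $X_{\bar f}$ are $p_{K_c}$-related: this follows from Lemma \ref{lemma7.1}\textbf{(c)} together with the uniqueness clause in Lemma \ref{lemma7.1}\textbf{(a)} — namely, for any $Y_x \in T_x M_c$ one computes $\bar\omega_c(p_{K_c}(x))(T_x p_{K_c} X_f(x), T_x p_{K_c} Y_x) = \omega_c(x)(X_f(x), Y_x) = d(p_{K_c}^*\bar f)(x)(Y_x) = d\bar f(p_{K_c}(x))(T_x p_{K_c} Y_x)$, and since $T_x p_{K_c}$ is surjective and $\bar\omega_c$ is symplectic, this forces $T_x p_{K_c} X_f(x) = X_{\bar f}(p_{K_c}(x))$.

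For part \textbf{(b)}, given $\bar f \in \mathcal{F}(\bar M_c)$, I would invoke Lemma \ref{lemma7.1}\textbf{(c)} to obtain a vector field $X$ on $M_c$ that is $p_{K_c}$-related to $X_{\bar f}$ and satisfies $\omega_c(x)(X(x),\,) = d(p_{K_c}^*\bar f)(x)$. Then Lemma \ref{lemmaI6}\textbf{(b)} directly produces $f \in R^{(c)}$ with $f|M_c = p_{K_c}^*\bar f$ and $X = X_f|M_c$; since $X$ is $p_{K_c}$-related to $X_{\bar f}$, so is $X_f|M_c$, which is exactly what is claimed. In effect part \textbf{(b)} is little more than a restatement of what was already extracted in the proof of Lemma \ref{lemmaI6}\textbf{(b)}, now packaged as a statement about the function rather than the vector field.

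The only genuinely delicate point is the first step of \textbf{(a)} — verifying that $f$ is leaf-wise locally constant — and more specifically making sure the identification $\ker \omega_c(x) = \ker T_x p_{K_c}$ (which is built into Assumption \ref{K1} and used in the proof of Lemma \ref{lemma7.1}\textbf{(a)}) is invoked correctly, so that the pairing argument via Lemma \ref{lemmaperp} closes. Everything downstream is bookkeeping: connectedness of the leaves (guaranteed by the definition of the foliation $K_c$) upgrades "locally constant" to "constant on each leaf," which is what licenses the descent to $\bar f$; and the $p_{K_c}$-relatedness assertions in both parts are forced by the uniqueness of $\bar\omega_c$ together with surjectivity of $T_x p_{K_c}$. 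I do not expect any obstruction requiring new hypotheses beyond those already in force.
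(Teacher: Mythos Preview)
Your proposal is correct and follows essentially the same route as the paper. The only minor difference is in the first step of \textbf{(a)}: the paper represents an arbitrary $v_x \in \ker\omega_c(x)$ as $X_g(x)$ for some $g \in R^{(c)}$ via Lemma~\ref{lemmaI6}\textbf{(b)} and then computes $0 = \omega_c(X_f, X_g) = \Omega(X_f, X_g) = df(X_g)$, whereas you argue directly from the definition of $\ker\omega_c$ that $\omega_c(x)(X_f(x), Y_x) = 0$ for $Y_x \in \ker\omega_c(x)$ --- your version is slightly more streamlined (the invocation of Lemma~\ref{lemmaperp} is not really needed, the kernel property suffices). The $p_{K_c}$-relatedness argument in \textbf{(a)} and all of \textbf{(b)} match the paper's proof.
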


\begin{proof}
\textbf{(a)} Let $f \in R^{(c)}$, we only need to show that $f$ is constant on
the leaves
of $K_c$, which is equivalent to showing that $df(x) | \ker
\omega_c (x) = 0$, for all $x \in\ M_c$. For a given $x\in M_c$,
let $v_x \in \ker \omega_c(x)$; then using Lemma \ref{lemmaI6}
\textbf{(b)} one sees that there is a function $g \in R^{(c)}$
such that $v_x= X_g(x)$. Then we have
\[
0 = \omega_c(x)\left(X_f(x), X_g(x)\right) =
\Omega(x)\left(X_f(x), X_g(x)\right) =df(x)X_g(x).
\]
Now we shall prove that $X_f$ and $X_{\bar{f}}$ are
$p_{K_c}$-related. For each $x \in M_c$ and each $Y_x \in T_x
M_c$, we have
\[
\omega_c (X_f(x), Y_x) = \Omega (X_f(x), Y_x) = df(x)(Y_x) = d
(p_{K_c}^\ast \bar{f})(x)(Y_x).
\]
Using this we obtain
\[
\omega_c(x) (X_f(x), Y_x) =
\bar{\omega}_c(x)(p_{K_c}(x))\left(T_x{p_{K_c}} X_f(x), T_x
p_{K_c} Y_x\right) = d \bar{f}(p_{K_c}(x)) (T_x p_{K_c} Y_x),
\]
which shows that $X_{\bar{f}}(p_{K_c}(x)) = T_x{p_{K_c}} X_f(x)$,
because $\bar{\omega}_c$ is symplectic and $T_x p_{K_c} Y_x$
represents an arbitrary element of $T_{p_{K_c}(x)}\bar{M}$.

\textbf{(b)} To find $f$ we choose a vector field $X$ that is $p_{K_c}$-related
to $X_{\bar{f}}$ according to Lemma \ref{lemma7.1} and then use Lemma
\ref{lemmaI6} \textbf{(b)}.
\end{proof}
\begin{definition}\label{definition7.5}
\begin{align*}
I^{(c)}
&= \{f \in R^{(c)} \mid f | M_c = 0\},\\
Z_{I^{(c)}}R^{(c)} &= \{f \in R^{(c)} \mid  \{f, h\} \in
I^{(c)},\, \mbox{for all}\,\, h\in R^{(c)}\}.
\end{align*}
Elements of $I^{(c)}$ are called \textbf{first class constraints}.
\end{definition}
\begin{lemma}\label{IIIlemma6}
\textbf{(a)} $I^{(c)}$ is a Poisson ideal of $R^{(c)}$, that is,
it is an ideal of the ring $R^{(c)}$ such that if $f \in I^{(c)}$,
then $\{f, h\}\in I^{(c)}$, for all $h \in R^{(c)}$.

\textbf{(b)} $Z_{I^{(c)}}R^{(c)}$ is a Poisson subalgebra of
$R^{(c)}$.
\end{lemma}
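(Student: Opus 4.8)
The plan is to verify the two statements directly from the definitions of $I^{(c)}$ and $Z_{I^{(c)}}R^{(c)}$, relying on the fact (Lemma \ref{lemmaI6}\textbf{(a)}) that $R^{(c)}$ is already a Poisson subalgebra of $(\mathcal{F}(N),\{\,,\,\})$. So throughout we work inside $R^{(c)}$, where sums, products and brackets are available.

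For part \textbf{(a)}, I would first check that $I^{(c)}$ is an ideal of the \emph{ring} $R^{(c)}$: if $f\in I^{(c)}$ and $h\in R^{(c)}$, then $fh\in R^{(c)}$ by Lemma \ref{lemmaI6}\textbf{(a)}, and $(fh)|M_c = (f|M_c)(h|M_c) = 0$, so $fh\in I^{(c)}$; closure under addition is identical. Then I would check the Poisson part: given $f\in I^{(c)}$ and $h\in R^{(c)}$, we know $\{f,h\}\in R^{(c)}$, so it remains to see $\{f,h\}|M_c = 0$. Here is the one genuinely geometric point: $\{f,h\} = -X_h(f)$ (up to sign convention), and for $x\in M_c$ we have $X_h(x)\in T_xM_c$ because $h\in R^{(c)}$; since $f|M_c = 0$, the derivative of $f$ along a vector tangent to $M_c$ vanishes, hence $\{f,h\}(x) = df(x)\big(X_f\text{-direction}\big)$… more precisely $\{f,h\}(x)=\Omega(x)(X_f(x),X_h(x)) = df(x)\,X_h(x) = 0$ since $X_h(x)\in T_xM_c$ and $df(x)|T_xM_c = 0$. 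That gives $\{f,h\}\in I^{(c)}$.

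For part \textbf{(b)}, I would show $Z_{I^{(c)}}R^{(c)}$ is closed under linear combinations, products, and brackets. Linearity is immediate from bilinearity of $\{\,,\,\}$ and the fact that $I^{(c)}$ is a linear subspace. For products: if $f_1,f_2\in Z_{I^{(c)}}R^{(c)}$ and $h\in R^{(c)}$, use the Leibniz rule $\{f_1f_2,h\} = f_1\{f_2,h\} + f_2\{f_1,h\}$; each term lies in $I^{(c)}$ because $I^{(c)}$ is an ideal of $R^{(c)}$ (from part \textbf{(a)}) and $\{f_i,h\}\in I^{(c)}$, so $\{f_1f_2,h\}\in I^{(c)}$. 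For brackets: if $f_1,f_2\in Z_{I^{(c)}}R^{(c)}$ and $h\in R^{(c)}$, apply the Jacobi identity $\{\{f_1,f_2\},h\} = \{f_1,\{f_2,h\}\} - \{f_2,\{f_1,h\}\}$; since $\{f_2,h\}\in I^{(c)}$ and $f_1\in R^{(c)}$, the Poisson-ideal property from part \textbf{(a)} gives $\{f_1,\{f_2,h\}\}\in I^{(c)}$, similarly for the other term, so $\{\{f_1,f_2\},h\}\in I^{(c)}$, i.e.\ $\{f_1,f_2\}\in Z_{I^{(c)}}R^{(c)}$.

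The main obstacle, such as it is, is the vanishing of $\{f,h\}$ on $M_c$ in part \textbf{(a)}: one must be careful to use that $X_h$ is tangent to $M_c$ (which is exactly the defining property of $R^{(c)}$, not of $I^{(c)}$) together with $f|M_c=0$, rather than trying to argue symmetrically in $f$ and $h$. Everything else is formal manipulation with the Leibniz and Jacobi identities, using part \textbf{(a)} as a lemma for part \textbf{(b)}; no further regularity hypotheses beyond those already in force are needed.
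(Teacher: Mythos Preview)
Your proof is correct and follows essentially the same approach as the paper: part \textbf{(a)} uses exactly the key observation $\{f,h\}(x)=df(x)\,X_h(x)=0$ from $X_h(x)\in T_xM_c$ and $f|M_c=0$, and part \textbf{(b)} is what the paper summarizes as ``basic Poisson algebra arguments'' --- you have simply spelled out the Leibniz and Jacobi computations that the paper leaves implicit.
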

\begin{proof}
\textbf{(a)} Let $f,g \in I^{(c)}$ and $h \in R^{(c)}$. Then it is
immediate that $f + g$ and $hg$ belong to $I^{(c)}$. For any $h
\in R^{(c)}$, we have $\{f, h\}| M_c =
X_h(f)|M_c = 0$.

\textbf{(b)} Follows from \textbf{(a)}, using basic Poisson algebra
arguments.
\end{proof}
\begin{lemma}\label{lemmaIIII6I}
The following conditions are equivalent for a function $f\in R^{(c)}$. 

\textbf{(i)} $f\in Z_{I^{(c)}}R^{(c)}$.

\textbf{(ii)} $f | M_c$ is locally constant.

\textbf{(iii)} $X_f(x) \in \ker \omega_c(x)$ for $x \in M_c$.
\end{lemma}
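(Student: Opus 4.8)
The plan is to prove the chain of equivalences $(i)\Leftrightarrow(ii)\Leftrightarrow(iii)$ by a short cycle, exploiting the correspondence between $R^{(c)}$ and $\mathcal{F}(\bar{M}_c)$ set up in Lemmas~\ref{lemmaI6} and~\ref{lemmaII6}, together with the linear-algebraic identity $\ker\omega_c = V^\omega = V\cap V^\Omega$ from Lemma~\ref{lemmaperp}. The technical core has already been isolated: by Lemma~\ref{lemmaII6}~(a), each $f\in R^{(c)}$ descends to a function $\bar f = (p_{K_c})_\ast f$ on the \emph{symplectic} manifold $(\bar{M}_c,\bar\omega_c)$, with $X_f$ and $X_{\bar f}$ being $p_{K_c}$-related. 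This lets me translate every condition into a statement about $\bar f$ on a genuine symplectic manifold, where things are transparent.

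First I would show $(iii)\Rightarrow(ii)$. If $X_f(x)\in\ker\omega_c(x)$ for all $x\in M_c$, then since $X_f$ is $p_{K_c}$-related to $X_{\bar f}$ and $\ker\omega_c(x)=\ker T_xp_{K_c}$ (Assumption~\ref{K1}), we get $X_{\bar f}=0$ on $\bar{M}_c$; as $\bar\omega_c$ is symplectic this forces $d\bar f = 0$, i.e. $\bar f$ is locally constant, hence $f|M_c = p_{K_c}^\ast\bar f$ is locally constant. Next, $(ii)\Rightarrow(iii)$ reverses this: $f|M_c$ locally constant means $\bar f$ is locally constant, so $X_{\bar f}=0$, so $X_f(x)\in\ker T_xp_{K_c}=\ker\omega_c(x)$ by $p_{K_c}$-relatedness. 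For $(ii)\Rightarrow(i)$: if $f|M_c$ is locally constant then for any $h\in R^{(c)}$ we have $\{f,h\}|M_c = X_h(f)|M_c = df(X_h)|M_c$; but $X_h(x)\in T_xM_c$ and $df(x)$ annihilates $T_xM_c$ because $f|M_c$ is constant, so $\{f,h\}|M_c=0$, i.e. $\{f,h\}\in I^{(c)}$, giving $f\in Z_{I^{(c)}}R^{(c)}$.

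The one implication requiring a little more care is $(i)\Rightarrow(iii)$ (or equivalently $(i)\Rightarrow(ii)$), and this is where I expect the main obstacle. Suppose $f\in Z_{I^{(c)}}R^{(c)}$ and fix $x\in M_c$; I must show $X_f(x)\in\ker\omega_c(x)$. Let $v_x\in T_xM_c$ be arbitrary. By Lemma~\ref{lemmaI6}~(b), $M_c$ is an integral submanifold of $\Delta_{R^{(c)}}$, so there is $h\in R^{(c)}$ with $X_h(x)=v_x$. Then
\[
\omega_c(x)\bigl(X_f(x),v_x\bigr)=\Omega(x)\bigl(X_f(x),X_h(x)\bigr)=-\{f,h\}(x)=0,
\]
the last equality because $\{f,h\}\in I^{(c)}$ vanishes on $M_c$. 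Since $v_x$ ranges over all of $T_xM_c$, this says exactly $X_f(x)\in(T_xM_c)^{\omega_c}=\ker\omega_c(x)$, using Lemma~\ref{lemmaperp}. The subtlety is that one must know $X_f(x)$ already lies in $T_xM_c$ (so that it makes sense as an element whose $\omega_c$-orthogonal complement we are computing) — but this is immediate from $f\in R^{(c)}$ and the definition of $R^{(c)}$. With these four implications the cycle $(i)\Rightarrow(iii)\Rightarrow(ii)\Rightarrow(i)$ closes, and $(ii)\Leftrightarrow(iii)$ was shown along the way, completing the proof.
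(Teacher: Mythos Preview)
Your proof is correct and rests on the same key ingredient as the paper's, namely Lemma~\ref{lemmaI6}(b) (that every tangent vector to $M_c$ is $X_h(x)$ for some $h\in R^{(c)}$). The paper runs the shorter cycle $(i)\Rightarrow(ii)\Rightarrow(iii)\Rightarrow(i)$ entirely by this direct device: for $(i)\Rightarrow(ii)$ it observes $df(x)(X_h(x))=\{f,h\}(x)=0$ for all $h\in R^{(c)}$, hence $d(f|M_c)=0$; for $(ii)\Rightarrow(iii)$ it reverses this via $\omega_c(X_f,X_g)=\Omega(X_f,X_g)=X_g(f)=0$; and $(iii)\Rightarrow(i)$ is the same computation read backwards. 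Your route differs in two inessential ways: you prove $(ii)\Leftrightarrow(iii)$ by passing to the symplectic quotient $(\bar M_c,\bar\omega_c)$ via Lemma~\ref{lemmaII6}, which is perfectly valid but an unnecessary detour given that the direct argument is one line; and in your $(i)\Rightarrow(iii)$ you wrote $\Omega(X_f,X_h)=-\{f,h\}$, whereas the paper's convention (visible in the proofs of Lemmas~\ref{IIIlemma6} and~\ref{lemmaIIII6I}) gives $\Omega(X_f,X_h)=\{f,h\}$---a harmless sign slip since the expression vanishes anyway. Otherwise the arguments coincide.
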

\begin{proof}
Assume \textbf{(i)}. Then $\{f, h\} | M_c = 0$ for all $h\in
R^{(c)}$, that is, $df(x) X_h(x)| M_c = 0$. By Lemma
\ref{lemmaI6}, \textbf{(b)}, we know that $X_h(x)$ represents any
vector in $T_x M_c$. We can conclude that $f| M_c$ is locally
constant, so  \textbf{(ii)} holds. Now we will prove that
\textbf{(ii)} implies \textbf{(iii)}. Let $f| M_c$ be locally
constant. Then for all $g \in
R^{(c)}$ and all $x\in M_c$,
\[
0 = X_g (f)(x) = \Omega(x) (X_f, X_g)(x)= \omega_c(x) (X_f(x),
X_g(x)).
\]
Since, again by Lemma \ref{lemmaI6}, $X_g(x)$ represents any
element of $T_x M_c$, we can conclude that $X_f(x) \in \ker
\omega_c(x)$, so \textbf{(iii)} holds true. Now we will prove that
\textbf{(iii)} implies \textbf{(i)}. Assume that $X_{f}(x) \in
\ker \omega_c (x)$, $x\in M_c$. Then for all $g \in R^{(c)}$ and all $x \in M_c$,
\[
\{g, f\}(x) = \Omega(x) (X_g, X_f)(x) = \omega_c(x) (X_g, X_f)(x)
= 0,
\]
that is, $\{g, f\} \in I^{(c)}$. Using this and the definitions,
we see that $f \in Z_{I^{(c)}}R^{(c)}$.
\end{proof}
\begin{lemma}\label{lemalema}
The map $(p_{K_{c}})_\ast \colon  R^{(c)}\rightarrow
\mathcal{F}(\bar{M}_c)$ defined in Lemma \ref{lemmaII6} is a
surjective Poisson map and its kernel is $I^{(c)}$, therefore
there is a natural isomorphism of Poisson algebras
$(p_{K_{c}})_{\ast I^{(c)}} \colon  R^{(c)}/I^{(c)} \rightarrow
\mathcal{F}(\bar{M}_c)$.
\end{lemma}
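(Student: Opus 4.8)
The plan is to unpack the definition of $(p_{K_c})_\ast$ from Lemma \ref{lemmaII6} and verify the three claimed properties --- well-definedness, surjectivity as a Poisson map, and identification of the kernel --- each of which has already been set up by the preceding lemmas, so that the proof is essentially a matter of assembling them. First I would recall that for $f \in R^{(c)}$, Lemma \ref{lemmaII6}\textbf{(a)} produces a unique $\bar f = (p_{K_c})_\ast f \in \mathcal{F}(\bar M_c)$ with $f|M_c = p_{K_c}^\ast \bar f$, and that $X_f$ and $X_{\bar f}$ are $p_{K_c}$-related; this makes $(p_{K_c})_\ast$ a well-defined map, and it is clearly $\mathbb{R}$-linear and multiplicative because pullback by $p_{K_c}$ is an injective algebra homomorphism on $\mathcal{F}(\bar M_c)$ (injective since $p_{K_c}$ is a surjective submersion).

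Next I would check that $(p_{K_c})_\ast$ is a Poisson map, i.e. $(p_{K_c})_\ast\{f,g\} = \{(p_{K_c})_\ast f, (p_{K_c})_\ast g\}_{\bar\omega_c}$ for $f,g \in R^{(c)}$. The key computation: for $x \in M_c$,
\begin{equation*}
\{f,g\}(x) = \Omega(x)(X_f(x),X_g(x)) = \omega_c(x)(X_f(x),X_g(x)) = \bar\omega_c(p_{K_c}(x))\big(T_x p_{K_c} X_f(x),\, T_x p_{K_c} X_g(x)\big),
\end{equation*}
using that $X_f(x),X_g(x) \in T_x M_c$ and the definition of $\bar\omega_c$ in Lemma \ref{lemma7.1}\textbf{(a)}. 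Since $X_f, X_g$ are $p_{K_c}$-related to $X_{\bar f}, X_{\bar g}$, the right-hand side equals $\bar\omega_c(X_{\bar f}, X_{\bar g})\circ p_{K_c} = \{\bar f, \bar g\}_{\bar\omega_c} \circ p_{K_c}$. Thus $\{f,g\}|M_c = p_{K_c}^\ast\{\bar f,\bar g\}$, which by uniqueness in Lemma \ref{lemmaII6}\textbf{(a)} says exactly $(p_{K_c})_\ast\{f,g\} = \{\bar f, \bar g\}$. Surjectivity is then immediate from Lemma \ref{lemmaII6}\textbf{(b)}: every $\bar f \in \mathcal{F}(\bar M_c)$ is $(p_{K_c})_\ast f$ for some $f \in R^{(c)}$.

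For the kernel, $(p_{K_c})_\ast f = 0$ means $p_{K_c}^\ast \bar f = 0$, hence $f|M_c = 0$, i.e. $f \in I^{(c)}$; conversely any $f \in I^{(c)} \subseteq R^{(c)}$ has $f|M_c = 0 = p_{K_c}^\ast 0$, so $(p_{K_c})_\ast f = 0$ by uniqueness. Therefore $\ker (p_{K_c})_\ast = I^{(c)}$, and since we already know from Lemma \ref{IIIlemma6}\textbf{(a)} that $I^{(c)}$ is a Poisson ideal of $R^{(c)}$, the first isomorphism theorem for Poisson algebras yields the induced isomorphism $(p_{K_c})_{\ast I^{(c)}} \colon R^{(c)}/I^{(c)} \to \mathcal{F}(\bar M_c)$. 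I expect the only genuinely substantive point --- and the one to state carefully rather than wave at --- is the Poisson-map computation above, specifically the chain of identifications $\Omega \leadsto \omega_c \leadsto \bar\omega_c$ combined with $p_{K_c}$-relatedness of the Hamiltonian vector fields; everything else is bookkeeping that Lemmas \ref{lemma7.1}--\ref{lemmaIIII6I} have already done. One should also note in passing that $\{\bar f,\bar g\}_{\bar\omega_c}$ makes sense precisely because $\bar\omega_c$ is symplectic (Assumption \ref{K1} and Lemma \ref{lemma7.1}\textbf{(a)}), so that $\mathcal{F}(\bar M_c)$ carries a genuine Poisson bracket.
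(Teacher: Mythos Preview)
Your proposal is correct and follows essentially the same approach as the paper: both arguments derive surjectivity and the kernel identification directly from Lemma~\ref{lemmaII6} and the definitions, and both establish the Poisson property via the chain $\{f,g\}(x) = \Omega(x)(X_f,X_g) = \omega_c(x)(X_f,X_g) = \bar\omega_c(p_{K_c}(x))(X_{\bar f},X_{\bar g})$ using $p_{K_c}$-relatedness of the Hamiltonian vector fields. Your write-up is somewhat more explicit about the algebra-homomorphism properties and the use of the first isomorphism theorem, but the substance is identical.
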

\begin{proof}
Surjectivity of $(p_{K_c})_*$ and the fact that its kernel is $I^{(c)}$ follows
immediately from Lemma \ref{lemmaII6}
and the definitions. This implies that $(p_{K_{c}})_{\ast I^{(c)}}$ is an
algebra isomorphism.
Also, using the definitions, for $f, g \in R^{(c)}$ and any $x \in M_c$ we can
prove easily that
\begin{align*}
\{f,g\}(x) &= \Omega(x)(X_f(x), X_g(x)) =
\omega_c(x)(X_f(x), X_g(x))\\
&=
\bar{\omega}_c\left(p_{K_c}(x)\right)\left(X_{\bar{f}}\left(p_{K_c}(x)\right),
X_{\bar{g}}\left(p_{K_c}(x)\right)\right) = \{\bar{f},
\bar{g}\}(p_{K_c}(x)),
\end{align*}
where $\bar f=(p_{K_c})_*f$,  $\bar g=(p_{K_c})_*g$.
Denote by
$\pi_{I^{(c)}} \colon  R^{(c)} \rightarrow R^{(c)}/I^{(c)}$ the natural homomorphism
of Poisson algebras. Then from the previous equalities we obtain
$(p_{K_c})_{\ast I^{(c)}} \left\{\pi_{I^{(c)}}(f), \pi_{I^{(c)}}(g)\right\} =
\{\bar{f}, \bar{g}\}$, which shows that $(p_{K_c})_{\ast I^{(c)}}$ is a Poisson
isomorphism. In other words, we have the commutative diagram
\begin{center}\leavevmode
\xymatrix{
R^{(c)} \ar[r]^{\pi_{I^{(c)}}} \ar [dr]_{(p_{K_{c}})_\ast} & R^{(c)}/I^{(c)}
\ar[d]^{(p_{K_{c}})_{\ast I^{(c)}}}\\
&         \mathcal{F}(\bar{M}_c)}
\end{center}
All the arrows are defined in a natural way and they are
surjective Poisson algebra homomorphisms.
\end{proof}
\paragraph{Physical variables.}

It is immediate to see from the definitions that for all $x\in M_c$,
\begin{equation}\label{equationkernelss}
\ker \omega (x)\cap T_x M_c \subseteq \ker \omega_c (x).
\end{equation}

{}From now on we will assume the following.

\begin{assumption}\label{K2}
 \textbf{(a)} $\ker \omega (x)$ is a
regular distribution, that is, it determines a regular foliation
$K$ and the natural projection $p_K \colon  M \rightarrow \bar{M}$,
where $\bar{M}=M/K$, is a submersion.

\textbf{(b)} $\ker \omega (x)\cap T_x M_c$ is a
distribution of constant rank.  
\end{assumption}
\begin{theorem}\label{lemmmma}
The distribution $\ker \omega (x)\cap T_x M_c$ is
regular and has rank
$d^{(c)}(x)$. Its integral manifolds are $S \cap M_c$, where $S$
is an integral manifold of $\ker \omega$. Moreover, these integral
manifolds give a foliation $\widetilde{K}_c$ of $M_c$ which is
regular, that is, the natural map $p_{\widetilde{K}_c} \colon  M_c
\rightarrow \widetilde{M}_c$, where $\widetilde{M}_c = M_c / \widetilde{K}_c$
is a submersion. Besides, each leaf of the foliation
$K_c$ is foliated by leaves of $\widetilde{K}_c$, which gives a
naturally defined submersion $p_{K_c\widetilde{K}_c} \colon  \widetilde{M}_c
\rightarrow \bar{M}_c$. In other words, we obtain the commutative
diagram
\begin{center}\leavevmode
\xymatrix{
M_c \ar[r]^{p_{\widetilde{K}_c}} \ar [dr]_{p_{K_c}} & \widetilde{M_c}
\ar[d]^{p_{K_c\widetilde{K}_c}}\\
&          \bar{M}_c}
\end{center}
where each arrow is a naturally defined submersion.
\end{theorem}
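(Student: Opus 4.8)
The plan is to build everything from the linear-algebraic facts of Section~\ref{linearsymplecticgeometry}, applied fiberwise, and then globalize using the regularity assumptions. First I would verify the rank statement: at a point $x \in M_c$, apply Lemma~\ref{lemma6.1} and Lemma~\ref{lemmaperp} with $E = T_x N$ and $V = T_x M_c$, so that $\ker\omega_c(x) = T_x M_c \cap (T_x M_c)^\Omega$ has dimension $r_c - 2s$, where $r_c = \operatorname{codim} M_c$ and $2s$ is the rank of $(\{\phi^{(c)}_i,\phi^{(c)}_j\}(x))$. On the other hand, $\ker\omega(x) \cap T_x M_c$ is cut out inside $T_x M_c$ by the equations $\langle d\phi^{(c)}_i(x), \cdot\rangle = 0$ restricted to $\ker\omega(x)$; an equivalent count, using Lemma~\ref{lemma6.5} and Lemma~\ref{lemma6.6} (with $V = T_x M_c$, $\beta = d\mathcal{E}(x)$) identifies the annihilator conditions with the solvability system \eqref{equationequationforlambda}, whose solution space $\Lambda^{(c)}_x$ has dimension $d^{(c)}(x) = a_0 - \operatorname{rank}(\{\phi^{(c)}_i,\phi^{(0)}_j\}(x))$. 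Combined with the affine-bundle isomorphism $\Lambda^{(c)} \cong S^{(c)}$ recalled at the end of Section~\ref{The Gotay-Nester Algorithm}, and the fact that $S^{(c)}_x$ is a torsor over $\ker\omega(x) \cap T_x M_c$, this gives $\dim(\ker\omega(x)\cap T_x M_c) = d^{(c)}(x)$. Regularity of this distribution is then exactly Assumption~\ref{K2}(b) together with the locally constant character of $d^{(c)}$.

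Next I would identify the integral manifolds. Since $\ker\omega$ is involutive (it is the kernel of a closed $2$-form) and regular by Assumption~\ref{K2}(a), it integrates to the foliation $K$ with leaves $S$. For a leaf $S$ meeting $M_c$, the intersection $S \cap M_c$ is, at each of its points $x$, tangent to $T_x S \cap T_x M_c = \ker\omega(x) \cap T_x M_c$; since the latter has locally constant rank, $S \cap M_c$ is a submanifold (a clean intersection argument, or simply restricting the foliation $K$ to the submanifold $M_c$) and is an integral manifold of $\ker\omega(x)\cap T_x M_c$. Because $\ker\omega(x)\cap T_x M_c$ is a regular involutive distribution on $M_c$, it defines a foliation $\widetilde{K}_c$ whose leaves are connected components of the sets $S \cap M_c$; regularity of the leaf space $\widetilde{M}_c = M_c/\widetilde{K}_c$ is part of what we assume about all constraint-type quotients, giving the submersion $p_{\widetilde{K}_c}$.

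For the last part I would use the inclusion \eqref{equationkernelss}, $\ker\omega(x)\cap T_x M_c \subseteq \ker\omega_c(x)$, which shows that every leaf of $\widetilde{K}_c$ is contained in a leaf of $K_c$; hence each $K_c$-leaf is a union (in fact a subfoliation) of $\widetilde{K}_c$-leaves. This inclusion of foliations induces, at the level of leaf spaces, a well-defined map $p_{K_c\widetilde{K}_c}\colon \widetilde{M}_c \to \bar{M}_c$ characterized by $p_{K_c\widetilde{K}_c}\circ p_{\widetilde{K}_c} = p_{K_c}$; it is smooth because $p_{K_c}$ is (universal property of the quotient by $\widetilde{K}_c$, $p_{K_c}$ being constant on $\widetilde{K}_c$-leaves), and it is a submersion because $p_{K_c}$ is a submersion and $p_{\widetilde{K}_c}$ is a surjective submersion. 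This yields the stated commutative triangle with all arrows submersions.

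The main obstacle I anticipate is the rank computation $\dim(\ker\omega(x)\cap T_x M_c) = d^{(c)}(x)$: one must carefully match the two descriptions of this space --- the ``Gotay--Nester'' description as the kernel of the affine map defining $S^{(c)}_x$, and the ``Dirac'' description via the $\lambda$-system \eqref{equationequationforlambda} --- and confirm that the ambient-kernel intersection is exactly the difference bundle of $S^{(c)}$, not something larger. Everything else (involutivity, clean intersection with $M_c$, passing foliations to quotients) is routine once the constant-rank hypotheses in Assumptions~\ref{K1} and \ref{K2} are in force.
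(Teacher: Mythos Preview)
Your proposal is correct and follows the same route as the paper's proof, which is itself very terse (it declares the rank statement ``easy to prove'' and dispatches the rest by ``standard arguments''). Your identification of the integral leaves as $S\cap M_c$ via the constant-rank intersection, and the construction of $p_{K_c\widetilde{K}_c}$ from the inclusion \eqref{equationkernelss}, match the paper exactly.

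One simplification: your first paragraph is more elaborate than necessary. The detour through $\ker\omega_c(x)$ and the invocation of Lemmas~\ref{lemma6.5}--\ref{lemma6.6} with $\beta=d\mathcal{E}(x)$ are not needed for the rank statement. The torsor observation alone does the job: two solutions $X_1,X_2\in T_xM_c$ of \eqref{eckgotay} differ by an element of $T_xM_c$ on which $\omega(x)(\cdot,\;)$ vanishes as a form on $T_xM$, i.e.\ by an element of $\ker\omega(x)\cap T_xM_c$; hence $\dim S^{(c)}_x=\dim(\ker\omega(x)\cap T_xM_c)$, and the left-hand side is $d^{(c)}(x)$ by definition. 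So the ``main obstacle'' you anticipate is in fact the trivial part.
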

\begin{proof}
The first assertion, about the rank of the distribution $\ker \omega (x)\cap T_x M_c$, is easy to prove.
Let $x_0\in M_c$. Then there exists a uniquely
determined integral manifold $S$ of the distribution $\ker \omega$
such that $x_0\in S$. Using that, by assumption, $\ker \omega
(x)\cap T_x M_c$ is a distribution of constant dimension and that
$\dim \left (\ker \omega (x)\cap T_x M_c \right) = \dim \left(T_x
S \cap T_x M_c\right)$ we can conclude that the intersection
$S \cap M_c$ coincides with the integral leaf of the integrable distribution 
of $\ker \omega
\cap TM_c$ containing $x_0$. So we obtain the foliation
$\widetilde{K}_c$ of $M_c$. Using (\ref{equationkernelss}) we can
deduce that each leaf of the foliation $K_c$ is foliated by leaves
of $\widetilde{K}_c$. The rest of the proof follows by
standard arguments.
\end{proof}
\begin{lemma}\label{lemma_big_diagram}
\textbf{(a)} The following diagram is commutative
\begin{center}\leavevmode
\xymatrix{ M \ar[r]^{p_{K}}& \bar{M}&
T\bar{M}\ar[l]_{\tau_{\bar{M}}}
\\
M_c \ar[r]^{p_{\widetilde{K}_c}} \ar[dr]_{p_{K_c}} \ar[u]^{f_c}&
\widetilde{M_c} \ar[d]^{p_{K_c\widetilde{K}_c}} \ar[u]_{\widetilde{f}_c}&
T\bar{M}|\widetilde{M}_c \ar[l]^{\tau_{\bar{M}}} \ar[r]^{\epsilon_c}
\ar[u]^{\widetilde{F}_c \circ \epsilon_c}& \widetilde{f}^\ast_c T\bar {M}
\ar[ul]_{\widetilde{F}_c}
\\
&\bar{M}_c }
\end{center}
where the arrows are defined as follows. The maps $p_K$,
$p_{K_c}$, $p_{\widetilde{K}_c}$ and $p_{K_c\widetilde{K}_c}$ are defined
in Assumption \ref{K1}, Assumption \ref{K2} and Theorem \ref{lemmmma}.
By definition, the map $f_c$ is the inclusion. The map
$\widetilde{f}_c$ is an embedding defined by $\widetilde{f}_c (S \cap M_c)
= S$, where $S$ is a leaf of the foliation $K$. We will think of
$\widetilde{f}_c$ as being an inclusion. The vector bundle $T\bar{M}|
\widetilde{M}_c$ is the tangent bundle $T\bar{M}$ restricted to
$\widetilde{M}_c$. In other words, since $\widetilde{f}_c$ is an
inclusion, $T\bar{M}| \widetilde{M}_c$ is identified via some
isomorphism, called $\epsilon_c$, with the pullback of $T\bar{M}$
by $\widetilde{f}_c$. We call $\widetilde{F}_c$ the natural map associated
to the pullback.

\textbf{(b)} The presymplectic form $\omega$ on $M$ passes to the
quotient via $p_K$ giving a uniquely defined symplectic form
$\bar{\omega}$ on $\bar{M}$, satisfying $p_K^\ast \bar{\omega} =
\omega$.
The presymplectic form $\omega_c$, which, by definition is $f_c^\ast \omega$,
defines uniquely a presymplectic form
$\widetilde{\omega}_c$ on $\widetilde{M}_c$ via $p_{\widetilde{K}_c}$ satisfying
$p_{\widetilde{K}_c}^{\ast} \widetilde{\omega}_c = \omega_c$, $\widetilde\omega_c = \widetilde{f}_c^\ast
\bar{\omega}$.
The energy
 $\mathcal{E}$ on $M$ satisfies $d \mathcal{E}(x) | \ker
\omega (x) = 0$, for all $x \in M_c$, therefore it defines
uniquely a 1-form on $T\bar{M}| \widetilde{M}_c$, called
$(\widetilde{F}_c \circ \epsilon)^{\ast} d \mathcal{E} \in \Gamma \left((T\bar{M}|
\widetilde{M}_c)^\ast\right)$. Since $\mathcal{E}$ is constant on each leaf of $\widetilde{K}_c$, it also defines a function
$\widetilde{\mathcal{E}}_c$ on $\widetilde{M}_c$. Since $T\widetilde{M}_c
\subseteq T\bar{M}_c$ via the inclusion $T \widetilde{f}_c$ we have
$(\widetilde{F}_c \circ \epsilon)^{\ast} d \mathcal{E} | T_{\widetilde{x}} \widetilde{M}_c =
d
\widetilde{\mathcal{E}}_c(\widetilde{x})$, for all $\widetilde{x} \in
\widetilde{M}_c$.

\textbf{(c)} Equation of motion (\ref{diracagainagain}) on $M_c$
passes to the quotient $\widetilde{M}_c$ as
\begin{equation}\label{tildeequation}
\bar{\omega}(\widetilde{x})(\widetilde{X}(\widetilde{x}),\,\,) =
(\widetilde{F}_c \circ \epsilon)^{\ast} d \mathcal{E}(\widetilde{x}),
\end{equation}
where $\widetilde{X}(\widetilde{x}) \in T_{\widetilde{x}} \widetilde{M}_c$. This
means that if $X(x) \in T_x M_c$ is a solution of
(\ref{diracagainagain}) then $\widetilde{X}(\widetilde{x}) := T_x
p_{\widetilde{K}_c} X(x)$, where $\widetilde{x} = p_{\widetilde{K}_c}(x)$, is
a solution of (\ref{tildeequation}). Therefore, a solution curve
$x(t)$ of (\ref{diracagainagain}) projects to a solution curve
$\widetilde{x}(t) = p_{\widetilde{K}_c} \left(x(t)\right)$ of
(\ref{tildeequation}) on $\widetilde{M}_c$. Equation
(\ref{tildeequation}) has unique solution $\widetilde{X}(\widetilde{x})$
for each $\widetilde{x} \in \widetilde{M}_c$. This solution also satisfies
the equation
\begin{equation}\label{tildeequation2}
\widetilde{\omega}_c(\widetilde{x})(\widetilde{X}(\widetilde{x}),\,\,) =
d\widetilde{\mathcal{E}}(\widetilde{x}).
\end{equation}
However solutions to equation (\ref{tildeequation2}) are not
necessarily unique, since $\ker \widetilde{\omega}_c(\widetilde{x})$ is not
necessarily $0$.

\textbf{(d)} The restriction of the energy $\mathcal{E} |
M_c$ satisfies
\[
d (\mathcal{E} | M_c) (x) | \ker \omega_c(x) = 0,
\]
for all $x \in M_c$, therefore there is a uniquely defined
function $\bar{\mathcal{E}}_c$ on $\bar{M_c}$ such that
$p_{K_c}^\ast \bar{\mathcal{E}}_c = \mathcal{E} | M_c$. The
equation
\begin{equation}\label{barequation}
\bar{\omega}_c(\bar{x})(\bar{X}(\bar{x}), \, \,) = d
\bar{\mathcal{E}}_c(\bar{x})
\end{equation}
has unique solution $\bar{X}(\bar{x})$ for $\bar{x} \in
\bar{M_c}$. If $\widetilde{X}(\widetilde{x})$ is a solution of
(\ref{tildeequation}) then $\bar{X}(\bar{x}) =
T_{\widetilde{x}}p_{K_c\widetilde{K}_c} \widetilde X({\widetilde{x}})$ is a solution of
(\ref{barequation}). Therefore, a solution curve $\widetilde{x}(t)$ of
(\ref{tildeequation}) projects to a solution curve $\bar{x}(t) =
p_{K_c\widetilde{K}_c} \left(\widetilde{x}(t)\right)$ of
(\ref{barequation}) on $\bar{M}_c$.
\end{lemma}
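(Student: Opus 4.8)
The statement bundles four claims, but the only genuinely new geometric input beyond what is already in hand is the relation $\omega=p_K^{\ast}\bar\omega$ on $M$ together with the diagram of submersions from Theorem \ref{lemmmma}; everything else amounts to transporting the Gotay--Nester equation (\ref{diracagainagain}) along that diagram. For part (a) the plan is to check that the indicated maps have the asserted type and that the cells commute. The triangle $p_{K_c}=p_{K_c\widetilde{K}_c}\circ p_{\widetilde{K}_c}$ and the submersivity of $p_{\widetilde{K}_c}$ and $p_{K_c\widetilde{K}_c}$ are exactly Theorem \ref{lemmmma}; that $\widetilde f_c$ is a well-defined embedding with $\widetilde f_c(S\cap M_c)=S$ holds because, by that theorem, the leaves of $\widetilde K_c$ are precisely the connected sets $S\cap M_c$ and each lies in a unique leaf $S$ of $K$, so the square $p_K\circ f_c=\widetilde f_c\circ p_{\widetilde{K}_c}$ is tautological; and $T\bar M|\widetilde M_c$, $\epsilon_c$, $\widetilde F_c$ are the standard data of a pullback bundle, with $\widetilde F_c\circ\epsilon_c$ the canonical inclusion $T\bar M|\widetilde M_c\hookrightarrow T\bar M$ covering $\widetilde f_c$; commutativity of the remaining cells is then immediate.

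For part (b): $\omega$ descends to $\bar\omega$ on $\bar M=M/K$ with $p_K^{\ast}\bar\omega=\omega$ by the argument of Lemma \ref{lemma7.1}(a) applied to $(M,\omega,K)$ in place of $(M_c,\omega_c,K_c)$, and $\bar\omega$ is symplectic because we have quotiented by exactly $\ker\omega$. By (\ref{equationkernelss}) and Theorem \ref{lemmmma} the tangent distribution of $\widetilde K_c$ equals $\ker\omega\cap TM_c\subseteq\ker\omega_c$, so $\omega_c$ is horizontal for $p_{\widetilde{K}_c}$; since $d\omega_c=0$, Cartan's formula gives $\mathcal L_Z\omega_c=0$ for $Z$ tangent to $\widetilde K_c$, hence $\omega_c$ is basic and descends to a presymplectic $\widetilde\omega_c$ with $p_{\widetilde{K}_c}^{\ast}\widetilde\omega_c=\omega_c$; then $\widetilde\omega_c=\widetilde f_c^{\ast}\bar\omega$ follows from injectivity of $p_{\widetilde{K}_c}^{\ast}$ on forms together with $p_{\widetilde{K}_c}^{\ast}(\widetilde f_c^{\ast}\bar\omega)=(p_K\circ f_c)^{\ast}\bar\omega=f_c^{\ast}\omega=\omega_c$. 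The energy input is the inclusion $M_c\subseteq M_1$, which by (\ref{equationalgorithGN}) with $k=0$ and $(T_xM)^{\omega}=\ker\omega(x)$ says precisely $d\mathcal E(x)|\ker\omega(x)=0$ for $x\in M_c$; hence $\mathcal E|M_c$ is constant along the connected leaves of $\widetilde K_c$, defining $\widetilde{\mathcal E}_c$, and $d\mathcal E(x)|T_xM$ factors through $T_xp_K$ as a covector $\delta(x)$ at $p_K(x)$, whose assemblage is the announced section $(\widetilde F_c\circ\epsilon)^{\ast}d\mathcal E$; restricting $\delta(x)$ to $T\widetilde M_c$ and using $p_{\widetilde{K}_c}^{\ast}\widetilde{\mathcal E}_c=\mathcal E|M_c$ gives the last identity of (b).

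For part (c) I would fix $x\in M_c$, put $\bar x=p_K(x)$ and $\widetilde x=p_{\widetilde{K}_c}(x)$, and take a solution $X(x)\in T_xM_c$ of (\ref{diracagainagain}). Using $\omega=p_K^{\ast}\bar\omega$ one has $i_{X(x)}\omega(x)=(T_xp_K)^{\ast}\bigl(i_{T_xp_K X(x)}\bar\omega(\bar x)\bigr)$; comparing with $i_{X(x)}\omega(x)=d\mathcal E(x)|T_xM=(T_xp_K)^{\ast}\delta(x)$ and using injectivity of $(T_xp_K)^{\ast}$ gives $i_{T_xp_K X(x)}\bar\omega(\bar x)=\delta(x)$, while by the left square of (a), $T_xp_K X(x)=T\widetilde f_c\bigl(T_xp_{\widetilde{K}_c}X(x)\bigr)$; so $\widetilde X(\widetilde x):=T_xp_{\widetilde{K}_c}X(x)\in T_{\widetilde x}\widetilde M_c$ solves (\ref{tildeequation}), and uniqueness of that solution follows because $\bar\omega$ is symplectic. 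Since $p_{\widetilde{K}_c}$ intertwines the dynamics, a solution curve of (\ref{diracagainagain}) projects to one of (\ref{tildeequation}); restricting (\ref{tildeequation}) to $T_{\widetilde x}\widetilde M_c$, where $\bar\omega$ restricts to $\widetilde\omega_c$ and $\delta$ restricts to $d\widetilde{\mathcal E}_c$ by (b), yields (\ref{tildeequation2}), which is strictly weaker precisely because $\ker\widetilde\omega_c$ may be nonzero. Part (d) runs the same argument one level down: for $v_x\in\ker\omega_c(x)$ and a solution $X(x)$ one computes $d(\mathcal E|M_c)(x)(v_x)=\omega(x)(X(x),v_x)=\omega_c(x)(X(x),v_x)=0$, so $\mathcal E|M_c$ descends to $\bar{\mathcal E}_c$ on $\bar M_c$; with $p_{K_c}^{\ast}\bar\omega_c=\omega_c$ from Lemma \ref{lemma7.1}(a) and $\bar\omega_c$ symplectic, (\ref{barequation}) has a unique solution $\bar X(\bar x)$, and projecting (\ref{tildeequation}) along $p_{K_c\widetilde{K}_c}$ (using the triangle of Theorem \ref{lemmmma} and the derived identity $p_{K_c\widetilde{K}_c}^{\ast}\bar\omega_c=\widetilde\omega_c$) shows that solution curves of (\ref{tildeequation}) project to solution curves of (\ref{barequation}).

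The diagram chases in (a) and the basic-form computations in (b) are routine. The step I expect to be the real obstacle, and where the regularity Assumptions \ref{K1}, \ref{K2} and the fact from Theorem \ref{lemmmma} that $\ker\omega\cap TM_c$ has rank $d^{(c)}$ (the dimension of the affine space of solutions of (\ref{diracagainagain}) at each point) must be used in full strength, is establishing that the reduced data — in particular the $1$-form $(\widetilde F_c\circ\epsilon)^{\ast}d\mathcal E$ and, through it, the vector field $\widetilde X$ — are genuinely well defined on $\widetilde M_c$, i.e.\ independent of the representative of a leaf of $\widetilde K_c$, and not merely defined leafwise on $M_c$. Independence from the choice of solution $X(x)$ at a fixed $x$ is automatic, since $\ker T_xp_{\widetilde{K}_c}=\ker\omega(x)\cap T_xM_c$ and two solutions at $x$ differ by an element of that space; the delicate part is invariance along a leaf, which I would try to establish by joining two points of a leaf by a path tangent to $\ker\omega\cap TM_c$ and differentiating $\delta$ (equivalently $\widetilde X$) along it, reducing matters to the vanishing of $d\mathcal E$ on $\ker\omega$ over $M_c$ obtained in (b) and the invariance properties of first-class functions recorded in Lemma \ref{lemmaI6}.
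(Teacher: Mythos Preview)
Your overall strategy is the paper's own: verify the diagram cells from Theorem \ref{lemmmma} and the pullback-bundle definitions, produce $\bar\omega$ and $\widetilde\omega_c$ by the quotient argument of Lemma \ref{lemma7.1}(a), obtain $d\mathcal E(x)|\ker\omega(x)=0$ on $M_c$ from the existence of a solution of (\ref{diracagainagain}) (equivalently from $M_c\subseteq M_1$ and (\ref{equationalgorithGN})), and then push (\ref{diracagainagain}) through $p_K$ and $p_{K_c}$. The paper's proof is a two-line sketch (``standard arguments on passing to quotients''), and your write-up simply makes those arguments explicit; in that sense you are not taking a different route.

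Where you diverge is in the final paragraph, and there you have put your finger on a genuine issue rather than a routine verification. The claim that the covector $\delta(x)\in T^*_{p_K(x)}\bar M$ --- and hence the vector $\widetilde X(\widetilde x)$ solving (\ref{tildeequation}) --- depends only on $\widetilde x=p_{\widetilde K_c}(x)$ does \emph{not} follow from $d\mathcal E(x)|\ker\omega(x)=0$ on $M_c$ alone, and your proposed resolution (differentiate $\delta$ along a path in the $\widetilde K_c$-leaf and invoke Lemma \ref{lemmaI6}) will not close. A small test case shows why: take $M=\mathbb R^4$ with coordinates $(u_1,u_2,v_1,v_2)$ and $\omega=du_1\wedge du_2$, so $\ker\omega=\operatorname{span}(\partial_{v_1},\partial_{v_2})$ and $\bar M=\mathbb R^2$; let $\mathcal E=u_1v_1$. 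The Gotay--Nester algorithm gives $M_c=M_1=\{u_1=0\}$, on which $d\mathcal E=v_1\,du_1$ indeed annihilates $\ker\omega$, and $\widetilde K_c$-leaves are $\{(0,u_2)\}\times\mathbb R^2$. But $\delta(0,u_2,v_1,v_2)=v_1\,du_1\in T^*_{(0,u_2)}\bar M$ depends on $v_1$, so it does not descend to $\widetilde M_c$; correspondingly the solutions $X=(0,-v_1,\ast,\ast)$ of (\ref{diracagainagain}) project to $\widetilde X=(0,-v_1)$, which is not a function of $\widetilde x=u_2$. (This example embeds in the Dirac set-up with $N=\mathbb R^6$, $\Omega=du_1\wedge du_2+dv_1\wedge dw_1+dv_2\wedge dw_2$, primary constraints $w_1,w_2$, and satisfies Assumptions \ref{K1} and \ref{K2}.)

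So the obstacle you anticipated is real, and the paper's ``standard arguments'' do not dispose of it; the well-definedness of $(\widetilde F_c\circ\epsilon)^*d\mathcal E$ as a section over $\widetilde M_c$ and the uniqueness assertion in (c) appear to require either an additional hypothesis or a weaker reading (e.g.\ that for each fixed $x\in M_c$ the projected vector $T_xp_{\widetilde K_c}X(x)$ is independent of the \emph{choice of solution} $X(x)$ --- which you correctly note --- without claiming independence of $x$ along the leaf). By contrast, your treatment of (d) is sound: there the descent is to $\bar M_c$, where $\omega_c$ is quotiented by its full kernel, $\bar\omega_c$ is symplectic, and the argument $d(\mathcal E|M_c)(v_x)=\omega_c(X(x),v_x)=0$ for $v_x\in\ker\omega_c$ goes through exactly as you wrote.
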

\begin{proof}
\textbf{(a)} The equality $p_{K_c} = p_{K_c\widetilde{K}_c} \circ
p_{\widetilde{K}_c}$ was proven in Theorem \ref{lemmmma}. The equality
$\tau_{\bar{M}} \circ \widetilde{F}_c \circ \epsilon_c = \widetilde{f}_c
\circ \tau_{\bar{M}}$ results immediately from the definitions.
The equality $p_{K} \circ f_c = \widetilde{f}_c \circ p_{\widetilde{K}_c}$
results by applying the definitions and showing that, for given $x
\in M_c$, $p_{K} \circ f_c (x)= S_x = \widetilde{f}_c \circ
p_{\widetilde{K_c}}(x)$, where $S_x$ is the only leaf of $K$
containing $x$.

\textbf{(b)} Existence and uniqueness of $\bar{\omega}$ and
$\widetilde{\omega}$ is a direct consequence of the definitions and standard
arguments on
passing to quotients. For
any $x\in M_c$ we know that there exists a solution $X$ of
equation (\ref{diracagainagain}), from which it follows
immediately that $d\mathcal{E}(x)| \ker \omega(x) = 0$.
The rest of the proof consists of standard arguments on
passing to quotients.

\textbf{(c)} We shall omit the proof of this item which is a
direct consequence of the definitions and standard arguments on
passing to quotients.

\textbf{(d)} If $X(x)$ is a solution of $\omega(x)(X(x),\,\,) =
d\mathcal{E}(x)$ then it is clear that it also satisfies
$\omega_c(x)(X(x),\,\,) = d(\mathcal{E}| M_c)(x)$. It follows that
\[
d (\mathcal{E} | M_c) (x) | \ker \omega_c(x) = 0,
\]
for all $x \in M_c$. The rest of the proof is a consequence of
standard arguments on passing to quotients.
\end{proof}
\begin{remark} Recall that the locally constant 
function $d^{(c)}(x)$ on $M_c$ is the dimension of the
distribution $\ker \omega(x) \cap T_x M_c$ on $M_c$ and also the
dimension of the fiber of the bundle $S^{(c)}$. If $d^{(c)}(x)$ is nonzero then there is no
uniqueness of solution to equation of motion (\ref{diracagainagain}),
since solution curves to that equation satisfy, by definition,
\begin{align*}
\omega(x)(\dot{x},\,\,) 
&= 
d\mathcal{E}(x) | T_x M\\
(x, \dot{x})
& \in T_x M_c.
\end{align*}

Passing to the quotient manifold
$\widetilde{M}_c$ eliminates this indeterminacy and uniqueness of
solution is recovered. This is related to the notions of \emph{physical
state} and \emph{determinism}, mentioned by \cite[p.\ 20]{Dirac1964}.

The reduced equations of motion on the symplectic manifold $\bar M_c$ are Hamilton's equations. The reduced equations of motion on the manifold $\widetilde M_c$ are not necessarily Hamilton's equations but they are ODEs.
The natural fibration $p_{K_c\widetilde K_c}\colon \widetilde M_c \to \bar M_c$ has the following property. For $\bar m_0\in \bar M_c$ and $\widetilde m_0\in (p_{K_c\widetilde K_c})^{-1}(\bar m_0)\subseteq \widetilde M_c$, if $\bar m (t)$ and $\widetilde m (t)$ are the unique solution curves through $\bar m_0$ and $\widetilde m_0$ respectively, then $p_{K_c\widetilde K_c}\widetilde m (t)=\bar m (t)$. This means that there is a one-to-one correspondence between the solution curves on  $\widetilde M_c$ and $\bar M_c$, modulo the choice of a point on the fiber over the physical state $\bar m_0$. See also remark \ref{physicalstate}.
\end{remark}

\subsection{Second class subbundles}\label{subsectionV}
The notions of first class and second class constraints and functions and also
second class submanifolds
only depend on the final constraint
submanifold $M_c$ and the ambient symplectic manifold $N$ and do not depend on
the primary constraint $M_0 = M$ or the energy $\mathcal{E} \colon  N \rightarrow
\mathbb{R}$. Accordingly, in this section we will adopt an abstract setting,
where only an ambient symplectic manifold and a submanifold are given.
Among several interesting references we cite \cite{MR0358860}  which has some
points of contact with our work.

The following definition is inspired by the ones given in section \ref{poisson-dirac-gotay-nester}.
\begin{definition}\label{firstfirstclass}
Let $(P,\Omega)$ be a symplectic manifold and $S \subseteq P$ a
given submanifold. Then, by definition,
\begin{align*}
R^{(S,P)} &:= \{f \in \mathcal{F}(P) \mid X_{f}(x) \in T_x S, \,
\mbox{for all}\,\,
x \in S\}\\
I^{(S,P)} &:= \{f \in R^{(S,P)} \mid  f | S = 0\}
\end{align*}
Elements of $R^{(S,P)}$ are called \textbf{\textit{first class
functions.}}
Elements of $I^{(S,P)}$ are called \textbf{\textit{first class
constraints.}}
The submanifold $S$ is called a \textbf{\textit{first class
constraint submanifold}} if for all $f \in \mathcal{F}(P)$ the
condition $f | S = 0$ implies $f \in I^{(S,P)}$, that is,
$I_{(S,P)} \subseteq I^{(S,P)}$, where $I_{(S,P)}$ is the ideal of
the ring $\mathcal{F}(P)$ of all functions vanishing on $S$.
\end{definition}
Obviously, using the notation introduced before in the present
section, $R^{(M_c,N)} = R^{(c)}$ and $I^{(M_c,N)} = I^{(c)}$. All
the properties proven for $R^{(c)}$ and $I^{(c)}$ hold in general
for $R^{(S,P)}$ and $I^{(S,P)}$. For instance, $T_x S$ is
the set of all $X_f (x),\, f \in R^{(S,P)}$. Every function $f \in
R^{(S,P)}$ satisfies $df(x)(X_x) = 0$, for all $X_x \in \ker
(\Omega(x) | T_x S)$ and $\ker (\Omega(x) | T_x S)$ is the set of all
 $X_f (x),\, f \in I^{(S,P)}$.
We shall collect in a single lemma \ref{mainlemma1section7} the basic properties of first and second class constraints and Dirac brackets and their proofs in a compact way and with a uniform language that does not rely on coordinates. 
Most of these results are essentially known since the fundamental works of Dirac, and have been stated and proven in different ways in the literature, except perhaps for the notion of second class subbundle and the corresponding description of second class submanifolds as tangent to the subbundle. The second class subbundles  $V^\Omega \subseteq TP|S$ classify all second class submanifolds 
$S^V$
containing $S$ at a linear level, that is, $V^\Omega$ is tangent to the second class submanifold.
For such a second class submanifold, which is a symplectic submanifold, the Dirac bracket of two functions $F$ and $G$ at points $x\in S$ can be calculated as the canonical bracket of the restrictions of $F$ and $G$. 
This has a global character. 
A careful study of the global existence of a bracket defined on sufficiently small open sets 
$U \subseteq P$ containing $S$ which coincides with the previous one on the second class submanifold will not be considered in this paper. 
However, to write global equations of motion on the final constraint submanifold one only needs to know the vector bundle $V^\Omega$, which carries a natural fiberwise symplectic form.

All these are fundamental properties of second class constraints and constraint submanifolds,
and Theorem \ref{theoremsecondclass} collects the essential part of them; we suggest to take a look at it
before reading Lemma \ref{mainlemma1section7}.
\begin{lemma}\label{mainlemma1section7}
Let $(P,\Omega)$ be a symplectic manifold of dimension $2n$ and $S
\subseteq P$ a given submanifold of codimension $r$. Let $\omega$
be the pullback of $\Omega$ to $S$ and assume that $\ker
\omega(x)$ has constant dimension. Assume that $S$ is defined
regularly by equations $\phi _1 = 0,\dots,\phi _r = 0$
on a neighborhood $U \supseteq S$
and assume that we can choose a
subset $\{\chi_1,\dots,\chi_{2s}\} \subseteq \{\phi_1,\dots,\phi_r\}$
such that $\det (\{\chi_i, \chi_j\}(x)) \neq 0$ for all $x \in U$,
where we assume that $2s= \operatorname{rank} (\{\phi_i,
\phi_j\}(x))$, for all $x \in U$.
We shall often denote
$c^{\chi}_{ij}(x) = \{\chi_i, \chi_j\}(x)$ and $c_{\chi}^{ij}(x)$ the inverse of
$c^{\chi}_{ij}(x)$. 
Moreover, we will assume that the following stronger condition holds, for simplicity. Equations
 $\phi _1 = B_1,\dots,\phi _r= B_r$ and  $\chi _1 = C_1,\dots,\chi _{2s} = C_{2s}$ define submanifolds of $U$  regularly, for small enough $B_1,\dots,B_r$ and $C_1,\dots,C_{2s}$.
Then

\textbf{(a)} $2s = r - \dim \ker \omega = 2n - \dim S - \dim \ker
\omega$. There are $\psi_k \in I^{(S,P)}$, $k = 1,\dots,r - 2s$, which
in particular implies $\{\psi_k, \psi_l\}(x) = 0$, $\{\psi_k, \chi_i\}(x) = 0$,
for $k, l = 1,\dots,r - 2s$, $i = 1,\dots,2s$, such that
$d\psi_1(x),\dots,d\psi_{r - 2s}(x), d\chi_1(x),\dots,d\chi_{2s}(x)$,
are linearly independent for all $x \in S$. Moreover,
$X_{\psi_1}(x),\dots,X_{\psi_{r - 2s}}(x)$ form a basis of $\ker
\omega (x)$, for all $x \in S$ and $d\psi_1(x),\dots,d\psi_{r - 2s}(x),
d\chi_1(x),\dots,d\chi_{2s}(x)$
form a basis of $(T_x S)^\circ$.

\textbf{(b)} The vector subbundle $V^{\chi} \subseteq TP|S$ with
base $S$ and fiber
\[
V^{\chi}_x = \operatorname{span}
\left(X_{\chi_1}(x),\dots,X_{\chi_{2s}}(x)\right) \subseteq T_x P,
\]
satisfies
\begin{align}\label{fla715}
V^{\chi}_x \cap T_x S &= \{0\}
\\ \label{fla716}
V^{\chi}_x \oplus \ker \omega (x) &= (T_x S)^\Omega
\\
\label{fla717} (V^{\chi}_x)^\Omega \cap (\ker \omega (x))^\Omega
&= T_x S,
\end{align}
$x \in S$.

\textbf{(c)} There is a neighborhood $U$ of $S$ such that the
equations $\chi_1 = 0,\dots,\chi_{2s} = 0$, on $U$ define a
symplectic submanifold $S^{\chi}$ such that $S \subseteq S^{\chi}$
and
\begin{align*}
T_x S^{\chi} &=
\left(V^{\chi}_x\right)^\Omega\\
T_x S^{\chi} \oplus V^{\chi}_x &= T_x P,
\end{align*}
for $x \in S^{\chi}$, where we have extended the definition of
$V^{\chi}_x$ for $x \in S^{\chi}$ using the expression
\[
V^{\chi}_x = \operatorname{span}
\left(X_{\chi_1}(x),\dots,X_{\chi_{2s}}(x)\right) \subseteq T_x P,
\]
for $x \in S^{\chi}$. The submanifold $S^\chi$ has the property
$I_{(S, S^\chi)} \subseteq I^{(S, S^\chi)}$,
that is, $S$ is a
first class constraint submanifold of $S^\chi$,
defined regularly by
$\psi_i | S^{\chi} = 0$, $i = 1,\dots,r - 2s$,
and
$\psi_i | S^{\chi} \in I^{(S, S^\chi)}$, $i = 1,\dots,r - 2s$.
Moreover, it has
the only possible dimension, which is $\operatorname{dim} S^\chi
=\operatorname{dim}
S + \operatorname{dim}\ker \omega = 2n - 2s$, for symplectic submanifolds having
that
property.  It is also a minimal object in the set of all symplectic submanifolds $P_1 \subseteq P$, ordered by inclusion, satisfying
$S \subseteq P_1$.

\textbf{(d)} Let $V$ be any vector subbundle of $TP|S$ such that
\begin{equation}\label{720eq}
V \oplus \ker \omega= (TS)^\Omega,
\end{equation}
or equivalently,
\begin{equation}\label{721eq}
V^\flat \oplus (\ker \omega)^\flat= (TS)^\circ.
\end{equation}
Then $\operatorname{dim}V_x = 2s$, for $x \in S$. Let $S^V$ be a
submanifold of $P$ such that $T_x S^V = V_x^\Omega$, for each
$x\in S$. Then $S$ is a submanifold of $S^V$. Such a submanifold
$S^V$ always exists. Moreover, for such a submanifold there is an
open set $U \subseteq P$ containing $S$ such that $S^V \cap
U$ is a symplectic submanifold of $P$.

Let $\bar{x} \in S$ and let $\chi^\prime _1 = 0,\dots,\chi^\prime
_{2s} = 0$ be equations defining $S^V \cap U^\prime$ for some open
neighborhood $U^\prime \subseteq P$  and satisfying that
$d\chi^\prime _1(x),\dots,d\chi^\prime _{2s}(x)$ are linearly
independent for $x \in S^V \cap U^\prime$. Then, $d\chi^\prime
_1(x),\dots,d\chi^\prime _{2s}(x), d\psi _1(x),\dots,d\psi _{r-2s}(x)$
are linearly independent and
$\operatorname{det}(\{\chi^\prime _i, \chi^\prime
_j\}(x)) \neq 0$, for $x\in S^V \cap U^\prime$. All the
properties established in \textbf{(a)}, \textbf{(b)}, \textbf{(c)}
for $\chi_1,\dots,\chi_{2s}$ on $S$ hold in an entirely similar way
for $\chi^\prime _1,\dots,\chi^\prime _{2s}$, on $S \cap U^\prime$.
In particular, $S^V \cap U^\prime = S^{\chi^\prime}$.

\textbf{(e)} Let $\omega^\chi$ be the pullback of $\Omega$ to $S^\chi$
and $\{ , \}_\chi$ the corresponding bracket.
For given $F, G \in \mathcal{F}(P)$ define
$F_\chi : = F - \chi_i c_{\chi}^{ij} \{\chi_j, F\}$ and also
\begin{equation*}
\{F,G\}_{(\chi)} : = \{F,G\} -
\{F,\chi_i\}c_{\chi}^{ij}\{\chi_j,G\},
\end{equation*}
which is the famous bracket introduced by Dirac, called \textbf{Dirac bracket},
and it is defined  for $x$ in the neighborhood
$U$ where $c^{\chi}_{ij}(x)$ has an inverse $c_{\chi}^{ij}(x)$.
Then, for any $x \in S^\chi$,
\begin{equation*}
\{F_\chi,\chi_i\}(x) = 0
\end{equation*}
for $i = 1,\dots,2s$, and also
\begin{equation*}
\{F_\chi,G_\chi\}(x) = \{F,G\}_{(\chi)}(x) =
\omega^\chi(x)\left(X_{F|S^\chi}(x), X_{G|S^\chi}(x)\right) =
\{F|S^\chi ,G|S^\chi\}_{\chi}(x).
\end{equation*}
If we denote $X_{(\chi), F}$ the
Hamiltonian vector field associated to the function $F \in
\mathcal{F}(P)$, with respect to the Dirac bracket
$\{ , \}_{(\chi)}$
then the previous equalities are equivalent to
\[
X_{(\chi), F}(x) = X_{F_{\chi}} (x) = X_{F | S^\chi}(x).
\]
The Jacobi identity is satisfied for the Dirac bracket $\{F,G\}_{(\chi)}$ on
$S^\chi$, that is,
\begin{equation*}
\{\{F,G\}_{(\chi)}, H\}_{(\chi)}(x) + \{\{H,F\}_{(\chi)},
G\}_{(\chi)}(x) + \{\{G,H\}_{(\chi)}, F\}_{(\chi)}(x) = 0,
\end{equation*}
for $x \in
S^\chi$.

\textbf{(f)} Let $U$ be an open neighborhood of $S$ such that
$c^{\chi}_{ij}(x)$ is invertible for $x \in U$. For each $C=
(C_1,\dots,C_{2s}) \in \mathbb{R}^{2s}$ let $\chi_i^C = \chi_i -
C_i$ and define $S^{\chi^C}$ by the equations
$\chi^C_i(x) = 0$, $i =
1,\dots,2s$, $x \in U$.
For any $C$ in a  sufficiently small neighborhood of $0$,
$S^{\chi^C}$ is a nonempty symplectic 
submanifold of $P$. Define the matrix
$c^{\chi^C}_{ij}(x) = \{\chi^C_i,\chi^C_j\}(x)$,
and also
$c_{\chi^C}^{ij}(x)$ as being its inverse,
$x \in U$.
Then, the equalities
\begin{equation}\label{equationsss}
c^{\chi}_{ij}(x)= \{\chi_i,\chi_j\}(x) = \{\chi^C_i,\chi^C_j\}(x) =
c^{\chi^C}_{ij}(x),
\end{equation}
and also,
\begin{equation}\label{equationssss}
\{F,G\}_{(\chi)}(x) = \{F,G\}_{(\chi^C)}(x)
\end{equation}
are satisfied for all $x \in U$. All the definitions and
properties proved in \textbf{(e)} for the case $C = 0$ hold in
general for any $C$ in a neighborhood of $0$ small enough to ensure that $S^{\chi^C}$
is nonempty.
In
particular, the equalities
\begin{multline}\label{constrequalitiesC}
\{F_{\chi^C},G_{\chi^C}\}(x) = \{F,G\}_{(\chi^C)}(x) =
\omega^{\chi^C}(x)\left(X_{F|S^{\chi^C}}(x), X_{G|S^{\chi^C}}(x)\right) \\=
\{F|S^{\chi^C} ,G|S^{\chi^C}\}_{\chi^C}(x)
\end{multline}
and
\begin{equation}\label{equationsssss}
X_{(\chi^C), F}(x) = X_{F_{\chi^C}}(x) =
X_{F|S^{\chi^C}}(x)
\end{equation}
hold for $x \in S^{\chi^C}$, and any $C$ in such a neighborhood.
The Dirac bracket $\{F,G\}_{(\chi)}$ satisfies
the Jacobi identity for $F,G \in \mathcal{F}(U)$ and the
symplectic submanifolds $S^{\chi^C}$ are the symplectic leaves of
the Poisson manifold $(\mathcal{F}(U), \{\,,\,\}_{(\chi)})$.
By shrinking, if necessary, the open set $U$  and for $C$ in a sufficiently
small neighborhood of
$0 \in \mathbb{R}^{2s}$, the equations
$\psi_k | S^{\chi^C} = 0$, $k = 1,\dots,r - 2s$, define regularly a first class
constraint submanifold
$S^C \subseteq S^{\chi^C} \subseteq U$, and the functions
$\psi_k | S^{\chi^C} \in R^{(S^C, S^{\chi^C})} \subseteq
\mathcal{F}(S^{\chi^C})$ are first class constraints, that is,
$\psi_k | S^{\chi^C} \in I^{(S^C, S^{\chi^C})}$ $k = 1,\dots,r - 2s$.
We have that
$\operatorname{dim} S^{\chi^C} =\operatorname{dim}
S^C + \operatorname{dim}\ker \omega^C$,
where
$\omega^C$
is the pullback of $\Omega$
to
$S^C$.
One has
$\operatorname{dim} S^{C} = \operatorname{dim} S$
and
$\operatorname{dim} S^{\chi^C} = \operatorname{dim} S^{\chi}$,
therefore
$\operatorname{dim}\ker \omega = \operatorname{dim}\ker \omega^C$.
\end{lemma}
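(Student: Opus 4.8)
The plan is to deduce the whole statement, pointwise over $x\in S$ (and over $x$ in the ambient neighbourhood $U$ for the bracket assertions), from the linear symplectic facts of Lemmas~\ref{lemma6.1}--\ref{lemma6.6} and Corollaries~\ref{cordim},~\ref{corollary6.7} applied with $E=T_xP$, $V=T_xS$, $\alpha_i=d\phi_i(x)$, and then to globalise by the stated regularity hypotheses and, when functions rather than subspaces must be produced, partitions of unity. Throughout one uses the identifications recorded after Definition~\ref{firstfirstclass}: $(T_xS)^\Omega=\operatorname{span}(X_{\phi_i}(x))$; $T_xS$ is the set of all $X_f(x)$ with $f\in R^{(S,P)}$; and $\ker\omega(x)=\ker\bigl(\Omega(x)|T_xS\bigr)$ is the set of all $X_f(x)$ with $f\in I^{(S,P)}$. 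One also uses that $\ker\omega(x)$ is the radical of $\Omega(x)|(T_xS)^\Omega$, i.e.\ that $\Omega(a,b)=0$ whenever $b\in\ker\omega(x)\subseteq T_xS$ and $a\in(T_xS)^\Omega$.

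For \textbf{(a)}, the matrix $\bigl(\{\phi_i,\phi_j\}(x)\bigr)=\bigl(\Omega(x)(X_{\phi_i}(x),X_{\phi_j}(x))\bigr)$ has even rank, equal to $2s$ by hypothesis and Lemma~\ref{lemma6.1}, and $r-2s=\dim\bigl(T_xS\cap(T_xS)^\Omega\bigr)=\dim\ker\omega(x)$ by Corollary~\ref{cordim} and Lemma~\ref{lemmaperp}; since $\dim S=2n-r$ this gives the displayed identities. I would then put $\psi_k=\phi_{i_k}-\{\phi_{i_k},\chi_j\}\,c_\chi^{jl}\,\chi_l$, where $\phi_{i_1},\dots,\phi_{i_{r-2s}}$ are the $\phi$'s not chosen among the $\chi$'s. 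A short computation using $\chi_l|S=0$ and the inverse relation $c^\chi_{lj}c_\chi^{jm}=\delta_l^m$ gives $\psi_k|S=0$ and $\{\psi_k,\chi_i\}|S=0$, and a Schur-complement identity that collapses precisely because $2s=\operatorname{rank}(\{\phi_i,\phi_j\})$ gives $\{\psi_k,\psi_m\}|S=0$; hence $\psi_k\in I^{(S,P)}$. Since $d\psi_k\equiv d\phi_{i_k}$ on $S$ modulo $\operatorname{span}(d\chi_1,\dots,d\chi_{2s})$, the covectors $d\psi_1(x),\dots,d\psi_{r-2s}(x),d\chi_1(x),\dots,d\chi_{2s}(x)$ form a basis of $(T_xS)^\circ$, and accordingly $X_{\psi_1}(x),\dots,X_{\psi_{r-2s}}(x)$ are $r-2s$ independent vectors of $\ker\omega(x)$, hence a basis.

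Parts \textbf{(b)}, \textbf{(c)}, \textbf{(d)} are then linear symplectic algebra at each $x$, globalised via the constant-rank and regularity hypotheses. Equation~\eqref{fla715} follows by pairing a vector $\sum a^iX_{\chi_i}(x)\in T_xS$ with $d\chi_j(x)$ and using $\det(c^\chi_{ij})\neq0$; \eqref{fla716} is the count $2s+(r-2s)=r$ with the independence from \textbf{(a)}; and \eqref{fla717} is the $\Omega$-orthogonal of \eqref{fla716}. For \textbf{(c)}, the regularity hypothesis makes $S^\chi=\{\chi_1=\dots=\chi_{2s}=0\}$ a submanifold near $S$ with $T_xS^\chi=(V^\chi_x)^\Omega$; non-degeneracy of $\Omega|T_xS^\chi$ is equivalent to $V^\chi_x$ being a symplectic subspace, i.e.\ to $\det(\{\chi_i,\chi_j\}(x))\neq0$, and $T_xP=T_xS^\chi\oplus V^\chi_x$ follows by a count plus \eqref{fla715} extended to $S^\chi$. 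That $S$ is a first class constraint submanifold of $S^\chi$ cut out by $\psi_k|S^\chi$ follows from \textbf{(e)} applied inside $S^\chi$: the relevant brackets of the $\psi_k|S^\chi$ reduce, via the Dirac-bracket identity, to $\{\psi_k,\psi_l\}_{(\chi)}$ and $\{\psi_k,\chi_i\}_{(\chi)}$, which vanish on $S$ by \textbf{(a)}. The dimension formula is forced by $\dim V^\chi_x=2s$; minimality (in the local sense) follows since for any symplectic $P_1\supseteq S$ one has $(T_xP_1)^\Omega\cap\ker\omega(x)\subseteq(T_xP_1)^\Omega\cap T_xP_1=\{0\}$ (as $\ker\omega(x)\subseteq T_xS\subseteq T_xP_1$), so $(T_xP_1)^\Omega$ injects into $(T_xS)^\Omega/\ker\omega(x)$ and $\dim P_1\ge2n-2s=\dim S^\chi$. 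For \textbf{(d)}, $\dim V_x=2s$ is immediate from \eqref{721eq}; $V_x$, being a complement of the radical $\ker\omega(x)$ inside $(T_xS)^\Omega$, carries a non-degenerate $\Omega$, so $V_x^\Omega=T_xS^V$ is symplectic; $S^V$ is obtained by extending the distribution $x\mapsto V_x^\Omega\supseteq T_xS$ off $S$ and taking any submanifold through $S$ with that tangent space, symplecticity near $S$ being an open condition; and since defining functions $\chi'_j$ of $S^V$ have $X_{\chi'_j}(x)$ spanning $(T_xS^V)^\Omega=V_x$ with $\Omega|V_x$ non-degenerate, the hypothesis that earlier played the role of $\det(c^\chi_{ij})\neq0$ holds for the $\chi'_j$, so \textbf{(a)}--\textbf{(c)} and \textbf{(e)} transcribe verbatim.

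Finally, \textbf{(e)} and \textbf{(f)} are the classical Dirac-bracket computations carried out without coordinates. Writing $F_\chi=F-\chi_ic_\chi^{ij}\{\chi_j,F\}$ and using the Leibniz rule, the terms carrying an explicit factor $\chi_i$ vanish on $S^\chi$, so $\{F_\chi,\chi_k\}(x)=\{F,\chi_k\}(x)-c^\chi_{ik}(x)c_\chi^{ij}(x)\{\chi_j,F\}(x)=0$ and, by the same cancellation, $\{F_\chi,G_\chi\}(x)=\{F,G\}_{(\chi)}(x)$ for $x\in S^\chi$; since $\{F_\chi,\chi_k\}|S^\chi=0$ the field $X_{F_\chi}$ is tangent to $S^\chi$ and its restriction is the $\omega^\chi$-Hamiltonian vector field of $F|S^\chi$, which identifies all four quantities in the displayed chain and gives $X_{(\chi),F}(x)=X_{F_\chi}(x)=X_{F|S^\chi}(x)$. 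For \textbf{(f)}, $\chi^C_i=\chi_i-C_i$ satisfies $\{\chi^C_i,\chi^C_j\}=\{\chi_i,\chi_j\}$ and $\{F,\chi^C_i\}=\{F,\chi_i\}$, giving \eqref{equationsss} and \eqref{equationssss} at once, so everything in \textbf{(e)} applies to each $S^{\chi^C}$, which by regularity foliate $U$ by symplectic submanifolds. The Jacobi identity I would obtain structurally: $\{\chi_k,F\}_{(\chi)}=\{\chi_k,F\}-c^\chi_{ki}c_\chi^{ij}\{\chi_j,F\}=0$ identically, so the $\chi_k$ are Casimirs of $\{\,,\,\}_{(\chi)}$, every $X_{(\chi),F}$ is tangent to all the $S^{\chi^C}$, and a bracket on $U$ whose restriction to each leaf of such a foliation is the genuine symplectic Poisson bracket is automatically a Poisson bracket with those leaves as symplectic leaves; \eqref{constrequalitiesC} and \eqref{equationsssss} and the first-class property of $\psi_k|S^{\chi^C}$ on $S^C$ then follow leaf by leaf as in \textbf{(c)}, and the dimension identities from $\dim V^\chi_x=2s$ on each leaf, after shrinking $U$. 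The main obstacle is not conceptual depth but the disciplined passage from the pointwise linear picture to global submanifolds and functions and the bookkeeping of shrinking neighbourhoods and regularity hypotheses across \textbf{(a)}--\textbf{(f)}; the one genuinely delicate point is the simultaneous normal-form construction of the $\psi_k$ in \textbf{(a)} — where the rank hypothesis $2s=\operatorname{rank}(\{\phi_i,\phi_j\})$ is essential — together with the local minimality claim in \textbf{(c)}.
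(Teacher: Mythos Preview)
Your proof is correct and follows the same overall architecture as the paper's: reduce everything to the pointwise linear symplectic algebra of Lemmas~\ref{lemma6.1}--\ref{lemma6.6} and Corollaries~\ref{cordim}--\ref{corollary6.7}, then globalise via the regularity hypotheses. The tactical differences are worth recording. In \textbf{(a)} the paper constructs the $\psi_k$ by choosing $r-2s$ independent solutions $\lambda_i$ of the homogeneous system $\lambda^i\{\phi_i,\phi_j\}(x)=0$ (invoking Lemma~\ref{lemma6.6} to see that the resulting $X_{\psi_k}$ land in $\ker\omega$), whereas you write down the explicit Dirac-type subtraction $\psi_k=\phi_{i_k}-\{\phi_{i_k},\chi_j\}c_\chi^{jl}\chi_l$ and use the Schur-complement vanishing forced by $\operatorname{rank}(\{\phi_i,\phi_j\})=2s$; your route is more concrete and avoids the appeal to Lemma~\ref{lemma6.6}, at the cost of the small matrix computation. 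In \textbf{(c)} the paper does not actually spell out the minimality argument, so your dimension bound via $(T_xP_1)^\Omega\hookrightarrow (T_xS)^\Omega/\ker\omega(x)$ is an addition. In \textbf{(d)} the paper builds $S^V$ by the Riemannian exponential on a normal complement, while you argue more abstractly that any submanifold with the prescribed tangent works and that symplecticity near $S$ is open; your observation that $V_x$ is automatically symplectic because it complements the radical of $\Omega|(T_xS)^\Omega$ is the clean linear fact underlying the paper's contradiction argument for $\det(\{\chi'_i,\chi'_j\})\neq 0$. In \textbf{(f)} both proofs derive Jacobi leafwise from \textbf{(e)}; your phrasing via ``$\chi_k$ are Casimirs, so the leafwise symplectic brackets assemble into a Poisson bracket'' is equivalent to the paper's pointwise reduction. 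None of these variations introduces a gap.
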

\begin{proof}
\textbf{(a)} Let $x\in S$. We are going to use lemmas and
corollaries \ref{lemma6.1}--\ref{corollary6.7} with $E := T_x P$;
$V := T_x S$; $\gamma_i := d\phi_i(x)$, $i = 1,\dots,r$; $\Omega : =
\Omega(x)$; $\omega : = \omega(x)$; $\beta = 0$.

Elements $\psi =
\lambda^i \phi_i$, $\lambda^i \in \mathcal{F}(P)$ such that $\psi
\in I^{(S,P)}$, which implies $X_{\psi}(x) \in T_x S$ for $x \in S$, must satisfy $\{\psi, \phi_j\}(x)
= 0$, or, equivalently, $\lambda^i(x)
d\phi_i(x)\left(X_{\phi_j}(x)\right) = 0$, for $j = 1,\dots,r$, $x \in
S$. Using Lemma \ref{lemma6.6} we see that $X_{\psi}(x) \in \ker
\omega(x)$. Since one can choose $r - 2s$ linearly independent
solutions, say $\lambda_i = (\lambda_i^1,\dots,\lambda_i^{r})$, $i =
1,\dots,r-2s$, we obtain elements $\psi_i \in I^{(S,P)}$, namely, $\psi_i =
\lambda_i^j \phi_j$, such that $(d\psi_1(x),\dots,d\psi_{r-2s}(x))$
are linearly independent, or, equivalently, taking into account Corollary \ref{corollary6.7},
that
$(X_{\psi_1}(x),\dots,X_{\psi_{r-2s}}(x))$ is a basis of
$\ker\omega(x)$ for $x \in S$. If
$d\psi_1(x)$, \dots, $d\psi_{r-2s}(x)$, $d\chi_1(x)$, \dots, $d\chi_{2s}(x)$, were
not linearly independent, then there would be
a linear combination, say $\bar{\chi} = a^i \chi_i$, with at least one nonzero coefficient,
and some $x \in S$, such that $d\bar{\chi} (x) =
\mu^k d\psi_k (x)$ for some $\mu^k$, $k = 1,\dots,r-2s$. But then,
for any $j = 1,\dots,2s$, $\{\bar{\chi}, \chi_j\}(x) = d\bar{\chi}(x)
X_{\chi_j}(x) = \mu^k d\psi_k(x) X_{\chi_j}(x) = -\mu^k d\chi_j(x)
X_{\psi_k}(x) = 0$, which contradicts the fact that
$\operatorname{det}\left(\{\chi_i, \chi_j\}(x)\right) \neq 0$.
Using this and the fact that $\psi_i = 0$, $\chi_j = 0$, $i = r-2s$, $j = 1,\dots,2s$
define $S$ regularly, we can conclude that
$d\psi_1(x)$, \dots, $d\psi_{r-2s}(x)$, $d\chi_1(x)$, \dots, $d\chi_{2s}(x)$ is a basis of
$(T_x S)^\circ$.

\textbf{(b)} If $\lambda^i X_{\chi_i}(x) \in T_xS$ then
$d\chi_j(x)\lambda^i X_{\chi_i}(x) = 0$, $j = 1,\dots,2s$, which
implies $\lambda^i\{\chi_j, \chi_i\} = 0$, $j = 1,\dots,2s$, then
$\lambda^i = 0$, $i = 1,\dots,2s$,
which proves (\ref{fla715}).
To prove (\ref{fla716}) we apply
the operator $^\flat$ to both sides and obtain the equivalent
equality $\operatorname{span} (d\chi_1(x),\dots,d\chi_{2s}(x))\oplus
\operatorname{span} (d\psi_1(x),\dots,d\psi_{r-2s}(x)) = (T_x
S)^\circ$, which we know is true, as proven in \textbf{(a)}. To
prove (\ref{fla717}) we apply the orthogonal operator $^\Omega$
to both sides of (\ref{fla716}).

\textbf{(c)} Since $d\chi_1(x),\dots,d\chi_{2s}(x)$ are linearly
independent for $x \in S$ they are also linearly independent for
$x$ in a certain neighborhood $U$ of $S$. Then $\chi_1(x) =
0,\dots,\chi_{2s}(x)= 0$ define a submanifold $S^\chi$ of $U$
containing $S$. To see that it is a symplectic submanifold choose
$x \in S^\chi$ and apply Corollary \ref{corollary6.7} with $E : =
T_x P;$ $V : = T_x S^\chi;$ $\beta := 0;$ $\gamma_i := d\chi_i$,
$i = 1,\dots,2s$; $\omega := \Omega(x)| T_x S^{\chi}$. We can
conclude that $\operatorname{dim}(\ker \Omega(x)| T_x S^{\chi}) =
0$. Now let us prove that $T_x S^\chi =
\left(V_x^\chi\right)^\Omega$, namely, $T_x S^\chi =
\operatorname{span}\left(d \chi_1(x),\dots,d
\chi_{2s}(x)\right)^\circ =
(\left(V_x^\chi\right)^\flat)^\circ =
\left(V_x^\chi\right)^\Omega$. From this, using that $S^{\chi}$ is
symplectic one obtains $T_x S^{\chi} \oplus V_x^\chi = T_x P$.
To prove that
$S \subseteq S^\chi$
is a first class constraint submanifold defined by first class constraints
$\psi_i | S^\chi$, $i = 1,\dots,r-2s$, on $S^\chi$, we observe first that it is
immediate that
$\psi_i | S^\chi = 0$, $i = 1,\dots,r-2s$,
define $S$
regularly.
It remains to show that
$X_{\psi_i | S^\chi}(x) \in T_x S$, $i = 1,\dots,r-2s$, $x \in S$,
where
$X_{\psi_i | S^\chi}$
is the Hamiltonian vector field associated to the function
$\psi_i | S^\chi$
with respect to the symplectic form
$\omega^\chi$.
This is equivalent to showing that
\begin{equation*}
X_{\psi_i | S^\chi}(x) \left(\psi_j | S^\chi \right) = 0,
\end{equation*}
for
$x \in S$
or, equivalently,
\begin{equation*}
\omega^{\chi}(x) \left(X_{\psi_i | S^\chi}(x), X_{\psi_j | S^\chi}(x)\right) =
0,
\end{equation*}
for
$x \in S$.
We know that
$\omega^{\chi}$
is the pullback of
$\Omega$
to
$S^{\chi}$
and
$\psi_i | S^{\chi}$
is the pullback of
$\psi_i$,
via the inclusion
$S^{\chi} \subseteq U$,
then we have
\begin{equation*}
\omega^{\chi}(x) \left(X_{\psi_i | S^\chi}(x), X_{\psi_j | S^\chi}(x)\right)
=
\Omega(x) \left(X_{\psi_i}(x), X_{\psi_j}(x)\right) = 0,
\end{equation*}
for
$x \in S$,
since
$\psi_i$
are first class constraints, $i = 1,\dots,r - 2s$.
Finally, using the definitions we can easily see that $\operatorname{dim} S^\chi
= 2n - 2s$ and that
$\operatorname{dim} S = 2n - r$ and
from \textbf{(a)}
we know that
$\operatorname{dim} \operatorname{ker}\omega = r - 2s$. We can conclude that
$2n - 2s = \operatorname{dim} S + \operatorname{dim} \operatorname{ker}\omega$.

\textbf{(d)} We know that for $x\in S$, $\operatorname{dim}T_x S
= 2n - r$, and $\operatorname{dim}\ker\omega (x) = r - 2s$; then
using (\ref{720eq}) we obtain $\operatorname{dim}V_x = 2s$. Also
from (\ref{720eq}) we immediately deduce by applying $^\Omega$ to
both sides,
\begin{equation*}
V^\Omega \cap (\ker \omega)^\Omega = TS,
\end{equation*}
in particular $TS \subseteq V^\Omega$. Let $g$ be a given
Riemannian metric on $P$ and let $W_x$ be the $g$-orthogonal
complement of $T_x S$ in $V_x^\Omega$, in particular, $W_x \oplus
T_x S = V_x^\Omega$, for each $x \in S$. Define
\begin{equation*}
S^V = \{\operatorname{exp} (tw_x) \mid  w_x \in W_x, g(x)(w_x,w_x)
= 1, |t| < \tau(x),x\in S\}.
\end{equation*}
By choosing $\tau(x)$ appropriately one can ensure that $S^V$ is a
submanifold and, moreover, it is easy to see from the
definition of $S^V$ that $T_x S^V = W_x \oplus T_x S =
V_x^\Omega$, for each $x \in S$. We leave for later the proof that
$S^V \cap U$ is a symplectic submanifold of $P$, for an
appropriate choice of the open set $U$, which amounts to choosing
$\tau(x)$ appropriately.

Assume that $d\chi^\prime
_1(x),\dots,d\chi^\prime _{2s}(x)$ are linearly independent for $x
\in S^V \cap U^\prime$. Since $\langle d\chi^\prime _i(x), V_x^\Omega\rangle =
\langle d\chi^\prime _i(x), T_x S^V\rangle = 0$ for $x \in S$ and $i =
1,\dots,2s$, we can deduce that $d\chi^\prime _i(x) \in
\left(V_x^\Omega\right)^\circ$, that is, $d\chi^\prime _i(x) \in
V_x^\flat$. Then using (\ref{721eq}), we see that $d\chi^\prime
_1(x),\dots,d\chi^\prime _{2s}(x), d\psi _1(x),\dots,d\psi _{r-2s}(x)$
are linearly independent and span $V^\flat \oplus (\ker
\omega)^\flat= (TS)^\circ$. If $\operatorname{det}(\{\chi^\prime_i,
\chi^\prime_j\}(x)) = 0$ for some $x \in S$ then $\lambda^i
\{\chi^\prime_i, \chi^\prime_j\}(x) = 0$, where at least some
$\lambda^i \neq 0$, $i = 1,\dots,2s$. Let $\lambda^i\chi^\prime_i =
\varphi$, then $\{\varphi, \chi^\prime_j\}(x)=0$, $j = 1,\dots,2s$. On
the other hand, since $\varphi | S = 0$, then $\{\varphi,
\psi_j\}(x)=0$, $j = 1,\dots,r - 2s$. We can conclude that $\varphi
\in I^{(S, U^\prime)}$ and then $X_{\varphi}(x) \in
\operatorname{ker}\omega(x)$, in particular, $X_{\varphi}(x) =
\mu^j X_{\psi_j}(x)$, which implies $\lambda^id\chi^\prime_i(x) =
d \varphi (x) = \mu^j d\psi_j(x)$, contradicting the linear
independence of $d\chi^\prime_1(x),\dots,d\chi^\prime_{2s}(x),
d\psi_1(x),\dots,d\psi_{r-2s}(x)$.

It follows from which precedes that by replacing $\chi_i$ by
$\chi^\prime_i$, $i = 1,\dots,2s$ and $S$ by $S \cap U^\prime$ all
the properties stated in \textbf{(a)}, \textbf{(b)} and \textbf{(c)} are satisfied. In
particular, $S \cap U^\prime = S^{\chi^\prime}$ and $S \cap
U^\prime$ is symplectic. It is now clear that by covering $S$ with
open subsets like the $U^\prime$ we can define $U$ as being the
union of all such open subsets and one obtains that $S \cap U$ is
a symplectic submanifold.

\textbf{(e)} Let $x \in S^\chi$. Then, since $F_\chi =
F - \chi_i c_{\chi}^{ij}\{\chi_j, F\}$, we obtain
\[
\{F_\chi, \chi_k\}(x) = \{F, \chi_k\}(x) - \{\chi_i, \chi_k\}(x)
c_{\chi}^{ij}(x)\{\chi_j, F\}(x) = \{F, \chi_k\}(x) + \{\chi_k, F\}(x) =
0.
\]
Using this we obtain
\begin{multline*}
\{F_\chi, G_\chi\}(x) = \{F_\chi, G - \chi_k c_{\chi}^{kl}\{\chi_l,
G\}\}(x) = \{F_\chi, G\}(x) \\
= \{F, G\}(x) - \{\chi_i, G\}
c_{\chi}^{ij}\{\chi_j, F\}=
\{F, G\}_{(\chi)} (x).  
\end{multline*}

For any $F \in \mathcal{F}(P)$, $x \in S^\chi$ and $k = 1,\dots,2s$,
we have $X_{F_\chi}(x) \chi_k = \{\chi_k, F_\chi\}(x) = 0$, so
$X_{F_\chi}(x) \in T_x S^\chi$. Therefore, for any $Y_x \in T_x S^\chi$,
\begin{multline*}
\omega^\chi(x)\left(X_{F_\chi}(x), Y_x\right) =
\Omega(x)\left(X_{F_\chi}(x), Y_x\right) = d F_\chi(x) Y_x = d
\left(F_\chi|S^\chi\right)(x) Y_x \\
= d \left(F|S^\chi\right)(x)
Y_x=\omega^\chi(x)\left(X_{F|S^\chi}(x), Y_x\right),
\end{multline*}
which shows that $X_{F_\chi}(x) = X_{F|S^\chi}(x)$,
where both Hamiltonian vector fields are
calculated with the symplectic form
$\omega^\chi$.
Using this, for
any $G \in \mathcal{F}(P)$ and any $x \in S^\chi$, one obtains
\[
\{G_{\chi},F_{\chi}\}(x) = X_{F_\chi}(x) G_{\chi} = X_{F_\chi}(x)
G = X_{F|S^\chi}(x) G | S^\chi = \{G | S^\chi, F|S^\chi\}_\chi(x).
\]
The equality $X_{(\chi), F} (x) = X_{F_{\chi}} (x) = X_{F |
S^\chi}(x)$ is an immediate consequence of the previous ones.
The Jacobi identity for the bracket $\{\, , \,\}_{(\chi)}$ follows
using the previous formulas, namely, for $x \in S^\chi$, one
obtains
\[
\{\{F , G\}_{(\chi)}, H\}_{(\chi)}(x) = \{\{F , G\}_{(\chi)} | S^\chi, H |
S^\chi\}_\chi(x) = \{\{F | S^\chi, G | S^\chi\}_\chi, H| S^\chi\}_\chi(x),
\]
where the bracket in the last term is the canonical bracket on the
symplectic manifold $S^\chi$, for which the Jacobi identity is
well known to be satisfied.

\textbf{(f)} The equalities (\ref{equationsss}) and
(\ref{equationssss}) are proven in a straightforward way. The
equations (\ref{constrequalitiesC}) and (\ref{equationsssss}) follow easily
using a technique
similar to the one used in \textbf{(e)}. Using all this, the proof
of the Jacobi identity for the bracket $\{\, , \,\}_{(\chi)}$ on $U$
goes as follows. Let $x \in U$ and let $C$ be such that $x \in
S^{\chi^C}$. For $F,G,H \in \mathcal{F}(U)$ using \textbf{(e)} we
know that the Jacobi identity holds for $\{\, , \,\}_{(\chi^C)}$ on
$S^{\chi^C}$. But then, according to (\ref{equationssss}) it also
holds for $\{\, , \,\}_{(\chi)}$ for all $x \in S^{\chi^C}$.
Now we will prove that
$S^{\chi^C}$
are the symplectic leaves. Since they are defined by equations
$\chi^C_i = 0$, $i = 1,\dots,2s$ on $U$ we need to prove that
$\{F, \chi^C_i\}_{\chi}(x) = 0$, $x \in S^C$, for all $F \in \mathcal{F}(U)$, $i
= 1,\dots,2s$.
Using (\ref{equationssss}) and (\ref{constrequalitiesC}) we see that
$\{F, \chi^C_i\}_{(\chi)}(x) = \{F, \chi^C_i\}_{(\chi^C)}(x) = \{F| S^{\chi^C},
\chi^C_i| S^{\chi^C}\}_{\chi^C}(x) = 0$.
To finish the proof, observe first that, since
$\chi_i = 0$, $\psi_i = 0$,
$i = 1,\dots,2s, j = 1,\dots,r - 2s$
define regularly the submanifold
$S \subseteq U$,
by shrinking
$U$
if necessary and for all $C$ sufficiently small, we have that
$\chi^{C}_i = 0$, $\psi_i = 0$,
$i = 1,\dots,2s, j = 1,\dots,r - 2s$
define regularly a submanifold
$S^C \subseteq U$
and therefore
$\psi_j | S^{\chi^C} = 0$, $j = 1,\dots,r - 2s$,
define regularly
$S^C$
as a submanifold of
$S^{\chi^C}$.
To prove that it is a first class constraint submanifold and that
$\psi_j | S^{\chi^C}$, $j = 1,\dots,r - 2s$,
are first class constraints, that is
$\psi_j | S^{\chi^C} \in I^{(S^C,S^{\chi^C})}$, $j = 1,\dots,r - 2s$,
we proceed in a similar fashion as we did in \textbf{(c)}, replacing
$\chi$ by $\chi^C$.
The fact that
$\psi_j | S^{\chi^C}$, $j = 1,\dots,r - 2s$,
are first class constraints defining
$S^C$
implies that
$\operatorname{dim} S^{\chi^C} =\operatorname{dim}
S^C + \operatorname{dim}\ker \omega^C$.
{}From the definitions one can deduce that
$\operatorname{dim} S^{C} = \operatorname{dim} S$
and
$\operatorname{dim} S^{\chi^C} = \operatorname{dim} S^{\chi}$,
therefore
$\operatorname{dim}\ker \omega = \operatorname{dim}\ker \omega^C$.
\end{proof}
The following theorem summarizes the essential part of the previous lemma.
\begin{theorem}\label{theoremsecondclass}
Let $(P,\Omega)$ be a symplectic manifold, $S \subseteq P$ and let
$\omega$ be the pullback of $\Omega$ to $S$. Assume that
$\ker\omega$ has constant rank. Let $V$ be a vector subbundle of
$TP | S$ such that $V \oplus \ker \omega= (TS)^\Omega$. 
Then there is a symplectic submanifold $S^V$ containing $S$ of dimension
$\operatorname{dim}S +
\operatorname{dim}\ker \omega$ 
such that the condition $T_x S^V =
V_x^\Omega$, for all $x \in S$ holds.
The vector bundle 
$V^\Omega$ is called the \textbf{second class subbundle} tangent to the second class submanifold $S^V$.
The submanifold $S^V$ is locally defined by certain equations $\chi_1=0$, \dots, $\chi_{2s}=0$ as in Lemma 
\ref{mainlemma1section7}. Under the hypotheses of that lemma, namely, that $2s$ is constant on a neighborhood $U$ of $S$, $S^V$ will be one of the symplectic leaves of the Dirac bracket  $ \{\, , \,\}_{(\chi)}$.
\end{theorem}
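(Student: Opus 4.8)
The plan is to obtain the theorem as a corollary of Lemma~\ref{mainlemma1section7}, which already contains all the substantive arguments; the theorem merely repackages parts \textbf{(a)}, \textbf{(c)}, \textbf{(d)} and \textbf{(f)} of that lemma in intrinsic language. First I would record the dimension bookkeeping from part \textbf{(a)}: with $r = \operatorname{codim} S$ and $2s = \operatorname{rank}(\{\phi_i,\phi_j\})$ one has $\dim S = 2n - r$ and $\dim\ker\omega = r - 2s$, so $\dim S + \dim\ker\omega = 2n - 2s$.

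Next I would feed the given subbundle $V$ into part \textbf{(d)} of Lemma~\ref{mainlemma1section7}. From $V \oplus \ker\omega = (TS)^\Omega$ one gets $\dim V_x = 2s$ for $x \in S$, and applying ${}^\Omega$ to that identity yields $TS \subseteq V^\Omega$, so any submanifold $S^V$ with $T_x S^V = V_x^\Omega$ for $x \in S$ contains $S$. Part \textbf{(d)} produces such an $S^V$ and shows that $S^V \cap U$ is symplectic for a suitable open neighborhood $U$ of $S$. Since $\dim S^V = \dim V_x^\Omega = 2n - \dim V_x = 2n - 2s = \dim S + \dim\ker\omega$, the dimension claim holds; this also makes $V^\Omega = (TS^V)|S$ the announced second class subbundle tangent to $S^V$.

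For the remaining assertions I would use the local analysis inside part \textbf{(d)}: if $\chi_1,\dots,\chi_{2s}$ define $S^V \cap U'$ regularly on a small enough open set $U' \subseteq P$ with $d\chi_1,\dots,d\chi_{2s}$ independent, then part \textbf{(d)} guarantees that $d\chi_1,\dots,d\chi_{2s},d\psi_1,\dots,d\psi_{r-2s}$ are independent and $\det(\{\chi_i,\chi_j\}) \neq 0$ on $U'$, so the $\chi_i$ satisfy the hypotheses on the $\chi$'s in Lemma~\ref{mainlemma1section7}, and $S^V \cap U' = S^{\chi}$ in the notation of the lemma. Under the standing hypothesis that $2s$ is constant on a neighborhood $U$ of $S$, part \textbf{(f)} states that the submanifolds $S^{\chi^C}$ for $C$ near $0$ are exactly the symplectic leaves of the Poisson manifold $(\mathcal{F}(U),\{\,,\,\}_{(\chi)})$; taking $C = 0$ exhibits $S^V \cap U = S^{\chi}$ as one of those leaves, which is the last claim.

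The only point that requires care is the identification of the intrinsic $S^V$ --- built in part \textbf{(d)} from a choice of Riemannian metric and a ``thickness'' function, and a priori only locally symplectic --- with the constraint-defined object $S^{\chi}$ of Lemma~\ref{mainlemma1section7}; but this identification is precisely what part \textbf{(d)} establishes, so no new argument is needed. All the genuine work --- the exponential-map construction of $S^V$, the proof that it is symplectic, and the proof (via the Jacobi identity for the Dirac bracket) that the $S^{\chi^C}$ are the symplectic leaves --- is already carried out in Lemma~\ref{mainlemma1section7}, so the proof of the theorem itself is short.
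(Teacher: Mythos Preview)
Your proposal is correct and matches the paper's approach exactly: the paper presents Theorem~\ref{theoremsecondclass} explicitly as a summary of Lemma~\ref{mainlemma1section7} with no separate proof, and your write-up simply spells out which parts of that lemma yield which claims. The bookkeeping you provide (parts \textbf{(a)} and \textbf{(d)} for the dimension and construction of $S^V$, part \textbf{(d)} for the local description via $\chi_i$, and part \textbf{(f)} for the symplectic-leaf statement) is accurate.
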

\begin{remark}(a) 
Under our strong regularity conditions the symplectic  leaves
of the Dirac bracket give a (local) regular foliation
of a neighborhood of the final constraint submanifold $S$. This implies by the
Weinstein splitting theorem (\cite{MR723816})
that there are local charts where the Dirac bracket is constant.

(b) The tangent second class subbundle $V^\Omega$ in a sense (modulo tangency)  classifies all the possible second class constraint submanifolds containing a given submanifold $S\subseteq P$. It carries enough information to write the Dirac brackets along the final constraint submanifold $S$ and therefore also equations of motion, as we show in section \ref{subsectionequationsof motion}.
\end{remark}

\subsection{Equations of motion}\label{subsectionequationsof motion}
We are going to describe equations of motion in the abstract setting of section \ref{subsectionV}, that is, a symplectic manifold
$(P,\Omega)$  and a submanifold
$S \subseteq P$, defined regularly by equations
$\phi_i = 0, $  $i = 1,\dots,a$. 
For this purpose, we need to introduce in
this abstract setting, by definition, the notions of \textit{primary
constraints}, \textit{primary constraint submanifold} 
and the \textit{energy}.\\

The \textbf{\textit{primary constraint submanifold}} is a given submanifold 
$S^\prime \subseteq P$ containing $S$, and in this context, $S$ will be called the final constraint.
We will assume without loss of generality that 
$S^\prime$
 is defined regularly by the equations
$\phi_i = 0, $  $i = 1,\dots,a^\prime$, 
with 
$a^\prime \leq a$, 
where each 
$\phi_i$,  $i = 1,\dots,a^\prime$
will be called  a \textbf{\textit{primary constraint}} while  each
$\phi_i, $  $i = a^\prime +1,\dots,a$ 
will be called 
a \textbf{\textit{secondary constraint}}, for obvious reasons.
In this abstract setting the \textbf{\textit{energy}} is by definition a given
function 
$\mathcal{E} \in  \mathcal{F}(P)$.\\

The equations of motion can be written in the Gotay-Nester form,
\begin{equation}\label{abstractGN}
\Omega(x)(\dot{x}, \delta x) = d\mathcal{E}(x)(\delta x),
\end{equation}
where $(x,\dot{x}) \in T_x S$, for all $\delta x \in T_x S^\prime$. We require that for each $x\in S$, the space of solutions $(x, \dot x)$ of \eqref{abstractGN} is nonempty and forms an affine bundle.

Now we will transform this equation into an equivalent Poisson equation using the Dirac bracket.

The condition
$\{\mathcal{E}, \psi\}(x) = 0$, for all $x \in S$ and all first class constraints $\psi$
will appear later as a necessary condition for existence of solutions for any given initial condition in $S$, so we will assume it from now on.\\

The \textit{\textbf{total energy}} is defined by
\begin{equation}\label{defnoftotalhamiltonian}
\mathcal{E}_T = \mathcal{E} + \lambda^i \phi_i,\, i = 1,\dots,a^\prime
\end{equation}
where the functions 
$\lambda^i \in C^{\infty}(P)$, $i = 1,\dots,a^\prime$ 
must satisfy, by definition,
$\{\mathcal{E}_T, \phi_j\}(x) = 0$, $j = 1,\dots,a$, $x \in S$ or, equivalently,
\begin{equation*}
 \{\mathcal{E}, \phi_j\}(x) +  \lambda^i  \{\phi_i, \phi_j\}(x) = 0, \, x\in S, \,  i =
1,\dots,a^\prime, \,j= 1,\dots,a,
\end{equation*}
sum over 
$i = 1,\dots,a^\prime$.

\begin{assumption}\label{Lambda}
We assume that the solutions 
$(\lambda^1,\dots,\lambda^{a^\prime})$ form a nonempty affine bundle 
$\Lambda \rightarrow S$. Equivalently, if we call $d_\Lambda$ the dimension of the fibers,
$s':=\operatorname{rank}(\{\phi_i, \phi_j\}(x))=a'-d_\Lambda$ is constant on $S$.  
\end{assumption}
By taking a sufficiently small neighborhood $U$, we can assume that $\Lambda$  is a trivial bundle over $S\cap U$, which we extend arbitrarily as a trivial affine bundle on $U$.  

For each section of $\Lambda$
one has a Hamiltonian $\mathcal{E}_T (x) = \mathcal{E}(x) + \lambda^i (x) \phi_i(x)$, $x
\in U$, and the equation of motion on $S\cap U$,
\begin{equation*}
 X_{\mathcal{E}_T} = (d\mathcal{E}_T)^\sharp.
\end{equation*}

\paragraph{Equations of motion in terms of the Dirac bracket.} 
Local equations of motion in terms of a Dirac bracket become simpler when the second class constraints are adapted to the primary constraint, as we will explain next. This has been considered by Dirac. The case of general second class constraints is important for the global case, and will not be studied here.

Assume that the primary and final constraints $S^\prime \supseteq S$ are defined by
\begin{align}\label{calSprime}
\phi_i 
&= 0, \, i = 1,\dots,a^\prime, \,\, \mbox{defines}\,\,\, S^\prime\\\label{SprimeC}
\phi_i 
&= 0, \, i = 1,\dots,a^\prime, a^\prime + 1,\dots,a, \,\, \mbox{defines}\,\,\, S,
\end{align}
regularly on a certain neighborhood $U$ of $S^\prime$.
For choosing first class and second class constraints one can proceed as follows.

\begin{assumption}\label{2s_constant}
The rank of the matrix $\Phi(x) = ( \{\phi_i, \phi_j\}(x) )$, $i,j = 1, \dots,a$, which we will call $2s$,  is constant on an open neighborhood $U$ of $S'$. 
\end{assumption}

By shrinking $U$ if necessary, one can choose
$\{\chi_1, \dots, \chi_{2s}\}  \subseteq \{\phi_1, \dots, \phi_a \}$ in such a way that 
$( \{\chi_i, \chi_j\}(x) )$ is a regular submatrix. Without loss of generality, assume that these functions are the last $2s$, that is, $\chi_i=\phi_{a-2s+i}$, $i=1, \dots, 2s$.
For each $\phi_i$, $i=1, \dots, a-2s$, there are $\alpha_i^k (x)$ such that
$\psi_i (x):= \phi_i(x) + \alpha_i^k (x) \chi_k(x)$, $k=1, \dots, 2s$,  defined on $U$ satisfy
\begin{equation*}
\{\psi_i, \phi_l\}(x)= \{\phi_i, \phi_l\}(x) + \alpha_i^k (x) \{\chi_k, \phi_l\}(x) = 0,
\end{equation*}
for all $x \in S$, and all $l = 1, \dots,a$. 
Therefore, these $\psi_i$ are first class constraints and 
$\{\chi_1, \dots, \chi_{2s}\}$ are second class constraints, and for each $x \in U$ the 
\[
d\psi_1(x),\dots,d\psi_{a - 2s}(x), d\chi_1(x),\dots,d\chi_{2s}(x),
\]
are linearly independent, $x \in U$. We will sometimes call $(\psi_1,\dots\psi_{a-2s},\chi_1,\dots\chi_{2s})$ a complete set of first class and second class constraints for $S$.\\

Now we will show how to choose a complete set of first class and second class constraints \textbf{\textit{adapted to the primary constraints}}. 
This, as a definition, means choosing
\[
\psi_1(x),\dots,\psi_{a - 2s}(x), \chi_1(x),\dots,\chi_{2s}(x),
\]
as before in such a way that, for $s^\prime$ equal to the rank of the submatrix 
$\Phi^\prime = ( \{\phi_i, \phi_j\}(x) )$, $i = 1,\dots,a^\prime ,j = 1,\dots,a$, 
\[
\psi_1,\dots,\psi_{a^\prime - s^\prime}, \chi_1,\dots,\chi_{s^\prime}
\]
are primary constraints.

To achieve this, let $\{\chi^\prime_1, \dots, \chi^\prime_{s^\prime}\}  \subseteq \{\phi_1, \dots, \phi_{a^\prime} \}$ 
such that the rows
$\{\chi^\prime_k, \phi_j\}(x)$, $k = 1,\dots,s^\prime$, 
are linearly independent. 
Without loss of generality, assume that these functions are the last $s'$, that is, $\chi'_i=\phi_{a'-s'+i}$, $i=1, \dots, s'$. 
Then, for each $i = 1,\dots,a^\prime - s^\prime$, there are 
$\alpha_i^k (x)$, \,  $k = 1,\dots,s^\prime$ such that
\begin{equation*}
\psi^\prime_i (x):= \phi_i(x) + \alpha_i^k (x) \chi^\prime_k(x)
\end{equation*}
defined on $U$ satisfy
\begin{equation*}
\{\psi^\prime_i , \phi_l\}(x)
= 
\{\phi_i, \phi_l\}(x) + \alpha_i^k (x) \{\chi^\prime_k, \phi_l\}(x) = 0,
\end{equation*}
for all 
$i = 1,\dots,a' - s'$,
$x \in S$, and all $l = 1,\dots,a$. 
Therefore, these $\psi^\prime_i$ form a set of $a^\prime - s^\prime$ first class constraints. Note that $(\psi'_1,\dots\psi'_{a'-s'},\chi'_1,\dots,\chi'_{s'})$ is a complete set of primary constraints, in the sense that their differentials are linearly independent on $U$, and every primary first class constraint $\psi$ can be written as $\psi(x)=b_1(x)\psi'_1(x)+\dots+b_{a'-s'}(x)\psi'_{a'-s'}(x)$.

In order to obtain a complete set of first class and second class constraints we first choose
$\{\chi''_{s'+1}, \dots, \chi''_{2s}\}  \subseteq \{\phi_{a'+1}, \dots, \phi_{a} \}$ 
such that the rows
$\{\chi^\prime_{k'}, \phi_j\}(x)$, $\{\chi''_{k''}, \phi_j\}(x)$ $k' = 1,\dots,s^\prime$, $k''=s'+1,\dots,2s$  
are linearly independent. This can be done because the matrix $( \{\phi_i, \phi_j\}(x) )$, $i,j = 1,\dots,a$, has rank $2s$. As before, assume that $\chi''_i=\phi_{a-2s+i}$, $i=s'+1, \dots, 2s$.

Then, for each $i = a' - s'+1,\dots,a-2s$
there are 
$\alpha_i^{k'} (x)$, $\alpha_i^{k''} (x)$, $k' = 1,\dots,s^\prime$,  $k'' = s^\prime+1,\dots,2s$,
such that
$\psi''_i (x):= \phi_{s'+i}(x) + \alpha_i^{k'} (x) \chi^\prime_{k'}(x) +\alpha_i^{k''} (x) \chi''_{k''}(x)$, $k' = 1,\dots,s^\prime$,  $k'' = s^\prime+1,\dots,2s$,
satisfies
\begin{equation*}
\{\psi''_i , \phi_l\}(x)
= 
\{\phi_{s'+i}, \phi_l\}(x)
 + \alpha_i^{k'} (x) \{\chi^\prime_{k'}, \phi_l\}(x)
 + \alpha_i^{k''} (x) \{\chi''_{k''}, \phi_l\}(x) = 0,
\end{equation*}
for all 
$x \in S$, and all $l = 1,\dots,a$.

It can be shown using the definitions that
\begin{multline*}
(d\psi'_1(x),\dots,d\psi'_{a'-s'}(x),d\psi''_{a'-s'+1}(x),\dots,d\psi''_{a-2s}(x),\\
d\chi'_1(x),\dots,d\chi'_{s'}(x),d\chi''_{s'+1}(x),\dots,d\chi''_{2s}(x))
\end{multline*}
are linearly independent for $x\in S$.
Note that no nontrivial linear combination of the $\psi''_i$ is a primary first class constraint.

Therefore,
\begin{multline*}
(\psi_1,\dots,\psi_{a - 2s}, \chi_1,\dots,\chi_{2s}): = \\
(\psi'_1,\dots,\psi'_{a'-s'},\psi''_{a'-s'+1},\dots,\psi''_{a-2s},\chi'_1,\dots,\chi'_{s'},\chi''_{s'+1},\dots,\chi''_{2s})  
\end{multline*}
form a complete set of first class and second class constraints adapted to the primary constraints.\\

We will now take advantage of these adapted constraints to write the equations of motion in a simpler way. The total energy defined on $U$ is
\[
\mathcal{E}_T=\mathcal{E}+\lambda'^i\psi'_i+\mu'^j\chi'_j,
\]
$i=1,\dots,a'-s'$, $j=1,\dots,s'$. As usual, we impose on $\mathcal{E}_T$ the condition that it is first class, which means that $\{\mathcal{E},\psi_i\}(x)=0$, $i=1, \dots,a-2s$ and
\[
\{\mathcal{E},\chi_i\}(x)+\mu'^j \{\chi'_j,\chi_i\}(x)=0,
\]
$i=1, \dots,2s$, $x\in S$. From this we obtain well-defined $\mu'^j (x)$, $x\in S$, which we extend arbitrarily for $x\in U$. Then the equations of motion are written as
\[
X_{\mathcal{E}+\mu'^j\chi'_j}(x)+\lambda'^iX_{\psi'_i}(x),
\]
$x\in S$, where $\lambda'^i$, $i=1,\dots,a'-s'$, are completely arbitrary real numbers. 
We can also write this equation in the Poisson form
\[
\dot F= \{F,\mathcal{E}+\mu'^j\chi'_j\}+\lambda'^i \{F,\psi'_i\}.
\]

Now we will see a simpler expression for the equations using Dirac brackets rather than the canonical bracket. Since $\mathcal{E}_T$ is first class, then  $X_{\mathcal{E}_T}(x)=X_{(\chi),\mathcal{E}_T}(x)$ for $x\in S$ (see Lemma \ref{mainlemma1section7}(e)). Then the equations of motion can be written as
\begin{equation}\label{eq:F_HT_Dirac_bracket}
\dot F= \{F,\mathcal{E}_T\}_{(\chi)}.
\end{equation}
This equation, obtained in an abstract setting, is similar to the one in page 42 of \cite{Dirac1964} for the case $P=T^*Q$.

We can rewrite the total energy as 
\[
\mathcal{E}_T=\mathcal{E}+\lambda'^i\psi'_i+\mu'^j\chi'_j=\mathcal{E}+\lambda'^i(\phi_i+\alpha_i^k\chi'_k)+\mu'^j\chi'_j,
\]
where $i=1, \dots, a'-s'$ and $j=1, \dots, s'$, and therefore (\ref{eq:F_HT_Dirac_bracket}) can be written as
\[
\dot F= \{F,\mathcal{E}\}_{(\chi)}+\lambda'^i \{F,\phi_i\}_{(\chi)},
\]
or equivalently,
\[
X_{(\chi),\mathcal{E}_T} (x)=X_{(\chi),\mathcal{E}} (x)+\lambda'^iX_{(\chi),\phi_{i}} (x), 
\]
$x\in S$.

We will denote
\begin{equation}\label{abridged_total_energy}
\mathcal{E}_{AT}=  \mathcal{E}+\lambda'^i\phi_{i}  
\end{equation}
$i=1, \dots,a' -s'$, and call it the \textbf{\textit{abridged total energy}}.
It has the property that it provides the same dynamical information as the total energy in terms of the Dirac bracket. Moreover, for each $x \in S$ the map $\mathbb{R}^{a'-s'} \to \Lambda_x$ given by $(\lambda'^1, \dots, \lambda'^{a'-s'}) \mapsto X_{(\chi),\mathcal{E}_{AT}} (x)$ is an isomorphism. In particular, for any solution curve $x (t)$ of $X_{\mathcal{E}_T}$ on $S$, we have
\[
(x (t), \dot x (t))=X_{(\chi),\mathcal{E}_{AT}} (x (t))
\]
for a (uniquely determined) choice of $\lambda'^i (t)$, $i=1, \dots, a'-s'$.

We have proven the following theorem.

\begin{theorem}\label{theorem321}
Let the primary $S ^\prime$ and final constraint $S$ be submanifolds defined regularly by equations 
(\ref{calSprime}), (\ref{SprimeC}) on a neighborhood $U \subseteq P$ of $S ^\prime$, where $(P,\Omega)$ is a symplectic manifold. 
Let $2s$ be the rank of the matrix $\Phi = ( \{\phi_i, \phi_j\}(x) )$, $i,j = 1, \dots,a$, which, by Assumption \ref{2s_constant},  is constant on $U$.
Let
$s^\prime$ be the rank of the matrix 
$\Phi^\prime = ( \{\phi_i, \phi_j\}(x) )$, $i = 1,\dots,a^\prime, j = 1,\dots,a$, which, by Assumption
\ref{Lambda} is constant on $S$.
Choose $s^\prime$ linearly independent rows 
$( \{\chi_i', \phi_j\}(x) )$, $i = 1,\dots,s^\prime, j = 1,\dots,a$
of 
$\Phi^\prime$, say w.l.o.g. $\chi'_i=\phi_{a'-s'+i}$, $i=1, \dots, s'$. Then for each $i=1, \dots, a'-s'$ one can find coefficients
$\alpha_i^k (x)$ 
such that
$\psi^\prime_i (x):= \phi_i(x) + \alpha_i^k (x) \chi^\prime_k(x)$, $k = 1, \dots,s^\prime$, defined on $U$ constitute a maximal set of linearly independent primary first class constraints.
 Then
$(\psi'_1, \dots,\psi'_{a'-s'},\chi'_1, \dots,\chi'_{s'})$ is a complete set of primary constraints and one can extend it to a complete set of first class and second class constraints
\begin{multline*}
(\psi_1,\dots,\psi_{a - 2s}, \chi_1,\dots,\chi_{2s}): =\\
(\psi'_1,\dots,\psi'_{a'-s'},\psi''_{a'-s'+1},\dots,\psi''_{a-2s},\chi'_1,\dots,\chi'_{s'},\chi''_{s'+1},\dots,\chi''_{2s}).
\end{multline*}
Let $\mathcal{E}\colon U \to \mathbb{R}$ be a given energy satisfying 
$\{\mathcal{E},\psi_i\}(x)=0$, $x\in S$, $i=1, \dots, a-2s$.
Consider the following \emph{abridged total energy}
\[
\mathcal{E}_{AT}=\mathcal{E}+\lambda'^i\phi_{i},
\]
$i=1, \dots, a'-s'$, where $\lambda'_i$ are arbitrary real parameters. The corresponding Hamiltonian vector field with respect to the Dirac bracket is
\begin{equation}\label{eq_dirac_evolution_thm}
 X_{(\chi),\mathcal{E}_T} (x)=X_{(\chi),\mathcal{E}} (x)+\lambda'^iX_{(\chi),\phi_{i}} (x), 
\end{equation}
$x\in U$. For $x\in S$, this vector field is tangent to $S$ and gives the equations of motion. Equivalently, we can write the equations of motion as
\begin{equation}\label{eq:poisson_dirac_evolution_thm}
\dot F= \{F,\mathcal{E}\}_{(\chi)}+\lambda'^i \{F,\phi_{i}\}_{(\chi)}.  
\end{equation}
Since $X_{(\chi),\mathcal{E}_T}$ is tangent to $S$,  the evolution of a function $f$ on $S$ is given by \eqref{eq:poisson_dirac_evolution_thm} for any $F$ such that $f=F|S$.  
\end{theorem}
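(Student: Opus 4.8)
The theorem packages the constructions carried out just above it, so my plan is to assemble those into a proof in three movements, plus a remark on where the real work lies. First I would build the complete adapted set $(\psi_1,\dots,\psi_{a-2s},\chi_1,\dots,\chi_{2s})$. By Assumption~\ref{Lambda} the matrix $\Phi^\prime=(\{\phi_i,\phi_j\}(x))$, $i=1,\dots,a^\prime$, $j=1,\dots,a$, has constant rank $s^\prime$ on $S$; choosing $s^\prime$ of its rows to be linearly independent there and relabelling so they are the rows of $\chi^\prime_k:=\phi_{a^\prime-s^\prime+k}$, each remaining row of $\Phi^\prime$ lies in their span on $S$, which produces unique functions $\alpha^k_i$ on $S$, smooth by Cramer's rule, with $\{\phi_i,\phi_l\}(x)+\alpha^k_i(x)\{\chi^\prime_k,\phi_l\}(x)=0$ for $x\in S$ and all $l$; I would extend the $\alpha^k_i$ arbitrarily to $U$ and set $\psi^\prime_i:=\phi_i+\alpha^k_i\chi^\prime_k$. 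These vanish on $S$ and satisfy $X_{\psi^\prime_i}(x)\in T_xS$ for $x\in S$, hence are primary first class constraints, and they are maximal among such by a dimension count on the left null space of $\Phi^\prime$ over $S$. Using Assumption~\ref{2s_constant} ($\operatorname{rank}\Phi=2s$ on $U$), the same device applied to the secondary $\phi$'s enlarges this to the full adapted set, and linear independence of all the differentials $d\psi_\bullet(x),d\chi_\bullet(x)$ along $S$ is obtained exactly as in Lemma~\ref{mainlemma1section7}\textbf{(a)}, using that $d\phi_1(x),\dots,d\phi_a(x)$ are independent.

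Next I would check that the total energy can be made first class. Assumption~\ref{Lambda} says the multipliers solving \eqref{defnoftotalhamiltonian} form a nonempty affine bundle $\Lambda\to S$, so a first class $\mathcal{E}_T$ exists; writing it in the adapted basis as $\mathcal{E}_T=\mathcal{E}+\lambda^{\prime i}\psi^\prime_i+\mu^{\prime j}\chi^\prime_j$, the condition $\{\mathcal{E}_T,\phi_l\}(x)=0$ on $S$ collapses---since the $\psi^\prime_i$ are first class and $\{\mathcal{E},\psi_i\}(x)=0$ on $S$ by hypothesis---to $\{\mathcal{E},\chi_i\}(x)+\mu^{\prime j}\{\chi^\prime_j,\chi_i\}(x)=0$, $i=1,\dots,2s$, $x\in S$, which determines $\mu^{\prime j}(x)$ uniquely on $S$ because the rows $\{\chi^\prime_j,\phi_\bullet\}$ are independent (existence being guaranteed by nonemptiness of $\Lambda$); I would extend $\mu^{\prime j}$ arbitrarily to $U$, leaving $\lambda^{\prime i}$, $i=1,\dots,a^\prime-s^\prime$, free.

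Then I would pass to the Dirac bracket. Because $\mathcal{E}_T$ is first class, $\{\chi_j,\mathcal{E}_T\}(x)=0=\chi_j(x)$ for $x\in S$, so $d(\mathcal{E}_T)_\chi(x)=d\mathcal{E}_T(x)$ there and hence $X_{(\mathcal{E}_T)_\chi}(x)=X_{\mathcal{E}_T}(x)$ on $S$; combining this with Lemma~\ref{mainlemma1section7}\textbf{(e)} gives $X_{(\chi),\mathcal{E}_T}(x)=X_{\mathcal{E}_T}(x)$ for $x\in S$, i.e. $\dot F=\{F,\mathcal{E}_T\}_{(\chi)}$ on $S$. Rewriting $\mathcal{E}_T=\mathcal{E}_{AT}+(\lambda^{\prime i}\alpha^k_i+\mu^{\prime k})\chi^\prime_k$ with $\mathcal{E}_{AT}$ as in \eqref{abridged_total_energy}, and using that each $\chi^\prime_k$ is a Casimir of $\{\,,\,\}_{(\chi)}$ (indeed $\{\,\cdot\,,\chi_i\}_{(\chi)}\equiv 0$ on $U$ by the definition of the Dirac bracket), so that $\{F,g\chi^\prime_k\}_{(\chi)}=\chi^\prime_k\{F,g\}_{(\chi)}$ vanishes on $S$, I obtain $\{F,\mathcal{E}_T\}_{(\chi)}(x)=\{F,\mathcal{E}_{AT}\}_{(\chi)}(x)=\{F,\mathcal{E}\}_{(\chi)}(x)+\lambda^{\prime i}\{F,\phi_i\}_{(\chi)}(x)$ for $x\in S$, the last step because the $\lambda^{\prime i}$ are constants; this is \eqref{eq:poisson_dirac_evolution_thm}, equivalently \eqref{eq_dirac_evolution_thm}. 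Finally, $\{\mathcal{E}_T,\phi_l\}(x)=0$ on $S$ forces $X_{\mathcal{E}_T}(x)\in T_xS$, hence $X_{(\chi),\mathcal{E}_T}(x)\in T_xS$ and its action on $F$ at $x\in S$ depends only on $F|S$; the map $(\lambda^{\prime i})\mapsto X_{(\chi),\mathcal{E}_{AT}}(x)$ is onto $\Lambda_x$, with both sides of dimension $a^\prime-s^\prime=d_\Lambda$, and injective by the independence along $S$ of the $X_{(\chi),\phi_i}(x)$ noted in the first paragraph; the claim about solution curves of $X_{\mathcal{E}_T}$ then follows by evaluating along the curve.

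The main obstacle is the first paragraph: one must keep careful track of which rank conditions hold on $S$ (Assumptions~\ref{Lambda} and \ref{2s_constant}) and which survive only after shrinking $U$, so that the correction coefficients exist, are smooth, and the full collection of differentials $d\psi_\bullet,d\chi_\bullet$ stays independent---essentially the bookkeeping already done in Lemma~\ref{mainlemma1section7}\textbf{(a),(d)}. Once the adapted constraints are in hand, everything else is substitution together with the Casimir property of the $\chi$'s and Lemma~\ref{mainlemma1section7}\textbf{(e)}.
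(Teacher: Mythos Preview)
Your proposal is correct and follows essentially the same route as the paper: the theorem is stated right after the sentence ``We have proven the following theorem,'' so its proof \emph{is} the preceding discussion, which you have faithfully reproduced---building the adapted constraints, writing $\mathcal{E}_T=\mathcal{E}+\lambda'^i\psi'_i+\mu'^j\chi'_j$, determining $\mu'^j$ from the first-class condition, invoking Lemma~\ref{mainlemma1section7}\textbf{(e)} to pass to the Dirac bracket, and then using the Casimir property of the $\chi'_k$ to strip the second-class terms down to $\mathcal{E}_{AT}$. Your justification of $X_{(\chi),\mathcal{E}_T}(x)=X_{\mathcal{E}_T}(x)$ on $S$ via $d(\mathcal{E}_T)_\chi(x)=d\mathcal{E}_T(x)$ (since both $\chi_i$ and $\{\chi_j,\mathcal{E}_T\}$ vanish on $S$) is in fact more explicit than the paper's bare citation of the lemma.
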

\begin{remark}
Equation (\ref{eq:poisson_dirac_evolution_thm}) represents an elaboration on the equation (\ref{eq:F_HT_Dirac_bracket}). It has interesting advantages which may be useful in practice, because
it only requires
calculating
$(\chi'_1, \dots,\chi'_{s'},\chi''_{s'+1}, \dots,\chi''_{2s}) $, which is a set of second class constraints 
adapted to the primary constraint.
It will be generalized to extend its applicability to examples where the constraints are \textit{foliated}, as we will see in section~\ref{sectionanextensionofetc}.
\end{remark}

\begin{remark}\label{physicalstate}
The geometric meaning of $s'$ is given by the fact that $a'-s'$ is the number of independent primary first class constraints $\psi'_i$, $i=1,\dots, a'-s'$. This number coincides with
$d^{(c)}(x)$ and with the dimension of the leaves of the foliation $\widetilde K_c$ of $M_c$  (see Theorem \ref{lemmmma}). Also, the Hamiltonian vector fields $X_{\psi'_i}, \dots, X_{\psi'_{a'-s'}}$ are tangent to the leaves of $\widetilde K_c$. Using the formula $[X_{f},X_g]=X_{-\{f,g\}}$ it is easy to prove that the iterated Poisson brackets of the $\psi'_i$ give functions whose Hamiltonian vector fields are tangent to the leaves of $\widetilde K_c$. Each one of these functions, let us call it $f$, satisfies that $X_{f}(x)=\lambda^i(x)X_{\psi'_i}(x)$ for all $x\in M_c$, but not necessarily $f$ is a linear combination of the $\psi'_i$,
on a neighborhood  of $M_c$, so it might not be a primary first-class constraint. One can prove that such a function is zero on $M_c$, but since $df(x)=\lambda^i(x)d\psi'_i(x)$ for all $x\in M_c$, it can not be included in a set of constraint functions defining $M_c$ regularly and containing the $\psi'_i$, in other words, it is not an \emph{independent} secondary first class constraint.
There are examples in the literature where the Poisson bracket of two primary first class constraints is an independent secondary first class constraint, for instance, that is the case in \cite[p.\ 2]{MR1191617}; for that example Assumption \ref{K2} (a) does not hold (but \ref{K2} (b) does).

According to \cite{Dirac1964}, pages 23--24, primary first class constraints and their iterated brackets represent transformations that do
not change the physical state of the system, which in this case is implied directly from the requirement of determinism. We have seen that under our regularity conditions, those iterated brackets
do not really add any new such transformations besides the ones given by the primary first class constraints.

In addition, note that $a-2s$ is also the dimension of the kernel of the presymplectic form on $S$
(see Theorem \ref{mainlemma1section7} (a)), which is the number of first class constraints, and also the dimension of the leaves of $K_c$ (see Assumption \ref{K1}).
Since the Hamiltonian vector fields $X_\psi$ corresponding to the first class constraints generate the integrable distribution associated to $K_c$, then the transformations that preserve the leaves of $K_c$ are those which preserve the physical state. So, points of $\bar M_c=M_c/K_c$ would represent exactly the physical states.

\cite[p.\ 25]{Dirac1964} introduces the notion of extended Hamiltonian. In our context, we should define the notion of \textbf{\textit{extended energy}},
\begin{equation}\label{extendedenergy}
\mathcal{E}_E = \mathcal{E}_T  + \lambda''^i \psi''_i,
\end{equation}
$i = 1, \dots,a-2s -(a' -s')$
where the $\psi''^i$ represent the secondary first class constraints. 
Both $\mathcal{E}_E$ and $\mathcal{E}_T$ give the same dynamics on the quotient manifold
$\bar M_c$.
If $(M_c,\omega_c)$ is symplectic, there are only second class constraints among the $\phi_i$, $i=1, \dots, a$, and conversely. In this case, $\mathcal{E}_E=\mathcal{E}_T$. 
\end{remark}

\section{Dirac structures}\label{sectiondiracstructures}
The rest of the paper is devoted to generalizing the Dirac and Gotay-Nester theories of constraints.
In this section we review a few basic facts related to Dirac
structures and Dirac dynamical systems following \cite{MR998124} and \cite{MR2265464, MR2265469}, which are
of direct interest for the present
paper.

We introduce the flat  $D^\flat$ and the orthogonal $^D$ operators with respect to a given Dirac structure, which is important to describe the algorithm, in the next section. In fact, we obtain a technique which imitates the Gotay-Nester technique, by replacing $\omega$ by $D$.
\paragraph{Dirac structures on vector spaces.}
Let $V$ be an $n$-dimensional vector space and let
 $V^*$ be its dual space. Define
 the
\textit{symmetric pairing} $\left\langle \! \left\langle \cdot  \,,\cdot\right\rangle \!
\right\rangle$ on $V \oplus V^*$ by
 \[
 \left\langle \! \left\langle (v_1,\alpha_1),(v_2,\alpha_2)\right\rangle \!
\right\rangle = \left\langle \alpha_1,v_2 \right\rangle + \left\langle
\alpha_2,v_1\right\rangle,\]
 for $(v_1,\alpha_1), (v_2,\alpha_2) \in V \oplus V^*$
where $\left\langle \cdot \, , \cdot \right\rangle$ is the natural pairing
between $V^*$ and $V$.
A \emph{linear Dirac structure} on $V$,
(sometimes called simply a \emph{Dirac structure} on $V$),
is a subspace $D \subseteq V \oplus V^*$
such that $D=D^\perp$, where $D^\perp$ is the orthogonal of $D$
relative to the pairing $\left\langle \! \left\langle \cdot \,,\cdot\right\rangle \!
\right\rangle$. One can easily check the following result.
\begin{lemma}\label{defdiracstr}
A vector subspace $D \subseteq  V \oplus V^*$ is a Dirac structure
on $V$ if and only if it is maximally isotropic with respect to the symmetric
pairing.
A further equivalent condition is given by $\operatorname{dim}D = n$ and
$\left\langle \! \left\langle (v_1,\alpha_1),(v_2,\alpha_2)\right\rangle \!
\right\rangle = 0$
for all
$(v_1,\alpha_1), (v_2, \alpha_2) \in D$.
\end{lemma}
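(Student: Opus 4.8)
The plan is to prove the three equivalent characterizations of a linear Dirac structure: (1) $D = D^\perp$; (2) $D$ is maximally isotropic with respect to the symmetric pairing; (3) $\dim D = n$ together with isotropy of $D$. The key fact that makes everything work is that the symmetric pairing $\langle\!\langle\cdot,\cdot\rangle\!\rangle$ on $V\oplus V^*$ is nondegenerate of signature $(n,n)$, so that for any subspace $D$ one has $\dim D + \dim D^\perp = \dim(V\oplus V^*) = 2n$.

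First I would establish that nondegeneracy formula. Given a basis $e_1,\dots,e_n$ of $V$ with dual basis $e^1,\dots,e^n$, compute the Gram matrix of $\langle\!\langle\cdot,\cdot\rangle\!\rangle$ on the basis $\{(e_i,0),(0,e^j)\}$ of $V\oplus V^*$: it is the block matrix $\begin{pmatrix} 0 & I \\ I & 0\end{pmatrix}$, which is invertible, hence the pairing is nondegenerate. Standard linear algebra then gives $\dim D^\perp = 2n - \dim D$ for every subspace $D$.

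Next, the equivalence of the three conditions. If $D = D^\perp$, then $D$ is isotropic (since $D\subseteq D^\perp$) and $\dim D = \dim D^\perp = 2n - \dim D$, forcing $\dim D = n$; so (1) $\Rightarrow$ (3). Conversely, if $D$ is isotropic then $D \subseteq D^\perp$, and if moreover $\dim D = n$ then $\dim D^\perp = 2n - n = n = \dim D$, so the inclusion $D\subseteq D^\perp$ is an equality; thus (3) $\Rightarrow$ (1). For maximal isotropy: if $D=D^\perp$ and $D\subsetneq D'$ with $D'$ isotropic, then $D' \subseteq (D')^\perp \subseteq D^\perp = D \subsetneq D'$, a contradiction, so (1) $\Rightarrow$ (2). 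Conversely if $D$ is maximally isotropic, then $D\subseteq D^\perp$; I claim equality holds, for otherwise pick $w\in D^\perp\setminus D$ and note that $D + \mathbb{R}w$ is still isotropic — indeed $\langle\!\langle w,w\rangle\!\rangle = 0$ because $w\in D^\perp$ and $w$ can be written... wait, one must be slightly careful: isotropy of $D+\mathbb{R}w$ requires $\langle\!\langle w,w\rangle\!\rangle = 0$, which is not automatic from $w\in D^\perp$. The clean route is instead to use the dimension count: a maximal isotropic subspace of a nondegenerate symmetric form of signature $(n,n)$ has dimension exactly $n$ (this is the standard theory of split quadratic forms), and then (2) reduces to (3), already handled. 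So I would invoke, or briefly prove via a hyperbolic-basis argument, that every maximal isotropic subspace has dimension $n$.

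The main obstacle, and the only genuinely nontrivial point, is that last claim: that maximal isotropic subspaces of the split form have dimension exactly $n$. The inequality $\dim D \le n$ for isotropic $D$ follows because $D \subseteq D^\perp$ gives $\dim D \le 2n - \dim D$. The reverse — that an isotropic $D$ of dimension $< n$ is not maximal — is proven by extending: given isotropic $D$ with $\dim D < n$, one shows $D^\perp \supsetneq D$ (by the dimension count $\dim D^\perp = 2n - \dim D > n > \dim D$), picks any $w \in D^\perp \setminus D$, and observes that $\langle\!\langle w, w\rangle\!\rangle =: c$ may be nonzero, but then for any $v \in D$ one has $\langle\!\langle v + tw, v + tw\rangle\!\rangle = t^2 c$; if $c \neq 0$ this fails to be isotropic, so one instead must choose $w$ more carefully, e.g. using nondegeneracy to adjust $w$ by an element of $D^\perp$ making it isotropic, which is possible precisely because the form restricted to a complement of $D$ in $D^\perp$ is nondegenerate and of split type. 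Alternatively — and this is the approach I would actually write up since it is cleanest — one simply notes that the three conditions (1), (2), (3) were already shown pairwise equivalent \emph{modulo} the single implication (2)$\Rightarrow$(3), and that implication is exactly the statement that a maximal isotropic subspace has dimension $n$; I would either cite it as standard or give the short hyperbolic-plane induction. The rest is the routine bookkeeping above.
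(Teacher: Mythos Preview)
The paper does not actually prove this lemma; it is stated immediately after the definition with the remark ``One can easily check the following result,'' so there is no argument in the paper to compare yours against. Your proof is correct. The one place you hesitate---the implication from maximal isotropy to $\dim D = n$---is indeed the only point with real content, and your diagnosis is right: it is exactly the statement that maximal totally isotropic subspaces of the split form of signature $(n,n)$ have dimension $n$, which follows from the hyperbolic-plane decomposition or Witt extension. For this particular pairing there is also a shortcut you could mention in place of the general theory: the induced form on $D^\perp/D$ is nondegenerate of dimension $2(n-\dim D)$ and, because the ambient form is split, the quotient form is again split (signatures subtract under this construction), hence contains a nonzero isotropic vector whenever $\dim D < n$; any lift of that vector to $D^\perp$ extends $D$ isotropically, contradicting maximality. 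But citing the standard result, as you propose, is entirely adequate for a lemma the authors themselves left to the reader.
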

One of the main examples is the Dirac structure naturally defined on a
presymplectic vector space $(V, \omega)$ by
\[
D_{\omega} = \{(v, \alpha) \in V\oplus V^\ast \mid \alpha = \omega^\flat (v)\}.
\]
Here the definition of $\omega^\flat$
is the standard one, namely,
$\omega^{\flat} \colon  V \rightarrow V^\ast$ is defined by
$\omega^{\flat} (v)(w) = \omega (v, w)$, for all $v, w \in V$.
We also recall the definition of orthogonality on $V$ associated to a given
presymplectic form $\omega$. For a given subset
$W \subseteq V$ we define
the $\omega$-orthogonal complement
$W^{\omega}$ by
\[
W^{\omega} = \{v \in V \mid \omega (v, w) = 0 \,\, \mbox{for all}\,\, w \in W\}.
\]
We recall some standard facts in the following lemma, omitting the proof.
\begin{lemma}\label{circperp}
Let $\omega$ be a presymplectic form on a vector space $V$ and let $W \subseteq
V$ be any
vector subspace. Then $W^{\flat} : = \omega ^\flat (W) =
\left(W^{\omega}\right)^{\circ} $, where the right hand side denotes the
annihilator of $W ^\omega$.
\end{lemma}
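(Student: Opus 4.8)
The plan is to prove Lemma~\ref{circperp}, the standard identity $\omega^\flat(W) = (W^\omega)^\circ$, by a direct double inclusion argument, treating both inclusions at the level of vectors and covectors and using only elementary linear algebra together with the defining property of $\omega^\flat$. Since the paper says the proof is omitted, I will keep it short; the point is simply to record why the identity holds.

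First I would prove $\omega^\flat(W) \subseteq (W^\omega)^\circ$. Let $\alpha = \omega^\flat(v)$ for some $v \in W$, and let $u \in W^\omega$ be arbitrary. Then, by definition of $W^\omega$, we have $\omega(u,w) = 0$ for all $w \in W$; in particular, taking $w = v$, we get $\langle \alpha, u\rangle = \omega^\flat(v)(u) = \omega(v,u) = -\omega(u,v) = 0$. Hence $\alpha$ annihilates $W^\omega$, i.e.\ $\alpha \in (W^\omega)^\circ$. This direction is immediate and uses only skew-symmetry of $\omega$.

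For the reverse inclusion $(W^\omega)^\circ \subseteq \omega^\flat(W)$, I would argue by counting dimensions, which is the only mildly nontrivial point. Consider the linear map $\omega^\flat|_W \colon W \to V^\ast$. Its kernel is $\{v \in W \mid \omega(v,w')=0 \text{ for all } w' \in V\} = W \cap \ker\omega^\flat$. Now observe that $\ker(\omega^\flat|_W)$ and $W^\omega$ are related by $W^\omega = (\omega^\flat)^{-1}(W^\circ)$, so $\dim W^\omega = \dim W^\circ + \dim(\ker\omega^\flat \cap \ldots)$ — more cleanly: the map $\omega^\flat$ induces an injection $V/\ker\omega^\flat \hookrightarrow V^\ast$, and one checks that $\dim W^\omega = \dim\ker\omega^\flat + \operatorname{codim}_V(W) $ adjusted by the overlap; the upshot of the standard bookkeeping is that $\dim \omega^\flat(W) = \dim W - \dim(W\cap \ker\omega^\flat)$ and $\dim (W^\omega)^\circ = \dim V - \dim W^\omega$, and these two numbers coincide. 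Combined with the inclusion already established, equality of the subspaces follows. The main (minor) obstacle is organizing this dimension count without circularity; the cleanest route is to note that $W^\omega = \{v : \langle \omega^\flat(v), w\rangle = 0 \ \forall w\in W\}$, i.e.\ $W^\omega = (\omega^\flat)^{-1}(W^\circ)$, whence $\omega^\flat(W^\omega) = \omega^\flat(V)\cap W^\circ$, and a symmetric computation (applying the already-proven inclusion to the pair $(W^\omega, W)$, using $(W^\omega)^\omega \supseteq W$) pins down the dimensions. Alternatively, one can simply invoke that for the canonical perfect pairing between $V$ and $V^\ast$ restricted appropriately, annihilator and $\omega^\flat$-image are forced to agree once one inclusion and the dimension match are in hand.

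I expect no real difficulty here: both statements in the lemma are textbook facts in symplectic linear algebra, and the whole content is the single displayed identity $W^\flat := \omega^\flat(W) = (W^\omega)^\circ$. I would present the $\subseteq$ inclusion in full (it is two lines) and then dispatch $\supseteq$ by the dimension count sketched above, remarking that it also yields the dual identity $(W^\circ)^\sharp = W^\omega$ when $\omega$ is symplectic. Since the paper explicitly omits the proof, in the final text I would either leave it omitted as stated or include just the short argument above as a parenthetical remark.
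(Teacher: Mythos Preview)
The paper explicitly omits the proof of this lemma (``We recall some standard facts in the following lemma, omitting the proof''), so there is nothing to compare your argument against. Your approach is correct and standard: the inclusion $\omega^\flat(W) \subseteq (W^\omega)^\circ$ is immediate from skew-symmetry, and the reverse follows from a dimension count once you observe $W^\omega = (\omega^\flat)^{-1}(W^\circ)$.

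One small comment on presentation: your dimension-count paragraph is rather discursive (ellipses, ``adjusted by the overlap'', etc.). If you do include a proof, the cleanest route is to record the identity $\operatorname{im}(\omega^\flat) = (\ker\omega^\flat)^\circ$ (which follows from $\omega^\flat(v)(u) = -\omega^\flat(u)(v)$), then compute directly: with $K = \ker\omega^\flat$,
\[
\dim W^\omega = \dim K + \dim\bigl(W^\circ \cap K^\circ\bigr) = \dim K + \dim V - \dim(W+K) = \dim V - \dim W + \dim(W\cap K),
\]
whence $\dim(W^\omega)^\circ = \dim W - \dim(W\cap K) = \dim\omega^\flat(W)$. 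This is exactly what your sketch is gesturing at, just written out without hedging.
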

The following proposition is a direct consequence of propositions 1.1.4 and
1.1.5 in
\cite{MR998124}.
\begin{proposition}
\label{diracpresymplectic}
Given a Dirac structure $D \subseteq V \oplus V^*$ define the subspace $E_D
\subseteq V$ to be the projection of $D$ on $V$. Also, define the 2-form
$\omega_D$ on $E_D$ by
$\omega_D(v,w) = \alpha (w)$, where
$v \oplus \alpha \in D$. (One checks that this definition of  $\omega _D$ is
independent of the choice of $\alpha$).
Then, $\omega_D$ is a skew form on $E_D$.
Conversely, given a vector space $V$, a subspace $E \subseteq V$
and a skew form $\omega$ on $E$, one sees that $D_{\omega} = \{v \oplus \alpha
\mid v \in E,\,\,\alpha (w) = \omega (v, w) \mbox{ for all } w
\in E \}$ is the unique Dirac structure $D$  on $V$ such that $E_D = E$ and
$\omega_D =
\omega$.

$D$ is the Dirac structure associated to a presymplectic form $\omega$ on $V$,
as explained before, if and only if $E_D = V$ and $\omega = \omega_D$.
\end{proposition}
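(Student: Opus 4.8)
The plan is to derive everything from Lemma \ref{defdiracstr}, i.e.\ from the fact that a subspace $D\subseteq V\oplus V^*$ is a Dirac structure precisely when it is isotropic for $\langle\!\langle\cdot,\cdot\rangle\!\rangle$ and $\dim D=n$, together with bookkeeping on the symmetric pairing. First I would check that $\omega_D$ is well defined: if $v\oplus\alpha_1$ and $v\oplus\alpha_2$ both lie in $D$, then $0\oplus(\alpha_1-\alpha_2)\in D$ by linearity, and isotropy of $D$ against an arbitrary $w\oplus\beta\in D$ gives $(\alpha_1-\alpha_2)(w)=0$; letting $w$ range over all of $E_D$ shows $\alpha_1-\alpha_2\in E_D^\circ$, so the value $\alpha(w)$, $w\in E_D$, does not depend on the choice of $\alpha$. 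Bilinearity of $\omega_D$ is immediate, and skew-symmetry follows because for $v\oplus\alpha,\,w\oplus\beta\in D$ isotropy reads $\alpha(w)+\beta(v)=0$, i.e.\ $\omega_D(v,w)=-\omega_D(w,v)$. This settles the first assertion.

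For the converse I would verify directly that $D_\omega=\{v\oplus\alpha\mid v\in E,\ \alpha|_E=\omega(v,\cdot)\}$ is isotropic and $n$-dimensional. Isotropy: for $v\oplus\alpha,\,w\oplus\beta\in D_\omega$ one computes $\langle\!\langle(v,\alpha),(w,\beta)\rangle\!\rangle=\alpha(w)+\beta(v)=\omega(v,w)+\omega(w,v)=0$ by skew-symmetry of $\omega$. Dimension: the projection $D_\omega\to V$, $(v,\alpha)\mapsto v$, has image exactly $E$, since every $v\in E$ admits an extension of $\omega(v,\cdot)\in E^*$ to an element of $V^*$, and its kernel consists of the $0\oplus\alpha$ with $\alpha\in E^\circ$, which has dimension $n-\dim E$; hence $\dim D_\omega=\dim E+(n-\dim E)=n$, so $D_\omega$ is a Dirac structure by Lemma \ref{defdiracstr}. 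From the same description one reads off $E_{D_\omega}=E$ and $\omega_{D_\omega}(v,w)=\alpha(w)=\omega(v,w)$, so $D_\omega$ has the claimed invariants.

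Uniqueness is the one point I expect to require a little care, and I would handle it by an inclusion-plus-dimension argument. Let $D'$ be any Dirac structure with $E_{D'}=E$ and $\omega_{D'}=\omega$, and take $v\oplus\alpha\in D'$. Then $v\in E$, and for each $w\in E$ pick $w\oplus\beta\in D'$; isotropy of $D'$ gives $\alpha(w)=-\beta(v)=-\omega_{D'}(w,v)=\omega_{D'}(v,w)=\omega(v,w)$, so $\alpha|_E=\omega(v,\cdot)$ and therefore $v\oplus\alpha\in D_\omega$. Thus $D'\subseteq D_\omega$, and since both are $n$-dimensional (both being Dirac), $D'=D_\omega$. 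Finally, the last sentence: to say that $D$ is associated to a presymplectic form on $V$ is exactly to say $D=D_\omega$ in the special case $E=V$; so if $E_D=V$ then by the uniqueness just proved $D=D_{\omega_D}$ with $\omega_D$ a presymplectic form on all of $V$, and conversely if $D=D_\omega$ for a presymplectic $\omega$ on $V$ then projecting the pairs $(v,\omega^\flat(v))$ shows $E_D=V$ and then $\omega_D=\omega$. (All of this is also subsumed by Propositions 1.1.4--1.1.5 of \cite{MR998124}.)
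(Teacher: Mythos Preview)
Your proof is correct and self-contained. The paper does not actually prove this proposition at all: it simply states that the result ``is a direct consequence of propositions 1.1.4 and 1.1.5 in \cite{MR998124}'' and moves on. So there is no ``paper's own proof'' to compare against beyond that citation. What you have written is essentially a clean reconstruction of Courant's argument, deriving everything from the maximal-isotropy characterization of Lemma~\ref{defdiracstr}: well-definedness of $\omega_D$ from the fact that the ambiguity in $\alpha$ lies in $E_D^\circ$, skew-symmetry from isotropy, the dimension count for $D_\omega$ via the short exact sequence $0\to E^\circ\to D_\omega\to E\to 0$, and uniqueness by inclusion plus equality of dimensions. All steps are sound, and your treatment of the final sentence (the presymplectic case $E_D=V$) is also correct.
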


\paragraph{The operators $D^\flat$ and $^D$.}
There is a natural extension, which is important in the present
paper, of the previous definition of $\omega^\flat$ for the case of a
general Dirac structure $D \subseteq V\oplus V^\ast$. For a given
Dirac structure $D \subseteq V\oplus V^*$ and given $X \in E_D$
define the \textit{set} $D^\flat(X)$, sometimes denoted $X^\flat$ for
short, by
\[ D^\flat(X)=\{\alpha \in V^*\mid ( X, \alpha) : = X \oplus \alpha \in D\}.
\]
Note that $D$ is the Dirac structure associated to a presymplectic
form on $V$, that is, $D = D_{\omega}$, if and only if
$E_D = V$
and for each
$X\in V$, the set $D^\flat(X)$ has a single element, more
precisely, $D^\flat(X) = \{\omega^{\flat}(X)\}$. In this
sense, $D^\flat(X)$ generalizes $\omega^\flat(X)$. For a given
subset $W \subseteq E_D$ define $D^\flat(W)$, also denoted
$W^\flat$, by
\[
D^\flat(W)= \bigcup_{X \in W}{D^\flat(X)}.
\]
If $W$ is a subspace, then $D^\flat(W)$ is a subspace of $V^*$.

It is straightforward to check that for all $X, Y \in E_D$,
\[
\{\omega_D (X,Y)\} = D^\flat(X)(Y)=-D^\flat(Y)(X) = -\{\omega_D (Y,X)\}.
\]

The notion of orthogonal complement with respect to $\omega$ can be generalized
as follows. For any subset $W \subseteq E_D$ define $W^D$ by
\[
W^D = \{X \in V\mid D^\flat(Y)(X)=\{0\},\,\,\mbox{for all}\,\, Y \in W\}.
\]
Clearly,
\[
W^D \cap E_D = \{X \in E_D\mid D^\flat(Y)(X)=\{0\},\,\mbox{for
all}\,\, Y \in W\}.
\]
It is easy to check that for any subspace
$W \subseteq E_D$ one has
\[
W^D \cap E_D = W^{\omega_D}.
\]
We recall also that, since $\omega_D \colon  E_D \times E_D \rightarrow \mathbb{R}$
is a presymplectic form on $E_D$, one has, according to Lemma \ref{circperp} and
with a
self-explanatory notation,
\[
\left(W^{\omega_D}\right)^{\circ_{E_D}} = W^{\flat_{\omega_D}}.
\]
The following proposition generalizes Lemma \ref{circperp} and is one of the
ingredients of the
constraint algorithm described in section \ref{sectionconstalgforrdirac}.
\begin{proposition}\label{circperpD}Let $D \subseteq V\oplus V^\ast$ be a given
Dirac
structure and let $W \subseteq E_D$ be a given subspace.
Then $D^\flat (W)\equiv W^\flat = (W^D)^\circ$.
\end{proposition}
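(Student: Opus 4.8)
The plan is to prove the two set equalities $D^\flat(W) = W^\flat$ (which is a definition, so nothing to do there) and $W^\flat = (W^D)^\circ$ by unwinding the definitions of $D^\flat$ and $W^D$ and then reducing everything to the corresponding fact for the presymplectic form $\omega_D$ on $E_D$, namely Lemma \ref{circperp}, which asserts $(W^{\omega_D})^{\circ_{E_D}} = W^{\flat_{\omega_D}}$. First I would record the inclusion $D^\flat(W) \subseteq (W^D)^\circ$: if $\alpha \in D^\flat(X)$ for some $X \in W$, then for every $Y \in W^D$ we have, by definition of $W^D$, that $D^\flat(X)(Y) = \{0\}$, hence $\alpha(Y) = 0$; since $Y$ was arbitrary in $W^D$, $\alpha$ annihilates $W^D$, i.e. $\alpha \in (W^D)^\circ$. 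This direction is essentially formal and uses only the definitions.

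For the reverse inclusion $(W^D)^\circ \subseteq D^\flat(W)$, the key observation is that $D^\flat(W)$ is a subspace of $V^\ast$ (stated in the excerpt when $W$ is a subspace), so it suffices to compare dimensions, or, more cleanly, to identify both sides with subspaces pulled back from $E_D$. Concretely: any $\alpha \in D^\flat(X)$ with $X \in E_D$ is determined on $E_D$ by $\alpha|_{E_D} = \omega_D^\flat(X)$ (this follows from $\{\omega_D(X,Y)\} = D^\flat(X)(Y)$ for $Y \in E_D$), but $\alpha$ is free on a complement of $E_D$ — more precisely, $D^\flat(0) = E_D^\circ$, since $0 \oplus \alpha \in D$ iff $\alpha \in E_D^\circ$ (this is exactly the statement that the projection of $D$ to $V$ is $E_D$ together with maximal isotropy/dimension count of Lemma \ref{defdiracstr}). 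Hence $D^\flat(W) = (\omega_D^\flat(W)$ lifted to $V^\ast) + E_D^\circ$, or said differently, $D^\flat(W)$ is the preimage under the restriction map $V^\ast \to E_D^\ast$ of $W^{\flat_{\omega_D}} = \omega_D^\flat(W) \subseteq E_D^\ast$.

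On the other side, I would show $(W^D)^\circ$ is the same preimage. Using the identity already recorded in the excerpt, $W^D \cap E_D = W^{\omega_D}$, and the fact that $W^D = W^{\omega_D} \oplus (\text{everything transverse to } E_D)$ — indeed $X \in W^D$ iff $D^\flat(Y)(X) = 0$ for all $Y \in W$, a condition that, since each $D^\flat(Y)$ contains $E_D^\circ$ and is otherwise pinned down on $E_D$, only constrains the $E_D$-component of $X$; so $W^D = \pi^{-1}(W^{\omega_D})$ where $\pi\colon V \to \ldots$ — wait, more carefully, $W^D$ contains all of $V$ modulo the constraint on the image in... I would instead argue directly that $(W^D)^\circ = $ (annihilator computations) $= \{\beta \in V^\ast : \beta|_{E_D} \in (W^{\omega_D})^{\circ_{E_D}}\}$, because the annihilator of $W^D$ depends only on $W^D \cap E_D$ together with the fact that $W^D \supseteq (E_D$-transverse directions$)$; then apply Lemma \ref{circperp} to rewrite $(W^{\omega_D})^{\circ_{E_D}} = W^{\flat_{\omega_D}}$, matching the description of $D^\flat(W)$ from the previous paragraph. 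The main obstacle is being careful about the "transverse directions": I must verify cleanly that $W^D$ always contains a complement of $E_D$ in $V$ (equivalently that $D^\flat(Y) \supseteq E_D^\circ$ for all $Y$, which follows from $0 \oplus E_D^\circ \subseteq D$ and isotropy), so that both $D^\flat(W)$ and $(W^D)^\circ$ are genuinely determined by the single presymplectic computation on $E_D$; once that is pinned down, the result is just Lemma \ref{circperp} transported along the restriction map $V^\ast \to E_D^\ast$ and its adjoint.
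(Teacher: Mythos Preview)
Your first inclusion $W^\flat \subseteq (W^D)^\circ$ is fine and matches the paper. The reverse inclusion, however, contains a genuine error in your description of $W^D$. You assert that $W^D$ contains a complement of $E_D$ in $V$, justified by the claim that $D^\flat(Y) \supseteq E_D^\circ$ for all $Y$. Both statements are false. The set $D^\flat(Y)$ is an \emph{affine coset} of $E_D^\circ$, not a subspace containing it (unless $Y \in \ker\omega_D$). More importantly, the truth is exactly the opposite of what you claim: since $0 \in W$ and $D^\flat(0) = E_D^\circ$, any $X \in W^D$ must satisfy $\beta(X)=0$ for every $\beta \in E_D^\circ$, forcing $X \in E_D$. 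Hence $W^D \subseteq E_D$, and combined with the identity $W^D \cap E_D = W^{\omega_D}$ you already quoted, one gets $W^D = W^{\omega_D}$ outright. With this correction your strategy does go through: $(W^D)^\circ = (W^{\omega_D})^\circ$ is the preimage of $(W^{\omega_D})^{\circ_{E_D}}$ under restriction $V^\ast \to E_D^\ast$, $D^\flat(W)$ is the preimage of $W^{\flat_{\omega_D}}$, and Lemma~\ref{circperp} finishes. So the architecture is salvageable, but the specific structural claim you flagged as ``the main obstacle'' was wrong, and your attempted verification of it via isotropy does not establish what you say it does.

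By contrast, the paper's proof bypasses Lemma~\ref{circperp} and the restriction map entirely. For the reverse inclusion it simply dualizes: $(W^D)^\circ \subseteq W^\flat$ is equivalent to $(W^\flat)^\circ \subseteq W^D$, and if $X$ annihilates $W^\flat$ then in particular $\alpha(X)=0$ for every $\alpha \in D^\flat(Y)$ and every $Y \in W$, which is literally the definition of $X \in W^D$. This is a two-line argument from the definitions, with no need to analyze the relation between $W^D$ and $E_D$ or to invoke the presymplectic lemma. Your route is more structural (it explains \emph{why} the identity holds by reducing to $(E_D,\omega_D)$), but it is also considerably longer and, as you discovered, requires getting the geometry of $W^D$ right.
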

\begin{proof} We first show that
$W^\flat \subseteq (W^D)^\circ$.
Let $\alpha \in W^\flat$, say
$\alpha \in D^\flat(X)$ for some $X \in W$.
Since $D^\flat(X)(Y)=\{0\}$ for all
$Y \in W^D$ by definition of $W^D$, it follows that $\alpha \in (W^D)^\circ$.
To prove the converse inclusion, we first observe that it is equivalent to prove
that
$(W^\flat)^\circ \subseteq W^D$.
Let
$X \in (W^\flat)^\circ$,
then
$D^\flat(Y)(X) = 0$,
for all
$Y \in W$, which is an immediate consequence of the definitions. But, also by
definition, this implies that
$X \in W^D$.
\end{proof}

\paragraph{Dirac structures on manifolds.}
We will give the definition and some basic properties of Dirac
manifolds, following \cite{MR998124} and \cite{MR2265464, MR2265469}, using the
notation of the latter.

A \emph{Dirac structure} $D$ on a manifold $M$ is a subbundle of
the Whitney sum $D \subseteq TM \oplus T^\ast M$ such that for
each $x\in M$, $D_x \subseteq T_xM \oplus T_x^\ast M$ is a Dirac
structure on the vector space $T_x M$. A \emph{Dirac manifold} is
a manifold with a Dirac structure on it. From Proposition
\ref{diracpresymplectic} we deduce that a Dirac structure $D$ on
$M$ yields a distribution $E_{Dx} \subseteq T_x M$ whose dimension
is not necessarily constant, carrying a presymplectic form
$\omega_D(x) \colon  E_{Dx} \times E_{Dx} \rightarrow \mathbb{R}$, for
all $x \in M$. We can also deduce the following theorem, whose
detailed proof appears in \cite{MR2265464}.
\begin{theorem}
\label{diracpresymplecticmanifold} Let $M$ be a manifold and let
$\omega$ be a 2-form on $M$. Given a distribution $E$ on $M$,
define the skew-symmetric bilinear form $\omega_E$ on $E$ by
restricting $\omega$ to $E  \times E $. For each $x \in M$ let
\begin{align*}
D_{\omega_{E}x} & = \left\{ (v_x,\alpha_x) \in T_x{M} \oplus T_x^*{M} \mid  v_x
\in E(x) \right.
\\
& \qquad \qquad \textup{and}\,\, \left. \alpha_x(w_x) =
\omega_{E}(x)(v_x,w_x)\,\, \textup{for all} \,\,w_x
\in E(x) \right\}.
 \end{align*}
Then $D_{\omega_{E}} \subseteq TM \oplus T^*M$ is a Dirac structure on $M$. It
is the only
Dirac structure $D$ on $M$ satisfying
$E(x) = E_{Dx}$ and $\omega_E(x) = \omega_{D}(x)$, for all $x \in M$.
\end{theorem}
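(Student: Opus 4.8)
The plan is to reduce the statement to its pointwise (linear) version, Proposition~\ref{diracpresymplectic}, and then add a short smoothness argument. First I would fix $x\in M$ and apply Proposition~\ref{diracpresymplectic} with $V:=T_xM$, the subspace $E:=E(x)\subseteq T_xM$, and the skew form $\omega:=\omega_E(x)$ on $E(x)$. The subspace that proposition produces is precisely the fiber $D_{\omega_{E}x}$ written in the theorem, so one obtains immediately that $D_{\omega_{E}x}$ is a linear Dirac structure on $T_xM$, that $E_{D_{\omega_{E}x}}=E(x)$, and that $\omega_{D_{\omega_{E}x}}=\omega_E(x)$. This takes care of the fiberwise content of both the existence and the uniqueness claims.

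Next I would check that $x\mapsto D_{\omega_{E}x}$ is a smooth subbundle of $TM\oplus T^{\ast}M$, which is the only point that goes beyond the linear theory; here I use the standing assumption that $E$ is a smooth distribution of locally constant rank $k$. Working near a given point, choose a smooth local frame $e_1,\dots,e_k$ of $E$, complete it to a smooth local frame $e_1,\dots,e_n$ of $TM$, and let $\theta^1,\dots,\theta^n$ be the dual coframe. Then the $n$ sections $\bigl(e_i,\ \sum_{j=1}^{k}\omega(e_i,e_j)\,\theta^j\bigr)$ for $i=1,\dots,k$, together with $(0,\theta^j)$ for $j=k+1,\dots,n$, are smooth (because $\omega$ is), take values in $D_{\omega_E}$ by construction (each verifies the defining condition of $D_{\omega_{E}x}$), and are pointwise linearly independent; since the linear theory already gives $\dim D_{\omega_{E}x}=n$, they form a smooth local frame. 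Hence $D_{\omega_E}$ is a smooth subbundle whose fibers are linear Dirac structures, i.e.\ a Dirac structure on $M$ in the sense of the definition.

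Finally, for uniqueness, suppose $D'$ is any Dirac structure on $M$ with $E_{D'x}=E(x)$ and $\omega_{D'}(x)=\omega_E(x)$ for all $x$. Applying the uniqueness clause of Proposition~\ref{diracpresymplectic} fiber by fiber yields $D'_x=D_{\omega_{E}x}$ for every $x\in M$, hence $D'=D_{\omega_E}$. The proof involves no real obstacle: the only nonroutine step is exhibiting the adapted local frame above, and it requires nothing deeper than smoothness of $\omega$ and regularity of $E$, with Proposition~\ref{diracpresymplectic} supplying all the linear-algebraic content.
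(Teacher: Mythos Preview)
Your proposal is correct. The paper does not actually supply a proof of this theorem: it merely states that the result ``can be deduced'' from Proposition~\ref{diracpresymplectic} and that a detailed proof appears in the cited reference of Yoshimura--Marsden. Your argument does precisely what the paper hints at---apply the linear Proposition~\ref{diracpresymplectic} fiberwise and then verify smoothness---and your explicit local frame is a clean way to handle the only point that is not pure linear algebra.
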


Using Lemma \ref{circperp} and Proposition \ref{circperpD} we can
easily deduce the following proposition.
\begin{proposition}\label{circperpmanifold}
Let $D$ be a Dirac structure on $M$ and let $W_x$ be a subspace
of $E_{Dx}$ for each $x\in M$; then, with a self-explanatory
notation, the following equalities hold for each $x\in M$:
\begin{align*}
W_x^{\flat_{\omega_D}}
&=
\left(W_x^{\omega_D}\right)^{\circ_{E_D}}\\
W_x^\flat
&=
(W_x^D)^\circ.
\end{align*}
\end{proposition}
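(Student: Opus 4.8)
The statement to prove is Proposition~\ref{circperpmanifold}, which asserts two pointwise equalities: $W_x^{\flat_{\omega_D}} = \left(W_x^{\omega_D}\right)^{\circ_{E_D}}$ and $W_x^\flat = (W_x^D)^\circ$ for each $x \in M$ and each subspace $W_x \subseteq E_{Dx}$. The whole point is that both equalities are purely pointwise (fiberwise) assertions about the linear Dirac structure $D_x$ on the vector space $T_xM$, so there is nothing manifold-theoretic to do: the plan is simply to fix $x$, set $V := T_xM$, and invoke the two linear results already established in the excerpt.

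First I would fix $x \in M$ and write $V = T_x M$, $E = E_{Dx}$, equipped with the presymplectic form $\omega := \omega_D(x)$ on $E$; by the definition of a Dirac manifold, $D_x$ is a linear Dirac structure on $V$, and by Proposition~\ref{diracpresymplectic} it is exactly the Dirac structure $D_\omega$ associated to the pair $(E,\omega)$. For the first equality, note that $\omega_D(x)$ restricted to $E$ is a genuine presymplectic form, so I can apply Lemma~\ref{circperp} directly \emph{inside the vector space} $E$ (not $V$), with $\omega$ there playing the role of the presymplectic form and $W_x \subseteq E$ the subspace in question. Lemma~\ref{circperp} then gives precisely $W_x^{\flat_{\omega_D}} = (W_x^{\omega_D})^{\circ_{E_D}}$, where the annihilator is taken in $E_{Dx}^\ast$; this is the stated identity, modulo unwinding the self-explanatory notation ($\flat_{\omega_D}$ is $\omega_D(x)^\flat$, $\omega_D$-orthogonality is taken in $E$, and $\circ_{E_D}$ is the annihilator relative to $E$).

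For the second equality, I would apply Proposition~\ref{circperpD} to the linear Dirac structure $D := D_x \subseteq V \oplus V^\ast$ and the subspace $W := W_x \subseteq E_D = E_{Dx}$. That proposition states $D^\flat(W) \equiv W^\flat = (W^D)^\circ$, which is verbatim the claim $W_x^\flat = (W_x^D)^\circ$ once one records that $D^\flat$ and $^D$ in Proposition~\ref{circperpD} are, by construction, the fiberwise versions of the operators defined on the manifold. The only thing to check is the compatibility of notations: the manifold-level operators $D^\flat$ and $^D$ are defined fiberwise in section~\ref{sectiondiracstructures}, so $W_x^\flat$ computed on $M$ agrees with $(W_x)^\flat$ computed in the linear Dirac structure $D_x$, and likewise for $^D$ and the annihilator $\circ$ (taken in $V^\ast = T_x^\ast M$).

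There is essentially no obstacle here: the proposition is a direct fiberwise transcription of Lemma~\ref{circperp} and Proposition~\ref{circperpD}, which is exactly why the excerpt says it ``can easily be deduced.'' The only mild care needed is bookkeeping — being explicit that the first identity lives inside the (variable-dimension) subspace $E_{Dx}$ with its presymplectic form, so that Lemma~\ref{circperp} applies there rather than on all of $T_xM$, and that the second identity uses the generalized operators from Proposition~\ref{circperpD} on all of $T_xM \oplus T_x^\ast M$. Since $E_{Dx}$ need not have constant dimension in $x$, one should emphasize that no regularity is invoked: the equalities hold at each point independently, which is all that is claimed. I would therefore present the proof as: fix $x$; identify $D_x$ with $D_{\omega_D(x)}$ via Proposition~\ref{diracpresymplectic}; apply Lemma~\ref{circperp} in $E_{Dx}$ for the first equality and Proposition~\ref{circperpD} in $T_xM$ for the second; done.
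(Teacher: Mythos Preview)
Your proposal is correct and follows exactly the approach the paper indicates: the paper simply says the proposition ``can easily be deduced'' from Lemma~\ref{circperp} and Proposition~\ref{circperpD}, and you do precisely that by fixing $x$ and applying the two linear results fiberwise. There is nothing to add.
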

A Dirac structure $D$ on $M$ is called \emph{integrable} if the
condition
\[ \left\langle L_{X_1}{\alpha_2},X_3\right\rangle +\left\langle
L_{X_2}{\alpha_3},X_1\right\rangle +\left\langle
L_{X_3}{\alpha_1},X_2\right\rangle =0 \]
is satisfied for all pairs of vector fields and 1-forms $(X_1,\alpha_1)$,
$(X_2,\alpha_2)$, $(X_3,\alpha_3)$ that take values in $D$ and
where $L_X$ denotes the Lie derivative along the vector field $X$
on $M$. This definition encompasses the notion of closedness for
presymplectic forms and Jacobi identity for brackets. The
following fundamental theorem was proven in \cite{MR998124}.
\begin{theorem}\label{presymplfoliation}
Let $D$ be an integrable Dirac structure on a manifold $M$. Then
the distribution $E_D$ is involutive. If, moreover, the hypotheses of the
Stefan-Sussmann theorem (\cite{MR0321133}) are satisfied, for each $x\in M$
there exists a uniquely determined embedded submanifold $S$ of $M$
such that
$x \in S$
and
$T_y S = E_{Dy}$ for all $y \in S$.
In other words, $S$ is an integral submanifold of
$E_D$.
Each integral submanifold $S$
carries a presymplectic form $\omega_{D,S}$ defined by
$\omega_{D,S}(x) = \omega_D (x)$, for each $x\in S$.
\end{theorem}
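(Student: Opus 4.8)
The plan is to split the statement into three separate claims and prove each in turn: (i) involutivity of $E_D$; (ii) existence and uniqueness of integral submanifolds under the Stefan-Sussmann hypotheses; and (iii) that each integral submanifold carries a well-defined presymplectic form $\omega_{D,S}$. The first claim is the only one requiring the integrability hypothesis; the other two are either a direct appeal to the Stefan-Sussmann theorem or a routine consistency check.

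First I would prove involutivity of $E_D$. Let $X_1, X_2$ be sections of $E_D$, so that there exist $1$-forms $\alpha_1, \alpha_2$ with $(X_i,\alpha_i)$ taking values in $D$. The goal is to produce a $1$-form $\beta$ such that $([X_1,X_2],\beta)$ takes values in $D$; since $D_x$ is maximally isotropic of dimension $n = \dim T_x M$ (Lemma \ref{defdiracstr}), it suffices to exhibit $\beta$ with $([X_1,X_2],\beta)$ orthogonal to every $(X_3,\alpha_3)$ valued in $D$, and then maximality forces $([X_1,X_2],\beta) \in D$, whence $[X_1,X_2]$ is a section of $E_D$. The natural candidate is $\beta = L_{X_1}\alpha_2 - L_{X_2}\alpha_1 - d\langle\alpha_1,X_2\rangle$ (equivalently $\beta = L_{X_1}\alpha_2 - i_{X_2}d\alpha_1$, using Cartan's formula). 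Pairing $([X_1,X_2],\beta)$ with an arbitrary $(X_3,\alpha_3)$ valued in $D$ and using the integrability identity
\[
\langle L_{X_1}\alpha_2, X_3\rangle + \langle L_{X_2}\alpha_3, X_1\rangle + \langle L_{X_3}\alpha_1, X_2\rangle = 0,
\]
together with the isotropy relations $\langle\alpha_i,X_j\rangle + \langle\alpha_j,X_i\rangle = 0$ for sections of $D$ and standard Lie-derivative identities, one checks that $\langle\beta,X_3\rangle + \langle\alpha_3,[X_1,X_2]\rangle = 0$. This is the computational heart of the argument, and it is the step I expect to be the main obstacle: the identity must be massaged carefully, repeatedly converting $L_X\langle\alpha,Y\rangle = \langle L_X\alpha,Y\rangle + \langle\alpha,[X,Y]\rangle$ and using isotropy of $D$ at each stage, to collapse everything to the integrability hypothesis. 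I would also remark that the distribution $E_D$ need not have constant rank, so ``involutive'' here means in the generalized (Stefan-Sussmann) sense: the module of sections is closed under bracket.

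For the second claim, once $E_D$ is a generalized involutive distribution whose module of sections satisfies the Stefan-Sussmann local finite-generation / accessibility hypotheses, the Stefan-Sussmann theorem \cite{MR0321133} yields, through each point $x \in M$, a unique maximal connected integral submanifold $S$ with $T_yS = E_{Dy}$ for all $y \in S$; I would simply invoke this, noting that uniqueness is part of that theorem's conclusion. For the third claim, define $\omega_{D,S}(x) := \omega_D(x)|_{E_{Dx}\times E_{Dx}} = \omega_D(x)|_{T_xS\times T_xS}$ for $x \in S$, which makes sense because $T_xS = E_{Dx}$ by construction. This is a smooth $2$-form on $S$ since $D$ is a smooth subbundle and $\omega_D$ depends smoothly on $x$ over the locus where $E_D$ has locally constant rank, which holds along $S$ because $T_yS = E_{Dy}$ forces the rank to be locally constant on $S$; skew-symmetry is inherited from $\omega_D$. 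Finally, $\omega_{D,S}$ is closed: this again follows from the integrability condition, which as remarked in the text ``encompasses the notion of closedness for presymplectic forms.'' Concretely, restricting the integrability identity to vector fields tangent to $S$ (for which the associated $1$-forms can be taken to be pullbacks witnessing membership in $D$) reproduces exactly the coordinate-free expression of $d\omega_{D,S} = 0$ evaluated on triples of tangent vectors. Thus $(S,\omega_{D,S})$ is a presymplectic manifold, completing the proof.
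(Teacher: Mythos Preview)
The paper does not actually prove this theorem: it is introduced with the sentence ``The following fundamental theorem was proven in \cite{MR998124}'' and no proof is given. So there is no in-paper argument to compare your proposal against; the paper simply cites Courant's original result.

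That said, your sketch is a faithful rendering of the standard proof and is essentially correct. One small point: your parenthetical claim that $\beta = L_{X_1}\alpha_2 - L_{X_2}\alpha_1 - d\langle\alpha_1,X_2\rangle$ is ``equivalently'' $L_{X_1}\alpha_2 - i_{X_2}d\alpha_1$ is not literally true as an identity of $1$-forms; Cartan's formula gives $L_{X_2}\alpha_1 = i_{X_2}d\alpha_1 + d\langle\alpha_1,X_2\rangle$, so the two candidates for $\beta$ differ by $2d\langle\alpha_1,X_2\rangle$. They do agree modulo the annihilator of $E_D$ once isotropy is used, and either choice works for the orthogonality computation, but you should not assert they are equal outright. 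Otherwise the structure---construct $\beta$, verify orthogonality to all sections of $D$ via the integrability identity and isotropy, invoke $D=D^\perp$, then Stefan--Sussmann, then check smoothness and closedness of $\omega_{D,S}$ on leaves---is exactly the argument in Courant's paper.
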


\subsection{Dirac dynamical systems}\label{subsectiondiracdynamicalsystems}

Lagrangian and Hamiltonian mechanics has been developed in interaction with
differential, symplectic, and Poisson geometry; a few references are
\cite{%
Abraham_Marsden:Foundations_of_Mechanics,
Arnold:Mathematical_methods_of_classical_mechanics,
MR1021489,MR0980716,
Marsden_Ratiu:Introduction_to_mechanics_and_symmetry,
Cendra_Marsden_Ratiu:Lagrangian_Reduction_by_Stages,
MR2847777,MR2894526%
}. 
Dirac dynamical systems in the integrable case represent a synthesis and a generalization of both.

Nonholonomic mechanics represents a generalization of Lagrangian and Hamiltonian mechanics
and is a long-standing branch of mechanics, engineering and
mathematics. Some references and historical accounts on the subject are
\cite{
Neimark_Fufaev:Dynamics_of_Nonholonomic_Systems,
MR1323130,
MR1629279,
Cendra_Marsden_Ratiu:Geometric_mechanics_Lagrangian_reduction_and_nonholonomic_systems,
MR1900053,
Cortes:Geometric_control_and_numerical_aspects_of_nonholonomic_systems,
Bloch:Nonholonomic_mechanics_and_control,
MR2067586,
MR2486576,
MR2492630,
MR2511300,
MR2847777,
MR2894526%
}.
Some references that are more closely related to this paper are \cite{%
MR2265464,
MR2265469}.

Dirac dynamical systems  (\ref{11drracds}) in the not necessarily integrable case 
may be viewed as a synthesis and a generalization of nonholonomic mechanics from the Lagrangian and the Hamiltonian points of view.
They
can be written equivalently as a collection of systems of the type
\begin{align}\label{classicalm122}
\operatorname{i}_{\dot{x}}\omega_{D} (x) = d\mathcal{E}(x) | E_D.
\end{align}

A related approach has been studied
in \cite{MR0914056,MR1631383, MR1323130, MR1324157, Blankenstein2004}.
As it was shown in \cite{MR2265464, MR2265469}, one can write nicely the
equations of nonholonomic mechanics, on the Lagrangian side, using the Dirac differential.
We will show how this is related to system (\ref{11drracds}).
It was also shown in \cite{MR2265464, MR2265469}
and references therein how to use Dirac structures in LC circuit theory, on the
Lagrangian side. On the Hamiltonian side, Poisson brackets for LC circuits
were written in \cite{MR1180398,MR1324157}, see also \cite{MaschkeMemo1998}.
In 
\cite{Bloch1997} the Hamiltonian structure for nonlinear LC circuits in the
framework of Dirac structures is investigated and simple and effective formulas
are described.

In \cite{CeMaRaYo2011} a further unification, including reduction, is presented,
which is consistent with mechanics on Lie algebroids
(\cite{%
MR1365779,
MR2232870,
MR2299851,
MR2299850,
MR2394515,
MR2486576%
}).

\section{The constraint algorithm for Dirac
dynamical systems (CAD)}\label{sectionconstalgforrdirac}
Some results in this section, in particular the CAD and examples of nonholonomic mechanics, are proven for not necessarily integrable Dirac structures.

First, we shall briefly consider the case of an integrable Dirac
structure. This includes the case of a constant Dirac structure on a vector
space, that is, one that is invariant under translations, which includes the one used for
LC circuits. Later on we will consider the general case.

\paragraph{The case of an integrable Dirac manifold.}
Assume that $D$ is an integrable Dirac structure on a manifold $M$.
Each solution curve $x(t)$ of the system (\ref{11drracds}) must
satisfy $\left(x(t), \dot{x}(t)\right) \in E_D\left(x(t)\right)$
for all $t$, which implies, according to Theorem
\ref{presymplfoliation} and equation (\ref{classicalm122}), that it must be a
solution curve to the equation
\begin{align}\label{equation15}
\omega_{D,S} (x)(\dot{x}, \, \cdot \,) = i^\ast_S d\mathcal{E}(x)
\end{align}
on a presymplectic leaf $S$, where $i_S\colon S \to M$ is the inclusion,
which can be solved using the Gotay-Nester algorithm.
Such a procedure to solve (\ref{11drracds}) might be useful in those cases where
the
presymplectic leaves $(S,
\omega_{D,S})$
can be found easily. This occurs for instance if $M$ is a vector space and $D$
is a constant
(i.e.\ translation-invariant) Dirac structure,
as we will show next.
However, this procedure has the drawback that in order to find a solution of (\ref{11drracds}) for a given initial condition, one must first find the leaf $S$ containing that initial condition and then solve \eqref{equation15}. For an initial condition on a different leaf, one has to repeat the constraint algorithm for a different corresponding equation \eqref{equation15}. Because of this, even in these simple cases, working directly with the Dirac structure $D$, using the constraint algorithm to be developed in this section, rather than with the associated presymplectic form on a
presymplectic leaf, is not only possible but also convenient, since this leads to obtaining a single equation on 
a final foliated constraint submanifold, as we will see.

\subparagraph{The case of a constant Dirac structure.}
Let $V$ be a vector space and
$\widetilde{D}\subseteq V\oplus V^\ast$ a given linear Dirac structure.
Then we have the presymplectic form
$\widetilde{\omega}_{\widetilde{D}}$ on $E_{\widetilde{D}}$ and the associated linear map
$\widetilde{\omega}^{\flat}_{\widetilde{D}} \colon  E_{\widetilde{D}} \rightarrow
E_{\widetilde{D}}^*$.
We  consider
the Dirac structure $D \subseteq TV \oplus T^\ast V$ on
the \textit{manifold} $V$ defined as $D_x = (x, \widetilde{D})$, where
we have used the natural identification $TV \oplus T^\ast V \equiv V
\times (V\oplus V^\ast)$.
This Dirac structure is integrable and constant, that is, invariant under
translations in a natural sense, as we will show next.
\medskip

For
each $x \in V$ the presymplectic leaf $S_x$ containing $x$ (in the sense of
Theorem \ref{presymplfoliation}) is $x +
E_{\widetilde{D}} \subseteq V$. For each $x \in V$,
$\widetilde{\omega}_{\widetilde{D}}$ induces the constant presymplectic
form $\omega_{D, S} \in \Omega^2\left(x + E_{\widetilde{D}}\right)$
given by
\[
\omega_{D, S}(x + y)\left((x + y, \bar{X}), (x + y,
\bar{Y})\right) = \widetilde{\omega}_{{\widetilde{D}}}(\bar{X}, \bar{Y}),
\]
where $y \in E_{\widetilde{D}}$,
so $x+ y$ represents any point in the symplectic leaf $x + E_{\widetilde{D}}$,
or, equivalently,
\[
\omega_{D,S}^b(x+y)(x+y,\bar{X})(x+y,\bar{Y})=\widetilde{\omega}_{\widetilde{D}}^{b}
(\bar{X})(\bar{Y}).
\]
Consider the system
\begin{align}\label{equation}
(z, \dot{z})\oplus(z, d\mathcal{E}(z)) \in D.
\end{align}

This system, for a given initial condition $z (0) = x $, is
equivalent to the following equation on the presymplectic leaf $S$ that contains
$x$
\begin{align}\label{equation1}
\omega_{D,S} (z)(\dot{z}, \, \cdot \,) = d\mathcal{E}(z)\vert E_{Dz}.
\end{align}
Keeping $x$ fixed and writing $z = x + y $, the system
\eqref{equation} on the presymplectic
leaf $ S = x + E_{\widetilde{D}}$ becomes
\[
\dot{y} \oplus d\mathcal{E}(x + y) \in \widetilde{D}.
\]
Therefore equation \eqref{equation1} is equivalent to
the following equation on the subspace
$E_{\widetilde{D}}$
\begin{align}\label{equation3}
\widetilde{\omega}_{\widetilde{D}}(\dot{y},\, \cdot \,) = d\mathcal{E}(x + y)\vert
E_{\widetilde{D}},
\end{align}
with initial condition $y (0) = 0$. Equation (\ref{equation3}) can
be solved by the algorithm described in \cite{MR506712} and \cite{MR0535369}, and
sketched in section \ref{The Gotay-Nester Algorithm}. Notice that, since
the presymplectic form $\widetilde{\omega}_{\widetilde{D}}$ on the vector
space $E_{\widetilde{D}}$ determines naturally a translation-invariant
Dirac structure on the same space $E_{\widetilde{D}}$ considered as a manifold,
equation (\ref{equation3}) is also a Dirac dynamical system in the same way
(\ref{equation}) is, but with the Dirac structure given by the
presymplectic form $\widetilde{\omega}_{\widetilde{D}}$ considered as a
constant form on the manifold $x + E_{\widetilde{D}}$. Because of this
and also because $\dim E_{\widetilde{D}}\leq \dim V$, equation (\ref{equation3}) is,
in essence,
simpler than system (\ref{equation}).
\paragraph{The case of a general Dirac manifold.}
Now let $D$ be a  Dirac structure on $M$ that needs not be integrable.
In order to explain our algorithm for Dirac manifolds we need the following
auxiliary result, involving
a given subspace $W_x$ of $E_{Dx}$, which is easy to prove
using results from section \ref{sectiondiracstructures}.
\begin{lemma}\label{lemmaequivalence}
For each $x\in M$ we have the following
equivalent conditions, 
where
$W_x$
is a given subspace of
$E_{Dx}$.
\begin{itemize}
\item[(i)] There exists $(x,\dot{x}) \in W_x$ such that  (\ref{11drracds}) is satisfied.
\item[(ii)] There exists $(x,\dot{x}) \in W_x$ such that $\omega_D(x)(\dot{x}, \,) = d\mathcal{E}(x) | E_{Dx}$.
\item[(iii)] $d\mathcal{E}(x) | E_{Dx}
\in
W_x^{\flat_{\omega_D}}$.
\item[(iv)] $\left\langle d\mathcal{E}(x) | E_{Dx},
W_x^{\omega_D}\right\rangle
=\{0\}$.
\item[(v)] There exists $(x,\dot{x}) \in W_x$ such that $d\mathcal{E}(x)
\in
D^\flat(x, \dot{x})$.
\item[(vi)] $d\mathcal{E}(x)\in
D^\flat(W_x)$.
\item[(vii)] $\left\langle d\mathcal{E}(x), W_x^D\right\rangle
=
\{0\}$.
\end{itemize}
\end{lemma}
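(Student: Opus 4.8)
The plan is to prove the seven conditions equivalent by a short cycle of implications, leaning entirely on the algebraic identities for $D^\flat$ and ${}^D$ established in section~\ref{sectiondiracstructures}. The key observation is that conditions (i)--(iv) live on the presymplectic vector space $(E_{Dx}, \omega_D(x))$ and are exactly the content of the Gotay--Nester-style Lemma~\ref{lemma6.6} and Corollary~\ref{corollary6.7} applied with $V := E_{Dx}$, $\omega := \omega_D(x)$, $\beta := d\mathcal{E}(x)|E_{Dx}$, and the given subspace $W_x$; while conditions (v)--(vii) are their translation into the language of the Dirac structure $D$ itself, via the defining relation between $D^\flat$ and $\omega_D$.

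First I would handle the block (i)$\Leftrightarrow$(ii): this is immediate from Proposition~\ref{diracpresymplectic} (equivalently from the computation $\{\omega_D(X,Y)\} = D^\flat(X)(Y)$), since $(x,\dot x)\oplus d\mathcal{E}(x)\in D_x$ with $\dot x\in E_{Dx}$ says precisely that $d\mathcal{E}(x)$ restricted to $E_{Dx}$ equals $\omega_D(x)(\dot x,\,\cdot\,)$; note one must observe that if $(x,\dot x)\oplus d\mathcal{E}(x)\in D_x$ then automatically $\dot x\in E_{Dx}$, so requiring $\dot x\in W_x\subseteq E_{Dx}$ loses nothing. Next (ii)$\Leftrightarrow$(iii): this is the statement that the equation $i_{\dot x}\omega_D(x) = \beta|E_{Dx}$ has a solution $\dot x\in W_x$ iff $\beta|E_{Dx}\in \omega_D(x)^\flat(W_x) = W_x^{\flat_{\omega_D}}$, which is just the definition of the image of the flat map restricted to $W_x$. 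Then (iii)$\Leftrightarrow$(iv) is exactly Proposition~\ref{circperpmanifold} (first displayed equality), which gives $W_x^{\flat_{\omega_D}} = (W_x^{\omega_D})^{\circ_{E_D}}$, so membership of $\beta|E_{Dx}$ in the left side is the vanishing condition $\langle \beta|E_{Dx}, W_x^{\omega_D}\rangle = \{0\}$ against the annihilator taken inside $E_{Dx}$ --- and since $W_x^{\omega_D}\subseteq E_{Dx}$, pairing with $\beta$ and with $\beta|E_{Dx}$ give the same thing.

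For the second block, (v)$\Leftrightarrow$(vi) is the definition of $D^\flat(W_x) = \bigcup_{X\in W_x} D^\flat(X)$; (vi)$\Leftrightarrow$(vii) is Proposition~\ref{circperpmanifold} (second displayed equality, i.e.\ Proposition~\ref{circperpD}), namely $W_x^\flat = D^\flat(W_x) = (W_x^D)^\circ$, so $d\mathcal{E}(x)\in D^\flat(W_x)$ iff $\langle d\mathcal{E}(x), W_x^D\rangle = \{0\}$. Finally, to close the cycle I would link the two blocks, say by proving (v)$\Leftrightarrow$(ii): $d\mathcal{E}(x)\in D^\flat(x,\dot x)$ means $(x,\dot x)\oplus d\mathcal{E}(x)\in D_x$, which by the same argument as in (i)$\Leftrightarrow$(ii) is equivalent to $\dot x\in E_{Dx}$ and $\omega_D(x)(\dot x,\,\cdot\,) = d\mathcal{E}(x)|E_{Dx}$; combined with $\dot x\in W_x$ this is exactly (ii). Alternatively one can note (iv)$\Leftrightarrow$(vii) directly using $W_x^D\cap E_D = W_x^{\omega_D}$ together with the observation that $d\mathcal{E}(x)$ annihilates $W_x^D$ iff it annihilates $W_x^D\cap E_{Dx}$ --- this last point (that the component of $W_x^D$ transverse to $E_{Dx}$ is automatically killed, because $D^\flat$ only sees $E_{Dx}$) is the one genuinely Dirac-structure-specific step and is the only place one must be slightly careful; everything else is bookkeeping with annihilators and the already-proven propositions. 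I do not anticipate a real obstacle: the ``hard part'' is merely making sure each of the passages between the $E_D$-intrinsic picture and the ambient $V$-picture correctly accounts for restriction to $E_{Dx}$, which is exactly what Proposition~\ref{circperpmanifold} is designed to handle.
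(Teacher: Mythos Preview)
Your primary chain of implications is correct and is exactly what the paper intends: the paper omits the proof entirely, remarking only that the lemma ``is easy to prove using results from section~\ref{sectiondiracstructures},'' and your appeal to Propositions~\ref{diracpresymplectic}, \ref{circperpD}, and~\ref{circperpmanifold} is precisely that. One caution about your suggested \emph{alternative} link (iv)$\Leftrightarrow$(vii): the claim that $d\mathcal{E}(x)$ annihilates $W_x^D$ iff it annihilates $W_x^D\cap E_{Dx}$ is not true in general---there is no a~priori reason $d\mathcal{E}(x)$ should vanish on the portion of $W_x^D$ lying outside $E_{Dx}$---so discard that route and rely on your main bridge (v)$\Leftrightarrow$(i), which is immediate from the definition of $D^\flat$ and already closes the cycle.
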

Now we can describe the constraint algorithm for Dirac dynamical systems,
called \textit{CAD}. Following the same idea of the Gotay-Nester algorithm
described in section \ref{The Gotay-Nester Algorithm} we should construct
a sequence of constraint submanifolds.

To define the first constraint
submanifold $M_1$ we may use either one of the equivalent conditions
of Lemma \ref{lemmaequivalence}, with
$W_x \equiv W_{0x}= E_{Dx}$.
We want
to emphasize the role of the two equivalent
conditions (iv) and (vii), as they represent a formal analogy between the CAD
and the Gotay-Nester
algorithm. Of course, the Gotay-Nester algorithm, by definition, corresponds to
the case
$E_D = TM$.

Define
\begin{equation}\label{constraintsubmf1}
\begin{split}
    M_1 & =  \{x \in M \mid \left\langle d\mathcal{E}(x) | E_{Dx},
    E^{\omega_D}_{Dx}\right\rangle = \{0\}\}\\
 & =  \{x \in M \mid \left\langle d\mathcal{E}(x),
E^D_{Dx}\right\rangle = \{0\}\}.
\end{split}
\end{equation}
Let us assume that $M_1$ is a submanifold. Then we define the
second constraint submanifold $M_2 \subseteq M_1$ by either of the
following equivalent conditions, in agreement with
(iv) and (vii) of Lemma \ref{lemmaequivalence}, with $W_x \equiv W_{1x} = E_{Dx} \cap
T_xM_1$,
\begin{equation*}
\begin{split}
M_2 & =  \{x \in M_1 \mid \left\langle d\mathcal{E}(x)| E_{Dx},
    \left(W_{1x}\right)^{\omega_D}\right\rangle =\{0\}\}\\
& =  \{x \in M_1 \mid \left\langle d\mathcal{E}(x), \left(W_{1x}\right)^D\right\rangle
=\{0\}\}.
\end{split}
\end{equation*}
More generally we define recursively $M_{k+1}$ for $k=1,2,\ldots$,
by either of the conditions
\begin{equation}\label{constraintsubmf3}
\begin{split}
M_{k+1} & =  \{x \in M _k  \mid \left\langle d\mathcal{E}(x) | E_{Dx},
\left(W_{kx}\right)^{\omega_D}\right\rangle =\{0\}\}\\
& =  \{x \in M_k \mid \left\langle d\mathcal{E}(x), \left(W_{kx}\right)^D\right\rangle
=\{0\}\}
\end{split}
\end{equation}
with
$W_{kx} = E_{Dx} \cap
T_xM_k$.
The algorithm stops and the final constraint submanifold $M_c$ is
determined by the condition
$M_{c+1}=M_c$. The solutions $(x, \dot{x})$ are in $W_{cx}=E_{Dx} \cap
T_x{M_c}$.

We have proven that solution curves of  (\ref{11drracds}) are exactly solution curves of
\begin{align}\label{gotaygen1ccc}
(x,\dot{x})\oplus d\mathcal{E} (x) 
&\in 
D_x\\\label{gotaygen1ccc2}
(x, \dot{x})
\in
W_{cx}.
\end{align}

\begin{remark}
In some examples, it is sometimes easier to write the constraint submanifolds using condition (vi) in Lemma \ref{lemmaequivalence}, that is,
\begin{equation*}
M_{k+1} =  \{x \in M _k  \mid d\mathcal{E}(x)\in D^\flat\left(W_{kx}\right)\}
\end{equation*}
with
$W_{kx} = E_{Dx} \cap
T_xM_k$, $k=0,1, \dots$, where $M_0=M$.  
\end{remark}

\begin{remark}\label{ImportantRemark}
 Formula (\ref{constraintsubmf3}) has a special
meaning in the case of an integrable Dirac structure. In fact, let $S$ be an
integral leaf of the distribution $E_D$; then by applying the Gotay-Nester
algorithm, encoded in the recursion formula
(\ref{equationalgorithGN}), to the system
(\ref{equation15}) one obtains a sequence of secondary constraints 
$S = S_0 \supseteq S_1 \supseteq \dots  \supseteq S_c$ given by the recursion formula
\begin{equation}\label{SequationalgorithGN}
S_{k+1} = \{x \in S_k  \mid  \langle i_S^\ast d\mathcal{E}(x), \left(T_x S_k
\right)^{\omega_{D,S}}\rangle = \{0\}\}.
\end{equation}
But it is clear that equation (\ref{SequationalgorithGN}) coincides with
equation (\ref{constraintsubmf3}) since
$T_x S \equiv E_{Dx}$, $T_x S_k \equiv W_{kx}$ and  the presymplectic form 
$\omega_{D,S}$ on $S$ is defined by
$\omega_{D,S}(x)\equiv \omega_D(x)$ for all $x \in S$.
We are assuming regularity conditions that ensure that the Gotay-Nester
algorithm applied for each $S$ stops after a number of steps $c$ which does not
depend on $S$ and which is at the same time the number of steps after which the
CAD stops. 
As a conclusion, the final constraint submanifold $M_c$ of the CAD is
foliated by leaves
$S_c$, where $S$ varies on the set of integral leaves of the distribution $E_D$. We may
say that in the case of an integrable Dirac structure $D$ the CAD is
equivalent to a collection of Gotay-Nester algorithms, one for each leaf $S$ of the
distribution $E_D$. The final equation given by (\ref{gotaygen1ccc})--(\ref{gotaygen1ccc2}) becomes
 \begin{align*}
(x, \dot x)\oplus d\mathcal{E}(x)&\in D_x\\
(x, \dot x)&\in T_xS_c,
\end{align*}
which is equivalent to the collection of equations
\begin{align*}
\omega_{D,S} (x)(\dot{x}, \, \cdot \,) 
&=
i^\ast_S d\mathcal{E}(x)\\
(x, \dot{x})
&\in
T_x S_c.
\end{align*}

In section \ref{sectionanextensionofetc} we will extend the Dirac theory of constraints. For that purpose, we will use an embedding of $M$ in a symplectic manifold $P$ such that the presymplectic leaves $S$ of
are presymplectic submanifolds of $P$. The submanifold $M$ plays the role of a primary \textit{foliated} constraint submanifold. The case in which there is only one leaf gives the Dirac theory.
\end{remark}

\paragraph{Solving the equation.}
The constraint algorithm CAD gives a method to solve the IDE  (\ref{11drracds}) which
generalizes the Gotay-Nester method. Assume that the final constraint
submanifold $M_c$
has been determined, and consider, for each $x \in M_c$, the affine space
\begin{equation*}
S^{(c)}_x : = \{(x, \dot{x}) \in T_xM_c \mid (\mbox{\ref{classicalm122}})
\mbox{ is satisfied}\},
\end{equation*}
which is nonempty if
$M_c$
is nonempty, a condition that will be assumed from now on.
Let
$d^{(c)}(x)$
be the dimension of
$S^{(c)}_x$.

The following theorem is one of the ingredients of our main results, and generalizes the Gotay-Nester algorithm for the case of Dirac dynamical systems \eqref{againdirac} rather than Gotay-Nester systems (\ref{11GN}). Its proof is not difficult,
using the previous lemma, and is left to the reader.

\begin{theorem}\label{maintheorem1}
Let
$M$ be a given manifold, $D$ a given Dirac structure on
$M$ and
$\mathcal{E}$ a given energy function on $M$,
and consider the Dirac dynamical system
\begin{equation}\label{againdirac}
(x,\dot{x})\oplus d\mathcal{E} (x) \in D_x.
\end{equation}
Assume that for each
$k = 1,\dots$,
the subset
$M_k$ of $M$
defined recursively by the formulas
(\ref{constraintsubmf1})--(\ref{constraintsubmf3})
is a submanifold, called the \textbf{\emph{$k$-constraint submanifold}}. 
The decreasing sequence $M_k$ stops, say
$M_c = M_{c + 1}$ (which implies $M_c = M_{c + p}$, for all $p \in
\mathbb{N}$), and call $M_c$ the \textbf{\emph{final constraint submanifold}}.
Then the following hold:

\textit{(a)}
For each $x \in M_c$, there exists
$(x, \dot{x}) \in W_{cx}=E_{Dx} \cap T_x{M_c}$ such that
(\ref{againdirac}) is satisfied.
The Dirac dynamical system (\ref{againdirac}) is equivalent to the equation
\begin{equation}\label{redagaindirac}
\omega_{D}(x)(\dot{x}\, , \,) = d\mathcal{E} (x)| E_{Dx}, \,\,\,(x, \dot{x}) \in
W_{cx},
\end{equation}
that is, both equations have the same solution curves $x(t) \in M_c$.

\textit{(b)}
For each
$x \in M_c$,
$d^{(c)}(x)$
equals the dimension of
$\operatorname{ker} \omega_D(x) \cap T_x M_c$.

\textit{(c)}
If
$d^{(c)}(x)$
is a locally constant function of $x$
on
$M_c$
then
$S^{(c)} = \bigcup_{x \in M_c} S^{(c)}_x$
is an affine bundle with base
$M_c$.
Each section
$X$
of $S^{(c)}$ is a vector field on $M_c$ having the property that
$X(x) \oplus d\mathcal{E} (x) \in D_x$,
for all
$x \in M_c$.
Solution curves to such vector fields are solutions to
the Dirac dynamical system  (\ref{11drracds}). More generally, one can choose
arbitrarily a time-dependent section $X_t$, 
then solution curves of $X_t$ will be also solutions to  (\ref{11drracds})
and those are the only solutions of 
 (\ref{11drracds}).
Solution curves to  (\ref{11drracds}) are unique for any given initial condition
if and only if
$d^{(c)}(x) = 0$, for all
$x \in M_c$.
\end{theorem}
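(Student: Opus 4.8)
The plan is to deduce everything from Lemma \ref{lemmaequivalence}, which is the exact analogue for Dirac structures of the lemmas in section \ref{The Gotay-Nester Algorithm} that drive the Gotay-Nester algorithm, together with the standard constraint-algorithm argument already sketched in the introduction. First I would establish that the sequence $M_k$ stabilizes: since each $M_{k+1}$ is a submanifold of $M_k$ and (under our blanket regularity hypotheses) the dimensions cannot decrease forever, there is a $c$ with $M_{c+1}=M_c$; the recursion formula \eqref{constraintsubmf3} then immediately gives $M_{c+p}=M_c$ for all $p$, because the defining condition for $M_{c+1}$ inside $M_c$ depends only on $M_c$ through $W_{cx}=E_{Dx}\cap T_xM_c$. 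For part (a), the key point is that by construction $x\in M_c$ precisely when condition (vii) of Lemma \ref{lemmaequivalence} holds with $W_x=W_{cx}$; by the equivalences (i)--(vii) of that lemma this is the same as condition (i), namely the existence of $(x,\dot x)\in W_{cx}$ solving \eqref{11drracds}, which is also condition (ii), i.e.\ $\omega_D(x)(\dot x,\,)=d\mathcal{E}(x)|E_{Dx}$ with $(x,\dot x)\in W_{cx}$. To see that \eqref{againdirac} and \eqref{redagaindirac} have the same solution \emph{curves} contained in $M_c$, I would argue as in Remark \ref{ImportantRemark}: any solution curve of \eqref{againdirac} satisfies $(x(t),\dot x(t))\in E_D(x(t))$ for all $t$, hence, by the characterization of $M_c$ as the smallest submanifold containing all solution curves, it lies in $M_c$ and therefore $(x(t),\dot x(t))\in W_{cx(t)}$; conversely a curve in $M_c$ solving \eqref{redagaindirac} automatically satisfies \eqref{againdirac} since $\dot x\oplus d\mathcal{E}(x)\in D_x$ is equivalent to $\dot x\in E_{Dx}$ and $\omega_D(x)(\dot x,\,)=d\mathcal{E}(x)|E_{Dx}$ by Proposition \ref{diracpresymplectic}.

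For part (b), I would compute the dimension of the affine fiber $S^{(c)}_x$. The set of $(x,\dot x)\in W_{cx}$ solving $\omega_D(x)(\dot x,\,)=d\mathcal{E}(x)|E_{Dx}$ is an affine subspace of $W_{cx}$ whose associated linear space is $\{v\in W_{cx}\mid \omega_D(x)(v,\,)=0 \text{ on } E_{Dx}\}=W_{cx}\cap W_{cx}^{\omega_D}$; but $W_{cx}=E_{Dx}\cap T_xM_c$ and, since $\omega_D(x)$ is only defined on $E_{Dx}$, one checks $W_{cx}\cap W_{cx}^{\omega_D}=\ker\big(\omega_D(x)|W_{cx}\big)$, and a short linear-algebra argument (parallel to Lemma \ref{lemmaperp} and Corollary \ref{cordim}) identifies this with $\ker\omega_D(x)\cap T_xM_c$ — the point being that $\ker\omega_D(x)\subseteq E_{Dx}$ already, so intersecting with $E_{Dx}$ is automatic once one intersects with $T_xM_c$. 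Hence $d^{(c)}(x)=\dim\big(\ker\omega_D(x)\cap T_xM_c\big)$. Part (c) is then the usual affine-bundle argument: when $d^{(c)}$ is locally constant, the nonempty affine spaces $S^{(c)}_x$ of constant dimension fit together into a smooth affine subbundle $S^{(c)}$ of $TM|M_c$ (local triviality follows by choosing, near each point, a smooth particular solution — e.g.\ using a Riemannian metric to pick the minimal-norm solution as in Lemma \ref{lemma7.1}(b) — plus the constant-rank kernel distribution). Any smooth (or time-dependent) section $X$ is by definition a vector field on $M_c$ with $X(x)\oplus d\mathcal{E}(x)\in D_x$, so its integral curves solve \eqref{11drracds}; conversely any solution curve, lying in $M_c$ by part (a), has velocity in $S^{(c)}_{x(t)}$ at each time and is therefore an integral curve of some time-dependent section. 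Uniqueness for all initial conditions holds iff the fibers are singletons, i.e.\ iff $d^{(c)}\equiv 0$.

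The main obstacle I anticipate is part (a)'s equivalence of solution \emph{curves} (as opposed to pointwise solutions): one must be careful that a curve solving the restricted equation \eqref{redagaindirac} genuinely stays in $M_c$ and satisfies all the intermediate constraints, which is exactly the content of the constraint algorithm (each $M_{k+1}$ is designed so that $\dot x\in T_xM_k$ forces $x\in M_{k+1}$), and conversely that no solutions are lost — this is the standard ``$M_c$ is the smallest submanifold containing all solutions'' fact, but it requires invoking the recursive definition carefully rather than just citing Lemma \ref{lemmaequivalence} pointwise. Everything else is either linear symplectic bookkeeping transported through the $D^\flat$ and ${}^D$ operators of section \ref{sectiondiracstructures}, or the routine affine-bundle construction, so I would state those steps briefly and refer to the analogous Gotay-Nester arguments.
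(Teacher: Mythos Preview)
Your approach is exactly what the paper intends: it states the theorem and says ``its proof is not difficult, using the previous lemma, and is left to the reader,'' referring to Lemma~\ref{lemmaequivalence}. So there is no detailed proof in the paper to compare against, and your plan of driving everything through the equivalences (i)--(vii) of that lemma together with the standard constraint-algorithm reasoning is precisely the intended route.

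One small slip in part (b): the associated linear space of $S^{(c)}_x$ is, as you correctly write first, $\{v\in W_{cx}\mid \omega_D(x)(v,\,\cdot\,)=0\text{ on }E_{Dx}\}$, which is $W_{cx}\cap\ker\omega_D(x)$, not $W_{cx}\cap W_{cx}^{\omega_D}$. The latter is $\ker\bigl(\omega_D(x)|_{W_{cx}}\bigr)$ and can be strictly larger (e.g.\ when $W_{cx}$ is isotropic in $E_{Dx}$). Your detour through $W_{cx}^{\omega_D}$ and the appeal to Lemma~\ref{lemmaperp} is therefore not needed and not correct as stated; the direct identification $W_{cx}\cap\ker\omega_D(x)=\ker\omega_D(x)\cap T_xM_c$ (using only $\ker\omega_D(x)\subseteq E_{Dx}$) is immediate and is what the theorem asserts. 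With that correction, your argument is complete.
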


\paragraph{Local representation for $S^{(c)}$.} One can find a local
representation for $S^{(c)}$ by just choosing a local parametrization of $M_c$.
Let
$x = x(z_1,\dots,z_r) \in U$, where $U$ is an open set and
$r$ is the dimension of $M_c$, be such a local parametrization.
Then substitute this expression for $x$ in
 (\ref{11drracds}) to obtain an IDE in $z = (z_1,\dots,z_r)$,
namely
\begin{align}\label{gotaygen122}
\left(x(z),D_{z}x(z) \cdot \dot{z}\right)
\oplus d\mathcal{E} \left(x(z)\right) \in D_{x(z)}.
\end{align}
The local representation of
$S^{(c)}$ is given by the trivial affine bundle
\[
\{(z, \dot{z}) \mid (z, \dot{z}) \mbox{ satisfies (\ref{gotaygen122})}\}.
\]

\paragraph{Some results concerning uniqueness of solution.} Under Assumption \ref{K2} (b), we can prove the following lemma.

\begin{lemma}\label{lemmalemma}
(A)
Existence and uniqueness of a solution curve
$x(t)$
of
(\ref{redagaindirac})
for any initial condition
$x(0) \in M_c$,
and therefore also of the Dirac dynamical system
(\ref{againdirac}),
is equivalent to any of the conditions

(i) $\operatorname{ker}\omega_D (x) \cap W_{cx}
=
\operatorname{ker} \omega_D(x) \cap T_x M_c
=
\{0\}$,
for each
$x \in M_c$,

(ii) $W_{cx}\cap E^{\omega_D}_{Dx} =\{0\}$, for each $x \in M_c$,

(iii) $W_{cx}\cap E^{\Omega}_{Dx} =\{0\}$, for each $x \in M_c$.
\\

(B) If
$\omega_D | W_{cx}$,
or,
equivalently,
$\Omega | W_{cx}$,
is symplectic for each
$x \in M_c$
then there is existence and uniqueness of solution
$x(t)$
of
(\ref{redagaindirac})
for any initial condition
$x(0) \in M_c$,
and therefore also of the Dirac dynamical system
(\ref{againdirac}).
One also has that
$\omega_D | W_{cx}$
is symplectic for each
$x \in M_c$
if and only if any one of the following
conditions is satisfied:

(i) $W_{cx}\cap W_{cx}^{\omega_D} =\{0\}$,

(ii) $W_{cx}\cap W_{cx}^{\Omega} =\{0\}$.
\end{lemma}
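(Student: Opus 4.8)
The plan is to read off everything from Theorem \ref{maintheorem1}. By parts (a) and (b) of that theorem, for each $x\in M_c$ the affine fiber $S^{(c)}_x$ of solutions of (\ref{redagaindirac}) lying in $W_{cx}$ is nonempty and has dimension $d^{(c)}(x)=\dim(\ker\omega_D(x)\cap T_xM_c)$, and the curves solving (\ref{againdirac}) coincide with those solving (\ref{redagaindirac}). Assumption \ref{K2}(b) makes this dimension locally constant on $M_c$, so part (c) of the theorem applies: $S^{(c)}$ is then an affine bundle over $M_c$, every local section is a smooth vector field whose integral curves are exactly the solution curves, hence a solution curve exists through every point of $M_c$, and such curves are unique for every initial condition if and only if $d^{(c)}\equiv 0$. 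Therefore existence and uniqueness for every initial condition in $M_c$ is equivalent to $\ker\omega_D(x)\cap T_xM_c=\{0\}$ for all $x\in M_c$.

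It remains to check that conditions (i), (ii), (iii) of part (A) all express this single equality. For (i) it suffices to note that $\ker\omega_D(x)\subseteq E_{Dx}$, so $\ker\omega_D(x)\cap W_{cx}=\ker\omega_D(x)\cap(E_{Dx}\cap T_xM_c)=\ker\omega_D(x)\cap T_xM_c$. For (ii), $E_{Dx}^{\omega_D}$ is by definition the radical of the presymplectic form $\omega_D(x)$ on $E_{Dx}$, that is $E_{Dx}^{\omega_D}=\ker\omega_D(x)$, so $W_{cx}\cap E_{Dx}^{\omega_D}=W_{cx}\cap\ker\omega_D(x)$. For (iii), since the presymplectic leaves of $D$ are presymplectic submanifolds of $(P,\Omega)$ one has $\omega_D(x)=\Omega(x)|_{E_{Dx}\times E_{Dx}}$, whence for $v\in W_{cx}\subseteq E_{Dx}$ one gets $v\in E_{Dx}^{\Omega}$ iff $\Omega(x)(v,w)=0$ for all $w\in E_{Dx}$ iff $v\in\ker\omega_D(x)$; thus $W_{cx}\cap E_{Dx}^{\Omega}=W_{cx}\cap\ker\omega_D(x)$ as well. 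Together with the first paragraph this proves part (A).

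For part (B), unwind ``$\omega_D|W_{cx}$ symplectic'': it says that $\omega_D(x)$ restricted to $W_{cx}\times W_{cx}$ is nondegenerate, i.e.\ $W_{cx}\cap W_{cx}^{\omega_D}=\{0\}$, which is condition (i); since $\omega_D(x)$ and $\Omega(x)$ agree on $E_{Dx}\supseteq W_{cx}$, this restricted form is also $\Omega(x)|_{W_{cx}\times W_{cx}}$, so nondegeneracy reads $W_{cx}\cap W_{cx}^{\Omega}=\{0\}$, condition (ii); this simultaneously proves the ``equivalently'' in the hypothesis. Finally, $W_{cx}\subseteq E_{Dx}$ gives $\ker\omega_D(x)=E_{Dx}^{\omega_D}\subseteq W_{cx}^{\omega_D}$, so nondegeneracy of $\omega_D(x)$ on $W_{cx}$ forces $W_{cx}\cap\ker\omega_D(x)=\{0\}$, which is condition (A)(i); part (A) then yields existence and uniqueness. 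Only this implication is asserted in (B), which is consistent since $\ker\omega_D(x)$ may be a proper subspace of $W_{cx}^{\omega_D}$.

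I expect the one genuinely delicate ingredient to be the bookkeeping of the three orthogonality symbols (${}^{\omega_D}$ formed inside $E_{Dx}$, ${}^{\Omega}$ formed inside $T_xP$, and the radicals $\ker\omega_D(x)$ and $\ker(\omega_D(x)|_{W_{cx}})$), together with the need for the constant-rank hypothesis to upgrade the pointwise dimension count of Theorem \ref{maintheorem1} to a statement about solution curves; everything else follows directly from Theorem \ref{maintheorem1} and the identities of section \ref{sectiondiracstructures}.
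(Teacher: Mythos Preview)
Your proof is correct and follows essentially the same route as the paper: both invoke Theorem~\ref{maintheorem1} to identify existence/uniqueness with $d^{(c)}\equiv 0$, then verify the equivalences (i)--(iii) in (A) via the identities $\ker\omega_D(x)=E_{Dx}^{\omega_D}=E_{Dx}^{\Omega}\cap E_{Dx}$ (your argument for (iii) is exactly Lemma~\ref{lemmaperp} applied with $V=E_{Dx}$). The only small divergence is in (B): the paper argues directly that symplecticity of $\omega_D|W_{cx}$ makes the restricted equation $(\omega_D|W_{cx})(v_x,\cdot)=d\mathcal{E}|W_{cx}$ determine a vector field, whereas you obtain the same conclusion by the cleaner reduction $\ker\omega_D(x)\subseteq W_{cx}^{\omega_D}$ and then quote (A); both arguments are valid and equivalent.
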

\begin{proof}
We first recall the argument from Theorem
\ref{maintheorem1}. Uniqueness of a solution $x(t)$ of equation
(\ref{redagaindirac}) for any given initial condition $x_0 \in
M_c$ (which, as we know from the CAD, must satisfy $x(t) \in
M_c$ and $\dot{x}(t) \in W_{cx(t)}$ for all $t$) holds if and only
if for each $x \in M_c$ there is a uniquely determined $v_x \in
W_{cx}$ such that
\begin{equation*}
 \omega_D(x) (v_x, \,\,) = d\mathcal{E} | E_{Dx},
\end{equation*}
in other words, if equation
(\ref{redagaindirac})
defines a vector field on
$M_c$.
In fact, if this is the case, by the general theory of
ODE on manifolds we have existence and uniqueness of solution.
We also recall that under Assumption \ref{K2} (b), the equation
(\ref{redagaindirac})
defines a family of vector fields on
$M_c$,
which defines an affine distribution of constant rank
whose space at the point
$x \in M_c$
is the affine space of all solutions
$v_x$
as indicated above.
Using this we can deduce that
uniqueness of solution
$x(t)$
of
(\ref{redagaindirac})
for any initial condition
$x(0) \in M_c$,
is equivalent to the affine distribution defined above having dimension
$0$.

Now we shall prove the equivalence between uniqueness of solution and
\textit{(A)}.
Let
$v_x \in W_{cx}$
satisfying equation
(\ref{redagaindirac}), that is,
\begin{equation*}
 \omega_D(x) (v_x, \,\,) = d\mathcal{E} | E_{Dx}.
\end{equation*}
Let
$w_x \neq 0$
be such that
$w_x \in \operatorname{ker}\omega_D (x) \cap W_{cx}$.
Then
$v_x + w_x$
also satisfies (\ref{redagaindirac}), which shows that the affine
distribution
described above has dimension greater than $0$.
Using this, the proof of the equivalence between uniqueness of solution and
\textit{(A)(i)} follows easily.
The rest of the proof of the equivalence with \textit{(A)} follows from the fact
that
$\operatorname{ker}\omega_D (x)
=
E^{\omega_D}_{Dx}
=
E^{\Omega}_{Dx} \cap E_{Dx}$
for each $x \in M_c$,
which can be proved directly using the definitions.

Note that the equivalence between \textit{(i), (ii), (iii)}
of
\textit{(A)}
holds for \textit{any} subspace
$W_{cx} \subseteq E_{Dx}$.

Now we shall prove
\textit{(B)}.
First of all, if
$v_x \in W_{cx}$
satisfies
(\ref{redagaindirac}) then,
since
$W_{cx} \subseteq E_{Dx}$,
 it clearly also satisfies
\begin{equation}\label{equationequation111}
(\omega_D(x)| W_{cx})(v_x, \,\,) = d\mathcal{E} | W_{cx}.
\end{equation}
We can conclude that if $\omega_D(x)| W_{cx}$,  which as we know
coincides with $\Omega(x) | W_{cx}$, is symplectic then equation
(\ref{redagaindirac}) defines a vector field $v_x$ on $M_c$,
which is, in fact, given by equation (\ref{equationequation111}).
Then existence and uniqueness of solution $x(t)$ of
(\ref{redagaindirac}) for any initial condition $x(0) \in M_c$ is
guaranteed.
Finally, the equivalence between symplecticity of
$\omega_D(x)| W_{cx} \equiv \Omega(x) | W_{cx}$ and \textit{(B)(i)}
and \textit{(B)(ii)} is easy to prove using basic linear symplectic
geometry.
In fact, is is easy to prove that symplecticity of
$\omega_D(x)| W_{cx}$
is equivalent to
$W^{\omega_D}_{cx} = \{0\}$.
Using Lemma \ref{lemmaperp} we can deduce that $W^{\omega_D}_{cx} = W_{cx} \cap
W^{\Omega}_{cx}$ from which we obtain $W_{cx} \cap W^{\Omega}_{cx} =
W_{cx} \cap W^{\omega_D}_{cx}$. 
\end{proof}

With the method just described, one can deal with many examples of
interest, such as nonholonomic systems and circuits, provided that
one chooses the manifold $M$ and the Dirac structure $D$ properly.
We show in the next section how a
nonholonomic system given by a distribution $\Delta \subseteq TQ$
on the configuration space $Q$ can be described by a Dirac dynamical system
on the Pontryagin bundle $M = TQ \oplus T^\ast Q$ and how one can
apply the constraint algorithm CAD to this example, although we will not
perform a detailed calculation of the sequence of constraint
submanifolds $M_k$. We also show how LC circuits can be treated
with the same formalism as nonholonomic systems. The main point
for doing this is, again, to choose the manifold $M$ as being the
Pontryagin bundle $TQ \oplus T^\ast Q$, where, this time, $Q$ is
the charge space, and a canonically constructed Dirac structure
$\bar{D}_{\Delta}$ on $M$, where, this time, $\Delta$ represents
Kirchhoff's Current Law. We also show how this approach using
$TQ \oplus T^*Q$ is related to the approach used in
\cite{MR2265464,MR2265469}.
\\

\section{Examples}\label{sectionexamples}
In this section we deal with two examples, namely, nonholonomic systems, and LC
circuits, showing that Dirac dynamical systems give a unified treatment for
them.
We will perform the detailed calculation of the constraint submanifolds
$M_k$
for the case of LC circuits only.
\paragraph{Nonholonomic systems.}
Recall that a nonholonomic system is given by a configuration space
$Q$, a distribution $\Delta \subseteq TQ$, called the nonholonomic
constraint, and a Lagrangian $\mathcal{L} \colon  TQ \rightarrow
\mathbb{R}$. Equations of motion are given by Lagrange-d'Alembert's
principle.

 Inspired by the Hamilton-Poincar\'{e} principle given in
\cite{%
Cendra_Marsden_Pekarsky_Ratiu:Variational_principles--Hamilton-Poincare_equations}, we can write a convenient equivalent form of
the Lagrange-d'Alembert principle as
\[
\delta \int^{t_1}_{t_0}(p\dot{q} - \mathcal{E} (q,v,p))dt = 0,
\]
where $\mathcal{E}  \colon  TQ\oplus T^\ast Q \rightarrow \mathbb{R}$ is
defined by $\mathcal{E} (q,v,p) = pv - \mathcal{L}(q,v)$, and with
the restriction on variations $\delta q \in \Delta$, $\delta q (t_i)
= 0$ for $i = 0,1$, along with the kinematic restriction $v \in
\Delta$. The resulting equations are
\begin{align}\label{nonholoeq}
\dot{p} - \frac{\partial \mathcal{L}}{\partial q} &\in \Delta^\circ
\\
\label{nonholoeq1}
\dot{q}
&=
v\\
\label{nonholoeq2} p - \frac{\partial \mathcal{L}}{\partial v} &=
0\\
\label{nonholoeq3}
v
&\in \Delta.
\end{align}
We are going to show that equations (\ref{nonholoeq})--(\ref{nonholoeq3}) can be
written in the form  (\ref{11drracds}). For this
purpose we must construct an appropriate Dirac structure
associated to the nonholonomic constraint. Inspired by several
results in \cite{MR2265464} and by the
Hamilton-Poincar\'{e}'s point of view we define a Dirac structure
$\bar{D}_{\Delta} \subseteq TM\oplus T^\ast M$ on $M = TQ\oplus
T^\ast Q$ associated to a given distribution $\Delta \subseteq TQ$
on a manifold $Q$ by the local expression
\[
\bar{D}_{\Delta}(q,v,p)=\{(q,v,p,\dot{q},\dot{v},\dot{p},\alpha,\gamma,
\beta)\mid \dot{q} \in \Delta(q),\,\alpha+\dot{p} \in
\Delta^{\circ}(q),\,\beta=\dot{q},\,\gamma=0\}.
\]
\textbf{Note.} We shall accept both equivalent notations
\[
(q,v,p,\dot{q},\dot{v},\dot{p},\alpha,\gamma,
\beta)
\equiv
(q,v,p,\dot{q},\dot{v},\dot{p})\oplus(q,v,p,\alpha,\gamma,\beta),
\]
for an element of
$TM\oplus T^\ast M$.

By checking that $\dim \bar{D}_{\Delta}(q,v,p)=3\dim Q$,
that is $\dim \bar{D}_{\Delta}(q,v,p)=\dim M$,
and that
\[
\left\langle \! \left\langle
(q,v,p,\dot{q}_1,\dot{v}_1,\dot{p}_1)\oplus(q,v,p,\alpha_1,\gamma_1,\beta_1),
(q,v,p,\dot{q}_2,\dot{v}_2,\dot{p}_2)\oplus(q,v,p,\alpha_2,\gamma_2,\beta_2)
\right\rangle \! \right\rangle =0
\]
for all
\begin{align*}
& (q,v,p,\dot{q}_1,\dot{v}_1,\dot{p}_1)\oplus (q,v,p,\alpha_1,\gamma_1,\beta_1)
\in \bar{D}_{\Delta}(q,v,p), \\
& (q,v,p,\dot{q}_2,\dot{v}_2,\dot{p}_2)\oplus (q,v,p,\alpha_2,\gamma_2,\beta_2)
\in \bar{D}_{\Delta}(q,v,p),
\end{align*}
we can conclude using Lemma \ref{defdiracstr} that
$\bar{D}_{\Delta}$ is a Dirac structure on $M$. We should now
prove that $\bar{D}_{\Delta}$ is well defined globally, in other
words, that it does not depend on the choice of a local chart. Let
$\bar{\tau} \colon  TQ\oplus T^\ast Q \rightarrow Q$ and $\bar{\pi} \colon 
TQ\oplus T^\ast Q \rightarrow T^\ast Q$ be the natural maps that
in local coordinates are given by $\bar{\tau} (q,v,p) = q$ and
$\bar{\pi} (q,v,p) = (q,p)$.
For a given distribution $\Delta \subseteq TQ$ consider the
distribution $\bar{\Delta} = (T\bar{\tau})^{-1}(\Delta)$ and also
the 2-form $\bar{\omega} = \bar{\pi}^\ast \omega$, on the manifold
$TQ\oplus T^\ast Q$, where $\omega$ is the canonical 2-form on
$T^\ast Q$. We have the local expressions $\bar{\Delta} =
\{(q,v,p, \dot{q}, \dot{v}, \dot{p}) \,|\, \dot{q} \in \Delta\}$ and
$\bar{\omega}(q,v,p) = dq \wedge dp$. Now we can apply Theorem
\ref{diracpresymplecticmanifold} replacing $M$ by $TQ\oplus T^\ast
Q$, $E$ by $\bar{\Delta}$ and $\omega$ by $\bar{\omega}$ and then
we can easily check that the Dirac structure $D_{\omega_E}$
coincides with $\bar{D}_{\Delta}$. In other words, by using
Theorem \ref{diracpresymplecticmanifold} we have obtained a
coordinate-independent description of $\bar{D}_{\Delta}$ given in terms of
$\bar{\tau}$,
$\bar{\pi}$ and $\bar{\omega}$,
which proves in particular that it is well defined globally.
\medskip

It is straightforward to check that the condition
\begin{align}\label{nonholodirac}
(x, \dot{x})\oplus d\mathcal{E} (x) \in \bar{D}_{\Delta},
\end{align}
where $x = (q,v,p)$,
is equivalent to
\begin{align}\label{nnhhooll1}
\dot{p} - \frac{\partial \mathcal{L}}{\partial q} &\in
\Delta^\circ\\
\label{nnhhooll12}
\dot{q}
&=
v\\
\label{nnhhooll3}
p
&=
\frac{\partial \mathcal{L}}{\partial v}\\
\label{nnhhooll4} \dot{q} &\in \Delta,
\end{align}
which is clearly equivalent to equations
(\ref{nonholoeq})--(\ref{nonholoeq3}).
\medskip

Now that we have written the equations  of motion as a Dirac dynamical system,
we can proceed to apply the CAD\@. First, we
can easily prove the following formulas using the definitions and
Proposition \ref{circperpD}:
\begin{align}
E_{\bar{D}_{\Delta}}
&=
\{(q,v,p, \dot{q}, \dot{v}, \dot{p}) \mid  \dot{q} \in \Delta\}, \label{E_D1}\\
E_{\bar{D}_{\Delta}}^\flat
&=
\{(q,v,p, \alpha, \gamma, \beta) \mid  \gamma = 0, \beta \in \Delta\},
\label{E_D2}\\
E^{\bar{D}_{\Delta}}_{\bar{D}_{\Delta}}
&= \{(q,v,p,\dot{q},
\dot{v}, \dot{p}) \mid  \dot{q} = 0, \dot{p} \in \Delta^\circ\}.\label{E_D3}
\end{align}
Then we have
\begin{equation}\label{constraintnonholo1}
\begin{split}
M_1
&=
\left\{ (q,v,p) \; \left|  \; \left\langle
d \mathcal{E} (q,v,p), E_{\bar{D}_{\Delta}}^{\bar{D}_{\Delta}}(q,v,p)
\right\rangle  = \{0\} \right\} \right. \\
&= \left\{ (q,v,p) \; \left|  \; 
d \mathcal{E} (q,v,p) \in E_{\bar{D}_{\Delta}}^\flat(q,v,p)
 \right\} \right. \\
&= \left\{ (q,v,p) \; \left| \; p - \frac{\partial L}{\partial v} = 0,\,\, v \in
\Delta \right\} \right..
\end{split}
\end{equation}

We could continue applying the algorithm as explained in general in
section \ref{sectionconstalgforrdirac}, and we would obtain
specific formulas for $M_2, M_3$, etc.

\subparagraph{Relationship with implicit Lagrangian systems.} In
\cite{MR2265464} the
description of a nonholonomic system as an
\textit{implicit Lagrangian system} was introduced and equations of motion were
shown to be a \textit{partial vector field} written in terms of
the \emph{Dirac differential}. There is a close and simple relationship between
this approach and
the one of the present paper, which we shall explain next.
\medskip

First, we should recall the notion of an implicit Lagrangian system.
Let $\mathcal{L}\colon TQ \rightarrow \mathbb{R}$ be a given Lagrangian;
then $d\mathcal{L}$ is a $1$-form on $TQ$, $d\mathcal{L}\colon TQ
\rightarrow T^*TQ$, which is locally expressed by
\[
d\mathcal{L}=\left(q,v,\frac{\partial{\mathcal{L}}}{\partial{q}},\frac{\partial{\mathcal{L}}}{\partial{v}}\right).
\]
Define a differential operator $\mathfrak{D}$, called the {\it Dirac
differential}\/ of $\mathcal{L}$, by
\[\mathfrak{D}\mathcal{L}=\gamma_Q \circ d\mathcal{L},\]
where $\gamma_Q$ is the diffeomorphism defined by
\[\gamma_Q=\Omega^\flat \circ (\kappa_Q)^{-1}\colon T^*TQ  \rightarrow
T^*T^*Q.\]
Here
$\Omega$
is the canonical $2$-form on $T^\ast Q$ and
$\kappa_Q \colon  TT^\ast Q \rightarrow T^\ast TQ$ is the canonical isomorphism which
is
given in a local chart by
$\kappa_Q (q,p,\delta q, \delta p) = (q, \delta q, \delta p, p)$.
We have the local expression
\[
\mathfrak{D}\mathcal{L} = \left(q,
\frac{\partial{\mathcal{L}}}{\partial{v}},
-\frac{\partial{\mathcal{L}}}{\partial{q}}, v\right).
\]

We must now recall the definition and properties of the Dirac structure
$D_{\Delta}$ on the manifold $T^\ast Q$,
where
$\Delta$ is a given distribution on
$Q$,
studied in propositions 5.1 and 5.2 of
\cite{MR2265464}.
Let $\pi \colon  T^\ast Q \rightarrow Q$ be the canonical projection. Then $\Delta_{T^\ast Q}$ is defined by
\[
\Delta_{T^\ast Q} = \left(T\pi \right)^{-1}(\Delta).
\]
The Dirac structure $D_{\Delta}$ is the one given by Theorem
\ref{diracpresymplecticmanifold}, with $M = T^\ast Q$, $\omega = \Omega$ the
canonical symplectic form on $T^\ast Q$ and $E = \Delta_{T^\ast Q}$, and it is described by
\begin{align}\label{eqmahi}
\nonumber
 D_{\Delta}(q,p) = \{v\oplus \alpha \in T_{(q,p)}T^\ast Q
\oplus T_{(q,p)}^\ast T^\ast Q \mid
v \in \Delta_{T^\ast Q}(q,p) \,\, \\
\mbox{and}\,\, \alpha (w) = \Omega (q,p) (v, w)\,\, \mbox{for all}\,\, w \in
\Delta_{T^\ast Q}(q,p)\}.
\end{align}
Now we will define the notion of an implicit Lagrangian system.
\begin{definition} Let $\mathcal{L}\colon TQ
\rightarrow \mathbb{R}$ be a given Lagrangian (possibly degenerate)
and let $\Delta \subseteq TQ$ be a given regular {\it constraint
distribution} on a configuration manifold $Q$. Denote by
$D_{\Delta}$ the induced Dirac structure on $T^*Q$ that is given by
the equation (\ref{eqmahi}) and write $\mathfrak{D}\colon TQ \rightarrow
T^*T^*Q$ for the Dirac differential of $\mathcal{L}$. Let
$\mathbb{F}L(\Delta) \subseteq T^*Q$ be the image of $\Delta$ under
the Legendre transformation. An {\it implicit Lagrangian system} is
a triple $(\mathcal{L}, \Delta, X)$ where $X$ represents a vector
field defined at points of $\mathbb{F}L(\Delta) \subseteq T^*Q$,
together with the condition
\[
X \oplus \mathfrak{D}\mathcal{L} \in D_{\Delta}.
\]
In other words, for each point $(q,v) \in \Delta$ let
$(q,p)=\mathbb{F}L(q,v)$
and then, by definition,
$X(q,v, p) \in T _{( q, p ) } \left( T^{\ast} Q \right)$ must satisfy
\begin{align}\label{diracdiffeq}
X(q,v, p)\oplus\mathfrak{D}\mathcal{L}(q,v) \in D_{\Delta}(q,p).
\end{align}
\end{definition}
A {\it solution curve} of an implicit Lagrangian system
$(\mathcal{L},\Delta,X)$ is a curve $(q(t),v(t)) \in \Delta$, $t_1
\leq t \leq t_2$, which is an integral curve of $X$ where
$(q(t),p(t))=\mathbb{F}L(q(t),v(t))$.
The following proposition is essentially Proposition 6.3 of
\cite{MR2265464}.
\begin{proposition}\label{IDEMAHI}
The condition $X \oplus \mathfrak{D}\mathcal{L} \in D_{\Delta}$
defining an implicit Lagrangian system is given locally by the
equalities
\begin{align}\label{nonholomahi}
p
&=
\frac{\partial \mathcal{L}}{\partial v}\\
\label{nonholomahiA}
\dot{q}
&\in
\Delta\\
\label{nonholomahiB}
v
&=
\dot{q}\\
\label{nonholomahiC} \dot{p} - \frac{\partial \mathcal{L}}{\partial
q} &\in \Delta^\circ .
\end{align}
\end{proposition}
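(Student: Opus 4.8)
The plan is to verify directly, in a canonical chart, that the two descriptions of the implicit Lagrangian system coincide, by unwinding the local formulas for the three ingredients that appear in the condition $X \oplus \mathfrak{D}\mathcal{L} \in D_{\Delta}$: the Dirac differential $\mathfrak{D}\mathcal{L}$, the vector field $X$ viewed as an element of $TT^\ast Q$, and the Dirac structure $D_{\Delta}$ given by \eqref{eqmahi}. First I would fix coordinates $(q,p)$ on $T^\ast Q$ and $(q,v)$ on $TQ$, set $(q,p)=\mathbb{F}L(q,v)$, and write $X(q,v,p)=(q,p,\dot q,\dot p)\in T_{(q,p)}T^\ast Q$, so that its image under the canonical pairing with $T^\ast T^\ast Q$ is the covector $\Omega^\flat(q,p)(X) = \dot q\, dp - \dot p\, dq$ in the usual convention for $\Omega = dq\wedge dp$. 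From the excerpt we already have the local expression $\mathfrak{D}\mathcal{L}(q,v) = \bigl(q, \tfrac{\partial \mathcal{L}}{\partial v}, -\tfrac{\partial \mathcal{L}}{\partial q}, v\bigr)$, regarded as an element of $T^\ast_{(q,p)}T^\ast Q$ with base point $(q,p) = (q, \tfrac{\partial \mathcal{L}}{\partial v})$; reading off the base point already yields \eqref{nonholomahi}, namely $p = \tfrac{\partial \mathcal{L}}{\partial v}$.

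Next I would spell out membership in $D_{\Delta}(q,p)$ using \eqref{eqmahi}. There are two clauses. The first, $X \in \Delta_{T^\ast Q}(q,p) = (T\pi)^{-1}(\Delta)$, says precisely that the velocity part $\dot q$ of $X$ lies in $\Delta(q)$, which is \eqref{nonholomahiA}. The second clause says $\mathfrak{D}\mathcal{L}(q,v)(w) = \Omega(q,p)(X,w)$ for all $w \in \Delta_{T^\ast Q}(q,p)$. Writing $w = (q,p,\delta q,\delta p)$ with $\delta q \in \Delta(q)$ and $\delta p$ arbitrary, the right-hand side is $\dot q\,\delta p - \dot p\,\delta q$, while the left-hand side, using the local form of $\mathfrak{D}\mathcal{L}$, is $\bigl(q,\tfrac{\partial\mathcal L}{\partial v},-\tfrac{\partial\mathcal L}{\partial q},v\bigr)$ paired with $(\delta q,\delta p)$ in the cotangent coordinates of $T^\ast T^\ast Q$, which equals $-\tfrac{\partial \mathcal{L}}{\partial q}\,\delta q + v\,\delta p$. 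Equating the two expressions for all admissible $(\delta q,\delta p)$: the $\delta p$-coefficients (with $\delta p$ free) give $\dot q = v$, which is \eqref{nonholomahiB}; substituting this back, the remaining condition is $\dot p\,\delta q = \tfrac{\partial \mathcal{L}}{\partial q}\,\delta q$ for all $\delta q \in \Delta(q)$, i.e.\ $\bigl(\dot p - \tfrac{\partial \mathcal{L}}{\partial q}\bigr)(\delta q) = 0$ for all $\delta q \in \Delta(q)$, which is exactly $\dot p - \tfrac{\partial \mathcal{L}}{\partial q} \in \Delta^\circ(q)$, namely \eqref{nonholomahiC}. Conversely, retracing these equivalences shows that \eqref{nonholomahi}--\eqref{nonholomahiC} imply $X \oplus \mathfrak{D}\mathcal{L} \in D_\Delta$, so the characterization is an equivalence.

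The one point requiring a little care — and what I'd flag as the main (mild) obstacle — is bookkeeping of the canonical identifications and sign conventions: the isomorphism $\kappa_Q$, the definition $\gamma_Q = \Omega^\flat \circ \kappa_Q^{-1}$, the pairing between $T^\ast T^\ast Q$ and $TT^\ast Q$, and the sign convention in $\Omega = dq\wedge dp$ versus $dp\wedge dq$ must all be used consistently so that the $v = \dot q$ and the $\Delta^\circ$ conditions come out with the correct signs rather than spurious minus signs. Since this is essentially a transcription of Proposition 6.3 of \cite{MR2265464}, I would simply remark that the computation is the local coordinate check just outlined and refer to that source for the details, rather than belabor the chart computations here.

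\begin{proof}
This is a local coordinate computation, identical to that of Proposition~6.3 in \cite{MR2265464}. Work in canonical coordinates $(q,p)$ on $T^\ast Q$ and $(q,v)$ on $TQ$, and set $(q,p) = \mathbb{F}L(q,v)$. Write $X(q,v,p) = (q,p,\dot q, \dot p) \in T_{(q,p)}T^\ast Q$. Using the local expression $\mathfrak{D}\mathcal{L}(q,v) = \bigl(q, \tfrac{\partial \mathcal{L}}{\partial v}, -\tfrac{\partial \mathcal{L}}{\partial q}, v\bigr) \in T^\ast_{(q,p)}T^\ast Q$, the base point of $\mathfrak{D}\mathcal{L}(q,v)$ is $\bigl(q, \tfrac{\partial \mathcal{L}}{\partial v}\bigr)$, so the condition that $X$ and $\mathfrak{D}\mathcal{L}$ share a base point gives \eqref{nonholomahi}.

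By \eqref{eqmahi}, the condition $X \oplus \mathfrak{D}\mathcal{L}(q,v) \in D_\Delta(q,p)$ amounts to two requirements. First, $X \in \Delta_{T^\ast Q}(q,p) = (T\pi)^{-1}(\Delta)$, i.e.\ $\dot q \in \Delta(q)$, which is \eqref{nonholomahiA}. Second, $\mathfrak{D}\mathcal{L}(q,v)(w) = \Omega(q,p)(X,w)$ for all $w = (q,p,\delta q,\delta p)$ with $\delta q \in \Delta(q)$. With $\Omega = dq\wedge dp$ one has $\Omega(q,p)(X,w) = \dot q\,\delta p - \dot p\,\delta q$, while $\mathfrak{D}\mathcal{L}(q,v)(w) = -\tfrac{\partial \mathcal{L}}{\partial q}\,\delta q + v\,\delta p$. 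Equating these for all such $w$: taking $\delta p$ arbitrary (with $\delta q = 0$) yields $\dot q = v$, which is \eqref{nonholomahiB}; substituting this back, the remaining equality $\bigl(\dot p - \tfrac{\partial \mathcal{L}}{\partial q}\bigr)(\delta q) = 0$ for all $\delta q \in \Delta(q)$ is precisely $\dot p - \tfrac{\partial \mathcal{L}}{\partial q} \in \Delta^\circ(q)$, which is \eqref{nonholomahiC}. Each step is reversible, so \eqref{nonholomahi}--\eqref{nonholomahiC} together are equivalent to $X \oplus \mathfrak{D}\mathcal{L} \in D_\Delta$.
\end{proof}
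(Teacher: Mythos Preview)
Your proof is correct and complete. The paper itself does not give a proof of this proposition but simply states that it ``is essentially Proposition 6.3 of \cite{MR2265464}''; you have supplied exactly the local coordinate verification that reference carries out, unwinding the base-point condition to get \eqref{nonholomahi}, the $\Delta_{T^\ast Q}$ membership to get \eqref{nonholomahiA}, and the pairing condition from \eqref{eqmahi} against arbitrary $w=(q,p,\delta q,\delta p)$ with $\delta q\in\Delta$ to obtain \eqref{nonholomahiB} and \eqref{nonholomahiC}.
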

It is clear that equations (\ref{nonholomahi})--(\ref{nonholomahiC}) are
equivalent to
(\ref{nonholoeq})--(\ref{nonholoeq3}) and also to
(\ref{nnhhooll1})--(\ref{nnhhooll4}). This leads immediately to a precise link
between the
approach  to nonholonomic systems given in \cite{MR2265464} and the one in the
present paper, which is given in the
next proposition whose proof in local coordinates is easy and will be omitted.
Recall that $M_1$ is the first constraint manifold given by
(\ref{constraintnonholo1}).
\begin{proposition}\label{link}
Let $\mathcal{L} \colon  TQ \rightarrow \mathbb{R}$ be a given Lagrangian,
$\Delta \subseteq TQ$ a given distribution and let $\mathcal{E}  \colon 
TQ \oplus T^\ast Q \rightarrow \mathbb{R}$ be given by $\mathcal{E}
(q,v,p) = pv - \mathcal{L}(q,v)$. Then the following assertions
hold.
\begin{enumerate}
\item[{\rm (i)}]
 $\mathbb{F}L(\Delta) = \bar{\pi}(M_1)$.
\item[{\rm (ii)}]
Let $(q,v,p,\dot{q}, \dot{v}, \dot{p}) = \bar{X}(q,v,p)$
be a solution to (\ref{nonholodirac}); then in particular
one has
$(q,v,p) \in M_1$,
and let
$X(q,v, p) = (q, p, \dot{q}, \dot{p})$, that is, $X(q,v,p) = T \bar{\pi}
\bar{X}(q,v,p)$.
Then
$X(q,v, p)$ is a solution to (\ref{diracdiffeq}).
\item[{\rm (iii)}] A curve $\left(q(t),v(t),p(t)\right)$ is a solution curve of
(\ref{nonholodirac}) if and only if
$\left(q(t),v(t)\right)$ is a solution curve of the implicit Lagrangian system
given by (\ref{diracdiffeq}).
\end{enumerate}
\end{proposition}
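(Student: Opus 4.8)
The plan is to verify the three assertions by matching local coordinate expressions, exactly as the statement suggests; no genuine work beyond bookkeeping is involved. For (i), I would start from the explicit description of the first constraint submanifold obtained in (\ref{constraintnonholo1}), namely $M_1 = \{(q,v,p) \mid p = \frac{\partial \mathcal{L}}{\partial v}(q,v),\ v \in \Delta(q)\}$. Since $\bar{\pi}(q,v,p) = (q,p)$ in coordinates, this gives $\bar{\pi}(M_1) = \{(q, \frac{\partial \mathcal{L}}{\partial v}(q,v)) \mid v \in \Delta(q)\}$, which is precisely the image of $\Delta$ under the Legendre transformation $\mathbb{F}L(q,v) = \left(q, \frac{\partial \mathcal{L}}{\partial v}(q,v)\right)$, i.e. $\mathbb{F}L(\Delta)$. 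So (i) is immediate.

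For (ii), I would recall from the computations preceding and including Proposition~\ref{IDEMAHI} that (\ref{nonholodirac}) is equivalent to the system (\ref{nnhhooll1})--(\ref{nnhhooll4}), and that (\ref{diracdiffeq}) is equivalent to (\ref{nonholomahi})--(\ref{nonholomahiC}); these are literally the same four equations written in a different order. Now let $\bar{X}(q,v,p) = (q,v,p,\dot q,\dot v,\dot p)$ be a solution of (\ref{nonholodirac}). Then the component relations $p = \frac{\partial \mathcal{L}}{\partial v}$ and $\dot q \in \Delta$ contained in (\ref{nnhhooll1})--(\ref{nnhhooll4}) say exactly that $(q,v,p) \in M_1$, by (\ref{constraintnonholo1}). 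Applying the tangent map $T\bar{\pi}$, which in coordinates sends $(q,v,p,\dot q,\dot v,\dot p)$ to $(q,p,\dot q,\dot p)$ and hence merely discards the $\dot v$ slot, produces $X(q,v,p) = (q,p,\dot q,\dot p)$, and the remaining relations $\dot q \in \Delta$, $v = \dot q$, $\dot p - \frac{\partial \mathcal{L}}{\partial q} \in \Delta^\circ$ are precisely (\ref{nonholomahiA})--(\ref{nonholomahiC}); together with $p = \frac{\partial \mathcal{L}}{\partial v}$ this is (\ref{nonholomahi})--(\ref{nonholomahiC}), so $X(q,v,p) \oplus \mathfrak{D}\mathcal{L}(q,v) \in D_{\Delta}(q,p)$, which is (\ref{diracdiffeq}). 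The only point requiring a glance is that $T\bar{\pi}$ is the coordinate projection just described, which follows directly from the local formula for $\bar{\pi}$.

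Assertion (iii) then follows by applying (ii) pointwise along a curve. If $(q(t),v(t),p(t))$ is a solution curve of (\ref{nonholodirac}), then for every $t$ one has $(q(t),v(t)) \in \Delta$, $(q(t),p(t)) = \mathbb{F}L(q(t),v(t))$, and $(q(t),p(t),\dot q(t),\dot p(t))$ is an integral curve of $X$ satisfying (\ref{diracdiffeq}), so $(q(t),v(t))$ is a solution curve of the implicit Lagrangian system; conversely, given such a solution curve $(q(t),v(t))$ one sets $p(t) = \frac{\partial \mathcal{L}}{\partial v}(q(t),v(t))$, and then $(q(t),v(t),p(t))$ together with its derivative satisfies (\ref{nnhhooll1})--(\ref{nnhhooll4}) — the $\dot v$ component being simply the derivative of the prescribed $v(t)$, on which no constraint is imposed — and hence solves (\ref{nonholodirac}). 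I do not anticipate any real obstacle here: the entire argument is a comparison of coordinate formulas, and the only mild care needed is in keeping the kinematic identification $v = \dot q$ straight and in observing that the $\dot v$ slot carries no constraint and is recovered automatically along a solution curve.
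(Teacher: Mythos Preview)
Your proposal is correct and is exactly the kind of local-coordinate verification the paper has in mind: the paper explicitly states that the proof ``in local coordinates is easy and will be omitted,'' and your argument fills in precisely those omitted details. The only very minor imprecision is that in (ii) the membership $(q,v,p)\in M_1$ requires $v\in\Delta$ rather than $\dot q\in\Delta$, but you implicitly use $\dot q=v$ anyway, so the argument goes through.
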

\paragraph{LC circuits.}
This is a case of a constant Dirac structure of the type explained
in
section \ref{sectionconstalgforrdirac}.
Our approach is closely related and equivalent to the one described in
\cite{MR2265464}.
There are many relevant references where the structure of Kirchhoff's laws has
been studied from different points of view; some of them emphasize the geometry
behind the equations,
see \cite{MR0322904,
MR0414236,
MR1324157,
MR2115691}.

We first briefly recall the description of LC circuits given in
\cite{MR2265464}.
Let $E$ be a vector space representing \textit{the charge space}; then $TE$ is
the
\textit{current space}  and $V = T^\ast E$ is the \textit{flux linkage space.}
There
is a constant distribution $\Delta \subset
TE$ (that is, $\Delta$ is invariant under translations), which represents the
\textit{Kirchhoff's Current Law} (KCL). The subbundle
$\Delta^\circ \subseteq T^\ast E$ represents the \textit{Kirchhoff's Voltage
Law}
(KVL). One has a Dirac structure on the cotangent
bundle $T^*E$,  $D_{\Delta} \subseteq TV \oplus T^{*}V$ given by
\[
D_{\Delta}=\{(q,p,\dot{q},\dot{p})\oplus (q,p,\alpha_q,\alpha_p) \in TV
\oplus T^{*}V\mid \dot{q} \in
\Delta,\,\alpha_p=\dot{q},\,\alpha_q+\dot{p} \in \Delta^{\circ}\},
\]
 which clearly does not depend on $(q,p)$, in other words it is a constant Dirac
structure,
 so for each base point
$(q,p)$ we have
\[
D_{\Delta}(q,p)=(q,p,\widetilde{D}),
\]
where
\[
\widetilde{D}=\{(\dot{q},\dot{p},\alpha,\beta) \in V \oplus V^{*}\mid \dot{q} \in
\Delta,\,\beta=\dot{q},\,\alpha+\dot{p} \in \Delta^{o}\}.
\]
The dynamics of the system is given by a Lagrangian $\mathcal{L} \colon 
TE \rightarrow \mathbb{R}$. This Lagrangian is given by a quadratic
form on $E\times E$ representing the difference of the energies in
the inductors and the capacitors, say,
\[
\mathcal{L}(q,v) = \frac{1}{2}\sum_{i = 1}^{n} L_i v_i^2 -
\frac{1}{2}\sum_{i = 1}^{n}\frac{1}{C_i} q_i^2,
\]
where $n$ is the number of branches of the circuit. Of course, some of the terms in the previous sum may be zero, corresponding to
the absence of
an inductor or a capacitor in the corresponding branch.\\

One approach would be to use the time evolution of the circuit in terms of the Dirac differential, given by equation
(\ref{diracdiffeq}). This shows that Dirac structures provide a
unified treatment for nonholonomic systems and LC circuits. In
other words the equation is
\begin{align*}
(\mathbb{F}\mathcal{L}(q,v),\dot{q},\dot{p})\oplus
(q,p,\mathfrak{D}{\mathcal{L}}(q,v)) \in D_{\Delta}(q,p),
\end{align*}
with $(q,p)=\mathbb{F}\mathcal{L}(q,v)$
and $(q,v) \in \Delta$, where  $\mathfrak{D}{\mathcal{L}}$
represents the Dirac differential of $\mathcal{L}$.\\

However, we will follow the philosophy of this paper and choose to work on $TE\oplus T^*E$ rather than working with the Dirac differential.

\subparagraph{Applying the CAD to LC circuits.}
It should be now clear how we can apply the constraint algorithm CAD developed
in the present paper to deal with LC circuits exactly as we did
with the case of nonholonomic systems.
More precisely, it should now
be clear that an LC circuit can be described by the methods
described in the first part of the paragraph
\textit{Nonholonomic Systems} (beginning of section \ref{sectionexamples}), by taking $Q = E$, $M = TE
\oplus T^\ast E$ and defining $\bar{D}_{\Delta}$ and $\mathcal{E} $ as
indicated in that paragraph.
As we have already said, in the case of circuits the Dirac structure
$\bar{D}_{\Delta}$ is constant and therefore integrable.
However, we prefer
not to work with the system restricted to a presymplectic leaf, as explained
before,
but to apply directly the algorithm for a general Dirac structure, as we will
explain next.\\

Define the linear maps
$
\varphi\colon E \rightarrow E^*
$
and
$
\psi\colon E
\rightarrow E^*
$
by
\begin{align*}
\varphi(v)&=
\frac{\partial{\mathcal{L}}}{\partial{v}}=\left(L_1 v_1,\dots,L_n v_n\right),\\
\psi(q)&= \frac{\partial{\mathcal{L}}}{\partial{q}}=-\left(q_1/C_1,\dots,q_n/C_n\right).
\end{align*}
If there is no capacitor in a circuit branch, the corresponding component of $\psi(q)$ will be zero (infinite capacitance), while zero capacitance is ruled out since it would represent an electrically open circuit branch. The physically relevant case corresponds to $0\leq L_i<\infty$, $0<C_i\leq\infty$, for $i=1, \dots, n$. However, the constraint algorithm applies to the general case where capacitances and inductances can be negative. 
\medskip

The evolution equations (\ref{nnhhooll1})--(\ref{nnhhooll4}) for a general
nonholonomic
system become
\begin{align}\label{Cnnhhooll1}
\dot{p} - \psi(q) &\in
\Delta^\circ\\
\label{Cnnhhooll12}
\dot{q}
&=
v\\
\label{Cnnhhooll3}
p
&=
\varphi(v)\\
\label{Cnnhhooll4}
\dot{q}
&\in
\Delta.
\end{align}
Now we will apply
the CAD\@. First, note that
\[
d\mathcal{E}(q,v,p)=\psi(q) dq + (p-\varphi(v)) dv + v dp.
\]

The first constraint
submanifold is calculated as follows, taking into account the general
expression (\ref{E_D2}),
\begin{align*}
 M_1& = \{ (q,v,p) \in M \; | \;  d\mathcal{E}(q,v,p) \in 
E^\flat_{\bar{D}_{\Delta}}(q,v,p) \}
\nonumber \\
& =\{(q,v,p)\mid p=\varphi(v),\,v \in \Delta\}.
\end{align*}
Now we simply continue applying the algorithm. Let
\[
W_1=TM_1 \cap
E_{\bar{D}_{\Delta}}=\{(q,v,p,\dot{q},\dot{v},\dot{p})\mid
(q,v,p) \in M_1, \,\, \dot{q}
\in \Delta,\,\dot{p}=\varphi(\dot{v}),\,\dot{v} \in \Delta\}.
\]
Then $\bar{D}_{\Delta}^\flat(W_1)$, denoted simply as $W_1^{\flat}$, is
\[W_1^{\flat}=\{(q,v,p,\alpha,\gamma,\beta)\mid (q,v,p) \in M_1, \,\, \alpha \in
\varphi(\Delta)+\Delta^{\circ},\,\gamma=0,\,\beta \in \Delta\},
\]
and therefore
\begin{align}
\label{mdos}
M_2 & =\{(q,v,p) \in
M_1 \mid  d\mathcal{E}(q,v,p) \in W_1^\flat(q,v,p)\}
\nonumber \\
& =\{(q,v,p) \mid q \in
\psi^{-1}(\varphi(\Delta)+\Delta^{\circ}),\,p=\varphi(v),\,v \in
\Delta\}.
\end{align}
In the same way we can calculate $M_3$. In fact,
\begin{multline*}
W_2=TM_2 \cap
E_{\bar{D}_{\Delta}}
=
\{(q,v,p,\dot{q},\dot{v},\dot{p})\mid (q,v,p) \in M_2, \,\, \dot{q}
\in \Delta,\dot{q} \in
\psi^{-1}(\varphi(\Delta)+\Delta^{\circ}),\\\dot{p}=\varphi(\dot{v}),\,\dot{v}
\in \Delta\},
\end{multline*}
\[W_2^{\flat}=\{(q,v,p,\alpha,\gamma,\beta)\mid (q,v,p) \in M_2, \,\, \alpha \in
\varphi(\Delta)+\Delta^{\circ},\,\gamma=0,\,\beta \in
\Delta \cap \psi^{-1}(\varphi(\Delta)+\Delta^{\circ})\},
\]
then
\begin{align*}
M_3
&= \{ (q,v,p) \in
M_2 \; | \;   d\mathcal{E}(q,v,p) \in W_2^\flat(q,v,p) \}  \nonumber \\
&=\{(q,v,p) \mid q \in
\psi^{-1}(\varphi(\Delta)+\Delta^{\circ}),\,p=\varphi(v),\,v \in
\Delta \cap \psi^{-1}(\varphi(\Delta)+\Delta^{\circ})\}  \nonumber \\
&=\{(q,v,p) \mid q \in
\psi^{-1}(\varphi(\Delta)+\Delta^{\circ}),\,p=\varphi(v),\,v \in
\Delta_1\},
\end{align*}
where we have called
\[\Delta_1=\Delta \cap
\psi^{-1}(\varphi(\Delta)+\Delta^{\circ})\subseteq \Delta.\]
Now we shall calculate $M_4$. We will show later that $M_4=M_3$ for all LC circuits with either a positive inductance or a non-infinite positive capacitance on every branch (Lemma \ref{lemmaforDeltaoneequalDeltatwo} and Theorem \ref{thm_positive_nu}).
First we compute
\begin{align*}
W_3
&= TM_3 \cap
E_{\bar{D}_{\Delta}}\\
&=\{(q,v,p,\dot{q},\dot{v},\dot{p})\mid (q,v,p) \in M_3, \,\, \dot{q}
\in \Delta, \dot{q} \in
\psi^{-1}(\varphi(\Delta)+\Delta^{\circ}),\,\dot{p}
=
\varphi(\dot{v}),\,\dot{v} \in \Delta_1\}\\
&=\{(q,v,p,\dot{q},\dot{v},\dot{p})\mid (q,v,p) \in M_3, \,\, \dot{q} \in
\Delta_1,\,\dot{p}= \varphi(\dot{v}),\,\dot{v} \in \Delta_1\}.
\end{align*}
Thus, one finds that
\begin{equation*}
W_3^{\flat}  =\{(q,v,p,\alpha,\gamma,\beta)\mid (q,v,p) \in M_3, \,\, \alpha
\in
\varphi(\Delta_1)+\Delta^{\circ},\,\gamma=0,\,\beta \in
\Delta_1\},
\end{equation*}
and hence
\begin{align*}
M_4 & = \{ (q,v,p) \in
M_3 \mid d\mathcal{E}(q,v,p) \in W_3^\flat(q,v,p)
\}
\nonumber \\
&  =\{(q,v,p) \in M_3\mid q \in
\psi^{-1}(\varphi(\Delta_1)+\Delta^{\circ}),\,p=\varphi(v),\,v \in
\Delta_1\} \nonumber \\
& =\{(q,v,p) \mid q \in
\psi^{-1}(\varphi(\Delta_1)+\Delta^{\circ}),\,p=\varphi(v),\,v \in
\Delta_1\}.
\end{align*}

Similarly we calculate $M_5$ as follows.
\begin{align*}
W_4 & =TM_4 \cap
E_{\bar{D}_{\Delta}} \\
& =\{(q,v,p,\dot{q},\dot{v},\dot{p})\mid (q,v,p) \in M_4, \,\, \dot{q}
\in \Delta, \dot{q} \in
\psi^{-1}(\varphi(\Delta_1)+\Delta^{\circ}),\,\dot{p} =
\varphi(\dot{v}),\,\dot{v} \in \Delta_1\}\\
& =\{(q,v,p,\dot{q},\dot{v},\dot{p})\mid (q,v,p) \in M_4, \,\, \dot{q} \in
\Delta \cap
\psi^{-1}(\varphi(\Delta_1)+\Delta^{\circ}),\,\dot{p}=
\varphi(\dot{v}),\,\dot{v} \in \Delta_1\}
\end{align*}
Let $\Delta_2=\Delta \cap
\psi^{-1}(\varphi(\Delta_1)+\Delta^{\circ})$; then $\Delta_2
\subseteq \Delta_1$ and we can rewrite $W_4$ as
\[
W_4=\{(q,v,p,\dot{q},\dot{v},\dot{p})\mid (q,v,p) \in M_4, \,\,\dot{q} \in
\Delta_2,\,\dot{p}=\varphi(\dot{v}),\,\dot{v} \in \Delta_1\}.
\]
Then we have
\begin{align*}
 W_4^{\flat} & =\{(q,v,p,\alpha,\gamma,\beta)\mid (q,v,p) \in M_4, \,\,  \alpha
\in
\varphi(\Delta_1)+\Delta^{\circ},\,\gamma=0,\,\beta \in
\Delta_2\}
\end{align*}
\begin{align*}
M_5 & = \{ (q,v,p) \in
M_4 \; | \;  d\mathcal{E}(q,v,p)\in W_4^\flat(q,v,p) \} \nonumber \\
&  =\{(q,v,p) \in M_4\mid q \in
\psi^{-1}(\varphi(\Delta_1)+\Delta^{\circ}),\,p=\varphi(v),\,v \in
\Delta_2\} \nonumber \\
& =\{(q,v,p) \mid q \in
\psi^{-1}(\varphi(\Delta_1)+\Delta^{\circ}),\,p=\varphi(v),\,v \in
\Delta_2\}.
\end{align*}
{}From this, one sees how to recursively  define $M_k$.
For all $k \geq 1$ define
\begin{equation*}
\Delta_k=\Delta \cap
\psi^{-1}(\varphi(\Delta_{k-1})+\Delta^{\circ}),
\end{equation*}
where $\Delta_0=\Delta$ by definition.
We have the following expressions for the constraint submanifolds
$M_k$.
\begin{align*}
M_1&=\{(q,v,p) \mid p=\varphi(v), v \in \Delta\}\\
 M_{2k-1} &=
\{(q,v,p) \mid \,q \in
\psi^{-1}(\varphi(\Delta_{k-2})+\Delta^{\circ}),\,p=\varphi (v),\,v
\in \Delta_{k-1}\},\,k \geq 2\\
M_{2k} &= \{(q,v,p) \mid \,q \in
\psi^{-1}(\varphi(\Delta_{k-1})+\Delta^{\circ}),\,p=\varphi (v),\,v \in
\Delta_{k-1}\},\,k \geq 1.
\end{align*}
In order to solve the system we may apply the method to solve the general
equation (\ref{gotaygen122}) explained before.
It is clear that the parametrization
$x = x(z_1,\dots,z_r)$, where $r$ is the dimension of the final constraint
$M_c$,
can be chosen to be a linear map and the IDE (\ref{gotaygen122}) will be a
linear system.\\

This system is of course still an IDE. One of the purposes of this paper is to write Hamilton's equations of motion, using an extension of the Dirac procedure extended to primary foliated constraint submanifolds, developed in section \ref{sectionanextensionofetc}. For this purpose, we will need to consider $M = TE\oplus T^\ast E$ as being the primary foliated constraint embedded naturally in the symplectic manifold $T^\ast TE$.

\subparagraph{Physical interpretation of the constraint equations for LC circuits.} The equations defining the constraint submanifolds have an interesting interpretation in circuit theory terms. From now on we will assume the physically meaningful situation where $0\leq L_i < \infty$, $0<C_i\leq \infty$, $i=1, \dots, n$. We will show that in this case, the algorithm stops either at $M_1$ or at $M_3$.

A circuit has an underlying directed graph, since each branch has a direction in which the current flow will be regarded as positive. A \emph{loop} is a closed sequence of different adjacent branches. It also has a direction, which does not have to be compatible with the directions of the branches involved. The set of loop currents is a set of generators for the KCL subspace $\Delta\subseteq E$.

We shall call a circuit branch \emph{inductive} (resp.\ \emph{capacitive}) if there is an inductor (resp.\ capacitor) present on that branch. A loop will be called
\emph{inductive} (resp.\ \emph{capacitive}) if it has an inductor (resp.\ capacitor) on at least one of its branches.
A branch or loop that is capacitive but not inductive will be called \emph{purely capacitive}. A branch or loop that is neither inductive nor capacitive will be called \emph{empty}, since no capacitors or inductors are present. 
Thus, a non-inductive branch or loop can be either empty or purely capacitive,
and a purely capacitive loop must have at least one capacitor, no inductors, and possibly some empty branches.

The first one of the equations defining $M_1$ is $p=\varphi(v)$, which is $p_i=L_iv_i$, $i=1, \dots, n$. The quantity $L_i \dot v_i$ is the voltage corresponding to an inductor through which the current is $v_i$. Then $p_i$ can be interpreted as a time integral of the voltage on branch $i$ due to the inductor in that branch (the \emph{flux linkage} of the inductor). The second equation, $v\in \Delta$, is just KCL for the currents $v$.

The next constraint submanifold $M_2$ incorporates the equation $\psi(q) \in
\varphi(\Delta)+\Delta^{\circ}$, which represents the KVL equations for the purely capacitive loops, as explained in the next theorem.
\begin{theorem}\label{thm_KVLPC}
The subspace $\varphi(\Delta)+\Delta^{\circ}\subseteq E^*$ represents Kirchhoff's Voltage Law for non-inductive loops (KVLNI for short). That is, its elements are precisely those branch voltage assignments on the circuit that satisfy the subset of KVL equations corresponding to the non-inductive loops. The equation $\psi(q) \in \varphi(\Delta)+\Delta^{\circ}$ represents the condition that on every purely capacitive loop, the branch voltages $q_i/C_i$ of the corresponding capacitors satisfy KVL. We call these the KVLPC equations (KVL for purely capacitive loops), and they form a subset of the KVLNI equations.
\end{theorem}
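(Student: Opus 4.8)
The plan is to work with the underlying directed graph of the circuit and identify the three subspaces $\Delta$, $\Delta^\circ$ and $\varphi(\Delta)$ in circuit-theoretic terms, then assemble the claimed description of $\varphi(\Delta)+\Delta^\circ$. First I would recall that $\Delta \subseteq E$ is spanned by the loop currents, so by the standard duality between cycle space and cut space of a graph, its annihilator $\Delta^\circ \subseteq E^*$ is exactly the space of branch voltage assignments that satisfy all the KVL loop equations; equivalently, $\Delta^\circ$ consists of those $u\in E^*$ that are ``exact'', i.e.\ of the form $u_i = V_{\text{head}(i)} - V_{\text{tail}(i)}$ for some node potentials. So an arbitrary element of $\Delta^\circ$ is a branch-voltage vector respecting \emph{all} of KVL.

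Next I would analyze $\varphi(\Delta)$. Since $\varphi(v) = (L_1 v_1,\dots,L_n v_n)$ and $v$ ranges over $\Delta$ (the loop-current space), $\varphi(\Delta)$ is the image of the cycle space under the diagonal map with entries $L_i$. The key observation is that if branch $i$ is non-inductive then $L_i = 0$, so every element of $\varphi(\Delta)$ has vanishing $i$-th component on all non-inductive branches. Conversely, one must check that $\varphi(\Delta)$ is \emph{all} of the subspace of $E^*$ supported on inductive branches and compatible with the loop structure — more precisely, that for a branch-voltage assignment $u$ supported on inductive branches, membership in $\varphi(\Delta)+\Delta^\circ$ imposes no constraint. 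I would argue this by decomposing: given any $u\in E^*$, write the KVL equations loop by loop; adding an element of $\Delta^\circ$ changes nothing about which KVL loop equations $u$ satisfies (since $\Delta^\circ$ satisfies all of them), and adding an element of $\varphi(\Delta)$ — which is supported on inductive branches — can only adjust the contributions of inductive branches to a loop sum. Hence $u \in \varphi(\Delta)+\Delta^\circ$ if and only if, for every loop, the loop sum of $u$ restricted to the non-inductive branches of that loop equals the loop sum of a suitable element of $\Delta^\circ$ restricted to the same branches; since $\Delta^\circ$ can realize any exact assignment, this holds precisely when the non-inductive-branch restriction of $u$ is itself consistent with KVL on every loop that is entirely non-inductive (the inductive branches of a mixed loop absorb any discrepancy via $\varphi(\Delta)$). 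This is exactly the statement that $u$ satisfies the KVL equations for the non-inductive loops, i.e.\ $u \in \mathrm{KVLNI}$.

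Having established $\varphi(\Delta)+\Delta^\circ = \mathrm{KVLNI}$, the statement about $\psi(q)$ is immediate: $\psi(q) = -(q_1/C_1,\dots,q_n/C_n)$ assigns to each branch the capacitor voltage $-q_i/C_i$ (with the convention that $1/C_i = 0$ for an inductive branch without a capacitor), so $\psi(q)\in\varphi(\Delta)+\Delta^\circ$ says precisely that these capacitor voltages satisfy KVL on every non-inductive loop. Finally, I would note that a non-inductive loop is either purely capacitive or empty; on an empty loop the $\psi(q)$-voltages all vanish and the KVL equation is vacuous, so the content of the condition is carried entirely by the purely capacitive loops, giving the KVLPC equations, which are a subset of KVLNI by construction.

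The main obstacle I anticipate is the ``conversely'' direction in the second paragraph: showing that $\varphi(\Delta)$ is large enough that, modulo $\Delta^\circ$, it does not over-constrain the inductive branches — in other words that a mixed (part-inductive, part-non-inductive) loop really does impose no new condition beyond the non-inductive loops. This requires a careful graph-theoretic argument, most cleanly done by choosing a basis of the cycle space adapted to a spanning tree and tracking which fundamental loops are non-inductive; the rank computation there is the crux, and everything else is linear-algebra bookkeeping with the diagonal matrices $\varphi$ and $\psi$.
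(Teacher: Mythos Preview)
Your identification of the crux---the converse inclusion $\mathrm{KVLNI}\subseteq\varphi(\Delta)+\Delta^\circ$---is correct, but the spanning-tree/rank argument you propose cannot work as described because it makes no use of the nonnegativity of the inductances, and that hypothesis is essential. Concretely, take two inductive branches in parallel with $L_1=1$, $L_2=-1$, oriented so that $\Delta=\operatorname{span}\{(1,1)\}$. Then $\Delta^\circ=\operatorname{span}\{(1,-1)\}$ and $\varphi(\Delta)=\operatorname{span}\{(1,-1)\}$, so $\varphi(\Delta)+\Delta^\circ$ is one-dimensional. But there are no non-inductive loops at all, so KVLNI imposes no equations and the KVLNI subspace is all of $E^\ast$; the claimed equality fails. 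In particular your heuristic ``the inductive branches of a mixed loop absorb any discrepancy via $\varphi(\Delta)$'' breaks down: $\varphi(\Delta)$ is not the full space of covectors supported on inductive branches, only the image of the cycle space under the diagonal matrix $\operatorname{diag}(L_i)$, and without a sign condition this image can be too small.

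The paper closes this gap by passing to annihilators and using a quadratic-form trick. One computes $(\varphi(\Delta)+\Delta^\circ)^\circ=(\varphi(\Delta))^\circ\cap\Delta$; any $\rho$ in this space is an element of $\Delta$ annihilating $\varphi(\Delta)$, so in particular $\rho(\varphi(\rho))=\sum_i L_i(\rho^i)^2=0$. With $L_i\geq 0$ this forces $\rho^i=0$ on every inductive branch, hence $\rho$ lies in the KCL space of the non-inductive subcircuit, i.e.\ $\rho$ is a combination of non-inductive loop currents. That shows $(\varphi(\Delta)+\Delta^\circ)^\circ=\Delta_{\mathrm{NI}}$, and taking annihilators back gives exactly the KVLNI description. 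This argument is short, uses positivity in one line, and avoids any choice of spanning tree; you should replace your proposed graph-theoretic rank computation with it.
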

\begin{proof}
Let $\{\bar e_1, \dots, \bar e_n\}$ be the basis of $E$ where $\bar e_i$ is associated to branch $i$ of the circuit, and let $\{\underline e^1, \dots, \underline e^n\}$ be its dual basis.
Recall that the branch voltage assignments on the circuit are elements of $E^*$, and $\Delta^\circ$ represents Kirchhoff's Voltage Law in the sense that each element of $\Delta^\circ$ is a branch voltage assignment satisfying KVL. The KVL equations are linear equations on the branch voltages and are therefore represented by elements of $E$. In fact, they are precisely the elements of $(\Delta^\circ)^\circ=\Delta$.

Consider a maximal set of independent, non-inductive oriented loops on the circuit, labeled $1, \dots, m$. Each loop gives rise to a linear equation on the corresponding branch voltages according to KVL. These KVL equations are represented by $\eta_1, \dots, \eta_m\in E$, and are a subset of the full KVL equations for the circuit. That is, these linear equations hold on any branch voltage assignment that is compatible with KVL, and therefore $\eta_1, \dots, \eta_m\in \Delta$. Note that each $\eta_j=\eta_j^i \bar e_i$ does not involve any inductive branches; that is, if $L_i\neq 0$ then $\eta_j^i=\eta_j(\underline e^i)=0$. Also, the KVLNI equations $\eta_1, \dots, \eta_m$ can be seen as the KVL equations for the circuit that is obtained by removing the inductive branches from the given circuit. Let us call this \emph{the non-inductive subcircuit}, whose charge space $E_\text{NI}$ is a subspace of $E$ in a natural way. Also, $E_\text{NI}^*\subseteq E^*$ is generated by $\{\underline e^i\}$ where $i$ ranges over all non-inductive branches.

Let us denote $\Delta_\text{NI}\subseteq E_\text{NI}$ the KCL distribution for the non-inductive subcircuit. Note that $\Delta_\text{NI}\subseteq \Delta$, since any KCL-compatible current assignment on the non-inductive subcircuit corresponds to a KCL-compatible current assignment on the original circuit for which no current flows through the inductive branches. Reciprocally, elements of $\Delta$ that are zero on the inductive branches can be regarded as elements of $\Delta_\text{NI}$. Note that $\eta_1, \dots, \eta_m$ are loop currents for the non-inductive subcircuit, and they form a basis of $\Delta_\text{NI}$.

For any $v=v^i\bar e_i\in E$, $\varphi(v)=\sum L_iv^i\underline e^i$. Then for each $j=1, \dots,m$, $\eta_j(\varphi(v))=\sum L_iv^i\eta_j(\underline e^i)=0$. This is true in particular for $v\in\Delta$, so $\eta_j\in (\varphi(\Delta))^\circ$. Therefore $\eta_j\in(\varphi(\Delta))^\circ \cap \Delta= (\varphi(\Delta)+\Delta^\circ)^\circ$. In other words, each $\eta_1, \dots, \eta_m$, seen as a linear equation on $E^*$, holds on $\varphi(\Delta)+\Delta^\circ\subseteq E^*$.

Let us see that $\varphi(\Delta)+\Delta^\circ$ is precisely the vector subspace of $E^*$ defined by the equations $\eta_1, \dots, \eta_m$. Let $\rho=\rho^i\bar e_i\in(\varphi(\Delta)+\Delta^\circ)^\circ=(\varphi(\Delta))^\circ \cap \Delta$ be another equation that holds on  $\varphi(\Delta)+\Delta^\circ$. Then $0=\rho(\varphi(\rho))=\sum L_i(\rho^i)^2$. Since all inductances are nonnegative, this implies that the components of $\rho$ corresponding to the inductive branches must be zero. Also, $\rho\in\Delta$, therefore $\rho\in\Delta_\text{NI}$. This means that $\rho$ is a linear combination of $\eta_1, \dots, \eta_m$. 

The equation $\psi(q) \in \varphi(\Delta)+\Delta^{\circ}$ means that $\eta_j(\psi(q))=0$ for $j=1, \dots, m$. Only those $\eta_j$ corresponding to purely capacitive loops give a condition on $q$. The remaining $\eta_k$ correspond to empty loops, so  $\eta_k(\psi(q))=0$ amounts to the equation $0=0$.
\end{proof}

The third constraint submanifold $M_3$ is obtained from $M_2$ by incorporating the equation $\psi(v) \in
\varphi(\Delta)+\Delta^{\circ}$, which means that the currents on the branches of the purely capacitive loops satisfy the same KVLPC equations as the charges. It is clear that it must hold when we consider the dynamics, which includes the equation $\dot q=v$.

Suppose that there exists at least one purely capacitive loop so that the algorithm does not stop at $M_1$. Then it is possible to show that $M_3\subsetneq M_2$. In order to do that, consider a purely capacitive loop, which by definition must involve at least one capacitor, and recall that all the capacitances are positive. The corresponding KVLPC equation is represented by $\eta\in E$, whose components are $\pm 1$ or $0$. As in the proof of the previous theorem, $\eta\in(\Delta^\circ)^\circ=\Delta$, so it can be interpreted as a nonzero loop current, that is, a current that is $\pm 1$ on all the branches of that loop (depending on the relative orientation of the branches with respect to the loop) and zero on the remaining branches. Note that $\psi(\eta)\in E^*$ does not satisfy KVLPC, since in particular $\eta(\psi(\eta))=\sum (\eta^i)^2/C_i=\sum 1/C_i> 0$, where the last sum is over the branches in the chosen loop. Therefore $\Delta \not\subseteq \psi^{-1}(\varphi(\Delta)+\Delta^{\circ})$ and  $M_3\subsetneq M_2$.\\

Now we will prove the interesting fact that since inductances are greater than or equal to zero and capacitances are positive numbers or $+ \infty$ the algorithm always stops at $M_3$ (or earlier). 
We shall start by proving the following lemma.
\begin{lemma}\label{lemmaforDeltaoneequalDeltatwo}
\textbf{(a)}
$\delta_1 \in \psi^{-1}\left( \varphi (\Delta)  + \Delta ^\circ\right)$ is equivalent to the condition that there exists $\delta_0 \in \Delta$ such that $\psi (\delta_1) - \varphi (\delta_0) \in  \Delta ^\circ$.\\

\textbf{(b)}
The condition $\Delta_1 =  \Delta_2$ is equivalent to the following condition:\\

\textbf{Condition} $(*)$

For any $\delta \in \Delta$ there exist $\delta_0, \delta_1 \in  \Delta$
such that the following hold:
\begin{align*}
 \varphi \left( \delta \right) - \varphi \left( \delta_1 \right) 
&\in  \Delta ^\circ\\
 \psi \left( \delta_1 \right) - \varphi \left( \delta_0 \right) 
&\in  \Delta ^\circ.
\end{align*}

Equivalently, we can say that for each
 $\delta \in \Delta$ there exists  $\delta_1 \in  \Delta_1$ such that  
$\varphi \left( \delta \right) - \varphi \left( \delta_1 \right) 
\in  \Delta ^\circ$.\\

\textbf{(c)}
Condition $(*)$ implies that if
$M_4 = M_5$ then $M_3 = M_4$.
\end{lemma}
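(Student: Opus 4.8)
The plan is to dispatch the three parts in order, reducing each to elementary subspace arithmetic in $E$ and $E^\ast$ together with the symmetry of the diagonal maps $\varphi,\psi$, and, at exactly one point, the sign condition on the capacitances. Part (a) is pure unwinding of definitions: $\delta_1\in\psi^{-1}(\varphi(\Delta)+\Delta^\circ)$ means $\psi(\delta_1)\in\varphi(\Delta)+\Delta^\circ$, that is, $\psi(\delta_1)=\varphi(\delta_0)+\eta$ for some $\delta_0\in\Delta$ and $\eta\in\Delta^\circ$, which is the asserted condition; the same computation with $\Delta_1$ in the role of $\Delta$ inside $\varphi(\Delta)$ will be used repeatedly. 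For part (b), note first that $\Delta_2\subseteq\Delta_1$ always, since $\Delta_1\subseteq\Delta$ gives $\varphi(\Delta_1)+\Delta^\circ\subseteq\varphi(\Delta)+\Delta^\circ$ and hence $\psi^{-1}(\varphi(\Delta_1)+\Delta^\circ)\cap\Delta\subseteq\psi^{-1}(\varphi(\Delta)+\Delta^\circ)\cap\Delta$; so $\Delta_1=\Delta_2$ is the same as $\Delta_1\subseteq\Delta_2$, which by (a) says that every $\delta_1\in\Delta_1$ satisfies $\psi(\delta_1)\in\varphi(\Delta_1)+\Delta^\circ$, i.e.\ $\psi(\Delta_1)\subseteq\varphi(\Delta_1)+\Delta^\circ$. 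Using (a) once more, Condition $(*)$ is equivalent to $\varphi(\Delta)\subseteq\varphi(\Delta_1)+\Delta^\circ$. Thus (b) amounts to the equivalence of $\psi(\Delta_1)\subseteq\varphi(\Delta_1)+\Delta^\circ$ with $\varphi(\Delta)\subseteq\varphi(\Delta_1)+\Delta^\circ$.

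One implication is immediate: if $\varphi(\Delta)\subseteq\varphi(\Delta_1)+\Delta^\circ$ then, since $\psi(\Delta_1)\subseteq\varphi(\Delta)+\Delta^\circ$ by the definition of $\Delta_1$, also $\psi(\Delta_1)\subseteq\varphi(\Delta_1)+\Delta^\circ$. The converse is the heart of the lemma and the step I expect to require real care. The plan is to descend to the symmetric bilinear forms $\bar\varphi,\bar\psi\colon\Delta\to\Delta^\ast$ induced by $\varphi,\psi$ (restrict to $\Delta$, reduce modulo $\Delta^\circ$). Then $\Delta_1=\bar\psi^{-1}(\bar\varphi(\Delta))$, and by symmetry of $\bar\varphi$ one has $\bar\varphi(\Delta)=(\ker\bar\varphi)^\circ$, so by symmetry of $\bar\psi$ this rewrites as $\Delta_1=(\bar\psi(\ker\bar\varphi))^\circ$, hence $\Delta_1^\circ=\bar\psi(\ker\bar\varphi)$; and the target statement $\varphi(\Delta)\subseteq\varphi(\Delta_1)+\Delta^\circ$ becomes $\Delta_1+\ker\bar\varphi=\Delta$, equivalently $\bar\psi(\ker\bar\varphi)\cap\bar\varphi(\Delta)=\{0\}$. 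Suppose this failed; choose $0\neq u=\bar\psi(k)$ with $k\in\ker\bar\varphi$ and $u\in\bar\varphi(\Delta)$. Since $\bar\varphi(\Delta)=(\ker\bar\varphi)^\circ$, the relation $\bar\psi(k)=u\in(\ker\bar\varphi)^\circ$ forces $k\in(\bar\psi(\ker\bar\varphi))^\circ=\Delta_1$, so by the hypothesis $\bar\psi(k)\in\bar\varphi(\Delta_1)\subseteq\bar\varphi(\Delta)$; but $k\in\ker\bar\varphi$ annihilates $\bar\varphi(\Delta)=(\ker\bar\varphi)^\circ$, so $\langle\bar\psi(k),k\rangle=0$, and the semidefiniteness of $\bar\psi$---this is exactly where the physical hypothesis $0<C_i\le\infty$, i.e.\ $-\psi\ge 0$, is used, in the manner of the proof of Theorem~\ref{thm_KVLPC}---forces $\bar\psi(k)=0$, i.e.\ $u=0$, a contradiction.

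For part (c), assume $(*)$; by the reformulation above it gives $\varphi(\Delta)\subseteq\varphi(\Delta_1)+\Delta^\circ$, while $\Delta_1\subseteq\Delta$ gives the reverse inclusion, so $\varphi(\Delta)+\Delta^\circ=\varphi(\Delta_1)+\Delta^\circ$ and therefore $\psi^{-1}(\varphi(\Delta)+\Delta^\circ)=\psi^{-1}(\varphi(\Delta_1)+\Delta^\circ)$. Comparing the explicit descriptions of $M_3$ and $M_4$, which agree except that $M_3$ imposes $q\in\psi^{-1}(\varphi(\Delta)+\Delta^\circ)$ and $M_4$ imposes $q\in\psi^{-1}(\varphi(\Delta_1)+\Delta^\circ)$, one gets $M_3=M_4$; the hypothesis $M_4=M_5$ is harmless but actually superfluous here, and in fact $M_4=M_5$ is itself equivalent to $\Delta_1=\Delta_2$ and hence, by (b), to $(*)$.

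In short, the one genuinely non-formal ingredient is the forward implication of (b); its difficulty is concentrated entirely in recognizing that the statement must be pushed down to the forms $\bar\varphi,\bar\psi$ on $\Delta$---so that $\bar\varphi(\Delta)$ and $\ker\bar\varphi$ become mutual annihilators---and in the sign condition on $\psi$. Everything else is routine bookkeeping with annihilators.
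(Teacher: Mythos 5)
Your part (a), your proof of the implication $(*)\Rightarrow\Delta_1=\Delta_2$ (via $\psi(\Delta_1)\subseteq\varphi(\Delta)+\Delta^\circ\subseteq\varphi(\Delta_1)+\Delta^\circ$, together with the standing inclusion $\Delta_2\subseteq\Delta_1$), and your part (c) coincide in substance with the paper's own proof; in particular the paper's argument for (c) also never invokes $M_4=M_5$, and your side remark that $M_4=M_5$ is equivalent to $\Delta_1=\Delta_2$ is correct. The divergence is the converse implication of (b), $\Delta_1=\Delta_2\Rightarrow(*)$: the paper dismisses it with ``the converse can easily be verified'', whereas you prove it by importing the positivity $0<C_i\le\infty$. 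Two remarks on that step. First, positivity is not among the hypotheses of the lemma as written (it enters the paper only in Theorem \ref{thm_positive_nu}), so strictly you have proved a different statement. Second, your Cauchy--Schwarz argument never actually uses the hypothesis $\Delta_1=\Delta_2$: from $k\in\ker\bar\varphi$ and $\bar\psi(k)\in\bar\varphi(\Delta)=(\ker\bar\varphi)^\circ$ you already get $\langle\bar\psi(k),k\rangle=0$, and semidefiniteness of $\bar\psi$ alone forces $\bar\psi(k)=0$. So what you have really established is that $0<C_i\le\infty$ implies condition $(*)$ outright --- a clean strengthening of Theorem \ref{thm_positive_nu}, since it needs neither positive inductances nor the invertibility of $\nu=\psi+\varphi$ and so covers empty branches --- and under that hypothesis the equivalence in (b) is vacuous because both sides always hold.

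That said, your instinct that this direction is not formal is sound, and the real defect lies in the statement (and the paper's ``easily verified'') rather than in your use of signs: with arbitrary signs the converse is false. Take four parallel branches between two nodes, so $\Delta=\{v: v_1+v_2+v_3+v_4=0\}$, with $\varphi=\operatorname{diag}(1,-1,0,0)$ and $\psi=\operatorname{diag}(0,0,1,-1)$ (i.e.\ $L=(1,-1,0,0)$, $C=(\infty,\infty,-1,1)$), and let $e^1,\dots,e^4$ be the dual basis. Then $\varphi(\Delta)+\Delta^\circ=\{x: x_3=x_4\}$, so $\Delta_1=\{v\in\Delta: v_3+v_4=0\}$; next $\varphi(\Delta_1)+\Delta^\circ=\operatorname{span}(e^1+e^2,\,e^3+e^4)$, whose $\psi$-preimage is again $\{v_3+v_4=0\}$, so $\Delta_2=\Delta_1$; yet $\varphi(\Delta)=\operatorname{span}(e^1,e^2)\not\subseteq\varphi(\Delta_1)+\Delta^\circ$, so $(*)$ fails. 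Hence no proof of the converse can avoid a positivity input; if one reads the section's standing assumption $0\le L_i<\infty$, $0<C_i\le\infty$ into the lemma, your argument is correct but should be presented as what it is, namely a direct proof that $(*)$ always holds, from which (b) and (c) are immediate. Since only the implication $(*)\Rightarrow\Delta_1=\Delta_2$ and part (c) are used later in the paper, the issue does not propagate, but as a proof of Lemma \ref{lemmaforDeltaoneequalDeltatwo} as stated, your treatment of the converse is the one genuine mismatch with the claim.
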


\begin{proof}
The proof of \textbf{(a)} is immediate taking into account the definitions. To prove \textbf{(b)}
take any given $\delta_1 \in  \Delta_1$, then it satisfies the condition
$\psi \left( \delta_1 \right) - \varphi \left( \delta_0 \right) \in \Delta^\circ$, where $\delta_0 \in \Delta$. 
By condition $(*)$ there exists $\delta^\prime_1 \in  \Delta_1$ such that $\varphi \left( \delta_0 \right) - \varphi \left( \delta^\prime_1 \right) \in  \Delta^\circ$, and we can conclude that $\psi \left( \delta_1 \right) - \varphi \left( \delta^\prime_1 \right) 
\in  \Delta ^\circ$, which shows that $\delta_1 \in \Delta_2$. 
The converse can be easily verified.

Now we shall prove \textbf{(c)}. Since the inclusion $M_4 \subseteq M_3$ is immediate by construction we will only prove the converse. Let $(q,v,p) \in M_3$ then, in particular, there exists $\delta \in \Delta$ such that
$\psi(q) - \varphi(\delta) \in \Delta^\circ$. By condition $(*)$ there exists $\delta_1 \in \Delta_1$ such that
$\varphi(\delta_1) - \varphi(\delta) \in \Delta^\circ,$ which implies that $\psi(q) - \varphi(\delta_1) \in \Delta^\circ$,
that is, $q \in \psi^{-1}\left( \varphi(\Delta_1) + \Delta^\circ)\right)$, from which we can deduce that
$M_3 \subseteq M_4$.
\end{proof}
The following theorem gives a sufficient condition under which condition $(*)$ in the previous lemma holds, which considers the standard physically meaningful case. Also, we assume for simplicity that there are no empty branches. We should mention that it can be proven that even with empty branches, the algorithm stops at $M_1$ or $M_3$.

\begin{theorem}\label{thm_positive_nu}
Condition $(*)$ in Lemma \ref{lemmaforDeltaoneequalDeltatwo} holds if in every branch there is either a positive inductance or a non-infinite positive capacitance.
\end{theorem}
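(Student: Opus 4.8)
The plan is to verify Condition $(*)$ directly, using the hypothesis to control the behaviour of $\varphi$ and $\psi$ on $\Delta$. Recall that $\varphi(v)=(L_1v_1,\dots,L_nv_n)$ and $\psi(q)=-(q_1/C_1,\dots,q_n/C_n)$, where by hypothesis, for each branch $i$, either $L_i>0$, or $L_i=0$ and $0<C_i<\infty$. Partition the index set $\{1,\dots,n\}$ into $I_L=\{i: L_i>0\}$ and $I_C=\{i: L_i=0\}$; on $I_C$ the capacitances are finite and positive. First I would record the elementary observation that for $\delta\in\Delta$ the expression $\varphi(\delta)-\varphi(\delta_1)$ lies in $\Delta^\circ$ if and only if $\langle\varphi(\delta)-\varphi(\delta_1),\eta\rangle=0$ for all $\eta\in\Delta$, i.e.\ $\sum_i L_i(\delta_i-(\delta_1)_i)\eta_i=0$ for all $\eta\in\Delta$. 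Thus $\varphi(\delta)-\varphi(\delta_1)\in\Delta^\circ$ exactly says that $\varphi(\delta)$ and $\varphi(\delta_1)$ have the same $\Delta$-orthogonal-projection in the pairing weighted by the $L_i$; equivalently, writing $P$ for the $L$-weighted orthogonal projection of $E^*$ onto (the image under $\varphi$ of) $\Delta$, we need $\delta_1\in\Delta_1$ with $\varphi(\delta_1)$ congruent to $\varphi(\delta)$ modulo $\Delta^\circ$.

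The key step is to produce, for an arbitrary $\delta\in\Delta$, an element $\delta_1\in\Delta$ with (i) $\varphi(\delta)-\varphi(\delta_1)\in\Delta^\circ$ and (ii) $\delta_1\in\Delta_1=\Delta\cap\psi^{-1}(\varphi(\Delta)+\Delta^\circ)$, i.e.\ there is $\delta_0\in\Delta$ with $\psi(\delta_1)-\varphi(\delta_0)\in\Delta^\circ$. The natural candidate is to take $\delta_1$ to agree with $\delta$ on the inductive branches $I_L$ and to adjust it freely on $I_C$: since on $I_L$ we have $\delta_1=\delta$, condition (i) reduces to $\sum_{i\in I_C}L_i(\delta_i-(\delta_1)_i)\eta_i=0$, which is automatic because $L_i=0$ for $i\in I_C$. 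So (i) holds for \emph{any} $\delta_1\in\Delta$ agreeing with $\delta$ on $I_L$. It remains to choose such a $\delta_1$ also satisfying (ii). For this I would use that on the purely-capacitive part the map $\psi$ restricted to $I_C$ is an isomorphism (all $C_i$ finite and positive), so the KVLPC constraint $\psi(\delta_1)\in\varphi(\Delta)+\Delta^\circ$ becomes a solvable linear condition on the $I_C$-components of $\delta_1$; concretely, one can take $\delta_0=0$ and solve $\psi(\delta_1)\in\Delta^\circ$, i.e.\ $\langle\psi(\delta_1),\eta\rangle=\sum_{i\in I_C}(\delta_1)_i\eta_i/C_i=0$ for all $\eta\in\Delta$. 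Because the $I_L$-components of $\delta_1$ are already pinned to those of $\delta$ and because (as shown in the discussion preceding Theorem \ref{thm_KVLPC}, using $\eta(\psi(\eta))=\sum 1/C_i>0$) the $C$-weighted form is nondegenerate on the purely-capacitive loop space, the affine system for the remaining components is consistent; I would phrase the solvability as an orthogonal-complement argument with respect to the positive-definite $C^{-1}$-weighted inner product on $E_{\mathrm{NI}}$, projecting $-\psi$ of the fixed inductive part onto $\Delta\cap E_{\mathrm{NI}}$.

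Having produced $\delta_1\in\Delta_1$ with $\varphi(\delta)-\varphi(\delta_1)\in\Delta^\circ$, Condition $(*)$ is verified, and the theorem follows. I expect the main obstacle to be the bookkeeping that shows the linear system for the $I_C$-components of $\delta_1$ is actually solvable while the $I_L$-components stay frozen: one has to check that restricting to $\{\delta_1\in\Delta:\delta_1|_{I_L}=\delta|_{I_L}\}$ still leaves enough freedom, i.e.\ that the fibre of the projection $\Delta\to\mathbb{R}^{I_L}$ over $\delta|_{I_L}$ is mapped by $\psi$ onto a set meeting $\varphi(\Delta)+\Delta^\circ$. The clean way to handle this is to pass to the non-inductive subcircuit $E_{\mathrm{NI}}$ of Theorem \ref{thm_KVLPC}: the difference $\delta-\delta_1$ must lie in $\Delta\cap E_{\mathrm{NI}}$, and within that subspace the $C^{-1}$-weighted inner product is genuinely positive definite (every $C_i$ finite), so the required element is simply an orthogonal projection and exists uniquely; pulling this back gives $\delta_1$, and the ``Equivalently'' reformulation in Lemma \ref{lemmaforDeltaoneequalDeltatwo}(b) then yields $\Delta_1=\Delta_2$, hence by part (c) the algorithm stops at $M_3$ or earlier.
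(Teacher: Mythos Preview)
Your argument is correct, with one caveat to tidy up: the intermediate suggestion to take $\delta_0=0$ and demand $\psi(\delta_1)\in\Delta^\circ$ is too strong in general. If an inductive branch $i\in I_L$ also carries a finite capacitor, then $\psi(\delta_1)_i=-\delta_i/C_i$ is already pinned by the choice $(\delta_1)|_{I_L}=\delta|_{I_L}$ and is typically nonzero, so $\psi(\delta_1)$ cannot annihilate all of $\Delta$ (and your displayed sum over $I_C$ then omits nonzero $I_L$-terms). What you actually need is only $\psi(\delta_1)\in\varphi(\Delta)+\Delta^\circ$, which by Theorem~\ref{thm_KVLPC} means testing against $\Delta_{\mathrm{NI}}$ rather than all of $\Delta$; for such $\eta$ the $I_L$-terms do vanish and your formula is correct. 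Your final paragraph carries this out cleanly: with $\rho\in\Delta_{\mathrm{NI}}$ the $C^{-1}$-orthogonal projection of $\delta|_{I_C}$ onto $\Delta_{\mathrm{NI}}$, the element $\delta_1=\delta-\rho$ lies in $\Delta_1$ and satisfies $\varphi(\delta_1)=\varphi(\delta)$ exactly.

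The paper reaches the same conclusion via a block-matrix argument: it decomposes $\Delta=\Delta^{(1)}\oplus\Delta^{(2)}$ with $\Delta^{(2)}=\ker(P_\Delta\circ\varphi|_\Delta)$, introduces the auxiliary map $\nu=\psi+\varphi$, and solves a $2\times 2$ block system using invertibility of $\varphi^{(1,1)}$ and $\nu^{(2,2)}$. Since $L_i\ge 0$ forces $\Delta^{(2)}=\Delta\cap E_{\mathrm{NI}}=\Delta_{\mathrm{NI}}$ (any $\delta\in\Delta^{(2)}$ has $\sum L_i\delta_i^2=0$), and since $\varphi$ vanishes there so that $\nu^{(2,2)}$ is just the $C^{-1}$-form on $\Delta_{\mathrm{NI}}$, the two proofs rest on the same linear-algebraic fact. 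Your packaging is more direct: freezing $\delta_1|_{I_L}=\delta|_{I_L}$ makes $\varphi(\delta_1)=\varphi(\delta)$ identically (not just modulo $\Delta^\circ$), which collapses the first block equation and leaves a single orthogonal projection. This keeps the circuit interpretation visible throughout and avoids the auxiliary map $\nu$ altogether.
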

\begin{proof}
Condition $(*)$ can be reformulated as follows:

For any $\delta \in \Delta$ there exists $\delta_0, \delta_1 \in  \Delta$
such that the following equalities are satisfied
\begin{align}
\label{lemmaDeltasuboneaprime}
\langle \varphi \left( \delta_1 \right), \delta^\prime_1 \rangle
&= \langle\varphi \left( \delta \right), \delta^\prime_1\rangle \\
\label{lemmaDeltasubonebprime}
\langle (\psi + \varphi) \left( \delta_1 \right), \delta^\prime_0 \rangle
-
\langle\varphi \left( \delta_0 \right), \delta^\prime_0\rangle 
&=
 \langle\varphi \left( \delta \right), \delta^\prime_0\rangle,
\end{align}%
for all 
$\delta^\prime_1,  \delta^\prime_0 \in \Delta$. This formulation has the advantage that
$\nu := \psi + \varphi$
is nonsingular.

The basis $\bar{e}_i$, $i = 1, \dots,n$ of the charge space $E$ defines naturally an Euclidean metric by the condition
$\langle\bar{e}_i, \bar{e}_j\rangle = \delta_{ij}$ and an identification $E \equiv E^*$ by the condition
$\bar{e}_i = \underline{e}^i$, $i = 1, \dots,n$,
in particular obtains $\Delta^\circ \equiv \Delta^\perp$. The linear map $\nu$ is self-adjoint and positive definite while $\varphi$ is self-adjoint and positive semi-definite. System 
(\ref{lemmaDeltasuboneaprime})--(\ref{lemmaDeltasubonebprime}) can be written in the form
\begin{align*}
P_{\Delta} \circ \varphi | \Delta (\delta_1)
&= P_{\Delta} \circ \varphi | \Delta (\delta)\\
P_{\Delta} \circ \nu | \Delta (\delta_1)
-
P_{\Delta} \circ \varphi | \Delta (\delta_0)
&=
P_{\Delta} \circ \varphi | \Delta (\delta_0).
\end{align*}%
where $P_{\Delta}  \colon E \rightarrow \Delta$ is the orthogonal projection on 
$\Delta$.

One can choose an orthonormal basis of $\Delta$ and then the system of equations 
(\ref{lemmaDeltasuboneaprime})--(\ref{lemmaDeltasubonebprime}) can be written in matrix form. Moreover we have an orthogonal decomposition $\Delta = \Delta^{(1)} + \Delta^{(2)}$, 
where $\Delta_1$ and $\Delta_2$ are the image and the kernel of  
$P_{\Delta} \circ \varphi | \Delta$. Each vector $\delta \in \Delta$ is decomposed as
$\delta = \delta^{(1)} + \delta^{(2)}$.
We can choose the basis in such a way that the block decomposition of the matrix representing
$P_{\Delta}  \circ\varphi | \Delta$ has the form
\[
\left[\begin{array}{cc}
\varphi^{(1,1)}&\varphi^{(1,2)}\\
\varphi^{(2,1)}&\varphi^{(2,2)}
\end{array} \right],
\]
where
$\varphi^{(1,2)} = \varphi^{(2,1)} = \varphi^{(2,2)} = 0$ and $\varphi^{(1,1)}$ is a diagonal matrix with positive eigenvalues. The map $P_{\Delta} \circ \nu | \Delta$ also has a block decomposition, and since it is self-adjoint and positive definite, the blocks 
$\nu^{(1,1)}$ and $\nu^{(2,2)}$ are symmetric and positive definite matrices. We obtain the following system of equations, which is equivalent to (\ref{lemmaDeltasuboneaprime})--(\ref{lemmaDeltasubonebprime}),
\begin{align}
\label{positiveinductancesand capacities1}
\left[\begin{array}{cc}
\varphi^{(1,1)}&\varphi^{(1,2)}\\
\varphi^{(2,1)}&\varphi^{(2,2)}
\end{array} \right] 
\left[
\begin{array}{c} \delta_1^{(1)}\\
\delta_1^{(2)}\end{array}\right]
&=
\left[\begin{array}{cc}
\varphi^{(1,1)}&\varphi^{(1,2)}\\
\varphi^{(2,1)}&\varphi^{(2,2)}
\end{array} \right] 
\left[\begin{array}{c} \delta^{(1)}\\
\delta^{(2)}\end{array}\right]\\
\label{positiveinductancesand capacities2}
\left[\begin{array}{cc}
\nu^{(1,1)}&\nu^{(1,2)}\\
\nu^{(2,1)}&\nu^{(2,2)}
\end{array} \right] 
\left[
\begin{array}{c} \delta_1^{(1)}\\
\delta_1^{(2)}\end{array}\right]
-
\left[\begin{array}{cc}
\varphi^{(1,1)}&\varphi^{(1,2)}\\
\varphi^{(2,1)}&\varphi^{(2,2)}
\end{array} \right] 
\left[\begin{array}{c} \delta_0^{(1)}\\
\delta_0^{(2)}\end{array}\right]
&=
\left[\begin{array}{cc}
\varphi^{(1,1)}&\varphi^{(1,2)}\\
\varphi^{(2,1)}&\varphi^{(2,2)}
\end{array} \right] 
\left[\begin{array}{c} \delta^{(1)}\\
\delta^{(2)}\end{array}\right]
\end{align}
For given $\delta$, equation (\ref{positiveinductancesand capacities1}) fixes 
$\delta_1^{(1)}$ and imposes no condition on $\delta_1^{(2)}$. Using the fact that
$\varphi^{(1,1)}$
and
$\nu^{(2,2)}$
are invertible we can find $\delta_0^{(1)}$ and $\delta_1^{(2)}$ to satisfy 
(\ref{positiveinductancesand capacities2}).
\end{proof}

As a conclusion, if (and only if) there are no purely capacitive loops in the circuit, then $M_2=M_1$ and the final constraint submanifold is $M_1$, defined by the conditions that $p_i=L_iv_i$ (where some $L_i$ might be zero) and $v$ satisfies KCL. Otherwise, the final constraint submanifold is $M_3$, defined by the conditions that $p_i=L_iv_i$, $v$ satisfies KCL, and $q_i/C_i$ and $v_i/C_i$ satisfy KVLPC. 
Recall that $q_i/C_i$ is the voltage of the capacitor in branch $i$, and the absence of a capacitor on branch $i$ means $C_i=\infty$. The rest of the KVL equations, which involve the branch voltages $\dot p_i=L_i \dot v_i$ on the inductors, will appear when considering the dynamics. In this sense, the KVLPC equations can be regarded as ``static'' KVL equations.

\begin{remark}
The hypothesis that no capacitances or inductances are negative, besides being physically meaningful, is crucial to ensure that the algorithm stops at $M_3$ (or $M_1$). For example, for a circuit with one inductor $L$, one capacitor $C$ and a negative capacitor $-C$, all of them in parallel, it stops at $M_5$. 
\end{remark}

\paragraph{Geometry of the final constraint submanifold.} We know that the solution curves will have tangent vectors in $W_c=TM_c\cap E_{\bar D_\Delta}$, where $c=1$ or $c=3$ depending on the case. 

\begin{theorem} Assume the physically meaningful situation where $0 \leq L_i<\infty$, $0<C_i\leq\infty$, $i=1, \dots,n$. The distribution $W_c\subseteq TM_c$ is constant and therefore integrable, and its integral leaves are preserved by the flow. Denote the leaf through $0$ by $\widetilde W_c$, so these leaves can be written as $x+\widetilde W_c$, where $x=(q,v,p)$. Each integral leaf $x+\widetilde W_c$ is a symplectic manifold with the pullback of the presymplectic form $\omega=dq\wedge dp$, if and only if there are no empty loops.
\end{theorem}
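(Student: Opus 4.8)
The plan is to use the fact that in this example $M_c$, the distribution $W_c$, and the form $\omega=dq\wedge dp$ are all linear/constant objects, so that the statement reduces to elementary linear algebra together with one graph-theoretic observation. First I would show $W_c$ is constant. Since $\varphi,\psi$ are linear and $\Delta,\Delta^\circ$ are subspaces, the subspace $\Delta_1=\Delta\cap\psi^{-1}(\varphi(\Delta)+\Delta^\circ)$ and hence $M_c$ (which is $M_1$ or $M_3$, each cut out of the vector space $M=TE\oplus T^*E$ by linear equations, as computed above) are linear subspaces of $M$; thus $T_xM_c$ is independent of $x$, and because $\bar D_\Delta$ is translation-invariant so is $E_{\bar D_\Delta}(x)$. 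Hence $W_c=TM_c\cap E_{\bar D_\Delta}$ is a constant distribution on $M_c$, automatically integrable with leaves the affine subspaces $x+\widetilde W_c$, where $\widetilde W_c:=W_c(0)$. By Theorem~\ref{maintheorem1} every solution curve $x(t)$ satisfies $\dot x(t)\in W_{c,x(t)}=\widetilde W_c$, so $x(t)-x(0)\in\widetilde W_c$ for all $t$, i.e.\ the leaves are invariant under the dynamics.

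Next I would make $\widetilde W_c$ and $\omega|_{\widetilde W_c}$ explicit. When $c=1$ one has $M_2=M_1$, which forces $\Delta_1=\Delta$, so in either case the computations of $W_1$ and $W_3$ above give the uniform description $\widetilde W_c=\{(\dot q,\dot v,\dot p)\mid \dot q,\dot v\in\Delta_1,\ \dot p=\varphi(\dot v)\}$. Using the Euclidean identification $E\equiv E^*$ (so $\Delta^\circ\equiv\Delta^\perp$ and $\varphi=\mathrm{diag}(L_i)$ is self-adjoint and positive semidefinite), the form $\omega=dq\wedge dp$ acts on $\widetilde W_c$ by $\omega\bigl((\dot q_1,\dot v_1,\varphi\dot v_1),(\dot q_2,\dot v_2,\varphi\dot v_2)\bigr)=\langle\varphi\dot v_2,\dot q_1\rangle-\langle\varphi\dot v_1,\dot q_2\rangle$. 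Setting $\dot v_2=0$ and then $\dot q_2=0$ shows that $(\dot q_1,\dot v_1,\varphi\dot v_1)$ lies in $\ker(\omega|_{\widetilde W_c})$ iff $\dot q_1\perp\varphi(\Delta_1)$ and $\dot v_1\perp\varphi(\Delta_1)$; and for $\delta\in\Delta_1$ the condition $\delta\perp\varphi(\Delta_1)$ implies in particular $\langle\delta,\varphi\delta\rangle=\sum_iL_i(\delta^i)^2=0$, hence $\varphi\delta=0$, and the converse is clear, so $\Delta_1\cap\varphi(\Delta_1)^\perp=\Delta_1\cap\ker\varphi$. Therefore $\ker(\omega|_{\widetilde W_c})=\{(\dot q,\dot v,0)\mid \dot q,\dot v\in\Delta_1\cap\ker\varphi\}$, and the pullback of $\omega$ to each leaf $x+\widetilde W_c$ is symplectic if and only if $\Delta_1\cap\ker\varphi=\{0\}$.

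It then remains to identify $\Delta_1\cap\ker\varphi\ne\{0\}$ with the existence of an empty loop. If an empty loop exists, its loop current $\eta\in\Delta$ is supported on empty branches, so $\varphi(\eta)=0$ and $\psi(\eta)=0\in\varphi(\Delta)+\Delta^\circ$; hence $0\ne\eta\in\Delta_1\cap\ker\varphi$. Conversely, let $0\ne\delta\in\Delta_1\cap\ker\varphi$. From $\varphi\delta=0$, $\delta$ is supported on non-inductive branches. Writing $\psi(\delta)=\varphi(\delta')+\mu$ with $\delta'\in\Delta$, $\mu\in\Delta^\circ$ and pairing with $\delta$ gives $\langle\psi(\delta),\delta\rangle=\langle\delta',\varphi\delta\rangle+\langle\mu,\delta\rangle=0$, i.e.\ $-\sum_i(\delta^i)^2/C_i=0$, so $\delta$ is also supported on non-capacitive branches; thus $\delta$ is a nonzero KCL-admissible current supported on empty branches. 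Every vertex of the support subgraph of $\delta$ has degree $\ge 2$ (a degree-one vertex would violate KCL), so that subgraph contains a cycle, and this cycle is an empty loop. Combining the three steps proves the theorem.

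The only non-formal ingredient, and the step I expect to require the most care, is the last equivalence: passing from the algebraic condition $\Delta_1\cap\ker\varphi\ne\{0\}$ to the combinatorial statement about empty loops. It rests on the graph-theoretic fact that a nonzero current satisfying Kirchhoff's Current Law must contain a cycle in its support, together with the sign argument (using $L_i\ge 0$ and $C_i>0$) that confines such a current to the branches carrying neither an inductor nor a capacitor. Everything else is direct linear algebra once the linearity of $M_c$ and the translation-invariance of $\bar D_\Delta$ are invoked.
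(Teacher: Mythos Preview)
Your proof is correct and in several respects cleaner than the paper's. The paper treats the two cases $c=1$ and $c=3$ separately throughout: it writes down $W_1$ and $W_3$ explicitly, computes each kernel as $\Delta\cap(\varphi(\Delta)+\Delta^\circ)^\circ$ and $\Delta_1\cap(\varphi(\Delta_1))^\circ$ respectively, and then, for the ``no empty loops $\Rightarrow$ symplectic'' direction in the case $c=3$, appeals to a somewhat heuristic dynamical argument (that a circuit with no inductors and no empty loops has only equilibria, so a KCL current whose $\psi$-image satisfies KVLNI must vanish). You instead unify the two cases by the observation that $M_2=M_1$ forces $\Delta_1=\Delta$, so that in either case $\widetilde W_c=\{(\dot q,\dot v,\varphi\dot v)\mid \dot q,\dot v\in\Delta_1\}$, and then you push the kernel computation one step further: using that $\varphi$ is self-adjoint and positive semidefinite, you identify $\Delta_1\cap\varphi(\Delta_1)^\perp$ with $\Delta_1\cap\ker\varphi$. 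This reduction is the real gain of your approach, because the equivalence ``$\Delta_1\cap\ker\varphi\ne\{0\}\iff$ there is an empty loop'' is then a short, purely algebraic-combinatorial argument (your pairing trick $\langle\psi\delta,\delta\rangle=0$ using $\varphi\delta=0$ and $\mu\in\Delta^\circ$, followed by the degree-$\ge 2$ cycle lemma). The paper's route has the advantage of tying the kernel directly to the KVLNI space $(\varphi(\Delta)+\Delta^\circ)^\circ$ already analysed in Theorem~\ref{thm_KVLPC}, but your route avoids the case split and replaces the dynamical hand-waving in the $c=3$ converse by a transparent sign argument.
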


\begin{proof} As mentioned before, if there are no purely capacitive loops, then $c=1$, otherwise $c=3$. For the first case, using \eqref{E_D1}, we have
\begin{align*}
W_1&=\{(q,v,p, \dot q, \dot v, \dot p)\in TE\oplus T^*E | (q,v,p, \dot q, \dot v, \dot p)\in TM_1, \dot q\in \Delta\}\\
&=\{(q,v,p, \dot q, \dot v, \dot p)\in TE\oplus T^*E | p=\varphi(v), v\in\Delta, \dot p=\varphi( \dot v), \dot v\in\Delta, \dot q\in \Delta\}
\end{align*}
Then $W_1\subseteq TM_1$ is a constant and therefore integrable distribution whose rank is $2\dim\Delta$. Write $\omega_{x+\widetilde W_1}$ for the pullback of the presymplectic form $\omega=dq\wedge dp$ to each integral leaf.
Let us compute $\ker\omega_{x+\widetilde W_1}$.

Consider $(q,v,p, \dot q, \dot v, \varphi( \dot v)), (q,v,p, \dot q', \dot v', \varphi( \dot v'))\in W_1$, where $\dot q, \dot v,\dot q', \dot v'\in \Delta$ are arbitrary. Assume $(q,v,p, \dot q, \dot v, \varphi( \dot v))\in \ker\omega_{x+\widetilde W_1}$, that is,
\[
0=\omega((q,v,p, \dot q, \dot v, \varphi( \dot v)), (q,v,p, \dot q', \dot v', \varphi( \dot v')))=\dot q(\varphi( \dot v'))-\dot q'(\varphi( \dot v))
\]
for any $\dot q', \dot v'\in \Delta$. This means that $\dot q, \dot v\in (\varphi(\Delta))^\circ$, but since they also belong to $\Delta$ then $\dot q, \dot v\in (\varphi(\Delta))^\circ\cap \Delta=(\varphi(\Delta)+ \Delta^\circ)^\circ$. That is, $\ker\omega_{x+\widetilde W_1}$ is defined by $\dot q, \dot v\in(\varphi(\Delta)+ \Delta^\circ)^\circ$, $\dot p=\varphi( \dot v)$. The subspace $(\varphi(\Delta)+ \Delta^\circ)^\circ$ is generated by the loop currents on the non-inductive loops (Theorem \ref{thm_KVLPC}). Since there are no purely capacitive loops, then it is generated by the loop currents on the empty loops. As a conclusion, $x+\widetilde W_1$ is symplectic if and only if there are no empty loops on the circuit.
 
If there is at least one purely capacitive loop, then the final constraint submanifold is $M_3$. Then
\begin{align*}
W_3&=\{(q,v,p, \dot q, \dot v, \dot p)\in TE\oplus T^*E | (q,v,p, \dot q, \dot v, \dot p)\in TM_3, \dot q\in \Delta\}\\
&=\{(q,v,p, \dot q, \dot v, \dot p)\in TE\oplus T^*E | p=\varphi(v), v\in\Delta, q\in\psi^{-1}(\varphi(\Delta)+\Delta^\circ),\\ 
&\quad\quad v\in\psi^{-1}(\varphi(\Delta)+\Delta^\circ) ,
\dot p=\varphi( \dot v), \dot v\in\Delta, \dot q\in\psi^{-1}(\varphi(\Delta)+\Delta^\circ),\\
&\quad\quad \dot v\in\psi^{-1}(\varphi(\Delta)+\Delta^\circ),\dot q\in \Delta\}.
\end{align*}
Again, $W_3\subseteq TM_3$ is a constant and therefore integrable distribution whose rank is $2\dim (\Delta\cap\psi^{-1}(\varphi(\Delta)+\Delta^\circ))=2\dim \Delta_1$. Write $\omega_{x+\widetilde W_3}$ for the pullback of the presymplectic form $\omega=dq\wedge dp$ to each integral leaf, and
let us compute $\ker\omega_{x+\widetilde W_3}$.

Consider $(q,v,p, \dot q, \dot v, \varphi( \dot v)), (q,v,p, \dot q', \dot v', \varphi( \dot v'))\in W_3$, where $\dot q, \dot v,\dot q', \dot v'\in \Delta_1$ are arbitrary. Assume $(q,v,p, \dot q, \dot v, \varphi( \dot v))\in \ker\omega_{x+\widetilde W_3}$, that is,
\[
0=\omega((q,v,p, \dot q, \dot v, \varphi( \dot v)), (q,v,p, \dot q', \dot v', \varphi( \dot v')))=\dot q(\varphi( \dot v'))-\dot q'(\varphi( \dot v))
\]
for any $\dot q', \dot v'\in \Delta_1$. This means that $\dot q, \dot v\in (\varphi(\Delta_1))^\circ$. If there is an empty loop, represented by $\eta\neq 0$, then $\eta\in \Delta$, $\psi(\eta)=0$ and $\eta\in \Delta_1$. Also, $\varphi(\eta)=0$, so $(q,v,p,\eta,\eta,0)$ is a nonzero element of $\ker\omega_{x+\widetilde W_3}(q,v,p)$ and $x+\widetilde W_3$ is not symplectic. 

If there are no empty loops, we apply the following argument to $\dot q$, and the same conclusion holds for $\dot v$. Since $\dot q\in (\varphi(\Delta_1))^\circ\cap \Delta_1$, then $\dot q(\varphi( \dot q))=0$, which means that the components of $ \dot q$ corresponding to the inductive branches are zero, as we reasoned in the proof of Theorem \ref{thm_KVLPC}. Also, $\dot q\in \Delta_1\subset \Delta$, so $\dot q\in\Delta_\text{NI}$, that is, it can be regarded as a branch current assignment on the non-inductive subcircuit, satisfying KCL. In addition, $\dot q_i/C_i$ satisfy KVLNI, which means that $\dot q_i/C_i$ satisfy KVL for the non-inductive subcircuit. Now we will apply the fact that the dynamics of a circuit with no inductors and no empty loops consists only of equilibrium points. Indeed, KVL for such a circuit is a homogeneous linear system on the charges $q_i$ of the capacitors, so it has a solution $\bar q=(\bar q_1, \dots, \bar q_n)$. Setting $q(t)=\bar q$ and $v(t)= \dot q(t)=0$, we have that $v$ satisfies KCL trivially. If there was a nonzero loop current $\eta$, then it would produce a change $\dot q=\eta$ in the charges of the capacitor on that loop. Then they would no longer satisfy KVL for that loop, since $d/dt(\eta(\psi(q)))=\eta(\psi(\eta))\neq 0$. Then the currents must therefore be zero, that is, $\dot q=0$. Of course, this is not true if there are empty loops, which can have arbitrary loop currents. By the same reasoning, $\dot v=0$, so $\ker\omega_{x+\widetilde W_3}=0$ and $x+\widetilde W_3$ is symplectic.
\end{proof}

Let us now write equations of motion in Hamiltonian form, by working on a symplectic leaf. This means that the initial conditions must belong to that particular leaf. This is related to the comments on equations (\ref{equation15}) and (\ref{SequationalgorithGN})
made before. An equation of motion in Poisson form that is valid for all initial conditions in $M_c$ will be given in Theorem \ref{thm321EXTENDED} by the equations
\eqref{eq:poisson_dirac_evolution_thmfoliated} and
\eqref{eq:vector_field_foliated}.

First, let us represent the KCL equations by $\kappa_1, \dots, \kappa_a\in E^*$, in the sense that $\kappa_i(v)=0$ represents the KCL equation corresponding to node $i$. The number $a$ is the number of independent nodes (usually the number of nodes minus one). As before, denote the KVLPC equations by $\eta_1, \dots, \eta_b\in E$, where $b$ is the number of independent purely capacitive loops.  
Seeing each $\kappa_i$ as a row vector and each $\eta_j$ as a column vector, write
\[
K=\left[\begin{array}{c}
\kappa_1\\
\vdots\\
\kappa_a
\end{array} \right],\quad\quad\quad
\Gamma=\left[\begin{array}{ccc}
\eta_1&
\dots&
\eta_b
\end{array} \right].
\]

The space $TE\oplus T^*E$ with the presymplectic form $\omega$ is embedded in $T^*TE$ with the canonical symplectic form (see Appendix \ref{Pontryagin_embedding}), where the variables are $(q,v,p,\nu)$. The integral leaf $x_0+\widetilde W_c$, where $c=1$ or $c=3$ as before, and $x_0=(q_0,v_0,p_0)$, is defined as a subspace of $T^*TE$ regularly by the equations
\begin{align*}
\kappa_i(q)-\kappa_i(q_0)&=0,\qquad i=1, \dots,a\\
\eta_i(\psi(q))&=0,\qquad i=1, \dots,b\\
p_i-L_iv_i&=0,\qquad i=1, \dots,n\\
\kappa_i(v)&=0,\qquad i=1, \dots,a\\
\eta_i(\psi(v))&=0,\qquad i=1, \dots,b\\
\nu_i&=0,\qquad i=1, \dots,n
\end{align*}
and the matrix of Poisson brackets of these constraints is, in block form,
\[\Sigma=\left[\begin{array}{cccccc}
0&0&K&0&0&0\\
0&0&\Gamma^T\psi&0&0&0\\
-K^T&-\psi\Gamma&0&0&0&-\varphi\\
0&0&0&0&0&K\\
0&0&0&0&0&\Gamma^T\psi\\
0&0&\varphi&-K^T&-\psi\Gamma&0
\end{array} \right],
\]
where $\varphi$ and $\psi$ stand for the diagonal matrices with $L_i$ and $1/C_i$ along their diagonals, respectively.

\paragraph{A concrete example.}
We shall illustrate our method with the simple LC circuit studied in
\cite{MR2265464} which is shown in Figure \ref{circuitfigure}. It is a $4$-port
LC circuit where the configuration space is
$E=\mathbb{R}^4$.

\begin{figure}[ht]
\begin{center}
\includegraphics[scale=1]{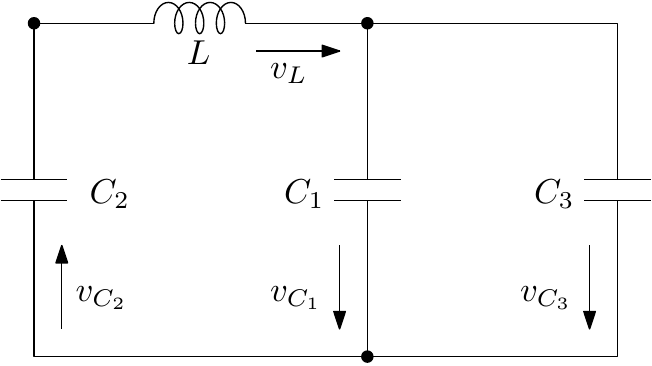}
\end{center}
\caption{\footnotesize
The constraint algorithm is applied to this 4-port LC circuit.}
\label{circuitfigure}
\end{figure}

We shall use the notation $q=(q_L,q_{C_1},q_{C_2},q_{C_3}) \in E$,
$v=(v_L,v_{C_1},v_{C_2},v_{C_3}) \in T_q{E}$,
 and
$p=(p_L,p_{C_1},p_{C_2},p_{C_3}) \in T_q^*{E}$. The Lagrangian of
the LC circuit is $\mathcal{L}\colon TE  \rightarrow \mathbb{R}$,
\[\mathcal{L}(q,v)=\frac{1}{2}L(v_L)^2-\frac{1}{2}\frac{(q_{C_1})^2}{C_1}-
\frac{1}{2}\frac{(q_{C_2})^2}{C_2}-\frac{1}{2}\frac{(q_{C_3})^2}{C_3}.\]
We will assume the physically meaningful case where $0<L$ and $0<C_i<\infty$, $i=1,2,3$. Then, by Theorem \ref{thm_positive_nu}, the CAD algorithm will stop at most at $M_3$. Also, Theorem \ref{thm_KVLPC} gives a direct description of the specific equations that define $M_1$, $M_2$ and $M_3$. This circuit has a purely capacitive loop, so $M_1\supsetneq M_2\supsetneq M_3$.
 However, the algorithm could also be applied for the case where negative values for $L$ or $C_i$ are allowed. For this particular circuit, the algorithm stops at $M_3$ regardless of the signs of the inductance and capacitances.

The KCL constraints $\Delta \subseteq TE$ for the current $v$ are
\begin{align}
\label{KCLconstcoord1}
-v_L + v_{C_2} & = 0
\\
\label{KCLconstcoord2}
 -v_{C_1}+ v_{C_2}-v_{C_3} & = 0
\end{align}
Therefore the constraint KCL space is defined, for each $q \in E$,
by
\[ \Delta(q) =\{v \in T_q{E}\mid \langle p^a,v \rangle =0,\,
a=1,\,2 \},\]
where
\[
(p_k^a) =\left(\begin{array}{cccc} -1 & 0 & 1 & 0 \\
0 & -1 & 1 & -1 \end{array} \right).
\]
On the other hand, the annihilator $\Delta^{\circ}$ of $\Delta$ is
the constraint KVL space, defined, for each $q \in E$, by
\[ \Delta^{\circ}(q) =\{p \in T_q^{*}{E}\mid \langle p,v \rangle =0,
\mbox{ for all $v \in \Delta(q)$} \}.\]
Note that $\{p^a,\,a=1,2\}$ is a basis of $\Delta^{\circ}$. Taking into account that the Dirac structure $\bar{D}_{\Delta}
\subseteq TM \oplus T^*M$ on $M=TE \oplus T^*E$ associated to the
space $\Delta(q)$ is, for each $(q,v,p) \in M$, given by
\begin{align*}
& \bar{D}_{\Delta}(q,v,p)  \\
& \quad =\{(q,v,p,\dot{q},\dot{v},\dot{p},\alpha,\gamma,\beta)
\in TM \oplus T^*M\mid \dot{q} \in \Delta(q),\,\alpha + \dot{p}
\in \Delta^{\circ}(q),\,\beta=\dot{q},\,\gamma=0\},
\end{align*}
and the corresponding energy $\mathcal{E} \colon M \rightarrow
\mathbb{R}$, is given by $\mathcal{E} (q,v,p)=pv-\mathcal{L}(q,v)$,
we can easily verify that the Dirac dynamical system
\[(q,v,p,\dot{q},\dot{v},\dot{p})\oplus d \mathcal{E}(q,v,p)  \in
\bar{D}_{\Delta}(q,v,p)\]
is equivalent to the IDE system that is given in coordinates by
(\ref{Cnnhhooll1})--(\ref{Cnnhhooll4}), which gives
\begin{align}
\label{IDEcircuitI}
\dot{q}_L = v_L,\,\, \dot{q}_{C_1} =
v_{C_1},\,\,  \dot{q}_{C_2} = v_{C_2},\,\,  \dot{q}_{C_3} =
v_{C_3}\\
\label{IDEcircuitII} \dot{p}_L+\frac{q_{C_1}}{C_1}+\dot{p}_{C_1}+
\frac{q_{C_2}}{C_2}+\dot{p}_{C_2}=0\\
\label{IDEcircuitIII} \dot{p}_L+\frac{q_{C_2}}{C_2}+\dot{p}_{C_2}+
\frac{q_{C_3}}{C_3}+\dot{p}_{C_3}=0\\
\label{IDEcircuitIV}
p_L=Lv_L,\,\,p_{C_1}=p_{C_2}=p_{C_3}=0 \\
\label{IDEcircuitV} v_L=v_{C_2},\,\,v_{C_1}=v_{C_2}-v_{C_3}.
\end{align}
We now apply the constraint algorithm CAD for Dirac dynamical systems.
We calculate the expressions of
\[
\varphi(v)=\frac{\partial{\mathcal{L}}}{\partial{v}}\,\,\,
\mbox{and}\,\,\, \psi(q)=\frac{\partial{\mathcal{L}}}{\partial{q}}\]
to get
\begin{align*}
\varphi(v_L,v_{C_1},v_{C_2},v_{C_3})&=(Lv_{L},0,0,0)\\
\psi(q_L,q_{C_1},q_{C_2},q_{C_3})&= (0,- \frac{q_{C_1}}{C_1},-\frac{q_{C_2}}{C_2},- \frac{q_{C_3}}{C_3}).
\end{align*}
From now on we will consider the constant distribution $\Delta$ as a subspace of $E$. As we have seen before $M_1=\{(q,v,p)\mid v \in \Delta,\,
p-\varphi(v)=0\}$, so using (\ref{KCLconstcoord1})--(\ref{KCLconstcoord2}) and the expression for $\varphi$ we get
\begin{align*}
M_1&
=\{(q_L,q_{C_1},q_{C_2},q_{C_3},v_L,v_{C_1},v_{C_2},v_{C_3},p_L,p_{C_1},p_{C_2},
p_{C_3})
\mid \\
 & \quad \quad p_L=Lv_L,\,\,p_{C_1}=p_{C_2}=p_{C_3}=0,
v_L=v_{C_2},\,\,v_{C_1}=v_{C_2}-v_{C_3}\}.
\end{align*}
We could calculate $M_2$ using the expression
(\ref{mdos}). However, according to Theorem \ref{thm_KVLPC}, $M_2$ adds one more constraint, corresponding to the KVL for the purely capacitive loop that involves capacitors $C_1$ and $C_3$. Then
\begin{align*}
M_2 & = \left\{ 
(q_L,q_{C_1},q_{C_2},q_{C_3},v_L,v_{C_1},v_{C_2},v_{C_3},p_L,p_{C_1},p_{C_2},p_{
C_3})
\; \left|  \;  p_L=Lv_L, \phantom{\frac{q_{C}}{C}} \right. \right. \nonumber \\
& \qquad  \left. p_{C_1}=p_{C_2}=p_{C_3}=0,
 v_L=v_{C_2},\,v_{C_1}=v_{C_2}-v_{C_3},\,\frac{q_{C_1}}{C_1}=\frac{q_{C_3}}{C_3}
\right\} .
\end{align*}
As explained right after the proof of Theorem \ref{thm_KVLPC}, the final constraint submanifold $M_3$ adds the constraint $v_{C_1}/C_1=v_{C_3}/C_3$:
\begin{align}
\label{mtresej}
 \nonumber
M_3 & =\Big\{
(q_L,q_{C_1},q_{C_2},q_{C_3},v_L,v_{C_1},v_{C_2},v_{C_3},p_L,p_{C_1},p_{C_2},p_{
C_3})
 \,\Big|
\nonumber \\
& \qquad  p_L=Lv_L,\,p_{C_1}=p_{C_2}=p_{C_3}=0,\,
v_L=v_{C_2},\,v_{C_1}=v_{C_2}-v_{C_3},\,
\nonumber \\
& \qquad \qquad \frac{q_{C_1}}{C_1}=\frac{q_{C_3}}{C_3},\,
\frac{v_{C_1}}{C_1}=\frac{v_{C_3}}{C_3} \Big\}.
\end{align}

To solve the system we can simply parametrize $M_3$, which has
dimension $4$, for instance by taking some appropriate $4$ of the
$12$ variables
\[
(q_L,q_{C_1},q_{C_2},q_{C_3},v_L,v_{C_1},v_{C_2},v_{C_3},p_L,p_{C_1},p_{C_2},p_{
C_3})
\]
as being independent parameters and then replace in equations
(\ref{IDEcircuitI})--(\ref{IDEcircuitIII}) to obtain an ODE
equivalent to equations of motion.
For instance, if we choose
$q_L$, $q_{C_1}$, $q_{C_2}$, $p_L$, as independent variables we
get the ODE
\begin{align*}
%
 \dot{q}_L & = \frac{p_L}{L} \\
 %
 \dot{q}_{C_1} & = \frac{C_1}{L(C_1+C_3)}p_L \\
 %
 \dot{q}_{C_2} & = \frac{p_L}{L} \\
 %
 \dot{p}_L & = -\frac{q_{C_1}}{C_1}-\frac{q_{C_2}}{C_2}.
 \end{align*}

We know that solutions should be tangent to
$W_3 = TM_3 \cap E_{\widetilde{D}_\Delta}$
where, according to (\ref{E_D1}),
$E_{\widetilde{D}_\Delta}$
is defined by the conditions
$\dot{q} \in \Delta$,
that is,
\begin{equation}\label{Deltaequation}
\dot{q}_L=\dot{q}_{C_2},\,\dot{q}_{C_1}=\dot{q}_{C_2}-\dot{q}_{C_3}.
\end{equation}
Therefore
$W_3$ is defined by
(\ref{Deltaequation})
together with the equations defining $TM_3$
which are obtained by differentiating with respect to time the equations
(\ref{mtresej})
defining $M_3$, that is,
\begin{equation*}
\dot{p}_L=L\dot{v}_L,\,\dot{p}_{C_1}=\dot{p}_{C_2}=\dot{p}_{C_3}=0,\,
\dot{v}_L=\dot{v}_{C_2},\,\dot{v}_{C_1}=\dot{v}_{C_2}-\dot{v}_{C_3},\,
\frac{\dot{q}_{C_1}}{C_1}=\frac{\dot{q}_{C_3}}{C_3},\,
\frac{\dot{v}_{C_1}}{C_1}=\frac{\dot{v}_{C_3}}{C_3}.
\end{equation*}
Then
$W_3$ is a constant and therefore integrable distribution of rank 2. Its integral leaves
are
preserved by the flow. One can check that the pullback of the presymplectic form
$\omega = dq \wedge dp$ to each leaf is nondegenerate, so they are symplectic
manifolds. Denote $\widetilde W_3=W_3(0)$, so these leaves can be written as $x+\widetilde W_3$.

For a given
$x_0 \in M_3$,
say
\[
x_0
=
(q_{L0},q_{C_10},q_{C_20},q_{C_30},v_{L0},
v_{C_10},v_{C_20},v_{C_30},p_{L0},p_{C_10},p_{C_20},p_{C_30})
\]
we have that elements of the symplectic leaf $x_0 + \widetilde{W}_3$ are
characterized by the conditions defining $M_3$ plus the condition
obtained by integrating with respect to time the conditions
(\ref{KCLconstcoord1}) and (\ref{KCLconstcoord2}), applied to
$(q_{L} - q_{L0}, q_{C_1} - q_{C_10}, q_{C_2} - q_{C_20},q_{C_3} -
q_{C_30})$, that is
\begin{align*}
-(q_{L} - q_{L0}) + (q_{C_2} - q_{C_20}) & =  0
\\
 -(q_{C_1} - q_{C_10})+ (q_{C_2} - q_{C_20})
-
(q_{C_3} - q_{C_30}) & =  0.
\end{align*}

For simplicity, we will assume from now on that the conditions
\begin{align*}
q_{L0} - q_{C_20} & =  0
\\
q_{C_10} - q_{C_20}
+ q_{C_30} & =  0,
\end{align*}
are satisfied. While this is not the most general case, it is enough to
illustrate the procedure.\\

We can conclude that
$x_0 + \widetilde{W}_3$
has dimension
$2$
and in fact the projection
$\bar{\pi}(x_0 + \widetilde{W}_3) \subseteq T^\ast E$
is simply the subspace of
$T^\ast E$
defined by the conditions
$q_{C_1} = q_{C_2} = q_{C_3} = 0$
and
$p_{C_1} = p_{C_2} = p_{C_3} = 0$.
Therefore one can use the variables
$(q_L, p_L)$
to parametrize
$\bar{\pi}(x_0 + \widetilde{W}_3)$,
and, in fact, we obtain
\begin{align}
 \label{parametrizationofsymplleaf1}
v_L &= \frac{p_L}{L}&
q_{C_1} &=\frac{C_1}{C_1+C_3}q_L\\
\label{parametrizationofsymplleaf2}
v_{C_1} &= \frac{C_1}{C_1 + C_3}\frac{1}{L}p_L&
q_{C_2} &=q_L\\
\label{parametrizationofsymplleaf3}
v_{C_2} &= \frac{p_L}{L}&
q_{C_3} &=\frac{C_3}{C_1+C_3}q_L\\
\label{parametrizationofsymplleaf4}
v_{C_3} &= \frac{C_3}{C_1 + C_3}\frac{1}{L}p_L&
p_{C_1} &=
p_{C_2} =
p_{C_3} = 0\\
\label{parametrizationofsymplleaf5}
& &
\nu_L &=
\nu_{C_1} =
\nu_{C_2} =
\nu_{C_3} =0.
\end{align}
Let $x_0 + \widetilde{W}_3$ be defined by
$\epsilon_j=0$, $j=1,\ldots,14$, where we have chosen
\begin{align*}
\epsilon_1 & = p_L-Lv_L&
\epsilon_8 & = \frac{v_{C_1}}{C_1}-\frac{v_{C_3}}{C_3}\\
\epsilon_2 & = p_{C_1} &
\epsilon_9 &= q_L - q_{C_2} \\
\epsilon_3 & = p_{C_2} &
\epsilon_{10} &=
q_{C_1}-q_{C_2}+q_{C_3} \\
\epsilon_4 & = p_{C_3} &
\epsilon_{11} &= \nu_L \\
\epsilon_5 & = v_L-v_{C_2} &
\epsilon_{12} &= \nu_{C_1} \\
\epsilon_6 & = v_{C_1}-v_{C_2}+v_{C_3} &
\epsilon_{13} &= \nu_{C_2} \\
\epsilon_7 & = \frac{q_{C_1}}{C_1}-\frac{q_{C_3}}{C_3}&
\epsilon_{14} &= \nu_{C_3}.
\end{align*}

We can calculate the matrix $(\Sigma_{ij})=(\{\epsilon_i,\epsilon_j\}(q,v,p,\nu))$ as 
\[
(\Sigma_{ij})=\left[\begin{array}{cccccccccccccc} 0 & 0 & 0 & 0 & 0 & 0 &
0 & 0 & -1 & 0 & -L & 0 & 0 & 0 \\ 0 & 0 & 0 & 0 & 0 & 0 &
-\frac{1}{C_1} & 0 & 0 & -1 & 0 & 0 & 0 & 0 \\ 0 & 0 & 0 & 0 & 0 &
0 & 0 & 0 & 1 & 1 & 0 & 0 & 0 & 0\\ 0 & 0 & 0 & 0 & 0 & 0 &
\frac{1}{C_3} & 0 & 0 & -1 & 0 & 0 & 0 & 0 \\ 0 & 0 & 0 & 0 & 0 &
0 & 0 & 0 & 0 & 0 & 1 & 0 & -1 & 0 \\ 0 & 0 & 0 & 0 & 0 & 0 & 0 &
0 & 0 & 0 & 0 & 1 & -1 & 1 \\ 0 & \frac{1}{C_1} & 0 &
-\frac{1}{C_3} & 0 & 0 & 0 & 0 & 0 & 0 & 0 & 0 & 0 & 0 \\ 0 & 0 &
0 & 0 & 0 & 0 & 0 & 0 & 0 & 0 & \frac{1}{C_1} & 0 &
-\frac{1}{C_3} & 0\\ 1 & 0 & -1 & 0 & 0 & 0 & 0 & 0 & 0 & 0 & 0 & 0 &
0 & 0 \\ 0 & 1 & -1 & 1 & 0 & 0 & 0 & 0 & 0 & 0 & 0 & 0 & 0 & 0 \\
L & 0 & 0 & 0 & -1 & 0 & 0 & 0 & 0 & 0 & 0 & 0 & 0 & 0 \\0 & 0 & 0
& 0 & 0 & -1 & 0 & -\frac{1}{C_1} & 0 & 0 & 0 & 0 & 0 & 0 \\ 0 & 0
& 0 & 0 & 1 & 1 & 0 & 0 & 0 & 0 & 0 & 0 & 0 & 0 \\0 & 0 & 0 & 0 &
0 & -1 & 0 & \frac{1}{C_3} & 0 & 0 & 0 & 0 & 0 & 0 \end{array}
\right] ,\]
which is invertible, with inverse $\Sigma^{ij}$, then $\Sigma^{ik}\Sigma_{kj}=\delta_j^i$. This means that $x_0+\widetilde W_3$ is a second class submanifold, or equivalently, that all the constraints are second class. In particular, $x_0+\widetilde W_3$ is a symplectic submanifold, and we will call the symplectic form $\omega_c$. Then we have two ways of writing equations of motion on this submanifold.

On one hand, we can write the equations of motion on the symplectic manifold
$x_0 + \widetilde{W}_3$
in the form \eqref{eq:Ham_form_sympl_omegac},
\begin{equation*}
X_{\mathcal{E}|x_0 + \widetilde{W}_3}
= (\omega_c)^\sharp\left(d(\mathcal{E} | x_0 + \widetilde{W}_3)\right).
\end{equation*}
For this we are going to use the previous parametrization of $x_0 +
\widetilde{W}_3$ with coordinates $q_L$, $p_L$, so $\omega_c= d q_L \wedge d p_L$. Since the Lagrangian is given
by
\[\mathcal{L}(q,v)=\frac{1}{2}L(v_L)^2-\frac{1}{2}\frac{(q_{C_1})^2}{C_1}-
\frac{1}{2}\frac{(q_{C_2})^2}{C_2}-\frac{1}{2}\frac{(q_{C_3})^2}{C_3},\]
in terms of the parametrization the energy $\mathcal{E} =
p v - \mathcal{L} (q, v)$ restricted to $x_0 + \widetilde{W}_3$ is
\[
\left(\mathcal{E} | x_0 + \widetilde{W}_3 \right) (q_L, p_L)
=
\frac{1}{2}\frac{p_L^2}{L}
+
\frac{1}{2}q^2_L \left(\frac{1}{C_1 + C_3}
+
\frac{1}{C_2}
\right).
\]
Then the vector field
$X_{\mathcal{E}|x_0 + \widetilde{W}_3}$
is given in coordinates
$q_L$, $p_L$
by
\begin{align}\label{simplest1}
\dot{q}_L
&= \frac{\partial H}{\partial{p_L}}=\frac{p_L}{L} \\ \label{simplest2}
\dot{p}_L
& = -\frac{\partial
H}{\partial{q_L}}=-q_L\left(\frac{1}{C_1+C_3}
+
\frac{1}{C_2}
\right),
\end{align}
with
$H = \mathcal{E}|x_0 + \widetilde{W}_3$, for short.

We remark that the previous equations can easily be
deduced by elementary rules of circuit theory. Namely, one can use the formulas for capacitors in parallel and series and replace the three capacitors by a single one. The resulting circuit is very easy to solve, and the corresponding system is equivalent
to equations
(\ref{simplest1}) and (\ref{simplest2}). However, we must remark that this simplified system no longer accounts for the currents and voltages on the original capacitors.\\

On the other hand, we can find the evolution of all the variables $(q,v,p)$
using the extended energy (\ref{extendedenergy}) rather than the total energy, since $x_0 + \widetilde{W}_3$ is symplectic and using the last paragraph of remark \ref{physicalstate}.
Namely, we will calculate the vector field $X=d\mathcal{E}^{\sharp}+\lambda^j_{(3)}
X_{\epsilon_j}$ associated to the extended energy. 

We have $\lambda^i_{(3)}=\Sigma^{ij}\{\mathcal{E},\epsilon_j\}$,
$j=1,\ldots,14$, where
the column vector $[\{\mathcal{E},\epsilon_j\}]$ is
\[[\{\mathcal{E},\epsilon_j\}]=\left[0,\,-\frac{q_{C_1}}{C_1},\,-\frac{q_{C_2}}{
C_2},\,-\frac{q_{C_3}}{C_3},
\,0,\,0,\,0,\,0,\,0,\,0,\,0,\,0,\,0,\,0\right]^{T}.\]
For instance, we can calculate $\dot{p}_L$ and $\dot{v}_L$ as
\begin{align*}
\dot{p}_L & = X(p_L)  = (d\mathcal{E})^\sharp(p_L)+\lambda^j_{(3)}
X_{\epsilon_j}(p_L) = \{p_L,\mathcal{E}\} +
\lambda^j_{(3)}\{p_L,\epsilon_j\} = 0 + \lambda^9_{(3)} (-1) \\
&= -\frac{q_{C_1}}{C_1}-\frac{q_{C_2}}{C_2}\\
\dot{v}_L & = X(v_L) 
 = (d\mathcal{E})^\sharp(v_L)+\lambda^j_{(3)}
 X_{\epsilon_j}(v_L)
 = \{v_L,\mathcal{E}\} + \lambda^j_{(3)}\{v_L,\epsilon_j\} 
= 0 + \lambda^{11}_{(3)}\\
 & = \frac{1}{L}\left(-\frac{q_{C_1}}{C_1}-\frac{q_{C_2}}{C_2}\right)
\end{align*}
The complete coordinate expression of $X$ is
\begin{align*}
X
&=
(\dot{q}_L,\dot{q}_{C_1},\dot{q}_{C_2},\dot{q}_{C_3},\dot{v}_L,\dot{v}_{C_1},
\dot{v}_{C_2},\dot{v}_{C_3},\dot{p}_L,\dot{p}_{C_1},\dot{p}_{C_2},\dot{p}_{C_3},
\dot{\nu}_L,\dot{\nu}_{C_1},
\dot{\nu}_{C_2},\dot{\nu}_{C_3})\\
& =
\left(v_L,v_{C_1},v_{C_2},v_{C_3},
\frac{1}{L}\left(-\frac{q_{C_1}}{C_1}-\frac{q_{C_2}}{C_2}\right),
\frac{C_1}{C_1 + C_3}\frac{1}{L}
\left(-\frac{q_{C_1}}{C_1}-\frac{q_{C_2}}{C_2}\right),\right.\\
&
\left.\frac{1}{L}\left(-\frac{q_{C_1}}{C_1}-\frac{q_{C_2}}{C_2}\right),
\frac{C_3}{C_1 + C_3}\frac{1}{L}\left(-\frac{q_{C_1}}{C_1}-\frac{q_{C_2}}{C_2}\right),
-\frac{q_{C_1}}{C_1}-\frac{q_{C_2}}{C_2},0,0,0,0,0,0,0\right).
\end{align*}
Note that this is not a Hamiltonian vector field. However, if we use Theorem \ref{theorem321}, we can write the equations of motion in terms of the Dirac bracket and the abridged total energy corresponding to the chosen leaf $x_0 + \widetilde{W}_3$. In section \ref{sectionanextensionofetc} we will return to LC circuits and show how to write equations of motion on $M_c$ without specifying any leaf \emph{a priori}, by using an abridged total energy and the Dirac bracket associated to adapted constraints.\\

We remark that it is not always the case that there is uniqueness of solution. In fact, several branches of the circuit may be present with no capacitors or inductors, as a limit case. In the next section we show how to write a Poisson bracket description of the solution.

\section{An extension of the Dirac theory of constraints}\label{sectionanextensionofetc}

In sections \ref{section7} and \ref{mainresultsofdiracandgotaynester}, under precise regularity conditions, we studied
several geometric and algebraic objects related to the Dirac theory of
constraints
like primary and final constraints, first class and second class constraints, Dirac brackets, first class and
second class constraint submanifolds. 

The notions of first class and second class depend only on the final constraint submanifold $M_c$ and they do not depend on the primary constraint submanifold or the Hamiltonian, which are needed only to write equations of motion.
Then, as we have already said at the beginning of section \ref{subsectionV}, in order to study those notions one can start with
an abstract situation, given by a  symplectic manifold 
$(P,\Omega)$ and an arbitrary submanifold $S \subseteq P$, to be thought of as the final constraint submanifold.

In this section, motivated by
integrable nonholonomic mechanics and LC circuits theory, we extend those studies by replacing
$S$ by a submanifold
$\mathbf{S} \subseteq P$ regularly foliated by submanifolds $S$.
This means that 
$\mathbf{S}  = \cup_{S \in \mathfrak{S}} S$ and the map
$\mathbf{S} \rightarrow \mathfrak{S}$ given by $x \mapsto S$ iff $x \in S$ is a submersion.
To study the dynamics, one should consider a primary \textit{regularly foliated} constraint submanifold $\mathbf{S}^\prime$ whose leaves $S^\prime$ satisfy the condition $S = S^\prime \cap \mathbf{S}$,
and an energy  $\mathcal{E}\colon  P \rightarrow \mathbb{R}$.

The meaning of 
$\mathbf{S}$
can be interpreted
in connection with the constraint algorithm CAD for a Dirac dynamical system  (\ref{11drracds})
on the manifold $M$ in the case in which the Dirac structure $D$ is integrable, which gives a foliation of $M$.
We should compare $\mathbf{S}$ with the foliated submanifold $M_c$ appearing
in remark \ref{ImportantRemark}, while $\mathbf{S}^\prime$ should be compared with $M$ as a submanifold of $P$.

The results in section \ref{subsectionequationsof motion}
can be extended for a certain kind of Dirac dynamical systems. Namely, let $\mathbf{S}\subseteq \mathbf{S}'\subseteq P$ as before. Consider each leaf $S'_C$ endowed with the presymplectic form $\omega_C$ obtained as the pull-back of $\Omega$. Assume that the distribution $\ker \omega_C$ is regular and its dimension does not depend on $C$. Define the Dirac structure $D$ on $\mathbf{S}'$ by declaring that its presymplectic leaves are $(S'_C,\omega_C)$. We will consider the Dirac dynamical system \eqref{11drracds} for the case $M=\mathbf{S}'$, that is,
\begin{equation}\label{diracdirac2}
(x, \dot x)\oplus d\mathcal{E}(x)\in D_x.
\end{equation}
The results in section \ref{subsectionequationsof motion} would correspond to the case in which 
the foliation of $\mathbf{S}$ consists of a single leaf.

\subsection{Dirac brackets adapted to foliated constraint submanifolds}

\paragraph{Description of primary and final foliated constraint submanifolds.}

Let $(P, \Omega)$ be a symplectic manifold.
We are going to describe the \textit{primary foliated constraint submanifold} $\mathbf{S}^\prime$ and the \textit{final foliated constraint submanifold} $\mathbf{S}$ of $P$. We will assume that there is a one to one correspondence between the foliations such that
for each leaf $S^\prime \subseteq \mathbf{S}^\prime$ the corresponding leaf of $\mathbf{S}$ is 
$S=S^\prime \cap \mathbf{S}$. A motivation for this assumption comes from mechanics, where an energy on $P$ is given. In that context, each leaf $S\subseteq\mathbf{S}$ is the final constraint submanifold corresponding to the primary constraint submanifold $S'\subseteq\mathbf{S}'$ after applying a constraint algorithm.
These leaves will be parametrized by a certain vector $C$, and we will denote them by $S'_C$ and $S_C$ respectively. 
We are going to work on a suitable neighborhood $\mathcal{U}$ of $\mathbf{S}'$ in $P$, where all these submanifolds can be defined regularly by equations, as we will describe next.

For convenience, we will work with a more general situation where $\mathbf{S}'$ is a member of a \emph{family} of primary foliated constraint submanifolds parameterized by a certain vector $C_{(1,a^\prime)}$ and $\mathbf{S}$ is a member of a corresponding family of final foliated constraint submanifolds parameterized by a certain vector $C_{(1,a)}$. For $C_{(1,a^\prime)}=0$ and $C_{(1,a)}=0$ we obtain $\mathbf{S}'$ and $\mathbf{S}$, respectively.

Let
\begin{equation}\label{defofC}
C = (C_1,\dots,C_{a^\prime}, C_{a^\prime + 1},\dots,C_a, C_{a + 1},\dots,C_b) 
\equiv 
(C_{(1,a^\prime)}, C_{(a^\prime + 1, a)},C_{(a+1, b)})
\end{equation}
be a generic point of
$\mathbb{R}^b \equiv \mathbb{R}^{a^\prime}\times \mathbb{R}^{a - a^\prime}\times \mathbb{R}^{b - a}$, where $a'\leq a\leq b$. 
Denote 
$C_{(1,a)} = (C_1,\dots,C_a)$. 
Then we assume the following definitions by equations (regularly on $\mathcal{U}$) of the submanifolds $\mathbf{S}^{C_{(1,a^\prime)}}$
and
$\mathbf{S}^{C_{(1,a)}}$
and their foliations by leaves 
$S_C^\prime$ and $S_C$ respectively, which satisfy
$S_C = S_C^\prime \cap \mathbf{S}^{C_{(1,a)}}$, namely,
\begin{align}\label{calSprimeFo} 
\phi_i 
&= C_i, \, i = 1,\dots,a^\prime, \,\, \mbox{defines}\,\,\,\mathbf{S}^{C_{(1,a^\prime)}}\\\label{SprimeCFo}
\phi_i 
&= C_i, \, i = 1,\dots,a^\prime, a^\prime + 1,\dots,a, \,\, \mbox{defines}\,\,\,\mathbf{S}^{C_{(1,a)}}\\\label{SC}
\phi_i 
&= C_i, \, i = 1,\dots,a^\prime, a+1,\dots,b, \,\,\mbox{defines}\,\,\,  S_C^\prime\\\label{calS}
\phi_i 
&= C_i, \, i = 1,\dots,a, a+1,\dots,b, \,\, \mbox{defines}\,\,\, S_C
\end{align}
Since for each primary constraint submanifold $S'_C$ there is only one final constraint submanifold $S_C$, then $C_{(a'+1,a)}$ must be a function of $C_{(1,a')}$, and without loss of generality we will assume that this function maps 0 to 0.
Then for fixed $C_{(1,a^\prime)}$, $C_{(a^\prime+1, a)}$ is also fixed and one has, for each choice of 
$C_{(a+1,b)} = (C_{a+1},\dots,C_b)$, one leaf 
$S_C^\prime$ of $\mathbf{S}^{C_{(1,a^\prime)}}$ and the corresponding leaf 
$S_C = S_C^\prime \cap \mathbf{S}^{C_{(1,a)}}$ of $\mathbf{S}^{C_{(1,a)}}$.

\begin{assumption}\label{assumptionFOLIATION}
{}From now we assume that our foliated submanifolds $\mathbf{S}^\prime$ and $\mathbf{S}$ 
are $\mathbf{S}^{C_{(1,a^\prime)}}$ and $\mathbf{S}^{C_{(1,a)}}$
for the choice
$C_{(1,a^\prime)} = 0$ and $C_{(1,a)} = 0$, respectively, while
the
leaves
$S^\prime$ and $S$
are the leaves
$S_C^\prime$ and $S_C$
respectively, for the choice 
$C = \left( 0, 0, C_{(a+1,b)}\right)$, and $C_{(a+1,b)}$ varying arbitrarily, 
that is, $C\in\{0\}\times\{0\}\times\mathbb{R}^{b-a}$ (see Figure~\ref{fig:foliation}).
\end{assumption}
\begin{figure}[ht]
\centering
\includegraphics{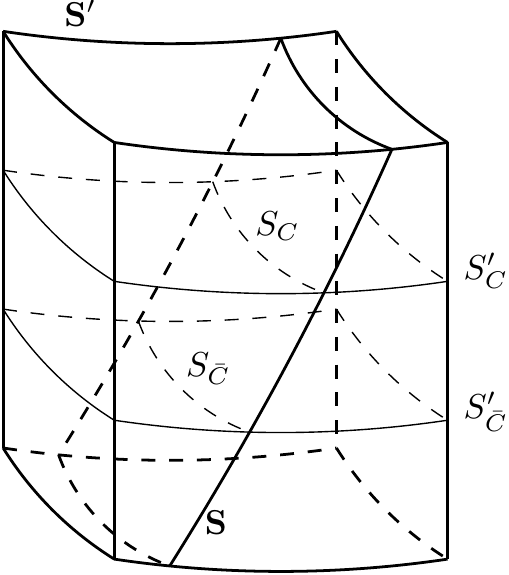}
  \caption{\small$\mathbf{S'}$ is foliated by the submanifolds $\{S'_C\}$, while $\mathbf{S}$ is foliated by $\{S_C\}$. Here we depict the leaves for two values $C$ and $\bar C$.}
  \label{fig:foliation}
\end{figure}

\begin{remark} The case considered by Dirac, studied in section \ref{mainresultsofdiracandgotaynester}, corresponds to $a= b$. It is in this sense that our theory extends Dirac's. In this case, the primary constraint  $\mathbf{S^\prime}$ and the final constraint $\mathbf{S}$ have only one leaf. More precisely, this particular case has been studied in Lemma \ref{mainlemma1section7}, with a different notation.  Our extension takes care of those Dirac dynamical systems coming for instance from integrable nonholonomic systems or LC circuit theory.
\end{remark}

\paragraph{Equations of motion as a collection of Hamilton's equations.}

Let us adopt the point of view that the Dirac dynamical system (\ref{diracdirac2}) is a collection of Gotay-Nester systems, 
one for each leaf, namely
\begin{equation}\label{wizardequationofmotion}
\Omega(x)(\dot{x}, \delta x) = d\mathcal{E}(x)(\delta x),
\end{equation}
where $(x,\dot{x}) \in T_x S_C$, for all $\delta x \in T_x S_C^\prime$.

Then one can apply the theory developed in section \ref{subsectionequationsof motion} leaf by leaf.
For each $C=(0,0,C_{(a+1,b)})$, the total Hamiltonian defined in \eqref{defnoftotalhamiltonian} should be replaced by the total energy depending on $C$, by definition,
\begin{equation*}
\mathcal{E}_{C,T} = \mathcal{E} + \lambda^i (\phi_i - C_i),\,\,\,
\mbox{sum over}\,\,\,
i = 1,\dots,a^\prime, a+1,\dots,b
\end{equation*}
and, for each such $C$, the 
$\lambda^i \in C^{\infty}(\mathcal{U})$, $i = 1,\dots,a^\prime, a+1,\dots,b$ 
must satisfy the conditions
$\{\mathcal{E}_{C,T}, \phi_j - C_j\}(x) = 0$, $j = 1,\dots,b$, for each $x \in S_C$ 
or, equivalently,
\begin{equation*}
 \{\mathcal{E}, \phi_j\}(x) +  \lambda^i  \{\phi_i, \phi_j\}(x) = 0, j= 1,\dots,b,
\end{equation*}
for each $x\in S_C$.
This shows in particular the interesting fact that the set of solutions, which is an affine space, depends on 
$x$ but not on $C$.
We assume, as part of our regularity conditions, that the solutions 
$(\lambda^1,\dots,\lambda^{a^\prime},\lambda^{a + 1},\dots,\lambda^b)$ form a nonempty affine bundle 
$\Lambda \rightarrow \mathbf{S}$.
Since we are working locally, by shrinking $\mathcal{U}$ conveniently we can assume that this bundle is trivial. Furthermore, we extend this bundle arbitrarily to a trivial affine bundle over $\mathcal{U}$, which we continue to denote $\Lambda \to \mathcal{U}$. 
Note that for each $C$, the total energy can be seen as a function $\mathcal{E}_{C,T}\colon \Lambda \to \mathbb{R}$.  

For each section $\widetilde\lambda$ of $\Lambda$ one can define
$\mathcal{E}_{C,\widetilde\lambda,T}=\widetilde\lambda^*\mathcal{E}_{C,T}\colon \mathcal{U} \to \mathbb{R}$, that is, $\mathcal{E}_{C,\widetilde\lambda,T}(x) = \mathcal{E}(x) + \widetilde\lambda^i (x)( \phi_i(x) - C_i)$. One has an equation of motion on $\mathcal{U}$,
\begin{equation}\label{C-equationofmotion}
 X_{\mathcal{E}_{C,\widetilde\lambda,T}}(x) = (d \mathcal{E}_{C,\widetilde\lambda,T})^\sharp(x),
\end{equation}
which should  be interpreted properly, as follows.
In the definition of $\mathcal{E}_{C,T}$, $x$ and $C$ are independent variables. 
For a given
$x_0 \in \mathcal{U}$, let 
$C_0 = (\phi_1(x_0),\dots,\phi_b(x_0))$; then $\mathcal{E}_{C_0,\widetilde\lambda,T}(x)$ is a function of $x$ and
\begin{equation*}
 X_{\mathcal{E}_{C_0,\widetilde\lambda,T}}(x_0) = (d \mathcal{E}_{C_0,\widetilde\lambda,T})^\sharp(x_0)
 = 
 d\mathcal{E}^\sharp \left(x_0\right)  + \widetilde\lambda^i(x_0) d\phi_i^\sharp \left(x_0\right).
\end{equation*}
This means that equation (\ref{C-equationofmotion}) should not be naively interpreted as Hamilton's equation, and therefore the algorithms devised by Dirac and Gotay-Nester cannot be generalized in a direct way for the foliated case. In fact, even though the variable $C$ is a function of $x$, it should be considered a constant in order to calculate the vector field at the point $x$. 
Note that, by  construction, the condition
$ X_{\mathcal{E}_{C_0,\widetilde\lambda,T}}(x_0) \in T_{x_0} S_{C_0}$
is satisfied, which means that the leaves $S_{C_0}$ of $\mathbf{S}$
are preserved by the motion, as expected.

\begin{remark} For any given solution $x(t)$ of equations of motion (\ref{wizardequationofmotion}) there exist uniquely determined 
$\lambda^i_t$, $i = 1,\dots,a^\prime, a+1,\dots,b$ such that
\[
\dot{x}(t) = d\mathcal{E}^\sharp \left(x(t)\right)  + \lambda^i_t d\phi_i^\sharp \left(x(t)\right).
\]
On the other hand, for any given time-dependent section $\widetilde\lambda^i_t(x) $, $i = 1,\dots,a^\prime, a+1,\dots,b$ of $\Lambda$ one has a
time-dependent vector field 
\begin{equation}\label{nonHamiltonian-timedependent}
d\mathcal{E}^\sharp \left(x\right)  + \widetilde\lambda^i_t(x) d\phi_i^\sharp \left(x\right)
\end{equation}
whose integral curves are solutions to the equations of motion. One can show that under the strong regularity conditions that we assume in this paper, all the solutions of the equations of motion can be represented in this way, at least locally.
\end{remark}

Summarizing, the vector fields \eqref{nonHamiltonian-timedependent} obtained from \eqref{C-equationofmotion} are not in general Hamiltonian vector fields with respect to the canonical bracket, as we have indicated before.
Now we will see how to recover the Hamiltonian character of the equation of motion. We are going to write it as Hamilton's equation in terms of a certain  Poisson bracket whose symplectic leaves are isomorphic to $\mathcal{U}$. 

\paragraph{Equations of motion in terms of a Poisson bracket.}
Let us consider an extended Poisson manifold, namely the product Poisson manifold 
$(\mathcal{U}\times\{0\}\times\{0\}\times \mathbb{R}^{b-a}, \{\, , \,\}^{(b)})$ with the canonical  Poisson bracket $(\Omega)^{-1}$ on $\mathcal{U}$ and the $0$ Poisson bracket on $\{0\}\times\{0\}\times\mathbb{R}^{b-a}$.  
Let $\mathcal{E}_T\colon \Lambda\times \{0\}\times\{0\}\times\mathbb{R}^{b-a} \to \mathbb{R}$ be defined by 
$\mathcal{E}_T (\lambda,C) := \mathcal{E}_{C,T}(\lambda)$. Given a section $\widetilde\lambda$ of $\Lambda$ one can define an energy function $\mathcal{E}_{\widetilde\lambda,T}\colon \mathcal{U} \times\{0\}\times\{0\}\times \mathbb{R}^{b-a}\to \mathbb{R}$ as $\mathcal{E}_{\widetilde\lambda,T}(x,C):=\mathcal{E}_{C,T}(\widetilde\lambda(x))=\mathcal{E}_{C,\widetilde\lambda,T}(x)$. Consider its corresponding Hamiltonian vector field 
\[X_{\mathcal{E}_{\widetilde\lambda,T}}(x,C)=(X_{\mathcal{E}_{C,\widetilde\lambda,T}}(x),0),\]
on $\mathcal{U}\times \{0\}\times\{0\}\times \mathbb{R}^{b-a}$ with respect to the Poisson bracket $\{\, , \,\}^{(b)}$,
where $X_{\mathcal{E}_{C,\widetilde\lambda,T}}(x)$ is defined in \eqref{C-equationofmotion}. We remark that even though the first component cannot be interpreted in general as a Hamiltonian vector field on $\mathcal{U}$, as we have observed before, $X_{\mathcal{E}_{\widetilde\lambda,T}}$ is Hamiltonian on the extended Poisson manifold.

Consider the function
\begin{equation*}
\mathcal{C}_{(0,0)}(x) = \left(0,0,\phi_{a+1}(x) ,\dots,\phi_b(x) \right)\in  \mathbb{R}^{a^\prime}\times \mathbb{R}^{a - a^\prime}\times \mathbb{R}^{b - a}
\end{equation*}
defined on $\mathcal{U}$. Clearly, $\operatorname{graph}(\mathcal{C}_{(0,0)})$ is a submanifold of
$\mathcal{U} \times \{0\}\times\{0\}\times\mathbb{R}^{b-a}$  diffeomorphic to $\mathcal{U}$.

Since $X_{\mathcal{E}_{C,\widetilde\lambda,T}}$ preserves $S_C$ for each $C$, it is straightforward to prove that $X_{\mathcal{E}_{\widetilde\lambda,T}}$ preserves $\operatorname{graph}(\mathcal{C}_{(0,0)}|S_C)$, which for each $C$ is a copy of the level set $\mathcal{C}_{(0,0)}^{-1}(C)$ embedded in $\mathcal{U}\times \{0\}\times\{0\}\times\mathbb{R}^{b-a}$, and therefore $X_{\mathcal{E}_{\widetilde\lambda,T}}$ preserves $\operatorname{graph}(\mathcal{C}_{(0,0)}|\mathbf{S})$,  which is a copy of $\mathbf{S}$.

Consider the evolution equation for a quantity $F$ on 
$\mathcal{U}\times \{0\}\times\{0\}\times\mathbb{R}^{b-a}$ given by
\begin{equation*}
 \dot{F} (x,C) = \{F, \mathcal{E}_{\widetilde\lambda,T}\}^{(b)}(x,C).
\end{equation*}
Note that this is an equation in Poisson form on 
$\mathcal{U}\times \{0\}\times\{0\}\times\mathbb{R}^{b-a}$, not on $\mathbf{S}\times \{0\}\times\{0\}\times\mathbb{R}^{b-a}$. 
Note that for $a=b$ and $P=T^*Q$, then $\mathbf{S}=S$ and  we recover the situation in Dirac's theory, where Hamilton's equations of motion are written in a neighborhood of the final constraint submanifold in $T^\ast Q$ rather than on the constraint submanifold itself.

\paragraph{Local equations of motion in Poisson form with respect to the Dirac bracket.}

As we have remarked above, the equations of motion in terms of the canonical bracket for the foliated case are not naive extensions of the Dirac or Gotay-Nester procedures. A similar situation occurs with the equation $\dot g \approx [g,H_T]^*$ in \cite{Dirac1964}, page 42. As we will show next, we can extend this equation to the foliated case by using adapted constraints and the abridged total energy.

In the definition of $S^\prime_C$ and $S_C$ take $C= 0$. Then we can apply the procedure of section \ref{subsectionequationsof motion} with
$S = S_0$ and $S^\prime = S^\prime_0$, and obtain second class constraints
$\chi_1, \dots,\chi_{2s}$ among 
$\phi_i$, $i = 1, \dots,b$
adapted to $S^\prime_0$ and also we can choose some primary constraints
among $\phi_i, i = 1, \dots,a^\prime, a + 1, \dots,b$, say w.l.o.g.\ $\phi_k$, $k=1, \dots,a'-s'_A,a+1, \dots ,b-s'_B$. Observe that here $s'=s'_A+s'_B$ has the same meaning as in section \ref{subsectionequationsof motion}, this time considering the constraints $\phi_1=0, \dots,\phi_b=0$. Then the equations of motion on $S_0$  can be written in the form
(\ref{eq:poisson_dirac_evolution_thm}).

Now, for any small enough $C$, which defines $S_C$, one can readily see that $\chi_1 - B_1, \dots,\chi_{2s}- B_{2s}$ are second class constraints adapted to $S'_C$ where $B_1, \dots, B_{2s}$ are the components of $C$ corresponding to the choice $\chi_1 , \dots,\chi_{2s}$.
This implies immediately that the Dirac bracket for
$S_C$ does not depend on $C$. 
On the other hand, each
$\phi_k - C_k$, $k=1, \dots,a'-s'_A,a+1, \dots ,b-s'_B$, is a primary constraint for 
$S^\prime_C$.

Using the previous facts and the fact that the $\lambda^{\prime k}$ are arbitrary (even time-dependent) parameters, as it happens with the $\lambda'^i$ in equation (\ref{eq:poisson_dirac_evolution_thm}), one can conclude that the equation of motion on 
each $S_C$
is given by 
\begin{align}
\dot F
&= \{F,\mathcal{E}\}_{(\chi)}+\lambda'^k \{F,\phi_{k}\}_{(\chi)}\nonumber\\
&= \{F,\mathcal{E}+\lambda'^k\phi_{k}\}_{(\chi)},  \label{eq:poisson_dirac_evolution_thmfoliated}
\end{align}
sum over $k=1, \dots,a'-s'_A,a+1, \dots ,b-s'_B$.

This means in particular that each $S_C$ is preserved by the motion, or that the Hamiltonian vector fields
\begin{equation}\label{eq:vector_field_foliated}
X_{(\chi), \mathcal{E} + \lambda'^k\phi_{k}},
\end{equation}
defined on $U$,
are tangent to the $S_C$.

Note that $\mathcal{E} + \lambda'^k\phi_{k}$ is the abridged total energy $\mathcal{E}_{AT}$ for $S_0$ and $S'_0$ as defined in (\ref{abridged_total_energy}). 

In conclusion, using adapted constraints, the abridged total energy and the Dirac bracket yields a very simple procedure for writing the equations of motion, because one only needs to write the abridged total energy for one leaf, say $C=0$, and  \eqref{eq:poisson_dirac_evolution_thmfoliated} gives the equation of motion for all nearby leaves.

We have proven the following theorem, which extends Theorem \ref{theorem321}.

\begin{theorem}\label{thm321EXTENDED}
Let  $(P, \Omega)$ be a symplectic manifold and let 
$\mathbf{S}^\prime  \supseteq \mathbf{S}$ 
be given primary and final foliated constraint submanifolds, with leaves $S'_C$ and 
$S_C=S^\prime_C  \cap  \mathbf{S}$, respectively, as described by
(\ref{defofC})--(\ref{calS}). 
Assume that all the hypotheses of Theorem  \ref{theorem321} are satisfied for the submanifolds
$S'_0 \supseteq S_0$. In addition, suppose that the number $s'$ appearing in Assumption \ref{Lambda} is the same for all $S_C$ for $C$ close enough to $0$. From the validity of Assumption \ref{2s_constant} for $S'_0$, it is immediate to see that the number $2s$ is also the same for all $S'_C$ for $C$ close enough to $0$.
Choose second class constraints $(\chi'_{1}, \dots, \chi'_{s'},\chi''_{s'+1},\dots,\chi''_{2s})$ adapted to $S'_0$
and also $\phi_{k}$, $k=1, \dots,a'-s'_A,a+1, \dots ,b-s'_B$, as in Theorem  \ref{theorem321}. 

Let $\mathcal{E}\colon P \to \mathbb{R}$ be an energy function, and consider the Dirac dynamical system \eqref{diracdirac2}. Define the abridged total energy $\mathcal{E}_{AT}=\mathcal{E} + \lambda'^k\phi_{k}$ for $S_0$ and $S'_0$ as in Theorem  \ref{theorem321}. Then, each $x_0\in \mathbf{S}$ has an open neighborhood $U$ such that the vector field $X_{(\chi),\mathcal{E}_{AT}}$ (equation \eqref{eq:vector_field_foliated}) defined on $U$, when restricted to the final foliated constraint submanifold $\textbf{S}\cap U$,
represents equations of motion of the Dirac dynamical system
on $\textbf{S}\cap U$. Moreover, the evolution of the system preserves the leaves $S_C\cap U$.
In addition, equation
\eqref{eq:poisson_dirac_evolution_thmfoliated} gives equations of motion on each $S_C\cap U$. Since $X_{(\chi),\mathcal{E}_{AT}}$ is tangent to $S_C\cap U$, then the evolution of a function $f$ on $S_C\cap U$ is given by \eqref{eq:poisson_dirac_evolution_thmfoliated} for any $F$ such that $f=F|S_C\cap U$.  Also, each $S_C\cap U$ is contained in a unique symplectic leaf of the Dirac bracket, defined by  $\chi'_{1}=B_1, \dots, \chi'_{s'}=B_{s'},\chi''_{s'+1}=B_{s'+1},\dots,\chi''_{2s}=B_{2s}$.
\end{theorem}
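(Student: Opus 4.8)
The plan is to reduce the foliated statement to the non-foliated Theorem \ref{theorem321} applied leaf-by-leaf, and then to show that the data assembled in this way glue to the single vector field \eqref{eq:vector_field_foliated} and that this vector field indeed solves the Dirac dynamical system \eqref{diracdirac2}. First I would fix $x_0\in\mathbf{S}$ and choose the neighborhood $U$ small enough that the constraints $\phi_1,\dots,\phi_b$ define regularly all the submanifolds $\mathbf{S}^{C_{(1,a')}}$, $\mathbf{S}^{C_{(1,a)}}$, $S'_C$, $S_C$ appearing in \eqref{calSprimeFo}--\eqref{calS} for $C$ near $0$, that the affine bundle $\Lambda\to\mathbf S$ (Assumption \ref{assumptionFOLIATION} plus the regularity hypothesis on $\lambda$) is trivial over $\mathbf{S}\cap U$ and extended over $U$, and that the matrix $\Phi=(\{\phi_i,\phi_j\})$ has constant rank $2s$ on $U$ (Assumption \ref{2s_constant}, whose validity for $S'_0$ propagates to nearby $S'_C$ because the rank is a lower semicontinuous function that is already maximal). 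With the hypothesis that $s'$ is constant for all $S_C$ near $0$, Assumption \ref{Lambda} holds for every $S_C$, so Theorem \ref{theorem321} applies verbatim with $S=S_0$, $S'=S'_0$, producing the adapted complete set of first/second class constraints $(\psi'_1,\dots,\chi'_1,\dots,\chi''_{2s})$, the abridged total energy $\mathcal{E}_{AT}=\mathcal{E}+\lambda'^k\phi_k$, and the evolution equation \eqref{eq_dirac_evolution_thm}--\eqref{eq:poisson_dirac_evolution_thm} on $S_0\cap U$.

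The crucial observation — and the step I expect to be the main obstacle — is the $C$-independence of the Dirac bracket. For $C$ near $0$ write $B=B(C)$ for the components of $C$ associated to the chosen second class constraints $\chi_1,\dots,\chi_{2s}$; then $\chi_i^{B}:=\chi_i-B_i$ are second class constraints defining $S'_C$ (and cutting out $S_C$ together with the remaining shifted constraints), and $c^{\chi^B}_{ij}=\{\chi_i^B,\chi_j^B\}=\{\chi_i,\chi_j\}=c^{\chi}_{ij}$, so by the computation already carried out in Lemma \ref{mainlemma1section7}(f), equations \eqref{equationsss}--\eqref{equationssss}, the Dirac bracket $\{\cdot,\cdot\}_{(\chi^B)}$ coincides with $\{\cdot,\cdot\}_{(\chi)}$ on $U$. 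Hence the entire construction of Theorem \ref{theorem321} for the leaf $S_C$, $S'_C$ uses literally the same Dirac bracket, and moreover each $\phi_k-C_k$ ($k=1,\dots,a'-s'_A,a+1,\dots,b-s'_B$) is a primary constraint for $S'_C$ whose Dirac-Hamiltonian vector field $X_{(\chi),\phi_k}$ does not change when $C_k$ is subtracted, since $C_k$ is a constant. Therefore the equation of motion \eqref{eq:poisson_dirac_evolution_thm} on $S_C$ reads $\dot F=\{F,\mathcal{E}\}_{(\chi)}+\lambda'^k\{F,\phi_k\}_{(\chi)}$ with the \emph{same} right-hand side for every $C$, which is precisely \eqref{eq:poisson_dirac_evolution_thmfoliated}, and is generated by the single vector field \eqref{eq:vector_field_foliated} on $U$. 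The delicate point here is making sure the shrinking of $U$ can be done uniformly in $C$ over a whole neighborhood of $0$; this is handled exactly as in Lemma \ref{mainlemma1section7}(f), shrinking $U$ once so that all $S^{\chi^C}$ are nonempty symplectic submanifolds and all regularity statements persist.

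It then remains to verify three things. First, that $X_{(\chi),\mathcal{E}_{AT}}$ is tangent to each $S_C\cap U$: by Lemma \ref{mainlemma1section7}(e) (applied with the shifted constraints, using the $C$-independence just proven) $X_{(\chi),\mathcal{E}_{AT}}(x)=X_{\mathcal{E}_{AT}|S^{\chi^C}}(x)$ for $x\in S^{\chi^C}$, and since $\mathcal{E}_{AT}$ is, leafwise, the total energy modulo terms vanishing on $S_C$, tangency to $S_C$ follows as in Theorem \ref{theorem321}; consequently the flow preserves the leaves $S_C\cap U$, which reproves that $\mathbf{S}\cap U=\bigcup_C(S_C\cap U)$ is invariant. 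Second, that the integral curves of $X_{(\chi),\mathcal{E}_{AT}}$ lying in $\mathbf{S}\cap U$ are exactly the solution curves of \eqref{diracdirac2}: this follows because on each leaf $S_C$ the Dirac dynamical system \eqref{diracdirac2} is by construction the Gotay-Nester system \eqref{wizardequationofmotion} for $(S'_C,S_C,\mathcal{E})$, whose solutions are characterized by Theorem \ref{theorem321} as the integral curves of \eqref{eq_dirac_evolution_thm} for a suitable choice of the parameters $\lambda'^k$ (here allowed to be time-dependent, matching the Remark following \eqref{nonHamiltonian-timedependent}); one invokes the isomorphism $\mathbb R^{a'-s'}\to\Lambda_x$ of Section \ref{subsectionequationsof motion} to pass between a solution curve and its parameter section. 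Third, the last sentence about symplectic leaves: by Lemma \ref{mainlemma1section7}(f) the submanifolds $S^{\chi^C}$, i.e. $\chi'_1=B_1,\dots,\chi''_{2s}=B_{2s}$, are precisely the symplectic leaves of the Poisson manifold $(\mathcal{F}(U),\{\cdot,\cdot\}_{(\chi)})$, and $S_C\cap U\subseteq S^{\chi^{C}}$ because the defining equations of $S_C$ include $\chi_i=B_i$; uniqueness of the leaf containing $S_C\cap U$ is then immediate. Assembling these verifications yields the statement; essentially every individual computation has already been done in Lemma \ref{mainlemma1section7} and Theorem \ref{theorem321}, so the proof is largely a bookkeeping argument organizing the leafwise results into the foliated conclusion, with the one genuine ingredient being the $C$-independence of the Dirac bracket and of the adapted second class constraints.
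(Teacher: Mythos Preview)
Your proposal is correct and follows essentially the same route as the paper: the proof there is the discussion immediately preceding the theorem, which applies Theorem~\ref{theorem321} to the leaf $S_0,S'_0$, observes that the shifted constraints $\chi_i-B_i$ are second class for $S'_C$ so that the Dirac bracket is $C$-independent (your invocation of Lemma~\ref{mainlemma1section7}(f) is exactly the right reference), and then notes that the $\phi_k-C_k$ are primary for $S'_C$ with the same Hamiltonian vector fields, yielding \eqref{eq:poisson_dirac_evolution_thmfoliated} uniformly in $C$. If anything, your write-up is more explicit than the paper's about the uniform shrinking of $U$ and the three verification steps.
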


\begin{remark}
\cite[page 42]{Dirac1964}, 
wrote the equation
\[
 \dot{g} \approx \{g, H_T\}^*
\]
which represents the dynamics on the final constraint submanifold, in terms of the total Hamiltonian and the Dirac bracket. This equation does not gives the correct dynamics for the case of foliated constraint submanifolds. 
As we have seen, for the foliated case, one has, instead a similar equations of motion (\ref{eq:poisson_dirac_evolution_thmfoliated}) in terms of the Dirac bracket and the abridged total energy $\mathcal{E}_{AT}$.
\end{remark}
\paragraph{The concrete example of an LC circuit of section \ref{sectionexamples} revisited.}
Consider the symplectic leaf $S$ of $M_3$ defined parametrically by equations
(\ref{parametrizationofsymplleaf1})--(\ref{parametrizationofsymplleaf5}). The parameter space carries the canonical symplectic form $dq_L \wedge dp_L$, which coincides with the pullback of the canonical symplectic form on $T^* TQ$ to the parameter space. Since the matrix
$\Sigma$ is invertible, all constraints are second class constraints. 
The number $b-s'_B$ in equation (\ref{eq:poisson_dirac_evolution_thmfoliated}) is $0$, then
the abridged total energy is simply the energy $\mathcal{E}$.

Then it is easy to calculate the Dirac bracket as the canonical  bracket on the parameter space 
of the
functions (\ref{parametrizationofsymplleaf1})--(\ref{parametrizationofsymplleaf5}). Let
$y = (q,v,p,\nu)$, then the matrix of Dirac brackets of these variables can be written in block form as
\[
(\{y^i,y^j\}_\chi)=
\left[\begin{array}{ccc}
0_{4\times 4} & A & 0_{4\times 7}\\
-A^T & 0_{5\times 5} & 0_{5\times 7}\\
0_{7\times 4} & 0_{7\times 5} & 0_{7\times 7}
\end{array}\right]
\]
where
\[
A=\left[\begin{array}{ccccc}
 \frac{1}{L} & \frac{C_1}{(C_1 + C_3)L} &  \frac{1}{L} &  \frac{C_3}{(C_1 + C_ 3)L}
& 1
\\[6pt]
 \frac{C_1}{(C_1 + C_3)L} &  \frac{C_1^2}{(C_1 + C_3)^2 L} &  \frac{C_1}{(C_1 + C_3)L} & 
 \frac{C_1 C_3}{(C_1 + C_3)^2 L}
&  \frac{C_1}{(C_1 + C_3)}
\\[6pt]
 \frac{1}{L} &  \frac{C_1}{(C_1 + C_3)L} &  \frac{1}{L} &  \frac{C_3}{(C_1 + C_3)L}
& 1 
\\[6pt]
 \frac{C_3}{(C_1 + C_3)L} &  \frac{C_1 C_3}{(C_1 + C_3)^2 L} &  \frac{C_3}{(C_1 + C_3)L} & 
 \frac{ C_3^2}{(C_1 + C_3)^2 L}
& \frac{C_3}{(C_1 + C_3)}
\end{array}\right].
\]

The evolution equations for the vector 
\[
(y^1, \dots,y^{16})
=
(q_L, q_{C_1}, q_{C_2}, q_{C_3},
 v_L, v_{C_1}, v_{C_2}, v_{C_3}, 
p_L, p_{C_1}, p_{C_2}, p_{C_3},
\nu_L, \nu_{C_1}, \nu_{C_2}, \nu_{C_3})
\]
with initial condition belonging to $M_3$, that is, satisfying
(\ref{mtresej}):
$p_L=Lv_L,\,p_{C_1}=p_{C_2}=p_{C_3}=0,\,
v_L=v_{C_2},\,v_{C_1}=v_{C_2}-v_{C_3},\,
\frac{q_{C_1}}{C_1}=\frac{q_{C_3}}{C_3},\,
\frac{v_{C_1}}{C_1}=\frac{v_{C_3}}{C_3}$,
is given by the product of the matrix 
$(\{y^i, y^j \}_{\chi})$
by the vector
\begin{multline*}
\left(\frac{\partial \mathcal{E}}{\partial y^1}, \dots,\frac{\partial \mathcal{E}}{\partial y^{16}}\right)
=\\
\left(0, - \frac{q_{C_1}}{C_1}, - \frac{q_{C_2}}{C_2}, - \frac{q_{C_3}}{C_3}, 
p_L - v_L, p_{C_1}, p_{C_2}, p_{C_3},
 v_L, v_{C_1}, v_{C_2}, v_{C_3},
0,0,0,0\right).
\end{multline*}

\section{Conclusions and future work}\label{conclusions}
We have shown that Dirac's work on constrained systems can be extended for cases where the primary constraint has a given foliation. This extends the applicability of the theory for
Dirac dynamical systems, like LC circuits where the constraints may come from singularities of the Lagrangian and, besides, from Kirchhoff's Current Law. Throughout the paper we combine ideas of an algebraic character from Dirac's theory with some more geometrically inspired ideas from Gotay-Nester work. 
In particular, we use Dirac brackets adapted to the foliation as well as a Constraint Algorithm for Dirac dynamical systems (CAD), which is an extension of the Gotay-Nester algorithm. The results were proven locally and under regularity conditions. It is our purpose to study in the future the globalization of the results of the paper as well as the singular cases, using IDE techniques, with applications.

\section{Acknowledgments}\label{acknowledgments}
This paper has been inspired by Jerry Marsden's invaluable participation in part of it.
We thank the following institutions for providing us with means to work on this paper:
Universidad Nacional del Sur (projects PGI 24/L075 and PGI 24/ZL06);
Universidad Nacional de la Plata; 
Agencia Nacional de Promoci\'on Cient\'ifica y Tecnol\'ogica, Argentina (projects PICT 2006-2219 and PICT 2010-2746);
CONICET, Argentina (projects PIP 2010--2012 11220090101018 and X553 Ecuaciones Di\-fe\-ren\-cia\-les Impl\'icitas);
European Community, FP7 (project IRSES ``GEOMECH'' 246981).

\appendix

\section{Appendix}\label{Pontryagin_embedding}
\begin{lemma} There is a canonical inclusion $\varphi\colon TQ\oplus T^*Q \to T^*TQ$. In addition, consider the canonical two-forms $\omega_{T^*Q}$ and $\omega_{T^*TQ}$ on $T^*Q$ and $T^*TQ$ respectively, the canonical projection $\operatorname{pr}_{T^*Q}\colon TQ\oplus T^*Q \to T^*Q$, and define the presymplectic two-form $\omega=\operatorname{pr}_{T^*Q}^*\omega_{T^*Q}$ on $TQ\oplus T^*Q$. Then the inclusion preserves the corresponding two-forms, that is, $\omega=\varphi^*\omega_{T^*TQ}$.
\end{lemma}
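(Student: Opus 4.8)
The plan is to work in natural bundle charts and exhibit the inclusion $\varphi$ explicitly, then compare the two canonical forms by pulling back the respective canonical one-forms (Liouville forms) and using $\omega_{T^*X} = -d\theta_{T^*X}$. First I would set up coordinates: let $(q^i)$ be coordinates on $Q$, inducing $(q^i, v^i)$ on $TQ$ and $(q^i, p_i)$ on $T^*Q$, hence $(q^i, v^i, p_i)$ on the Pontryagin bundle $TQ\oplus T^*Q$. On $T^*TQ$, the base $TQ$ has coordinates $(q^i, v^i)$ and the fibre coordinates (conjugate to $q^i$ and $v^i$ respectively) will be written $(p_i, \nu_i)$, so a point of $T^*TQ$ is $(q^i, v^i, p_i, \nu_i)$. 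The map $\varphi$ is then $\varphi(q,v,p) = (q, v, p, 0)$; intrinsically, $\varphi(\alpha_q) \in T^*_{(v_q)}TQ$ for $\alpha_q\in T^*_qQ$, $v_q\in T_qQ$ with the same base point, is the pullback of $\alpha_q$ by the tangent map $T_{v_q}\tau_Q\colon T_{v_q}TQ\to T_qQ$ of the projection $\tau_Q\colon TQ\to Q$ — this is manifestly chart-independent and fibrewise linear, hence a smooth canonical inclusion; I would note it is an embedding since $T\tau_Q$ is fibrewise surjective and $\varphi$ is injective on fibres.

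Next I would recall the coordinate expressions of the Liouville forms: $\theta_{T^*Q} = p_i\,dq^i$ on $T^*Q$ and $\theta_{T^*TQ} = p_i\,dq^i + \nu_i\,dv^i$ on $T^*TQ$, with $\omega_{T^*Q} = -d\theta_{T^*Q}$ and $\omega_{T^*TQ} = -d\theta_{T^*TQ}$. The key computation is then a two-line check: $\varphi^*\theta_{T^*TQ} = p_i\,dq^i + 0\cdot dv^i = p_i\,dq^i$, while on the other side $\operatorname{pr}_{T^*Q}^*\theta_{T^*Q} = p_i\,dq^i$ as well, since $\operatorname{pr}_{T^*Q}(q,v,p) = (q,p)$. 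Hence $\varphi^*\theta_{T^*TQ} = \operatorname{pr}_{T^*Q}^*\theta_{T^*Q}$, and applying $-d$ and using naturality of the exterior derivative under pullback gives $\varphi^*\omega_{T^*TQ} = \operatorname{pr}_{T^*Q}^*\omega_{T^*Q} = \omega$, which is the claim. To make the argument coordinate-free I would instead invoke the universal property of the Liouville form ($\beta^*\theta_{T^*X} = \beta$ for any one-form $\beta$ on $X$ viewed as a section $X\to T^*X$), applied twice, to identify $\varphi^*\theta_{T^*TQ}$ and $\operatorname{pr}_{T^*Q}^*\theta_{T^*Q}$ directly with the tautological pairing determined by $T\tau_Q$.

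The only genuine subtlety — and the step I expect to require the most care — is verifying that $\varphi$ is globally well defined and smooth, i.e. that the chartwise formula $(q,v,p)\mapsto(q,v,p,0)$ patches correctly under changes of coordinates on $Q$. This is not entirely obvious because the fibre coordinate $\nu_i$ on $T^*TQ$ transforms with terms involving $\partial^2 q/\partial \bar q^2$ (the non-tensorial part of the transition for $T^*TQ$), so one must check that "$\nu=0$" is preserved; it is, precisely because the intrinsic description $\varphi(\alpha_q) = (T_{v_q}\tau_Q)^*\alpha_q$ shows the image consists of covectors annihilating the vertical subbundle $\ker T\tau_Q \subseteq TTQ$, and this vertical subbundle is coordinate-independent. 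So I would lead with the intrinsic definition of $\varphi$, record that its image is exactly the annihilator bundle $(\ker T\tau_Q)^\circ \subseteq T^*TQ$ restricted appropriately, deduce smoothness and the embedding property from there, and only then pass to coordinates for the (now routine) two-form identity. A short remark that this reproduces the local picture used in Section~\ref{section7} (the embedding $M = TQ\oplus T^*Q \hookrightarrow T^*TQ$ with $M$ cut out by $\nu = 0$ and $\varphi^*\Omega = \omega$) would close the loop with the main text.
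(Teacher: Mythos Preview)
Your proposal is correct and follows essentially the same route as the paper: define $\varphi$ intrinsically as $\varphi(v_q\oplus\alpha_q)=(T_{v_q}\tau_Q)^*\alpha_q$, verify it is a well-defined injective bundle map over $\operatorname{Id}_{TQ}$, then show $\varphi^*\theta_{T^*TQ}=\operatorname{pr}_{T^*Q}^*\theta_{T^*Q}$ and take $-d$. The only cosmetic difference is that the paper carries out the one-form identity intrinsically via the commutative diagram $\pi_Q\circ\operatorname{pr}_{T^*Q}=\tau_Q\circ\pi_{TQ}\circ\varphi$, whereas you do it in coordinates; both yield $p_i\,dq^i$ on each side.
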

\begin{proof}
If $\tau_Q\colon TQ \to Q$ and $\tau_{TQ}\colon TTQ \to TQ$ are the tangent projections, we can consider the dual tangent rhombic
\[
\xymatrix{
&TTQ\ar[dl]_{\tau_{TQ}}\ar[dr]^{T\tau_Q}&\\
TQ\ar[dr]_{\tau_Q}&&TQ\ar[dl]^{\tau_Q}\\
&Q&}
\]
Define $\varphi\colon TQ\oplus T^*Q \to T^*TQ$ by $\varphi(v_q\oplus \alpha_q)\in T^*_{v_q}TQ$, 
\[
\varphi(v_q\oplus \alpha_q)\cdot w_{v_q}=\alpha_q\cdot T\tau_Q(w_{v_q}),
\]
for $w_{v_q}\in T_{v_q}TQ$. Here $v_q\oplus \alpha_q$ denotes an element in the Pontryagin bundle over the point $q\in Q$. Note that the following diagram commutes.
\[
\xymatrix{
TQ\oplus T^*Q \ar[r]^\varphi\ar[d]_{\operatorname{pr}_{TQ}}&T^*TQ\ar[d]^{\pi_{TQ}}\\
TQ\ar[r]^{\operatorname{Id}_{TQ}}&TQ
}
\]
Let us see that $\varphi$ is an injective vector bundle map from the bundle $\operatorname{pr}_{TQ}\colon TQ\oplus T^*Q \to TQ$ to the cotangent bundle $\pi_{TQ}\colon T^*TQ \to TQ$, over the identity of $TQ$. The last part of this assertion follows from the commutative diagram above.

First, if $\varphi(v_q\oplus \alpha_q)=\varphi(v'_{q'}\oplus \alpha'_{q'})$ then both sides are in the same fiber $T^*_{v_q}TQ=T^*_{v'_{q'}}TQ$, so $v_q=v'_{q'}$. Also, for all $w_{v_q}\in T^*_{v_q}TQ$ we have
\[
\varphi(v_q\oplus \alpha_q)\cdot w_{v_q}=\varphi(v_q\oplus \alpha'_q)\cdot w_{v_q}
\]
or
\[
\alpha_q\cdot T\tau_Q(w_{v_q})=\alpha'_q\cdot T\tau_Q(w_{v_q}).
\]
Since $\tau_Q\colon TQ \to Q$ is a submersion, we have $\alpha_q=\alpha'_q$ and $\varphi$ is injective.

Second, $\varphi$ is linear on each fiber, since
\[
\varphi(v_q\oplus (\alpha_q+\lambda\alpha'_q))\cdot w_{v_q}=(\alpha_q+\lambda\alpha'_q)\cdot T\tau_Q(w_{v_q})=\varphi(v_q\oplus \alpha_q)\cdot w_{v_q}+\lambda \varphi(v_q\oplus \alpha'_q)\cdot w_{v_q}.
\]

For the second part of the lemma, let us recall the definition of the canonical one-form on $\theta_{T^*Q}\in \Omega^1(T^*Q)$. For $\alpha_q\in T^*Q$, $\theta_{T^*Q}(\alpha_q)$ is an element of $T^*_{\alpha_q}T^*Q$ such that for $w_{\alpha_q}\in T_{\alpha_q}T^*Q$,
\[
\theta_{T^*Q}(\alpha_q) \cdot w_{\alpha_q}=\alpha_q(T\pi_Q(w_{\alpha_q})),
\]
where $\pi_Q\colon T^*Q \to Q$ is the cotangent bundle projection.

With a similar notation, the canonical one-form $\theta_{T^*TQ}\in \Omega^1(T^*TQ)$ is given by
\[
\theta_{T^*TQ}(\alpha_{v_q}) \cdot w_{\alpha_{v_q}}=\alpha_{v_q}(T\pi_{TQ}(w_{\alpha_{v_q}})).
\]
Pulling back these forms to the Pontryagin bundle by $\varphi$ and the projection $\operatorname{pr}_{T^*Q}\colon TQ\oplus T^*Q \to T^*Q$, we obtain the same one-form. Indeed, for $w\in T_{v_q\oplus \alpha_q}(TQ\oplus T^*Q)$, we get on one hand
\begin{equation*}
(\operatorname{pr}^*_{T^*Q}\theta_{T^*Q})(v_q\oplus \alpha_q) \cdot w=\theta_{T^*Q}(\alpha_q) \cdot T\operatorname{pr}_{T^*Q}(w)=\alpha_q \cdot T(\pi_Q\circ \operatorname{pr}_{T^*Q})(w),
\end{equation*}
and on the other hand
\begin{equation*}
\begin{split}
(\varphi^*\theta_{T^*TQ})(v_q\oplus \alpha_q) \cdot w=\theta_{T^*TQ}(\varphi(v_q\oplus \alpha_q)) \cdot T\varphi(w)=\\
\varphi(v_q\oplus \alpha_q) \cdot T\pi_{TQ}(T\varphi(w))=\alpha_q \cdot T(\tau_Q\circ\pi_{TQ}\circ\varphi)(w).
\end{split}
\end{equation*}
However, the following diagram commutes
\[
\xymatrix{
TQ\oplus T^*Q\ar[rr]^\varphi\ar[d]_{\operatorname{pr}_{T^*Q}}\ar[drr]_{\operatorname{pr}_{TQ}}& &T^*TQ\ar[d]^{\pi_{TQ}}\\
T^*Q\ar[dr]_{\pi_Q}&&TQ\ar[dl]^{\tau_Q}\\
&Q&
}
\]
so $\pi_Q\circ \operatorname{pr}_{T^*Q}=\tau_Q\circ\pi_{TQ}\circ\varphi$ and therefore $\operatorname{pr}^*_{T^*Q}\theta_{T^*Q}=\varphi^*\theta_{T^*TQ}$. Taking (minus) the differential of this identity, we obtain $\omega=\varphi^*\omega_{T^*TQ}$.
\end{proof}

In local coordinates, if we denote the elements of $TQ\oplus T^*Q$ and $T^*TQ$ by $(q,v,p)$ and $(q,v,p,\nu)$ respectively, then $\varphi(q,v,p)=(q,v,p,0)$.

\newcommand\oneletter[1]{#1}\newcommand\Yu{Yu}\newcommand{\ct}{t}\def\cprime{$'$}
  \def\polhk#1{\setbox0=\hbox{#1}{\ooalign{\hidewidth
  \lower1.5ex\hbox{`}\hidewidth\crcr\unhbox0}}} \def\cprime{$'$}

\end{document}